\newif\ifnotes
\newif\ifcr
\newcommand{\dnote}[1]{\textsf{\color{blue} Dakshita: { #1}}}
\newcommand{\anote}[1]{\textsf{\color{violet} Amit: { #1}}}
\numberwithin{algorithm}{section}
\renewcommand{\paragraph}[1]{\vspace{1.5mm}\noindent \textbf{#1}}
\newcommand{\hyb}{\boldsymbol{\mathcal{H}}}
\newcommand{\poly}{\mathsf{poly}}
\newcommand{\negl}{\mathsf{negl}}
\newcommand{\bN}{\mathbb{N}}
\newcommand{\bZ}{\mathbb{Z}}
\newcommand{\M}{\mathsf{M}}
\newcommand{\R}{\mathsf{R}}
\newcommand{\Q}{\mathsf{Q}}
\newcommand{\td}{\mathsf{td}}
\newcommand{\msg}{\mathsf{msg}}
\newcommand{\key}{\mathsf{key}}
\newcommand{\lock}{\mathsf{lk}}
\newcommand{\corrupt}{\mathbb{S}}
\newcommand{\honest}{\mathbb{H}}
\newcommand{\LL}{\mathcal{L}^{(\PECom)}}
\newcommand{\LLL}{\mathcal{L}^{(\mathsf{both})}}
\newcommand{\D}{\mathsf{D}}
\newcommand{\REAL}{\mathsf{REAL}}
\newcommand{\IDEAL}{\mathsf{IDEAL}}
\newcommand{\na}{\mathsf{No}\text{ }\mathsf{Abort}}
\newcommand{\sh}{\mathsf{sh}}
\newcommand{\bA}{\mathbf{A}}
\newcommand{\bB}{\mathbf{B}}
\newcommand{\bC}{\mathbf{C}}
\newcommand{\bM}{\mathbf{M}}
\newcommand{\bG}{\mathbf{G}}
\newcommand{\bS}{\mathbf{S}}
\newcommand{\bE}{\mathbf{E}}
\newcommand{\bs}{\mathbf{s}}
\newcommand{\be}{\mathbf{e}}
\newcommand{\bu}{\mathbf{u}}
\newcommand{\bv}{\mathbf{v}}
\newcommand{\bb}{\mathbf{b}}
\renewcommand{\O}{\mathcal{O}}
\newcommand{\obfC}{\widetilde{\mathbf{CC}}}
\newcommand{\CC}[3]{\ensuremath{\mathbf{CC}[#1,#2,#3]}}
\newcommand{\ccSim}{\mathsf{Sim}}
\newcommand{\obf}{\mathsf{Obf}}
\newcommand{\fhe}{\mathsf{FHE}}
\newcommand{\mfhe}{\mathsf{MFHE}}
\newcommand{\ahe}{\mathsf{AHE}}
\newcommand{\convert}{\mathsf{Convert}}
\newcommand{\fheKg}{\fhe.\mathsf{KeyGen}}
\newcommand{\fheE}{\fhe.\mathsf{Enc}}
\newcommand{\fheD}{\fhe.\mathsf{Dec}}
\newcommand{\fheQE}{\fhe.\mathsf{QEnc}}
\newcommand{\fheQD}{\fhe.\mathsf{QDec}}
\newcommand{\fheEv}{\fhe.\mathsf{Eval}}
\newcommand{\fheK}{\mathsf{sk}}
\newcommand{\fhePK}{\mathsf{pk}}
\newcommand{\ct}{\mathsf{ct}}
\newcommand{\qmfhe}{\mathsf{QMFHE}}
\newcommand{\qcmfhe}{\mathsf{QCMFHE}}
\newcommand{\keygen}{\mathsf{KeyGen}}
\newcommand{\enc}{\mathsf{Enc}}
\newcommand{\qenc}{\mathsf{QEnc}}
\newcommand{\dec}{\mathsf{Dec}}
\newcommand{\qdec}{\mathsf{QDec}}
\newcommand{\eval}{\mathsf{Eval}}
\newcommand{\pk}{\mathsf{pk}}
\newcommand{\sk}{\mathsf{sk}}
\newcommand{\setup}{\mathsf{Setup}}
\newcommand{\pp}{\mathsf{pp}}
\newcommand{\gentrap}{\mathsf{GenTrap}}
\newcommand{\invert}{\mathsf{Invert}}
\newcommand{\spooky}{\mathsf{Spooky}}
\newcommand{\qspooky}{\mathsf{QSpooky}}
\newcommand{\cmt}{\mathsf{cmt}}
\newcommand{\Com}{\mathsf{Com}}
\newcommand{\COM}{\mathsf{Com}}
\newcommand{\EC}{\mathsf{EC}}
\newcommand{\comS}{\EC.\mathsf{S}}
\newcommand{\commS}{\mathsf{S}^*}
\newcommand{\comR}{\EC.\mathsf{R}}
\newcommand{\commR}{\mathsf{R}^*}
\newcommand{\comV}{\EC.\mathsf{V}}
\newcommand{\cm}{\mathsf{c}}
\newcommand{\PECom}{\mathsf{eCom}}
\newcommand{\nmCom}{\mathsf{nmCom}}
\newcommand{\com}{\mathsf{c}}
\newcommand{\rec}{\mathsf{r}}
\newcommand{\zk}{\mathsf{ZK}}
\newcommand{\A}{\mathsf{A}}
\newcommand{\prot}[2]{\ve{#1,#2}}
\newcommand{\zkE}{\mathsf{E}}
\newcommand{\zkP}{\mathsf{P}}
\newcommand{\zkSim}{\mathsf{Sim}}
\newcommand{\zkSimAbort}{\mathsf{SimAbort}_\bot}
\newcommand{\zkSimNoAbort}{\mathsf{SimNoAbort}_\bot}
\newcommand{\cfSimAbort}{\mathsf{SimAbort}_\bot}
\newcommand{\cfSimNoAbort}{\mathsf{SimNoAbort}_\bot}
\newcommand{\eC}{\mathsf{C}}
\newcommand{\eD}{\mathsf{D}}
\newcommand{\eR}{\mathsf{R}}
\newcommand{\dk}{\mathsf{dk}}
\newcommand{\sfegen}{\mathsf{SFE.Gen}}
\newcommand{\sfeenc}{\mathsf{SFE.Enc}}
\newcommand{\sfeeval}{\mathsf{SFE.Eval}}
\newcommand{\sfedec}{\mathsf{SFE.Dec}}
\newcommand{\sfe}{\mathsf{SFE}}
\newcommand{\wiP}{\mathsf{WI.P}}
\newcommand{\zkR}{\mathsf{R}}
\newcommand{\zkS}{\mathsf{S}}
\newcommand{\zkmD}{\mathsf{D}}
\newcommand{\zkmP}{\zkP^*}
\newcommand{\zkV}{\mathsf{V}}
\newcommand{\zkmV}{\zkV^*}
\newcommand{\wiV}{\mathsf{WI.V}}
\newcommand{\viewval}{\mathsf{View}\text{-}\mathsf{Val}}
\newcommand{\hview}{\mathsf{HOUT}}
\newcommand{\wi}{\mathsf{wi}}
\newcommand{\NP}{\mathsf{NP}}
\newcommand{\cdsSch}{\mathsf{CDS}}
\newcommand{\lang}{\mathcal{L}}
\newcommand{\rel}{\mathcal{R}}
\newcommand{\ins}{x}
\newcommand{\wit}{w}
\newcommand{\RL}{\rel_{\lang}}
\newcommand\ket[1]{| #1 \rangle}
\newcommand{\textabbrevstyle}[1]{\mbox{#1}}
\newcommand{\textabbrevstylebol}[1]{\mbox{\textbf{#1}}}
\newcommand{\newtextabbrev}[1]{\expandafter\newcommand\csname #1\endcsname{\textabbrevstyle{#1}\xspace}}
\newcommand{\newtextabbrevbol}[1]{\expandafter\newcommand\csname #1\endcsname{\textabbrevstylebol{#1}\xspace}}
\newcommand{\renewtextabbrevbol}[1]{\expandafter\renewcommand\csname
#1\endcsname{\textabbrevstylebol{#1}\xspace}}
\newtheorem{definition}{Definition}[section]
\newtheorem{lemma}{Lemma}[section]
\newtheorem{theorem}{Theorem}[section]
\newtheorem{claim}{Claim}[section]
\theoremstyle{remark}
\newtheorem{remark}{Remark}[section]
\newcommand{\defref}[1]{Definition~\protect\ref{#1}}
\newcommand{\remref}[1]{Remark~\protect\ref{#1}}
\newcommand{\proref}[1]{Protocol~\protect\ref{#1}}
\newenvironment{boxfig}[2]{\begin{figure}[#1]\fbox{\begin{minipage}{\linewidth}
                        \vspace{0.2em}
                        \makebox[0.025\linewidth]{}
                        \begin{minipage}{0.95\linewidth}
            {{
                        #2 }}
                        \end{minipage}
                        \vspace{0.2em}
                        \end{minipage}}}{\end{figure}}
\newenvironment{boxedalgo}
  {\begin{center}\begin{boxedminipage}{0.95\textwidth}}
  {\end{boxedminipage}\end{center}}
\newcommand{\pprotocol}[4]{
\begin{boxfig}{h!}{
\begin{center}
\textbf{#1}
\end{center}
    #4
\vspace{0.2em} } \caption{\label{#3} #2}
\end{boxfig}
}
\newcommand{\protocol}[4]{
\pprotocol{#1}{#2}{#3}{#4} }
\newcommand{\ve}[1]{\langle #1 \rangle}
\newcommand{\set}[1]{\left\{#1\right\}}
\newcommand{\abs}[1]{\left|#1\right|}
\renewcommand{\]}{\right ]}
\renewcommand{\[}{\left [}
\newcommand{\pST}{\; \middle\vert \;}
\newcommand{\zo}{\{0,1\}}
\newcommand{\Nat}{\mathbb{N}}
\newcommand{\secp}{\lambda}
\newcommand{\tagg}{\ensuremath{\mathsf{tag}}\xspace}
\newcommand{\simulator}{\ensuremath{\mathsf{Sim}}\xspace}
\newcommand{\bbN}{\ensuremath{\mathbb{N}}\xspace}
\newcommand{\cC}{\ensuremath{\mathcal{C}}\xspace}
\newcommand{\cP}{\ensuremath{\mathcal{P}}\xspace}
\newcommand{\cA}{\ensuremath{\mathcal{A}}\xspace}
\newcommand{\cR}{\ensuremath{\mathcal{R}}\xspace}
\newcommand{\cV}{\ensuremath{\mathcal{V}}\xspace}
\newcommand{\cS}{\ensuremath{\mathcal{S}}\xspace}
\newcommand{\cL}{\ensuremath{\mathcal{L}}\xspace}
\newcommand{\cE}{\ensuremath{\mathsf{Ext}}\xspace}
\def\mim{\ensuremath{\mathsf{MIM}}\xspace}
\def\adv{\ensuremath{\mathsf{ADV}}\xspace}
\def\state{\ensuremath{\mathsf{st}}\xspace}
\def\view{\ensuremath{\mathsf{VIEW}}\xspace}
\def\output{\ensuremath{\mathsf{OUT}}\xspace}
\def\reject{\ensuremath{\mathsf{Reject}}\xspace}
\def\noreject{\ensuremath{\mathsf{NoReject}}\xspace}
\def\fail{\ensuremath{\mathsf{Fail}}\xspace}
\newcommand{\dist}[1]{\ensuremath{\langle{#1}\rangle}\xspace}
\newcommand{\cfail}{\ensuremath{\mathsf{check}\text{-}\mathsf{fail}}}
\newcommand{\nmcsmall}{\ensuremath{\mathsf{nmCom}}\xspace}
\newcommand{\piagk}{\proref{fig:tag_amplification_nmcom}\xspace}
\newcommand{\pibs}{\ensuremath{\Pi_{\mathsf{zk}}}\xspace}
\newcommand{\ie}{\text{i.e.}\xspace}
\newcommand{\suchthat}{\text{s.t.}\xspace}
\newcommand{\comb}{\mathsf{comb}\xspace}
\newcommand\TODO[1]{{\color{red}{#1}}\xspace}
\newcommand{\sender}{\ensuremath{\mathsf{S}}\xspace}
\newcommand{\receiver}{\ensuremath{\mathsf{R}}\xspace}
\newcommand{\extractor}{\ensuremath{\mathsf{Ext}}\xspace}
\title{Post-Quantum Multi-Party Computation}
\author{}
\date{}
\author{
\begin{tabular}{c@{\hskip 1in}c@{\hskip 1in}c} Amit Agarwal\thanks{UIUC. \url{amita2@illinois.edu}}  & James Bartusek\thanks{UC Berkeley. \url{bartusek.james@gmail.com}}
& Vipul Goyal\thanks{CMU. \url{vipul@cmu.edu} }\\ \end{tabular}\\ \\
\begin{tabular}{c@{\hskip 1in}c}
Dakshita Khurana\thanks{UIUC. \url{dakshita@illinois.com}}  & 
Giulio Malavolta\thanks{Max Planck Institute for Security and Privacy. \url{giulio.malavolta@hotmail.it}
}  \\ \end{tabular}
}
\begin{document}
\maketitle

\begin{abstract}
We initiate the study of multi-party computation for classical functionalities (in the plain model) with security against malicious polynomial-time quantum adversaries. We observe that existing techniques readily give a polynomial-round protocol, but our main result is a construction of \emph{constant-round} post-quantum multi-party computation. We assume mildly super-polynomial quantum hardness of learning with errors (LWE), and polynomial quantum hardness of an LWE-based circular security assumption. 
Along the way, we develop the following cryptographic primitives that may be of independent interest:
\begin{itemize}
    \item A spooky encryption scheme for relations computable by quantum circuits, from the quantum hardness of an LWE-based circular security assumption. This yields the first quantum multi-key fully-homomorphic encryption scheme with classical keys.
    \item Constant-round zero-knowledge secure against multiple parallel quantum verifiers from spooky encryption for relations computable by quantum circuits.

    To enable this, we develop a new straight-line non-black-box simulation technique against {\em parallel} verifiers that does not clone the adversary's state. This forms the heart of our technical contribution and may also be relevant to the classical setting. 
    \item A constant-round post-quantum non-malleable commitment scheme, from the mildly super-polynomial quantum hardness of LWE. 
\end{itemize}
\end{abstract}


\thispagestyle{empty}
\newpage
\tableofcontents
\thispagestyle{empty}
\newpage
\pagenumbering{arabic}

\section{Introduction}
Secure multi-party computation (MPC) allows a set of parties to compute a
joint function of their inputs, revealing only the output of the function while keeping their inputs private. 
General secure MPC, initiated in works such as~\cite{Yao86,STOC:GolMicWig87,STOC:BenGolWig88,C:ChaCreDam87}, has played a central role in modern theoretical cryptography.
The last few years have seen tremendous research optimizing MPC in various ways, enabling a plethora of practical applications that include joint computations on distributed medical data, privacy-preserving machine learning, e-voting, distributed key management, among others.
The looming threat of quantum computers naturally motivates the problem of constructing protocols with {\em provable security against quantum adversaries}.

After Watrous' breakthrough work on zero-knowledge against quantum adversaries \cite{10.1137/060670997}, the works of~\cite{AC:DamLun09,AFRICACRYPT:LunNie11,C:HalSmiSon11} considered variants of quantum-secure computation protocols, in the \emph{two}-party setting. Very recently, Bitansky and Shmueli~\cite{BS20} obtained the first \emph{constant-round} classical zero-knowledge arguments with security against quantum adversaries. 
Their techniques (and those of~\cite{EPRINT:ALP19} in a concurrent work) are based on the recent non-black-box simulation technique of~\cite{STOC:BKP19}, who constructed two-message {\em classically-secure} weak zero-knowledge in the plain model. Unfortunately, it is unclear whether these protocols compose under parallel repetition. As a result, they become largely inapplicable to the constant-round multi-party setting. 

There has also been substantial effort in constructing protocols for securely computing quantum circuits~\cite{C:DupNieSal10,C:DupNieSal12,DBLP:conf/eurocrypt/DulekGJMS20} (see Section \ref{sec:related} for further discussion). But to the best of our knowledge, generic multi-party computation protocols with classical communication and security against quantum adversaries have only been studied in models with \emph{trusted pre-processing or setup}. To make things even worse,~\cite{DBLP:conf/eurocrypt/DulekGJMS20} construct a maliciously-secure multi-party protocol for computing quantum ciruits, assuming the existence of maliciously-secure post-quantum classical MPC. This means that the only available implementations of such a building block require trusted pre-processing or a common reference string.


\paragraph{Post-Quantum MPC.} In this work we initiate the study of MPC protocols that allow classical parties to securely compute general classical functionalities, and where security is guaranteed against \emph{malicious quantum adversaries}. Our focus is on MPC in the \emph{plain model, with a dishonest majority}: Fully classical participants interact with each other with no access to trusted/pre-processed parameters or a common reference string. Multi-party protocols achieving security in this natural setting do not seem to have been previously analyzed in {\em any} number of rounds.
We stress that the challenges of proving post-quantum security of MPC protocols stretch far beyond the appropriate instantiations of the cryptographic building blocks (e.g.\ avoiding factoring or discrete logarithm-based cryptosystems):
in fact, it is possible to devise protocols~\cite{EPRINT:ALP19,FOCS:AmbRosUnr14} based on entirely post-quantum assumptions that are secure against classical adversaries but completely insecure against quantum adversaries.

Because quantum information behaves very differently from classical information, designing post-quantum protocols often requires new techniques to achieve provable security. As an example, a common strategy to prove classical security of MPC protocols is to define a simulator that can extract the inputs of the corrupted parties by ``rewinding'' them, i.e.\ taking a snapshot of the state of the adversary and splitting the protocol execution into multiple branches. However, when the adversary is a quantum machine, this technique becomes largely inapplicable since the no-cloning theorem (one of the fundamental principles of quantum mechanics) prevents us from creating two copies of an arbitrary quantum state. One of our key contributions is a new {\em parallel no-cloning non-black-box simulation technique} that extends the work of~\cite{BS20}, to achieve security against multiple parallel quantum verifiers.

\subsection{Our Results}
We begin by summarizing our main result: Classical multi-party computation with security against quantum circuits in the plain model.  
Here, parties communicate classically via authenticated point-to-point channels as well as broadcast channels, where everyone can send messages in the same round. 
In each round, all parties simultaneously exchange messages. 
The network is assumed to be synchronous with rushing adversaries, i.e. adversaries may generate their messages for any round after observing the messages of all honest parties in that round, but before observing the messages of honest parties in the next round.
The (quantum) adversary may corrupt upto all but one of the participants.
In this model, we obtain the following main result.


\begin{theorem}[Informal]
Assuming mildly super-polynomial quantum hardness of LWE and AFS-spooky encryption for relations computable by polynomial-size quantum circuits,
there exists a constant-round classical MPC protocol (in the plain model) maliciously secure against quantum polynomial-time adversaries. 
\end{theorem}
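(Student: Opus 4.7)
The plan is to follow the now-standard template for constant-round malicious MPC (in the style of Katz--Ostrovsky--Smith, Pass, and their followups), but replacing every classical building block by one with provable post-quantum security, and re-proving security using a straight-line, non-rewinding simulation strategy. Concretely, I start from a constant-round semi-honest MPC for general classical functionalities, which I can instantiate post-quantumly by combining a constant-round semi-honest OT from quantum LWE with the standard garbled-circuit / BMR-style transformation; the result is a classical protocol $\Pi_{\mathsf{sh}}$ whose real/ideal indistinguishability against a quantum semi-honest adversary follows by a hybrid argument relying only on quantum-secure OT and quantum-secure garbling. The goal is then to GMW-compile $\Pi_{\mathsf{sh}}$ into a maliciously secure protocol by having each party (i) commit to its input and randomness using the constant-round post-quantum non-malleable commitment $\nmCom$ constructed earlier in the paper, (ii) run a constant-round coin-tossing subprotocol to fix the actual randomness used, and (iii) append to each $\Pi_{\mathsf{sh}}$ message a \pqzkmv proof (against parallel quantum verifiers) that the message is consistent with the committed input/randomness and the transcript so far.

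The protocol would then look, at a high level, as follows. Round 1: every party broadcasts a non-malleable commitment to $(x_i, r_i)$ together with the first message of the coin-tossing. Rounds 2--$c$: parties run the coin-tossing and the $\Pi_{\mathsf{sh}}$ execution in parallel, and alongside every outgoing message each party plays the prover role in one instance of the constant-round parallel ZK argument, with all other parties acting as verifiers. Because each honest party must simultaneously prove honesty of its own message to $(n-1)$ possibly-colluding malicious verifiers, the ZK used here is critically the paper's parallel no-cloning ZK from quantum polynomial LWE and $\qspooky$ encryption, not the BS20 single-verifier variant. Round balance: since all building blocks are constant-round and they are composed in parallel rather than sequentially, the whole compiled protocol remains constant-round.

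For the simulation against a malicious quantum adversary corrupting a subset $\corrupt$ of parties, I would build a straight-line simulator $\mathsf{Sim}$ as follows. On behalf of each honest party, $\mathsf{Sim}$ commits to $0$ inside $\nmCom$, follows the honest coin-tossing with a biased output (using the standard equivocation trick enabled by commit-then-send), and runs $\Pi_{\mathsf{sh}}$'s ideal-world simulator on the corrupted parties' inputs $\{x_j^*\}_{j \in \corrupt}$. The corrupted inputs are extracted, again without rewinding, from the non-malleable commitments via a straight-line extractor enabled by the slight super-polynomial hardness gap between $\nmCom$ and the other primitives (exactly as in Khurana--Sahai, adapted to the quantum setting earlier in the paper). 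All ZK proofs sent by $\mathsf{Sim}$ on behalf of honest parties are produced simultaneously using the parallel no-cloning simulator on the joint residual state of the adversary. Indistinguishability of real and ideal worlds proceeds by a sequence of hybrids that (a) switches each honest-party ZK proof to its simulated version using parallel ZK, (b) switches each non-malleable commitment from $(x_i,r_i)$ to $0$ using hiding, where the key point is that non-malleability guarantees the extracted values in $\corrupt$ are preserved across this switch, and (c) switches the $\Pi_{\mathsf{sh}}$ messages from honest execution to their semi-honest simulator's output.

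The main obstacle I expect is step (a) combined with input extraction: the parallel non-black-box simulator of the ZK primitive takes the \emph{joint} state of the adversary and produces simulated transcripts without cloning, while at the same time I need to perform straight-line extraction of the corrupted parties' committed inputs from the \emph{same} adversarial execution in order to hand them to the ideal functionality. Naively running the two procedures in sequence would require copying the adversary's post-commitment state, which is precisely what no-cloning forbids. The resolution I would pursue is to fold the extraction into the simulator itself: use the complexity leveraging gap to argue that a sub-exponential-time ``extractor inside the simulator'' can, as a classical post-processing step, read the committed inputs off the transcript and plaintext-branch of $\nmCom$, while the quantum simulation of the ZK proofs proceeds straight-line on the surviving copy of the state. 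Showing that this sequencing of extraction and no-cloning simulation is well-defined, and that the resulting hybrid argument remains sound under parallel composition against rushing quantum adversaries, is the delicate technical heart of the proof and is where the paper's new spooky-encryption-based parallel simulation technique must be invoked in an essential way.
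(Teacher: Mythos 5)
Your proposal and the paper's actual proof diverge at the very top of the argument, and the divergence matters. You propose a Katz--Ostrovsky--Smith / GMW--style commit-and-prove compilation of a constant-round \emph{semi-honest} MPC protocol $\Pi_{\mathsf{sh}}$, attaching a parallel zero-knowledge proof to every round. The paper instead reduces plain-model MPC to two cleanly separated pieces: (i) a fully-simulatable constant-round multi-party \emph{coin-flipping} protocol (Section~\ref{sec:coin-tossing}), and (ii) an off-the-shelf post-quantum MPC protocol in the programmable CRS model with straight-line black-box simulation (\eg the LWE-based protocol of~\cite{EC:MukWic16} compiled with~\cite{EC:AJLTVW12,C:PeiShi19}). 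The coin-flip generates the CRS; the~\cite{EC:KatOstSmi03}-style composition lemma in~\cref{subsec:mpc} then gives the plain-model protocol directly, with no commit-and-prove wrapper around $\Pi_{\mathsf{sh}}$ at all. This is a genuinely lighter decomposition: only a single round of parallel ZK is needed (inside the coin-flip), and the entire burden of simulating the MPC transcript is handed to the CRS-model simulator, which is already straight-line.

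Beyond the structural difference, your plan has a concrete gap in the extraction step. You propose recovering the corrupted parties' inputs from the non-malleable commitments $\nmCom$ via a sub-exponential-time ``extractor inside the simulator,'' appealing to the complexity-leveraging gap. This does not work: the ideal-world simulator must be quantum polynomial time (\cref{def:mpcdefn}), and a brute-force extractor breaking $\nmCom$'s hiding would blow that budget. The paper's $\nmCom$ (\proref{fig:basic_nmcom}) is, moreover, built from slightly super-polynomially secure SFE precisely so that it \emph{cannot} be efficiently extracted; its only job is the non-malleability used in the coin-flip hybrids. Polynomial-time straight-line extraction is performed from a \emph{different} primitive, the multi-committer extractable commitment $\PECom$ of \cref{sec:pecom}, whose extractor runs in strict QPT via the AFS-spooky evaluation trick. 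In the coin-flipping simulator (\cref{sec:coin-tossing}), the honest party both $\nmCom$-commits and $\PECom$-commits to its share; the simulator extracts the adversary's shares from $\PECom.\cE$ and then sets the honest broadcast value to force the target string. Your related ``equivocation trick'' also does not apply as stated: $\nmCom$ is perfectly binding and cannot be opened to a different value; what the paper actually does is commit honestly to garbage, force the desired broadcast value, and then \emph{simulate} the ZK consistency proof while invoking non-malleability to argue that the adversary's extracted contribution is unaffected. Until your proposal replaces super-polynomial-time extraction from $\nmCom$ with polynomial-time extraction from an independent extractable commitment, and replaces equivocation with ZK simulation plus a non-malleability argument, the simulator you describe is not a valid ideal-world adversary.
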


In more detail, our protocol is secure against any adversary $\A = \{\A_\secp,\rho_\secp\}_\secp$, where each $\A_\secp$ is the (classical) description of a polynomial-size quantum circuit and $\rho_\secp$ is some (possibly inefficiently computable) non-uniform quantum advice. Beyond being interesting in its own right, our plain-model protocol may serve as a useful stepping stone to obtaining interesting protocols for securely computing quantum circuits in the plain model, as evidenced by the work of~\cite{DBLP:conf/eurocrypt/DulekGJMS20}.
This protocol is constructed in Sections \ref{sec:coin-tossing} and \ref{sec:mpc}.

By ``mildly'' super-polynomial quantum hardness of LWE, we mean to assume that there exists a constant $c \in \mathbb{N}$, such that for large enough security parameter $\secp \in \mathbb{N}$, no quantum polynomial time algorithm can distinguish LWE samples from uniform with advantage better than $\negl(\secp^{\mathsf{ilog}(c,\secp)})$, where $\mathsf{ilog}(c,\secp)$ denotes the $c$-times iterated logarithm 
${\log \log \cdots_{{c}~\mathrm{times}}}(\secp)$.
We note that this is weaker than assuming the quasi-polynomial quantum hardness of LWE, i.e. the assumption that quantum polynomial-time adversaries cannot distinguish LWE samples from uniform with advantage better than $2^{-{(\log \secp)}^c}$ for some constant $c > 1$.


A key technical ingredient in our work is an additive function sharing (AFS) spooky encryption scheme \cite{C:DHRW16} for relations computable by quantum circuits. An AFS-spooky encryption scheme has a publicly-computable algorithm that, on input a set of ciphertexts $\enc(\pk_1, m_1),\allowbreak \dots,\allowbreak \enc(\pk_n, m_n)$ encrypted under \emph{independently sampled} public keys and a (possibly quantum) circuit $C$, computes a new set of ciphertexts
$$
\enc(\pk_1, y_1), \dots, \enc(\pk_n, y_n) \textit{ s.t. }\mathop{\bigoplus}\limits_{i=1}^n y_i = C(m_1,\dots,m_n).
$$
In Section \ref{sec:spooky} we show how to construct AFS-spooky encryption for relations computable by quantum circuits, under an LWE-based circular security assumption. We refer the reader to~\cref{subsec:keyswitch} for the exact circular security assumption we need, which is similar to the one used in~\cite{FOCS:Mahadev18b}.  As a corollary, this immediately yields the first multi-key fully-homomorphic encryption~\cite{STOC:LopTroVai12} for quantum circuits with classical key generation and classical encryption of classical messages.


\begin{theorem}[Informal]
Under an appropriate LWE-based circular security assumption, there exists an AFS-spooky encryption scheme for relations computable by polynomial-size quantum circuits with classical key generation and classical encryption of classical messages.
\end{theorem}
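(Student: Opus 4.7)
The plan is to follow the Dodis--Halevi--Rothblum--Wichs (DHRW) blueprint for constructing AFS-spooky encryption from multi-key FHE equipped with threshold decryption, but to replace each classical primitive by a quantum-capable analogue that preserves \emph{classical} key generation and classical encryption of classical messages. I would proceed bottom-up in three stages.

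First, I would construct a leveled multi-key $\qfhe$ with classical keys and classical encryption of classical inputs, supporting homomorphic evaluation of polynomial-size quantum circuits. The starting point is Mahadev's $\qfhe$, in which classical LWE-style ciphertexts encrypting classical messages can be transformed, by a quantum evaluation algorithm, into a classical ciphertext of $C(m)$ for any polynomial-size quantum circuit $C$ with classical output. To obtain the multi-key property, I would adapt a GSW-style ciphertext expansion \`a la Mukherjee--Wichs and compose it with the key-switching technique of \cref{subsec:keyswitch}: we publish encryptions of secret keys under independently sampled public keys (this is precisely where the circular security assumption is invoked) and use these hints to combine ciphertexts under distinct $\pk_i$ into ciphertexts under a joint public key $\widehat{\pk}$ whose secret key is the concatenation $\widehat{\sk}=(\sk_1,\ldots,\sk_n)$.

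Second, I would equip this multi-key $\qfhe$ with a threshold decryption procedure producing additive shares. Because $C$ outputs a classical string, the evaluated ciphertext behaves as a standard LWE ciphertext under $\widehat{\sk}$, so each party $i$ can compute a partial decryption $\pd_i$ that is a linear function of $\sk_i$ with a large smudging noise term, in the style of Mukherjee--Wichs. The $\pd_i$'s combine linearly to the output $y=C(m_1,\ldots,m_n)$, and any $n{-}1$ of them are simulatable given the remaining share and $y$. Each party then locally masks its partial decryption with a fresh random string to extract a classical share $y_i$ with $\bigoplus_i y_i=y$, and re-encrypts $y_i$ under its own classical public key $\pk_i$. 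The resulting tuple $\enc(\pk_1,y_1),\ldots,\enc(\pk_n,y_n)$ is the AFS-spooky output; since re-encryption is purely classical, the overall scheme retains classical keys and classical encryption of classical messages, and the intermediate joint-key ciphertext already witnesses the advertised multi-key $\qfhe$ for quantum circuits as a corollary.

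The main obstacle I expect is the security proof. In the classical DHRW argument, one threads semantic security of the underlying MK-FHE through a hybrid that replaces honest inputs with zeros, and then exploits the one-time-pad flavor of the additive shares against the corrupted parties. In our setting I would need (i) indistinguishability of the key-switching hints, which rests squarely on the LWE-based circular security assumption of \cref{subsec:keyswitch}; (ii) a quantum hybrid argument against a quantum adversary with quantum advice, invoking the post-quantum semantic security of Mahadev's $\qfhe$ while scrupulously avoiding rewinding; and (iii) choosing the smudging parameter for partial decryptions so that it super-polynomially dominates any noise produced by the quantum evaluation, so that the partial decryptions of corrupted parties can be simulated from the final output alone. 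The delicate point is (iii) coupled with (ii): showing that simulated partial decryptions remain indistinguishable to an arbitrary quantum distinguisher holding residual entanglement with the ciphertexts, which I would handle by a statistical smudging step applied before any computational reduction, so that the subsequent hybrids reduce to semantic security of a purely classical encryption of classical shares.
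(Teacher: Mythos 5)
Your first stage (lifting Mahadev to the multi-key setting via key-switching so that the evaluated ciphertext admits a usable trapdoor) tracks the paper, but your second stage has a fundamental mismatch with what AFS-spooky encryption requires. Spooky evaluation is a \emph{non-interactive} algorithm run by a keyless evaluator: given only $(\pk_1,\ldots,\pk_n)$, the circuit $C$, and the input ciphertexts, it must output $\ct'_1,\ldots,\ct'_n$, after which party $i$ alone decrypts $\ct'_i$ with $\sk_i$ to recover its share $y_i$. Your step 2 instead has ``each party $i$ compute a partial decryption $\pd_i$ that is a linear function of $\sk_i$,'' smudge it, and then re-encrypt the resulting share $y_i$ under $\pk_i$. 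None of this is executable by the evaluator: it holds no $\sk_i$ to form $\pd_i$, and it never sees $y_i$ to re-encrypt it. What you have written is the online round of an MPC protocol built from threshold multi-key FHE, not an encryption primitive. Relatedly, the partial-decryption simulatability you lean on in obstacle (iii) is needed for MKFHE-based MPC but is explicitly \emph{not} part of the paper's multi-key FHE definition and plays no role in its spooky construction; neither does smudging.

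The paper bridges this gap with a \emph{second} layer of key-switching rather than with threshold decryption. In addition to its $\qmfhe$ key pair $(\pk'_i,\sk'_i)$, each party draws a key pair $(\tilde{\pk}_i,\tilde{\sk}_i)$ for a purely \emph{classical} AFS-spooky scheme with distributed setup (BHP17 upgraded via the DHRW two-key spooky-multiplication trick) and publishes $\tilde{c}_i = \spooky.\enc(\tilde{\pk}_i,\sk'_i)$. After quantum multi-key evaluation yields a single \emph{classical} multi-key ciphertext $\hat{c}$, the evaluator homomorphically runs the classical circuit $\qmfhe.\dec(\,\cdot\,,\hat{c})$ on $(\tilde{c}_1,\ldots,\tilde{c}_n)$ under the classical spooky scheme; the shares $y_i$ and the output ciphertexts $\ct'_i$ under $\tilde{\pk}_i$ fall out of the classical scheme's correctness, non-interactively and without smudging. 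Your plan could alternatively be repaired without the auxiliary classical spooky scheme by exploiting nearly-linear decryption of classical ciphertexts directly: the evaluator decomposes $\mathcal{L}_{\hat{c}} = \sum_i \mathcal{L}_{\hat{c}}^{(i)}$, samples offsets $\delta_i$ summing to zero, and outputs $\ct'_i = (\mathcal{L}_{\hat{c}}^{(i)},\delta_i)$, so that decrypting with $\sk_i$ yields $\mathcal{L}_{\hat{c}}^{(i)}(\sk_i)+\delta_i$ rounded, with correctness from the DHRW rounding lemma. But that is a replacement for, not an implementation of, the ``each party masks and re-encrypts'' step you describe.
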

Our most important technical contribution is a construction of  constant-round zero-knowledge arguments against parallel quantum verifiers, and constant-round extractable commitments against parallel quantum committers. 
Here, we develop a novel {\em parallel no-cloning non-black-box simulation} technique. This uses as a starting point the recently introduced no-cloning technique of~\cite{BS20,EPRINT:ALP19}, which in turns builds on the classical non-black-box technique of Bitansky, Khurana and Paneth~\cite{STOC:BKP19}.

We point out that we do not obtain protocols that compose under \emph{unbounded} parallel repetition. Instead we build a bounded variant in the multi-party setting (that we also refer to as multi-verifier zero-knowledge and multi-committer extractable commitments) that suffices for our application to constant round MPC.
Our technique makes crucial use of AFS-spooky encryption for relations computable by classical circuits.
Parallel extractable commitments and zero-knowledge are formally constructed and analyzed in Sections \ref{sec:pecom} and \ref{sec:pzk}, respectively. 

\begin{theorem}[Informal]
Assuming the quantum polynomial hardness of LWE and the existence of AFS-spooky encryption for relations computable by polynomial-size quantum circuits, there exists:
\begin{itemize}
    \item A constant-round classical argument for NP that is computational-zero-knowledge against parallel quantum polynomial-size verifiers.
    \item A constant-round classical commitment that is extractable against parallel quantum polynomial-size committers.
\end{itemize}
\end{theorem}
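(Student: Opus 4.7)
My plan is to extend the no-cloning non-black-box simulation of Bitansky--Shmueli (BS20) to the parallel-verifier setting by using the AFS-spooky encryption for quantum circuits constructed in the previous theorem. Recall that BS20 follows the BKP19 blueprint: the prover first commits to a ``predictor'' function $f$, and later argues via a witness-indistinguishable protocol that either the statement is true or $f$ correctly predicts the verifier's next message from an earlier prover message. The simulator instantiates $f$ with a succinct representation of the quantum verifier's next-message function, which lets it produce the final prover step while manipulating only a single copy of the verifier's quantum state.

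For $k = k(\secp) = \poly(\secp)$ parallel verifiers $V_1, \ldots, V_k$ with joint (possibly entangled) advice, directly running the BS20 simulator independently against each $V_i$ would require cloning the joint state, which is quantumly impossible. Instead, I would have each session $i$ generate a fresh spooky-encryption key pair $(\pk_i, \sk_i)$ as part of the verifier's first message, and design the simulator so that the prover-side messages it feeds to the joint adversary $\mathcal{A} = (V_1, \ldots, V_k)$ are spooky-encrypted under $(\pk_1, \ldots, \pk_k)$. To recover the predictor needed for session $i$, the simulator homomorphically evaluates the joint next-message quantum circuit of $\mathcal{A}$ through the AFS-spooky evaluator. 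The output is an additive sharing of the $k$ verifier next-messages, each share encrypted under its respective key, from which the simulator extracts the information it needs for every session without ever duplicating $\mathcal{A}$'s state.

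The main obstacle I anticipate is a careful hybrid argument that switches from the honest prover strategy to the predictor-based WI strategy one session at a time, while the remaining sessions are still being executed (honestly or under simulation) inside the joint adversary. The AFS-spooky security should ensure that messages encrypted under not-yet-switched keys remain hidden, so that each hybrid step reduces to a single-session analogue of the BS20 argument, and the per-session predictor extraction goes through as in the standalone case. Composing these hybrids, together with the standard WI and commitment reductions, yields a constant-round classical argument that is computational zero-knowledge against parallel quantum polynomial-time verifiers from the quantum polynomial hardness of LWE and the AFS-spooky encryption.

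Given this parallel ZK argument of knowledge, the parallel extractable commitment follows by a fairly standard compilation: the committer sends a statistically hiding classical commitment to its message, and then proves knowledge of a valid opening via the parallel ZK argument of knowledge. Extraction against $k$ parallel committers reduces to the proof-of-knowledge property in the parallel-verifier setting, where the AFS-spooky evaluator again plays the role of running the joint adversary only once, yielding additive shares of each committer's committed value. All required building blocks (statistically hiding commitments, WI arguments, and the spooky encryption itself) are available under quantum polynomial hardness of LWE and the circular-security assumption imported from the previous theorem, which completes the argument.
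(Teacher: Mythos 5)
Your high-level plan---go ``under the hood'' of all verifiers' or committers' keys simultaneously via AFS-spooky encryption so that the adversary's joint quantum circuit only ever needs to be run once---matches the paper's central idea. However, the proposal stops exactly where the real technical difficulty begins, and it also mischaracterizes the BS20 paradigm in a way that matters.

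First, the description of BS20 is off. You describe the BKP19 classical blueprint (prover commits to a predictor $f$, then argues via WI that either $x\in\cL$ or $f$ predicts the verifier's next message). BS20's quantum-secure protocol, which this paper actually builds on, instead has the \emph{verifier} (or, in the paper's commitment, the \emph{committer}) send an FHE encryption of a random trapdoor $\td_i$ together with a compute-and-compare obfuscation $\obfC_i$ of $\CC{\dec(\sk_i,\cdot)}{\lock_i}{(\sk_i, m_i)}$. The simulator/extractor homomorphically runs the adversary on $\enc(\pk_i,\td_i)$ to obtain $\enc(\pk_i,\lock_i)$, feeds it to $\obfC_i$, and recovers $(\sk_i,m_i)$ in the clear. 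Getting the player roles and the obfuscation structure right matters because the whole problem in the parallel setting is about who holds the keys.

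Second, and more fundamentally, your proposal has a gap precisely at the step ``from which the simulator extracts the information it needs for every session.'' After the AFS-spooky evaluation of the joint adversary, the simulator holds $\ct_i = \enc((\pk_1,\dots,\pk_k), \lock_i)$ for each $i$, i.e.~an additive sharing where the $j$-th share is encrypted under $\pk_j$. The simulator has \emph{none} of the secret keys (they belong to the adversarial verifiers/committers), and $\obfC_i$ expects a \emph{single-key} ciphertext $\enc(\pk_i,\lock_i)$, not a multi-key one. Your sketch simply asserts that the spooky evaluator ``yields additive shares of each committer's committed value,'' but encrypted shares of a classical string are not the string itself. The paper's actual contribution here is the recursive peel-away procedure: re-interpret all but the last share of $\ct_n$ as an $(n-1)$-key spooky ciphertext, nest it so that the innermost object is a genuine single-key ciphertext $\enc(\pk_n,\lock_n)$, homomorphically evaluate $\obfC_n$ \emph{inside} the outer $(n-1)$-key encryption to obtain $\enc((\pk_1,\dots,\pk_{n-1}),(\sk_n,m_n))$, use the now-available encrypted $\sk_n$ to strip the $n$-th layer from every remaining $\ct_i$, and recurse down to a single key. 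Without this (or something equivalent), your simulator is stuck holding ciphertexts it cannot open, and the parallel simulation does not go through. A correct proposal must supply this step or an explicit alternative.

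Finally, a smaller structural concern: the paper deliberately builds the parallel \emph{extractable commitment} first (committers send the FHE ciphertext and CC obfuscation; the receiver/extractor peels them apart) and then derives parallel ZK in a routine way by having verifiers commit to their $\Sigma$-protocol challenges with that commitment. You go the other way---parallel ZK first, then derive the commitment by ``commit-and-prove-knowledge.'' This is not obviously wrong in outline, but you should be aware that ``extraction from $k$ parallel committers'' is exactly the proof-of-knowledge-against-parallel-provers property, which is the same hard primitive you would be trying to bootstrap from; you cannot get it for free from parallel zero-knowledge (which is a guarantee against parallel \emph{verifiers}, on the other side of the protocol). In other words, your second bullet implicitly assumes a ``parallel AoK extractor,'' which is the content you were supposed to establish.
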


In addition, we initiate the study of post-quantum non-malleable commitments. Specifically, we construct and rely on constant-round post-quantum non-malleable commitments based on the super-polynomial hardness assumption described above. The formal construction and analysis can be found in Section \ref{sec:nmc}.  

\begin{theorem}[Informal]
Assuming the mildly super-polynomial quantum hardness of LWE and the existence of fully-homomorphic encryption for quantum circuits, there exists a constant-round non-malleable commitment scheme secure against quantum polynomial-size adversaries.
\end{theorem}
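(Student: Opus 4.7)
The plan is to post-quantize the Khurana-Sahai construction of constant-round non-malleable commitments (FOCS 2017), by instantiating its building blocks with the post-quantum primitives established earlier in the paper and adjusting the security proof to be rewinding-free. Khurana-Sahai follows a ``simulate-then-extract'' blueprint: a base non-malleable commitment supporting only a small tag space is built from (weak) zero-knowledge-like arguments together with extractable commitments, and is then lifted to polynomially many tags via a tag-amplification compiler. Both steps are proved secure by simulating the left-hand (honest) session while extracting the man-in-the-middle's committed value from the right-hand session, with complexity leveraging used to keep the underlying hiding properties intact throughout.

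First I would realize the base scheme by replacing the two-message weak zero-knowledge ingredient with the constant-round parallel-secure post-quantum zero-knowledge argument, and the classical extractable commitment with the parallel-secure post-quantum extractable commitment, both obtained from the preceding theorems. The essential feature of these replacements is that both primitives admit straight-line, no-cloning simulators/extractors against quantum adversaries, which is what lets us avoid rewinding the MIM. Complexity leveraging enters by setting the hiding parameter of the underlying base commitment to $T = \secp^{\log^{c}\secp}$ for some constant $c$ (i.e., mildly super-polynomial), while the parallel non-black-box simulator and the extractor both remain quantum polynomial time; this is precisely where the mildly super-polynomial quantum hardness of LWE is needed.

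Next I would redo the non-malleability proof by a sequence of hybrids, switching the left-session message from $m_0$ to $m_1$ one step at a time while simultaneously running the quantum extractor on the right-hand session to recover the MIM's committed value. Each step must be indistinguishable to a quantum distinguisher that is itself allowed to run the polynomial-time parallel no-cloning simulator; complexity leveraging ensures that, at the step where the underlying commitment is swapped, the super-polynomial quantum hardness of LWE still dominates the distinguisher's total running time. Finally I would invoke the Khurana-Sahai tag-amplification compiler to extend the tag space from the constant-tag base scheme to any polynomial number of tags, using the \emph{parallel} versions of our primitives so that the compiler's parallel-composition steps go through against quantum adversaries.

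The main obstacle I expect is the interaction between the non-black-box no-cloning simulator and the complexity-leveraging hybrid argument. The simulator is delicate: it interleaves the adversary's quantum state with encoded obfuscations of its own next-message function and does not compose cleanly with arbitrary hybrid reductions, so care is needed in sequencing the hybrids so that each swap of an underlying commitment preserves the invariant that the simulator-plus-distinguisher combination runs in time beneath the super-polynomial LWE barrier. A secondary subtlety is proving that straight-line extraction on the right session remains coherent throughout the straight-line simulation on the left, without ever cloning the MIM's state; this is exactly what the parallel no-cloning framework of the paper is designed to enable, and it is the point where the post-quantum analysis most departs from the classical Khurana-Sahai proof.
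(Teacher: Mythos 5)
Your proposal mischaracterizes both the Khurana--Sahai base scheme and the paper's adaptation of it, and the gap is not cosmetic. Neither the original KS17 base scheme nor the paper's post-quantum variant is built from extractable commitments plus (weak) zero-knowledge; the crux is an \emph{erasure channel} implemented via two-message, statistically circuit-private SFE. Concretely, the committer first sends a plain perfectly binding commitment $\COM(m;r)$ with $\negl(\secp^{\mathsf{ilog}(c,\secp)})$ hiding, and then runs two sequential SFE executions in which $(m\|r)$ is disclosed to the receiver iff the committer's random guess $s_i$ matches the receiver's hidden random string $r_i$; the lengths of these strings are chosen as $\tagg\cdot\log\eta$ and $(2N-\tagg)\cdot\log\eta$ for $\eta=\secp^{\mathsf{ilog}(c+1,\secp)}$, so the disclosure probability is a tag-dependent negligible function, and for any $\tagg\neq\tagg'$ one of the two channels reveals the MIM's value strictly more often than it reveals the honest committer's. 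A (single-verifier) ZK argument only serves to pin down that the SFE responses are consistent with $\COM$. The non-malleability proof is a \emph{straight-line} conditional-probability argument (condition on $s_i'=r_i'$ in the right session, use SFE input privacy to show the distinguisher cannot tell, then exploit the probability gap to contradict the super-polynomial hiding of $\COM$), together with the paper's $\zkSimNoAbort/\zkSimAbort$ split for the abort case; there is no explicit ``quantum extractor on the right-hand session'' and in particular no parallel extractable commitment anywhere in the base scheme.

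This matters because the alternative route you sketch --- simulate on the left with the parallel no-cloning simulator, extract on the right with the parallel extractable-commitment extractor, and leverage $\secp^{\log^c\secp}$ hardness to cover the hybrid swaps --- does not resolve the central tension. If the right-session extractor is quantum polynomial time and the commitment it is extracting from is hiding against quantum polynomial time (even with a super-polynomial security parameter), then the extractor cannot succeed on an honestly generated commitment; and if you make the extractor super-polynomial time, it breaks the hiding of the honest left commitment during the hybrids, exactly the failure mode KS17 worked around. The paper sidesteps this entirely: ``extraction'' is only a probability-$\eta^{-\tagg'}$ event built into the transcript via the erasure channel, so the reduction runs the adversary once, straight-line, in polynomial time, and the asymmetry in transmission probabilities (not a run-time asymmetry between an extractor and a hiding bound) is what drives the argument. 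Your write-up also uses $\secp^{\log^c\secp}$ rather than the much weaker $\secp^{\mathsf{ilog}(c,\secp)}$ the paper obtains, and it imports the parallel/multi-verifier machinery where the paper explicitly notes it is not needed for this section. To fix the proposal you would need to introduce the SFE erasure channel, drop the right-session extractor, and replace the hybrid argument with the conditional probability-gap argument; once the base scheme is in place, the tag-amplification step via DDN-style small-tag encoding plus ZK consistency proofs is essentially what you describe and applies a constant number of times.
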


We also obtain quantum-secure non-malleable commitments in $O(\mathsf{ilog}(c,\secp))$ rounds for any constant $c \in \mathbb{N}$ based on any (polynomially) quantum-secure extractable commitment. In particular, plugging in these commitments instead of our constant round non-malleable commitments gives an $O(\mathsf{ilog}(c,\secp))$ round quantum-secure MPC from any quantum AFS-spooky encryption scheme.




\section{Technical Overview}
\subsection{Background}\label{subsec:tech-background}
Our starting point is any constant-round post-quantum maliciously-secure MPC protocol in the (programmable) common random string (CRS) model. A long line of work has studied constant-round MPC in the CRS model~\cite{STOC:CLOS02,C:IshPraSah08,EC:AJLTVW12,EC:MukWic16,EC:BenLin18,EC:GarSri18a}, and these protocols can all be instantiated with primitives that are plausibly quantum-secure. One method of arguing that the resulting protocol is post-quantum secure is to demonstrate that (i) the simulator does not rewind or clone the adversary's state, and (ii) the reductions to the underlying quantum-secure primitives used to establish indistinguishability of the real and simulated world do not rewind or clone the adversary's state. As a concrete example, since the simulators and reductions in~\cite{EC:GarSri18a} are non-rewinding and non-cloning, implicit in their work is the theorem that any two-message post-quantum maliciously-secure oblivious transfer (OT) in the CRS model with a straight-line simulator implies a two-round post-quantum maliciously-secure MPC protocol in the CRS model. Such an OT is known for example from the quantum hardness of learning with errors~\cite{C:PeiVaiWat08}.


Thus, a natural approach to achieving post-quantum MPC in the plain model is to then securely implement a multi-party functionality that generates the aforementioned CRS. Specifically, we would like a set of $n$ parties to jointly execute a {\em coin-flipping protocol}. Such a protocol outputs a uniformly random string that may then be used to implement a post-quantum MPC protocol in the CRS model.
The programmability requirement on the CRS roughly translates to ensuring that for any quantum adversary, there exists a simulator that on input a random string $s$, can force the output of the coin-flipping protocol to be equal to $s$. A protocol satisfying this property is often referred to as a {\em fully-simulatable} multi-party coin-flipping protocol.



\paragraph{Post-Quantum Multi-Party Coin-Flipping.}
Existing constant-round protocols ~\cite{FOCS:Wee10,STOC:Goyal11}
for multi-party coin-flipping against classical adversaries make use of the following template.
Each participant first commits to a uniformly random string using an appropriate perfectly binding commitment.\footnote{We actually require this commitment to also satisfy a property called {\em non-malleability}, which we discuss later in this section.} In a later phase, all participants reveal the values they committed to, without actually revealing the randomness used for commitment.
Additionally, each participant proves (in zero-knowledge) to every other participant that they opened to the same value that they originally committed to.
If all zero-knowledge arguments verify, the protocol output is computed as the sum of the openings of all participants.

But not every classically secure zero-knowledge argument based on post-quantum assumptions is post-quantum secure. Building on prior work~\cite{EPRINT:ALP19}, in Appendix \ref{app:zkcounterexample}, we outline a ZK argument that is classically secure, and is based entirely on post-quantum assumptions (LWE), but is not post-quantum secure.

To highlight challenges in constructing constant-round protocols, we elaborate on the template discussed above and outline a simple polynomial-round coin tossing protocol. Readers familiar with this template for multi-party coin-tossing may skip a page.

\paragraph{A Simple Protocol in Polynomially Many Rounds.}
In order to motivate the challenges involved in constructing a post-quantum \textit{constant-round} multiparty coin tossing protocol, we first outline a simple protocol that requires \textit{polynomially many} rounds, and follows from ideas in existing work.
Our starting point is the polynomial-round post-quantum zero-knowledge protocol due to Watrous~\cite{10.1137/060670997}. 
Ideas developed in~\cite{BS20} almost immediately convert this to a post-quantum extractable commitment, assuming polynomial hardness of LWE (or, more generally, any post-quantum oblivious transfer). 
For completeness, we outline how this is done in Appendix~\ref{app:poly-round}.

Next, it is possible to use the resulting post-quantum secure extractable commitment to obtain post-quantum multi-party fully-simulatable coin flipping, that admits a straight-line simulator in the dishonest majority setting. The protocol requires rounds that grow linearly with the number of parties and polynomially with the security parameter, and is described in Figure \ref{fig:ct-intro}. 
At a very high level, the protocol requires each party to sample uniform randomness. Then each party sequentially commits (via an extractable commitment) to the randomness it sampled. In the next step, all parties broadcast their randomness in the clear, together with (sequential) zero-knowledge proofs by each party that the broadcasted randomness is consistent with the randomness that was previously committed.

\begin{figure}[ht!]
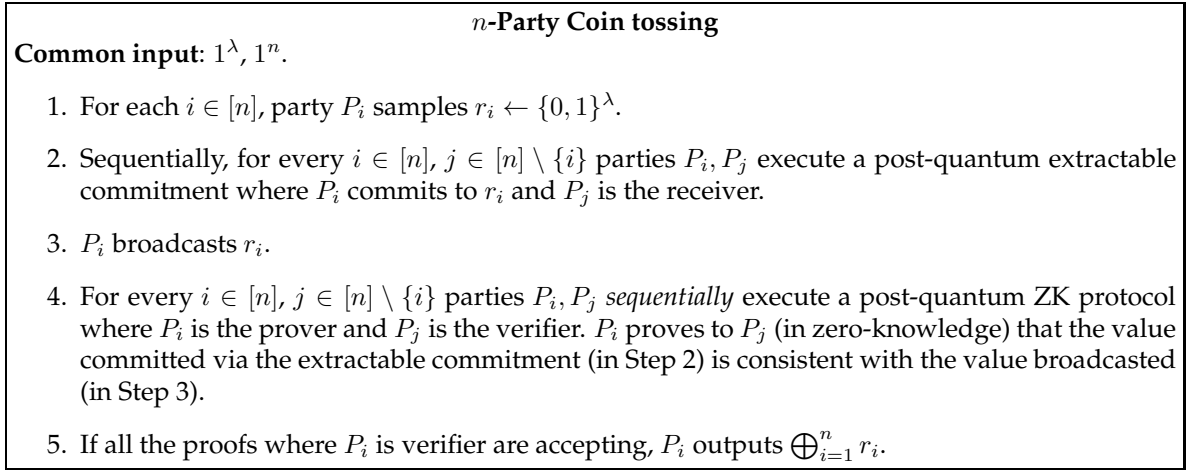

\begin{boxedalgo} 
\begin{center}
\textbf{$n$-Party Coin tossing}
\end{center}
\textbf{Common input}: $1^\secp$, $1^n$.
\begin{enumerate}
\item For each $i \in [n]$, party $P_i$ samples $r_i \gets \zo^\secp$.

\item Sequentially, for every $i \in [n]$, $j \in [n] \setminus \{i\}$ parties $P_i, P_j$ execute 
a post-quantum extractable commitment where $P_i$ commits to $r_i$ and $P_j$ is the receiver. 

\item $P_i$ broadcasts $r_i$.

\item For every $i \in [n]$, $j \in [n] \setminus \{i\}$ parties $P_i, P_j$ {\em sequentially} execute 
a post-quantum ZK protocol where $P_i$ is the prover and $P_j$ is the verifier. $P_i$  proves to $P_j$ (in zero-knowledge) that the value committed via the extractable commitment (in Step 2) is consistent with the value broadcasted (in Step 3).

\item If all the proofs where $P_i$ is verifier are accepting, $P_i$ outputs $\bigoplus_{i=1}^{n} r_i$.
\end{enumerate}
\end{boxedalgo}
\caption{Multiparty Coin Tossing}
\label{fig:ct-intro}
\end{figure}


Recall that the simulator $\simulator$ of any coin-flipping protocol obtains a uniformly random string $r^*$ from the ideal functionality, and must force this value as the output. 
We will briefly describe the construction of $\simulator$ for the case where $\cA$ controls $n-1$ parties and $\simulator$ plays the role of the only honest party $P_1$ (the same technique can be easily extended to the case where $\cA$ controls any arbitrary subset of parties). To do so, 
The $\simulator$ for the protocol in Figure \ref{fig:ct-intro} samples $r_i$ uniformly at random on behalf of each honest party $P_i$, and commits to $r_i$ in Step 2 following honest sender strategy. At the same time, $\simulator$ runs $\extractor$ to (sequentially) extract the value committed by every corrupted party in Step 2. This allows the simulator to compute $\bigoplus_{i \in \mathbb{M}} r_i$, where $\mathbb{M}$ denotes the set of corrupted parties. In Step 3, the simulator broadcasts values $r'_i$ on behalf of honest parties such that $\bigoplus_{i \in [n] \setminus \mathbb{M}} r'_i  = \bigoplus_{i \in \mathbb{M}} r_i \oplus r^*$.
Finally, it invokes the simulator of the ZK protocol to produce proofs on behalf of honest parties.
%
It is easy to see that the output would indeed end up being the intended output $r^*$.

Notice that replacing Watrous' polynomial-round ZK protocol with the constant-round ZK of~\cite{BS20,EPRINT:ALP19} only decreases the rounds to linear in the number of parties.
To decrease the number of rounds to constant, it is clear that one would need to find a way to execute the commitment sessions (Step 2) and ZK sessions (Step 4) in parallel.
%
While the recent work of Bitansky and Shmueli~\cite{BS20} builds constant-round post-quantum zero-knowledge, their protocol and its guarantees turn out to be insufficient for the parallel setting. In this setting, a single prover would typically need to interact in parallel with $(n-1)$ different verifiers, a subset or all of which may be adversarial. It should be possible for a simulator to {\em simultaneously} simulate the view of multiple parallel verifiers. In addition, the argument should continue to satisfy soundness, even if a subset of verifiers colludes with a (cheating) prover.

%

\paragraph{Post-Quantum Parallel Zero-Knowledge.} 
We overcome this barrier by building the first constant-round zero-knowledge argument secure against {\em parallel quantum verifiers} from quantum polynomial hardness of an LWE-based circular security assumption. 
This improves upon the work of~\cite{BS20,EPRINT:ALP19} who provided arguments with provable security only against a single quantum verifier.
Very roughly, the approach in~\cite{BS20,EPRINT:ALP19} relies on a modification of the~\cite{STOC:BKP19} homomorphic trapdoors paradigm. 
We do not assume familiarity with the details of this protocol or paradigm, and will in fact discuss a (variant of) this in the next subsection.
For now, we simply point out that in this paradigm, the verifier generates an initial FHE ciphertext and public key, as well as some additional information to enable simulation.
The simulator {\em homomorphically evaluates} the verifier's (quantum) circuit over the initial FHE ciphertext and then uses the result of this evaluation to recover secrets that will enable simulation.

However, when a prover interacts with several verifiers at once, each verifier will generate its own FHE ciphertexts. In a nutshell, in the parallel setting the simulator 
can no longer perform individual homomorphic evaluations corresponding to each verifier, due to no-cloning.
To address this issue, we develop a novel {\bf parallel no-cloning} simulation strategy. This is our key technical contribution: we develop a novel technique 
that enables the simulator to {\em peel away} secret keys of this FHE scheme layer-by-layer. 
An overview of this technique can be found in Section \ref{sec:over-pzk}.

Our technique also relies on a strong variant of quantum fully-homomorphic encryption that allows for homomorphic operations under multiple keys at once.
The encryption scheme that we use is a quantum generalization of the notion of additive function sharing (AFS) {\em spooky encryption}~\cite{C:DHRW16}. 
As a contribution of independent interest, we build the first AFS-spooky encryption (that also implies multi-key FHE) for quantum circuits from a circular variant of the LWE assumption. We give an overview of our construction in Section \ref{sec:over-spooky}.

\paragraph{Post-Quantum Non-malleable Commitments.}
Our construction of zero-knowledge against parallel quantum verifiers gives rise to a coin-flipping protocol that is secure as long as at least one participant is honest, and all committed strings are independent of each other. However, ensuring such independence is not straightforward, even in the classical setting. 
In fact, upon seeing an honest party's commitment string $c$, a malicious, rushing adversary may be able to produce a string $c'$ that commits to a related message. 
This is known as a malleability attack, and can be prevented by relying on {\em non-malleable commitments}.
In this work, we devise the first post-quantum non-malleable commitments based on slightly superpolynomial hardness of LWE. An overview of our construction can be found in Section \ref{sec:over-nm}.


Finally, we discuss how to combine all these primitives to build our desired coin-tossing protocol, and a few additional subtleties that come up in the process, in Section \ref{sec:over-putting}.

\subsection{A New Parallel No-Cloning Non-Black-Box Simulation Technique}
\label{sec:over-pzk}

In the following we give a high-level overview of our constant-round zero-knowledge protocol secure against parallel quantum verifiers. In favor of a simpler exposition, we first describe a \emph{parallel extractable commitment} protocol. A parallel extractable commitment is a commitment where a single receiver interacts in parallel with multiple committers, each committing to its own independent message. The main challenge in this setting is to simulate the view of an adversary corrupting several of these committers, while \emph{simultaneously} recovering all committed messages. Once we build a parallel extractable commiment, obtaining a parallel zero-knowledge protocol becomes a simple exercise (that we discuss towards the end of this overview).

Throughout the following overview we only consider adversaries that are (i) \emph{non-aborting}, i.e. they never interrupt the execution of the protocol, and (ii) \emph{explainable}, i.e. their messages always lie in the support of honestly generated messages, though they can select their random coins and inputs arbitrarily. We further simplify our overview by only considering (iii) \emph{classical} adversaries, while being mindful to avoid any kind of state cloning during extraction. In the end of this overview we discuss how to remove these simplications. 

\paragraph{Cryptographic Building Blocks.} Before delving into the description of our protocol, we introduce the technical tools needed for our construction. A fully-homomorphic encryption (FHE) scheme~\cite{STOC:Gentry09} allows one to compute any function (in its circuit representation) over some encrypted message $\enc(\pk, m)$, without the need to decrypt it first. We say that an FHE is multi-key~\cite{STOC:LopTroVai12} if it supports the homomorphic evaluation of circuits even over messages encrypted under \emph{independently sampled} public keys:
$$
\{\enc(\pk_i, m_i)\}_{i\in [n]} \xrightarrow{\eval((\pk_1, \dots, \pk_n),C, \cdot)} \enc((\pk_1, \dots, \pk_n), C(m_1, \dots, m_n)).
$$
Clearly, decrypting the resulting ciphertext should require the knowledge of all of the corresponding secret keys $(\sk_1, \dots, \sk_n)$. 
Other than semantic security, we require that the scheme is compact, in the sense that the size of the evaluated ciphertext is proportional to $|C(m_1, \dots, m_n)|$ (and possibly the number of parties $n$) but does not otherwise depend on the size of $C$.

The second tool that we use is compute and compare obfuscation~\cite{FOCS:WicZir17,FOCS:GoyKopWat17}. A compute and compare program $\CC{f}{u}{z}$ program is defined by a function $f$, a lock value $u$, and an output $z$. On input a string $x$, the program returns $z$ if and only if $f(x) = u$. The obfuscator $\obf$ is guaranteed to return an obfuscated program $\obfC$ that is indistinguishable from a program that rejects any input, as long as $u$ has sufficient entropy conditioned on $f$ and $z$. Finally, we use a conditional disclosure of secret (CDS)\footnote{In the body of the paper we actually resort to a slightly stronger tool, namely a secure function evaluation protocol with statistical circuit privacy.} scheme. Recall that this is an interactive protocol parametrized by an NP relation $\mathcal{R}$ where both the sender and the recevier share a statement $x$ and in addition, the sender has a secret message $m$. At the end of the interaction, the receiver obtains $m$ if and only if it knows a valid witness $w$ such that $\mathcal{R}(x,w)=1$.

\paragraph{A Strawman Solution.} We now describe a naive extension of the~\cite{BS20,EPRINT:ALP19} approach to the parallel setting (where a receiver interacts with multiple committers), and highlight its pitfalls. 
We do not assume familiarity with~\cite{BS20,EPRINT:ALP19}.

To commit to messages $(m_1, \dots, m_n)$, the 
committers and the receiver engage in the following protocol.
\begin{itemize}
    \item Each committer samples a key pair of a multi-key FHE scheme $(\pk_i, \sk_i)$, a uniform trapdoor $\td_i$, and a uniform lock value $\lock_i$, and sends to the receiver: 
    \begin{enumerate}
        \item A commitment $\cm_i = \COM(\td_i)$.
        \item An FHE encryption $\enc(\pk_i, \td_i)$.
        \item An obfuscation $\obfC_i$ of the program $\CC{\dec(\sk_i, \cdot)}{\lock_i}{(\sk_i, m_i)}$. 
    \end{enumerate}
    \item The receiver engages each committer in a (parallel) execution of a CDS protocol where the $i$'th committer sends $\lock_i$ if the receiver correctly guesses a valid pre-image of $\cm_i$.
\end{itemize}
At a high level, the fact that the protocol hides the message $m_i$ is ensured by the following argument. Since the receiver cannot invert $\cm_i$, it cannot guess $\td_i$ and therefore the CDS protocol will return $0$. This in turn means that the lock $\lock_i$ is hidden from the receiver, and consequently that the obfuscated program is indistinguishable from a null program. This is, of course, an informal explanation, and we refer the reader to~\cite{STOC:BKP19,BS20,EPRINT:ALP19} for a formal security analysis.

We now turn to the description of the extractor. The high-level strategy is the following: Upon receiving the first message from all committers, the extractor uses the FHE encryption $\enc(\pk_i, \td_i)$ and the code of the adversary to run the CDS protocol homomorphically (on input $\td_i$) to recover an FHE encryption of $\lock_i$. Then the extractor feeds it as an input to the obfuscated program $\obfC_i$, which returns $(\sk_i, m_i)$. 

Unfortunately this approach has a major limitation: It implicitly assumes that each corrupted party is a local algorithm. In other words, we are assuming that the adversary consists of individual subroutines (one per corrupted party), which may not necessarily be the case. As an example, if the adversary were to somehow implement a strategy where corrupted machines do not respond until \emph{all} receiver messages have been delivered, then the above homomorphic evaluation would get stuck and return no output. It is also worth mentioning that what makes the problem challenging is our inability to clone the state of the adversary. If we were allowed to clone its state, then we could extract messages one by one, by running a separate thread under each FHE key.


\paragraph{Multi-Key Evaluation.} A natural solution to circumvent the above issue is to rely on multi-key FHE evaluation. Using this additional property, the extractor can turn the ciphertexts $\enc(\pk_1, \td_1),\allowbreak \dots, \allowbreak \enc(\pk_n, \td_n)$ into a single encryption
$$
\enc((\pk_1, \dots, \pk_n), (\td_1, \dots, \td_n))
$$
under the hood of all public keys $(\pk_1, \dots, \pk_n)$. Given this information, the extractor
can homomorphically evaluate all instances of the CDS protocol at once, using the code of the adversary, no matter how intricate. This procedure allows the extractor to obtain the encryption of each lock value
$\enc((\pk_1, \dots, \pk_n), \lock_i)$. In the single committer setting, we could then feed this into the corresponding obfuscated program and call it a day. 

However, in the parallel setting, even given multi-key FHE, it is unclear how to proceed. If the compute and compare program $\obfC_i$ tried to decrypt such a ciphertext, it would obtain (at best) an encryption under the remaining public keys. Glossing over the fact that the structure of single-key and multi-key ciphertexts might be incompatible, it is unlikely that 
$$\dec(\sk_i, \enc((\pk_1, \dots, \pk_{n}), \lock_i)) = \lock_i$$
which is what we would need to trigger the compute and compare program. The general problem here is that each compute and compare program cannot encode information about other secret keys, thus making it infeasible to decrypt multi-key ciphertexts. One approach to resolve this issue would be to ask all committers to jointly obfuscate a compute and compare program that encodes all secret keys at once. However, this seems to require a general-purpose MPC protocol, which is what we are trying to build in the first place.
Therefore, we outline a different approah by imagining a special kind of multi-key fully homomorphic encryption scheme.

A spooky encryption\footnote{As a historical remark, while the name is inspired by Einstein's quote ``spooky action at a distance'' referring to entangled quantum states, the concept of spooky encryption (as defined in~\cite{C:DHRW16}) is entirely classical.} scheme~\cite{C:DHRW16} is an FHE scheme that supports a special \emph{spooky evaluation} algorithm, that generates no-signaling correlations among independently encrypted messages. We will restrict attention to a sub-class of no-signaling relations called \emph{additive function sharing} (AFS) relations, and we will call the scheme AFS-spooky. More concretely, on input a circuit $C$ and $n$ independently generated ciphertexts (under independently generated public keys), the algorithm $\spooky.\eval$ produces
$$
\{\enc(\pk_i, m_i)\}_{i\in [n]} \xrightarrow{\spooky.\eval((\pk_1, \dots, \pk_n),C, \cdot)}
\{\enc(\pk_i, y_i)\}_{i\in [n]} \textit{ s.t. }
\mathop{\bigoplus}\limits_{i=1}^n y_i = C(m_1,\dots,m_n).
$$
It is not hard to see that AFS-spooky encryption is a special case of multi-key FHE where multi-key ciphertexts have the following structure
$$
\enc((\pk_1, \dots, \pk_n), m) = \{\enc(\pk_i, y_i)\}_{i\in [n]} \textit{ s.t. }
\mathop{\bigoplus}\limits_{i=1}^n y_i = m.
$$
This additional structure is going to be our main leverage for constructing an efficient extractor. 

\paragraph{The Extractor.}
Going back to our extractor, our next technical insight is to look for a mechanism to \emph{peel away} encryption layers one by one from an AFS-spooky (multi-key) ciphertext. 
Our extractor will achieve this via careful \emph{homomorphic} evaluation of the independently generated programs $(\obfC_1, \dots, \obfC_n)$, as described below.
\begin{itemize}
    \item First, homomorphically execute the code of the adversary using the AFS-spooky scheme to obtain
    $$
    \ct_1 = \enc((\pk_1,\dots, \pk_n), \lock_1), \dots, \ct_n =  \enc((\pk_1,\dots, \pk_n), \lock_n),
    $$
as described above.
\item Parse $\ct_n$ as a collection of individual ciphertexts
$$
\enc((\pk_1,\dots, \pk_n), \lock_n) =  \{\enc(\pk_i, y_i)\}_{i\in [n]} = \{\enc(\pk_i, y_i)\}_{i\in [n-1]} ~\cup~ \underbrace{\{\enc(\pk_n, y_n)\}}_{\tilde{\ct}_n}.
$$
Note that we can interpret the first $n-1$ elements as an AFS-spooky ciphertext encrypted under $(\pk_1, \dots, \pk_{n-1}):$
$$
\tilde{\ct} = \{\enc(\pk_i, y_i)\}_{i\in [n-1]} = \enc\left(\left(\pk_1, \dots, \pk_{n-1}\right), \mathop{\bigoplus}\limits_{i=1}^{n-1} y_i\right) = \enc\left(\left(\pk_1, \dots, \pk_{n-1}\right), \tilde{y}\right)
$$
where $\tilde{y} = \mathop{\bigoplus}\limits_{i=1}^{n-1} y_i$.
\item Let $\Gamma$ be the following function
$$
\Gamma(\zeta): \spooky.\eval(\pk_n, \zeta \oplus \cdot, \tilde{\ct}_n)
$$
which homomorphically computes the XOR of $\zeta$ with the plaintext of $\tilde{\ct}_n$. 
Compute the following nested AFS-spooky correlation 
\begin{align}
\widehat{\ct} &= \spooky.\eval((\pk_1, \dots, \pk_{n-1}), \Gamma, \tilde{\ct}) \nonumber \\&=
\enc\left(\left(\pk_1, \dots, \pk_{n-1}\right), \spooky.\eval(\pk_n,\tilde{y} \oplus \cdot, \tilde{\ct}_n)\right)\\ &= \enc\left(\left(\pk_1, \dots, \pk_{n-1}\right),\enc\left(\pk_n,  \mathop{\bigoplus}\limits_{i=1}^{n} y_i\right)\right)\\ &= \enc\left(\left(\pk_1, \dots, \pk_{n-1}\right),\enc\left(\pk_n,  \lock_n \right)\right)
\end{align}
by interpreting $\tilde{\ct}_n$ as a single key ciphertext. 
Here (1) follows by substituting $\Gamma$, and (2) follows by correctness of the AFS-spooky evaluation.


\item Run the obfuscated compute and compare program homomorphically to obtain an encryption of $\sk_n$ and $m_n$ under $(\pk_1, \dots, \pk_{n-1})$
\begin{align*}
    \spooky.\eval\left((\pk_1, \dots, \pk_{n-1}), \obfC_n, \widehat{\ct}\right)
    &=\enc\left(\left(\pk_1, \dots, \pk_{n-1}\right),\obfC_n\left(\enc\left(\pk_n,  \lock_n \right)\right) \right)\\ &=
\enc\left(\left(\pk_1, \dots, \pk_{n-1}\right), (\sk_n, m_n) \right).
\end{align*}

\item Using the encryption of $\sk_n$ under $(\pk_1, \dots, \pk_{n-1})$, update the initial ciphertexts $(\ct_1, \dots, \ct_{n-1})$ by homomorphically decrypting their last component and adding the resulting string. This allows the extractor to obtain
$$
\enc((\pk_1, \dots, \pk_{n-1}), \lock_1), \dots, \enc((\pk_1, \dots, \pk_{n-1}), \lock_{n-1}).
$$
\item Recursively apply the procedure described above until $\enc(\pk_1, \lock_1)$ is recovered, then feed this ciphertext as an input to $\obfC_1$ to obtain $(\sk_1, m_1)$ in the clear. Iteratively recover $(\sk_2, \dots, \sk_n)$ by decrypting the corresponding ciphertexts. At this point the extractor knows all secret keys and can decrypt the transcript of the interaction together with the committed messages.
\end{itemize}
To summarize, this extractor will isolate single-key ciphertexts (albeit in a nested form) by relying on AFS-spooky encryption. These ciphertexts by design will be compatible with compute and compare programs. In turn, evaluating the program \emph{under the encryption} allows us to \emph{escape} from the newly introduced layer. Repeating this procedure recursively eventually leads to a complete recovery of the plaintexts.

We stress that, although the extraction algorithm repeats the nesting operation $n$ times, the additional encryption layer introduced in each iteration is immediately peeled off by executing the obfuscated compute and compare program. Thus the above procedure runs in (strict) polynomial time for \emph{any polynomial} number of parties $n$.

\paragraph{Parallel Zero Knowledge.} The above outline is deliberately simplified and ignores some subtle issues that arise during the analysis of the protocol. As an example, we need to ensure that the adversary is not able to \emph{maul} the commitment of the trapdoor into a CDS encryption to be used in the CDS protocol. This issue also arose in~\cite{BS20}, and we follow their approach of using non-uniformity in a reduction to the semantic security of the quantum FHE scheme.~\cite{BS20} also present the technical tools needed to lift the protocol to the setting of malicious and possibly aborting adversaries (as opposed to explainable), and we roughly follow their approach. However, it is worth pointing out that~\cite{BS20} directly construct a zero-knowledge argument, without first constructing and analyzing a stand-alone extractable commitment. Since we use a parallel extractable commitment as a building block in the our coin-flipping protocol, we analyze the above as a stand-alone commitment, which requires a few modifications to the protocol and proof techniques. More discussion about this can be found in~\cref{sec:pecom}.

Now, we describe how to obtain parallel zero-knowledge (i.e.~zero-knowledge against multiple verifiers) from parallel extractable commitments. This is accomplished in a routine manner by enhancing a standard $\Sigma$ protocol with a stage where each verifier commits to its $\Sigma$ protocol challenge using a parallel extractable commitment. Using the extractor, the simulator can obtain the challenges ahead of time and can therefore simulate the rest of the transcript, without the need to perform state cloning. 

It remains to argue that our extraction strategy does not break down in the presence of quantum adversaries. Observe that the only step that involves the execution of a quantum circuit is the AFS-spooky evaluation of the CDS protocol, under the hood of $(\pk_1, \dots, \pk_n)$. Assuming that we can construct AFS-spooky encryption for relations computable by quantum circuits (which we show in Section~\ref{sec:over-spooky}), the remainder of the extraction algorithm only depends on the encryptions of $(\lock_1, \dots, \lock_n)$, which are classical strings. Once the extractor recovers all the secret keys, it can decrypt the (possibly quantum) state of the adversary resulting from the homomorphic evaluation of the CDS, and resume the protocol execution, without the need to clone the adversary's state.




\subsection{Quantum AFS-Spooky Encryption}
\label{sec:over-spooky}



We now turn to the construction of AFS-spooky encryption for relations computable by quantum circuits. The main technical contribution of this section is a construction of multi-key fully-homomorphic encryption for quantum circuits with classical key generation and classical encryption of classical messages. Such schemes were already known in the \emph{single}-key setting, due to~\cite{FOCS:Mahadev18b,C:Brakerski18}. 

\paragraph{Background.} At a very high level, these single-key schemes follow a paradigm introduced by Broadbent and Jeffery~\cite{C:BroJef15}, which makes use of the quantum one-time pad (QOTP). The QOTP is a method of perfectly encrypting arbitrary quantum states with a key that consists of only classical bits.~\cite{C:BroJef15} suggest to encrypt a quantum state with a quantum one-time pad (QOTP), and then encrypt the classical bits that comprise the QOTP using a classical fully-homomorphic encryption scheme. One can then apply quantum gates to the encrypted quantum state, and update the classical encryption of the one-time pad appropriately. A key feature of this encryption procedure is that while an encryption of a quantum state necessarily must be a quantum state, an encryption of classical information does not necessarily have to include a quantum state. Indeed, one can simply give a classical one-time pad encryption of the data, along with a classical fully-homomorphic encryption of the pad.

However, the original schemes presented by Broadbent and Jeffery~\cite{C:BroJef15} and subsequent work~\cite{C:DulSchSpe16} based on their paradigm left much to be desired. In particular, they required even a classical encryptor to supply quantum ``gadgets'' encoding their secret key. These gadgets were then used to evaluate a particular non-Clifford gate over encrypted data.\footnote{We also remark here that~\cite{EPRINT:Goyal18} presented a \emph{multi}-key scheme based on this paradigm, but with the same drawbacks. Note that compactness and classical encryption are crucial in our setting, as per the discussion in the previous section.} The main innovation in the work of~\cite{FOCS:Mahadev18b} was to remove the need for quantum gadgets, instead showing how to evaluate an appropriate non-Clifford gate using just \emph{classical} information supplied by the encryptor.


\paragraph{Encrypted CNOT Operation.} In more detail, evaluating a non-Clifford gate on a ciphertext $(\ct,\ket{\phi})$, where $\ct$ is an FHE encryption of a QOTP key and $\ket{\phi}$ is a quantum state encrypted under the QOTP key, involves an operation (referred to as encrypted CNOT) that somehow must ``teleport'' the bits encrypted in $\ct$ into the state $\ket{\phi}$.~\cite{FOCS:Mahadev18b} gave a method for doing this, as long as the ciphertext $\ct$ is encrypted under a scheme with some particular properties. Roughly, the scheme must support a ``natural'' XOR homomorphic operation, it must be circuit private with respect to this homomorphism, and perhaps most stringently, there must exist some trapdoor that can be used to recover the message and the \emph{randomness} used to produce any ciphertext.

~\cite{FOCS:Mahadev18b} observed that the dual-Regev encryption scheme~\cite{STOC:GenPeiVai08} (with large enough modulus-to-noise ratio) does in fact satisfy these properties, as long as one generates the public key matrix $\bA$  along with a trapdoor. However, recall that
$\ct$ was supposed to be encrypted under a fully-homomorphic encryption scheme.~\cite{FOCS:Mahadev18b} resolves this by observing that ciphertexts encrypted under the dual variant of the~\cite{C:GenSahWat13} fully-homomorphic encryption scheme actually already contain a dual-Regev ciphertext. In particular, a dual-GSW ciphertext encrypting a bit $\mu$ is a matrix $\bM = \bA\bS + \bE + \mu \bG$, where $\bG$ is the gadget matrix. The final column of $\bM$ is $\bA\bs + \be + \mu [0,\dots,0,q/2]^\top$, which is exactly a dual-Regev ciphertext encrypting $\mu$ under public key $\bA$. Note that, crucially, if the dual-GSW public key $\bA$ is drawn with a trapdoor, then this trapdoor also functions as a trapdoor for the dual-Regev ciphertext. Thus, an evaulator can indeed perform the encrypted CNOT operation on any ciphertext $(\ct,\ket{\phi})$, by first extracting a dual-Regev ciphertext $\ct'$ from $\ct$ and then proceeding.

\paragraph{Challenges in the Multi-Key Setting.} Now, it is natural to ask whether this approach readily extends to the multi-key setting. Namely, does there exist a multi-key FHE scheme where any (multi-key) ciphertext contains within it a dual-Regev ciphertext \emph{with a corresponding trapdoor}? Unfortunately, this appears to be much less straightforward than in the single-key setting, for the following reason. Observe that (dual) GSW homomorphic operations over ciphertexts $\bM_i = \bA\bS_i + \bE_i + \mu_i \bG$ always maintain the same $\bA$ matrix, while updating $\bS_i$, $\bE_i$, and $\mu_i$. Thus, a trapdoor for $\bA$ naturally functions as a trapdoor for the dual-Regev ciphertext that consitutes the last column of $\bM_i$. However, LWE-based multi-key FHE schemes from the literature~\cite{C:CleMcG15,EC:MukWic16,TCC:PeiShi16,TCC:BraHalPol17} include a \emph{ciphertext expansion} procedure, which allows an evaluator, given public keys $\pk_1,\dots,\pk_n$, and a ciphertext $\ct$ encrypted under some $\pk_i$, to convert $\ct$ into a ciphertext $\hat{\ct}$ encrypted under all keys $\pk_1,\dots,\pk_n$. Now, even if these public keys are indeed matrices $\bA_1,\dots,\bA_n$ drawn with trapdoors $\tau_1,\dots,\tau_n$, it is unclear how to combine $\tau_1,\dots,\tau_n$ to produce a trapdoor $\hat{\tau}$ for the ``expanded'' ciphertext. Indeed, the expanded ciphertext generally can no longer be written as some $\bA\bS + \bE + \mu \bG$, since the expansion procedure constructs a highly structured matrix that includes components from the ciphertexts $\ct_1,\dots,\ct_n$, as well as auxiliary encryptions of the randomness used to produce the ciphertexts (see e.g.~\cite{EC:MukWic16}). 

\paragraph{A Solution Based on Key-Switching.} Thus, we take a different approach. Rather than attempting to tweak known ciphertext expansion procedures to also support ``trapdoor expansion'', we rely on the notion of key-switching, which is a method of taking a ciphertext encrypted under one scheme and converting it into a ciphertext encrypted under another scheme. The observation, roughly, is that we do not need to explicitly maintain a trapdoor for the multi-key FHE scheme, as long as it is possible to convert a multi-key FHE ciphertext into a dual-Regev ciphertext that \emph{does} explicitly have a trapdoor. In fact, we will consider a natural multi-key generalization of dual-Regev, as described below. Key switching is possible as long as the second scheme has sufficient homomorphic properties, namely, it can support homomorphic evaluation of the \emph{decryption circuit} of the first scheme.


Fortunately, the dual-Regev scheme is already linearly homomorphic, and many known classical multi-key FHE schemes~\cite{C:CleMcG15,EC:MukWic16,TCC:PeiShi16,TCC:BraHalPol17} support \emph{nearly linear decryption}, which means that decrypting a ciphertext simply consists of applying a linear function (derived from the secret key) and then rounding. Thus, as long as the evaluator has the secret key of the multi-key FHE ciphertext encrypted under a dual-Regev public key with a trapdoor, they can first key-switch the multi-key FHE ciphertext $\ct$ into a dual-Regev ciphertext $\ct'$, and then proceed with the encrypted CNOT operation. 

It remains to show how an evaluator may have access to such a dual-Regev encryption. Since we are still in the multi-key setting, we will need a ciphertext and corresponding trapdoor expansion procedure for dual-Regev. However, we show that such a procedure is much easier to come by when the scheme only needs to support \emph{linear} homomorphism (as is the case for the dual-Regev scheme) rather than \emph{full} homomorphism. Each party can draw its own dual-Regev public key $\bA_i$ along with a trapdoor $\tau_i$, and encrypt its multi-key FHE secret key under $\bA_i$ to produce a ciphertext $\ct_i$. The evaluator can then treat the block-diagonal matrix $\hat{\bA} = \mathsf{diag}(\bA_1,\dots,\bA_n)$ as an ``expanded'' public key.\footnote{Actually this expansion should be done slightly more carefully, see~\cref{subsec:keyswitch} for details.} Now, the message and randomness used to generate a ciphertext encrypted under $\hat{\bA}$ may be recovered by applying $\tau_1$ to the first set of entries of the ciphertext, applying $\tau_2$ to the second set of entries and so on. This observation, combined with an appropriate expansion procedure for the ciphertexts $\ct_i$, allows an evaluator to convert any multi-key FHE ciphertext into a multi-key dual-Regev ciphertext with trapdoor. Given a classical multi-key FHE scheme with nearly linear decryption, this suffices to build multi-key quantum FHE with classical key generation and encryption.

\paragraph{Distributed Setup.} We showed above how to convert any classical multi-key FHE scheme into a quantum multi-key FHE scheme, as long as the classical scheme has nearly linear decryption. However, most LWE-based classical multi-key FHE schemes operate in the common random string (CRS) model, which assumes that all parties have access to a common source of randomness, generated by a trusted party. Thinking back to our application to parallel extractable commitments, it is clear that this will not suffice, since we have no CRS a priori, and a receiver that generates a CRS maliciously may be able to break hiding of the scheme. Thus, we rely on the multi-key FHE scheme of~\cite{TCC:BraHalPol17}, where instead of assuming a CRS, the parties participate in a distributed setup procedure. In particular, each party (and in our application, each committer) generates some public parameters $\pp_i$, which are then combined publicly to produce a single set of public parameters $\pp$, which can be used by anyone to generate their own public key / secret key pair.

This form of distributed setup indeed suffices to prove the hiding of our parallel commitment, so it remains to show that our approach, combined with~\cite{TCC:BraHalPol17}, yields a quantum multi-key FHE scheme with distributed setup. First, the~\cite{TCC:BraHalPol17} scheme does indeed enjoy nearly linear decryption, so plugging it into our compiler described above gives a functional quantum multi-key FHE scheme. Next, we need to confirm that our compiler does not destroy the distributed setup property. This follows since each party draws its own dual-Regev public key with trapdoor without relying on any CRS, or even any public parameters.

\paragraph{Quantum AFS-Spooky Encryption.} Finally, we show, via another application of key-switching, how to construct a quantum AFS-spooky encryption scheme (with distributed setup). Recall that we only require ``spooky'' interactions to hold over classical ciphertexts. That is, for any \emph{quantum} circuit $C$ with classical outputs, given ciphertexts $\ct_1,\dots,\ct_n$ encrypting $\ket{\phi_1},\dots,\ket{\phi_n}$ respectively under public keys $\pk_1,\dots,\pk_n$, an evaluator can produce ciphertexts $\ct_1',\dots,\ct_n'$ where $\ct_i'$ encrypts $y_i$ under $\pk_i$, and such that $\mathop{\bigoplus}\limits_{i=1}^n y_i = C(\ket{\phi_1},\dots,\ket{\phi_n})$.

Now, using our quantum multi-key FHE scheme, it is possible to compute a single (multi-key) ciphertext $\hat{\ct}$ that encrypts $C(\ket{\phi_1},\dots,\ket{\phi_n})$ under all public keys $\pk_1,\dots,\pk_n$. Then, if each party additionally drew a key pair $(\pk_i', \sk_i')$ for a classical AFS-spooky encryption scheme, and released $\tilde{\ct}_1,\dots,\tilde{\ct}_n$, where $\tilde{\ct}_i = \enc(\pk_i', \sk_i)$ encrypts the $i$-th party's quantum multi-key FHE secret key under their AFS-spooky encryption public key, then the evaluator can homomorphically evaluate the quantum multi-key FHE decryption circuit (which is classical for classical ciphertexts) with $\hat{\ct}$ hardcoded, where $\hat{\ct}$ is the multi-key ciphertext defined at the beginning of this paragraph.
This circuit on input $\tilde{\ct}_1,\dots,\tilde{\ct}_n$  produces the desired output $\ct_1',\dots,\ct_n'$. Finally, note that the classical AFS-spooky encryption scheme must also have distributed setup, and we show (see~\cref{subsec:spooky}) that one can derive a distributed-setup AFS-spooky encryption scheme from~\cite{TCC:BraHalPol17} using standard techniques~\cite{C:DHRW16}.

\subsection{Post-Quantum Non-malleable Commitments}
\label{sec:over-nm}
In this section, we describe how to obtain constant-round post-quantum non-malleable commitments under the assumption that there exists a natural number $c>0$ such that quantum polynomial-time adversaries cannot distinguish LWE samples from uniform with advantage better than ${\secp^{-\mathsf{ilog}(c,\secp)}}$, where $\mathsf{ilog}(c,\secp) = {\log \log \cdots_{{c}~\mathrm{times}} \log}(\secp)$ and $\secp$ denotes the security parameter.

We will focus on perfectly binding and computationally hiding constant-round interactive commitments.
Loosely speaking, a commitment scheme is said to be non-malleable if no adversary (also called a man-in-the-middle), when participating as a receiver in an execution of an honest commitment $\COM(m)$, can at the same time generate a commitment $\COM(m')$, such that the message $m'$ is related to the original message $m$. This is equivalent (assuming the existence of one-way functions with security against quantum adversaries)
to a tag-based notion where the commit algorithm obtains as an additional input a tag in $\{0,1\}^\secp$, and the adversary is restricted to using a tag, or identity, that is different from the tag used to generate its input commitment. 
We will rely on tag-based definitions throughout this paper. 
We will also only focus on the {\em sychronous setting}, where the commitments proceed in rounds, and the man-in-the-middle sends its own message for a specific round before obtaining an honest party's message for the next round.

Before describing our ideas, we briefly discuss existing work on {\em classically-secure} non-malleable commitments. 
Unfortunately, existing constructions of constant-round non-malleable commitments against classical adversaries from standard polynomial hardness assumptions~\cite{Bar02,PR05b,PR08a,LPV08,PPV08,LP09,FOCS:Wee10,EC:PasWee10,STOC:LinPas11,STOC:Goyal11,GLOV12,GRRV14,GPR15,COSV16a,COSV16b,Khu17,GR19}
either rely on rewinding, or use Barak's non-black-box simulation technique, both of which require the reduction to perform state cloning. As such, known techniques fail to prove quantum security of these constructions.

We now discuss our techniques for constructing post-quantum non-malleable commitments. Just like several classical approaches, we will proceed in two steps.
\begin{itemize}
\item We will obtain simple “base” commitment schemes for very small tag/identity spaces from slightly superpolynomial hardness assumptions.
\item Then assuming polynomial hardness of LWE against quantum adversaries, and making use of constant-round post-quantum zero-knowledge arguments, we will convert non-malleable commitments for a small tag space into commitments for a larger tag space, while only incurring a constant round overhead.
\end{itemize}


For the base schemes, there are known classical constructions~\cite{EC:PasWee10} that assume hardness of LWE against $2^{\secp^\delta}$-size adversaries, where $\secp$ denotes the security parameter and $0<\delta<1$ is a constant. We observe that these constructions can be proven secure in the quantum setting, resulting in schemes that are suitable for tag spaces of $O(\log \log \secp)$ tags. 

\paragraph{Tag Amplification.}
Since an MPC protocol could be executed among up to $\poly(\secp)$ parties where $\poly(\cdot)$ is an arbitrary polynomial, 
we end up requiring non-malleable commitments suitable for tag spaces of $\poly(\secp)$.
This is obtained by combining classical tools for amplifying tag spaces~\cite{DDN91} with constant round post-quantum zero-knowledge protocols. Our tag amplification protocol, on input a scheme with tag space $2t$, outputs a scheme with tag space $2^{t}$, for any $t \leq \poly(\secp)$. This follows mostly along the lines of existing classical protocols, and as such we do not discuss the protocol in detail here. Our protocol can be found in Section~\ref{sec:tag-amp}.

\paragraph{Base Schemes from $\secp^{-\mathsf{ilog}(c,\secp)}$ Hardness.}
Returning to the question of constructing appropriate base schemes, we also improve the assumption from $2^{\secp^\delta}$-quantum hardness of LWE (that follows based on~\cite{EC:PasWee10}) to the mildly superpolynomial hardness assumption discussed at the beginning of this subsection.
Recall that we will only need to assume that there exists an (explicit) natural number $c>0$ such that quantum polynomial time adversaries cannot distinguish LWE samples from uniform with advantage better than $\negl(\secp^{\mathsf{ilog}(c,\secp)})$
where $\mathsf{ilog}(c,\secp) = {\log \log \cdots_{{c}~\mathrm{times}} \log}(\secp)$.
Our base scheme will only be suitable for identities in $\mathsf{ilog}(c+1, \secp)$, where $c >0$ is a natural number, independent of $\secp$. We will then repeatedly apply the tag amplification process referred to above to boost the tag space to $2^\secp$, by adding only a constant number of rounds.

To build our base scheme, we take inspiration from the classically secure non-malleable commitments of Khurana and Sahai~\cite{KS17}. However, beyond considering quantum as opposed to classical adversaries, our protocol and analysis will have the following notable differences from~\cite{KS17}:
\begin{itemize}
    \item The work of~\cite{KS17} relies on sub-exponential hardness (i.e. $2^{\secp^\delta}$ security), which is stronger than the type of superpolynomial hardness we assume. This is primarily because~\cite{KS17} were restricted to two rounds, but we can improve parameters while allowing for a larger constant number of rounds.
    \item \cite{KS17} build a reduction that rewinds an adversary to the beginning of the protocol, and executes the adversary several times, repeatedly sampling the adversary's initial state. This may be undesirable in the quantum setting.\footnote{In particular this state may not always be efficiently sampleable, in which case it would be difficult to build an efficient reduction.} On the other hand, we have a simpler fully straight-line reduction that only needs to run the adversary once. 
\end{itemize}


Specifically, following~\cite{KS17}, we will establish {\em an erasure channel} between the committer and receiver that transmits the committed message to the receiver with probability $\epsilon$. To ensure that the commitment satisifies hiding, $\epsilon$ is chosen to be a value that is negligible in $\secp$.
At the same time, the exact value of $\epsilon$ is determined by the identity ($\tagg$) of the committer. 
Recall that $\tagg \in [1,\mathsf{ilog}(c+1,\secp)]$.
We will set $\epsilon = \eta^{-\tagg}$ where $\eta =  \secp^{\mathsf{ilog}(c+1,\secp)}$ is a superpolynomial function of $\secp$.

Next, for simplicity, we restrict ourselves to a case where the adversary's tag (which we denote by $\tagg'$) is smaller than that of the honest party (which we denote by $\tagg$).
In this case, the adversary's committed message is transmitted with probability $\epsilon' = \eta^{-\tagg'}$, whereas the honest committer's message is transmitted with probability only $\epsilon = \eta^{-\tagg}$, which is smaller than $\epsilon'$.

We set this up so that the transcript of an execution transmits the adversary's message with probability $\epsilon'$ (over the randomness of the honest receiver), and on the other hand, an honestly committed message will remain hidden except with probability $\epsilon < \epsilon'$ (over the randomness of the honest committer).
This gap in the probability of extraction will help us argue non-malleability, using a proof strategy that bears resemblance to the proof technique in~\cite{BitanskyL18} (who relied on stronger assumptions to achieve such a gap in the non-interactive setting).

We point out one subtlety in our proof that does not appear in~\cite{BitanskyL18}. We must rule out a man-in-the-middle adversary that on the one hand, does not commit to a related message if its message was successfully transmitted, but on the other hand, can succesfully perform a mauling attack if its message was not transmitted. To rule out such an adversary, just like~\cite{KS17}, we will design our erasure channel so that the adversary cannot distinguish transcripts where his committed message was transmitted from those where it wasn't. 

Finally, our erasure channel can be cryptographically established in a manner similar to prior work~\cite{KS17,KKS18,DBLP:conf/eurocrypt/BadrinarayananF20} via an indistinguishability-based variant of two-party secure function evaluation, that can be based on quantum hardness of LWE.
Specifically, we would like to ensure that the SFE error is (significantly) smaller than the transmission probabilities of our erasure channels: therefore, we will set parameters so that SFE error is $\secp^{-\mathsf{ilog}(c,\secp)}$.
We refer the reader to Section~\ref{sec:nmc} for additional details about our construction.

\paragraph{On Super-Constant Rounds from Polynomial Hardness.} We also observe that for any $t(\secp) \leq \poly(\secp)$, non-malleable commitments for tag space of size $t(\secp)$ can be obtained in $O(t(\secp))$ rounds based on any extractable commitment using ideas from~\cite{DDN91,ChoRab}, where only one party speaks in every round. 
These admit a straight-line reduction, and can be observed to be quantum-secure.
As such, based on quantum polynomial hardness of LWE and quantum FHE, we can obtain a base protocol for $O(\log \log \ldots_{\text{c times}} \log \secp)$ tags requiring $O(\log \log \ldots_{\text{c times}} \log \secp)$ rounds, for any constant $c \in \mathbb{N}$.
Applying our tag-amplification compiler to this base protocol makes it possible to increase the tag space to $2^\secp$ while only adding a constant number of rounds. Therefore, this technique gives
$O(\log \log \ldots_{\text{c times}} \log \secp)$ round non-malleable commitments for exponentially large tags from quantum polynomial hardness.
It also yields constant round non-malleable commitments for a constant number of tags from polynomial hardness.

\subsection{Putting Things Together}
\label{sec:over-putting}

Finally, we show how to combine the primitives described above to obtain a constant-round coin-flipping protocol that supports straight-line simulation. 
As we saw above, in the setting of multi-verifier zero-knowledge, simultanesouly simulating the view of multiple parties without rewinding can be quite challenging, so a careful protocol and proof is needed. 

Recall the outline presented at the beginning of this section, where each party first commits to a uniformly random string, then broadcasts the committed message, and finally proves in ZK that the message broadcasted is equal to the previously committed message. If all proofs verify, then the common output is the XOR of all broadcasted strings. Recall also that the coin-tossing protocol should be \emph{fully-simulatable}. This means that a simulator should be able to force the common output to be a particular uniformly drawn string given to it as input.

It turns out that in order to somehow force a particular output, the simulator should be able to {\em simultaneously extract in advance} all the messages that adversarial parties committed to. 
In particular, we require commitments where a simulator can extract from multiple committers committing in parallel. 
Here, we will rely on  
our parallel extractable commitment described above. 
Note that we will also need to simulate the subsequent zero-knowledge arguments given by the malicious parties in parallel, and thus we instantiate these with our parallel zero-knowledge argument described above. 
However, an issue remains. What if an adversary could somehow \emph{maul} an honest party's commitment to a related message and then broadcast that commitment as their own? This could bias the final outcome away from uniformly random. 

Thus, we need to introduce some form of non-malleability into the protocol. 
Indeed, we will add another step at the beginning where each party commits to its message $c_i$ and some randomness $r_i$ using our post-quantum many-to-one non-malleable commitment.\footnote{Above we described a construction of one-to-one non-malleable commitment, though a hybrid argument~\cite{LPV08} shows that one-to-one implies many-to-one.} 
Each party will then commit to $c_i$ again with our extractable commitment, using randomness $r_i$. 
Finally, each party proves in zero-knowledge that the previous commitments were consistent.

This protocol can be proven to be fully simulatable.
Intuitively, 
even though the simulator changes the behavior of honest players in order to extract from the adversary's commitments and then later force the appropriate output, the initial non-malleable commitments given by the adversary must not change in a meaningful way, due the the guarantee of non-malleablity. However, additional subtleties arise in the proof of security. In particular, during the hybrids the simulator will first have to simulate the honest party zero-knowledge arguments, before changing the honest party commitments in earlier stages. However, when changing an honest party's commitment, we need to rely on non-malleability to ensure that the malicious party commitments will not also change in a non-trivial way. 
Here, 
we use a proof technique that essentially invokes soundess of the adversary's zero-knowledge arguments at an earlier hybrid but allows us to nevertheless rely on non-malleable commitments to enforce that the adversary behaves consistently in all future hybrids. 
More discussion and a formal analysis can be found in~\cref{sec:coin-tossing}.




\subsection{Related Work}
\label{sec:related}
Classical secure multi-party computation was introduced and shown to be achievable in the two-party setting by~\cite{FOCS:Yao82b} and in the multi-party setting by~\cite{STOC:GolMicWig87}. Since these seminal works, there has been considerable interest in reducing the round complexity of classical protocols. In the setting of malicious security against a disjonest majority,~\cite{JC:Lindell03} gave the first \emph{constant}-round protocol for two-party computation, and~\cite{EC:KatOstSmi03} gave the first constant-round protocol for multi-party computation. Since then, there has been a long line of work improving on the exact round complexity and assumptions necessary for classical multi-party computation (see e.g.~\cite{STOC:Pass04,EC:GMPP16}).

\paragraph{Post-quantum classical protocols}. The above works generally focus on security against \emph{classical} polynomial-time adversaries. Another line of work, most relevant to the present work, has considered the more general goal of proving the security of classical protocols against arbitrary \emph{quantum} polynomial-time adversaries. 

This study was initiated by van de Graaf~\cite{10.5555/928621}, who observed that the useful rewinding technique often used to prove zero-knowledge in the classical setting may be problematic in the quantum setting. In a breakthrough work, Watrous~\cite{10.1137/060670997} showed that several well-known classical zero-knowledge protocols are in fact zero-knowledge against quantum verifiers, via a careful rewinding argument. However, these protocols require a polynomial number of rounds to achieve negligible security against quantum attackers. Later, Unruh~\cite{EC:Unruh12} developed a more powerful rewinding technique that suffices to construct classical zero-knowledge \emph{proofs of knowledge} secure against quantum adversaries, though still in a polynomial number of rounds. In a recent work,~\cite{BS20} managed to construct a constant-round post-quantum zero-knowledge protocol, under assumptions similar to those required to obtain classical fully-homomorphic encryption. In another recent work,~\cite{EPRINT:ALP19} constructed a constant-round protocol that is zero-knowledge against quantum verifiers under the quantum LWE assumption, though soundness holds against only classical provers.

There has also been some work on the more general question of post-quantum secure computation. In particular,~\cite{AC:DamLun09} used the techniques developed in~\cite{10.1137/060670997} to build a two-party coin-flipping protocol, and~\cite{AFRICACRYPT:LunNie11,C:HalSmiSon11} constructed general two-party computation secure against quantum adversaries, in a polynomial number of rounds. More recently,~\cite{BS20} gave a \emph{constant}-round two-party coin-flipping protocol, with full simulation of one party. However, prior to this work, nothing was known in the most general setting of post-quantum multi-party computation (in the plain model).

Finally, as mentioned at the beginning of \cref{subsec:tech-background}, there exist post-quantum classical protocols in the literature, as long as some form of trusted setup is available. 


\paragraph{Quantum protocols}. Yet another line of work focuses on protocols for securely computing \emph{quantum} circuits. General multi-party quantum computation was shown to be achievable in the information-theoretic setting (with honest majority) in the works of~\cite{STOC:CreGotSmi02,FOCS:BCGHS06}. In the computational setting,~\cite{C:DupNieSal10} gave a two-party protocol secure against a quantum analogue of semi-honest adversaries, and~\cite{C:DupNieSal12} extended security of two-party quantum computation to the malicious setting. In a recent work~\cite{DBLP:conf/eurocrypt/DulekGJMS20} constructed a maliciously secure multi-party protocol for computing quantum ciruits, assuming the existence of a maliciously secure post-quantum classical MPC protocol. We remark that all of the above protocols operate in a polynomial number of rounds.
\section{Preliminaries}\label{sec:prel}

Various parts of this section are taken nearly verbatim from~\cite{BS20}. All algorithms of cryptographic functionalities in this work are implicitly efficient and classical (i.e.
require no quantum computation or a quantum communication channel), unless noted otherwise. We
rely on the standard notions of classical Turing machines and Boolean circuits:
\begin{itemize}
\item We say that a Turing machine (or algorithm) is PPT if it is probabilistic and runs in polynomial
time.
\item We sometimes think about PPT Turing machines as polynomial-size uniform families of circuits
(as these are equivalent models). A polynomial-size circuit family $C$ is a sequence of circuits
$C = \{C_\secp\}_{\secp \in \mathbb{N}}$, such that each circuit $C_\secp$ is of polynomial size $\secp^{O(1)}$ and has $\secp^{O(1)}$ input and output bits. We say that the family is uniform if there exists a polynomial-time deterministic Turing machine $M$ that on input $1^\secp$
outputs $C_\secp$.
\item For a PPT Turing machine (algorithm) $M$, we denote by $M(x; r)$ the output of $M$ on input $x$ and
random coins $r$. For such an algorithm, and any input $x$, we may write $m \in M(x)$ to denote the
fact that $m$ is in the support of $M(x;\cdot)$. 
\end{itemize}

\paragraph{Miscellaneous notation}.

\begin{itemize}
    \item For a distribution $\mathcal{D}$ that may explicitly take its random coins $r$ as input, we denote by $x \gets \mathcal{D}$ the process of sampling from $\mathcal{D}$, and denote by $x \coloneqq \mathcal{D}(r)$ the fixed outcome $x$ that results from sampling from $\mathcal{D}$ with random coins $r$.
    \item We denote by $U_\secp$ the uniform distribution over $\{0,1\}^\secp$. 
    \item Given an NP language $\cL$ with associated relation $\mathcal{R}_\cL$, and an instance $x$, we let $\mathcal{R}_\cL(x)$ denote the set $\{w : \mathcal{R}_\cL(x,w) = 1\}$. 
    \item For some natural number $c$ and security parameter $\secp$, we use $\mathsf{ilog}(c,\secp)$ to denote $\underbrace{\log \log \cdots \log}_{{c}~\mathrm{times}}(\secp)$. 
    \item We will use $\Delta(\mathcal{X}, \mathcal{Y})$ to denote the statistical distance between two distributions $\mathcal{X}$ and $\mathcal{Y}$.
\end{itemize}

\subsection{Quantum Computation}
We use standard notions from quantum computation.
\begin{itemize}
\item We say that a Turing machine (or algorithm) is QPT if it is quantum and runs in polynomial time.
\item We sometimes think about QPT Turing machines as polynomial-size uniform families of quantum
circuits (as these are equivalent models). We call a polynomial-size quantum circuit familiy
$C = \{C_\secp\}_{\secp \in \mathbb{N}}$ uniform if there exists a polynomial-time deterministic Turing machine $M$ that on input $1^\secp$ outputs $C_\secp$.
\item Classical communication channels in the quantum setting are identical to classical communication
channels in the classical setting, except that when a set of qubits is sent through a classical
communication channel, then the qubits are automatically measured in the standard basis, and the
measured (now classical-state) qubits are then sent through the channel.
\item A quantum interactive algorithm (in a 2-party setting) has input divided into two registers and
output divided into two registers. For the input qubits, one register is for an input message from
the other party, and a second register is for a potential inner state the machine holds. For the output,
one register is for the message to be sent to the other party, and another register is for a potential
inner state for the machine to keep to itself.
\end{itemize}

\paragraph{Quantum Adversarial Model}. We would like to consider
security definitions that not only achieve quantum security, but are also composable and can be used
modularly inside other protocols. For this we think by default of security against polynomial-size
quantum adversaries with non-uniform polynomial-size quantum advice (i.e. an arbitrary quantum mixed state
that is not necessarily efficiently generatable).

An adversary will be usually denoted
by $A^* = \{A^*_\secp, \rho_\secp\}_{\secp \in \mathbb{N}}$, where $\{A^*_\secp\}_{\secp \in \mathbb{N}}$
is a polynomial-size non-uniform sequence of quantum circuits,
and $\{\rho_\secp\}_{\secp \in \mathbb{N}}$ is some polynomial-size sequence of mixed quantum states. All adversaries are implicitly
unrestricted in their behaviour (i.e. they are fully malicious and can arbitrarily deviate from protocols).
We conclude with notions regarding indistinguishability in the quantum setting.

\begin{itemize}
\item A function $f : \mathbb{N} \rightarrow \[0, 1\]$ is:
\begin{itemize}
\item negligible if for every constant $c \in \mathbb{N}$ there exists $N \in \mathbb{N}$ such that for all $n > N$, $f(n) < n^{-c}$.
\item noticeable if there exists $c \in \mathbb{N}, N \in \mathbb{N}$ s.t. for every $n \geq N$, $f(n) \geq n^{-c}$.
\end{itemize}
\item A quantum random variable is simply a random variable that can have values that are quantum
states. That is, a quantum random variable induces a probability distribution over a (possibly infinite) set of
quantum states. Such quantum random variables can also be thought of as a mixed quantum state, which is
simply a distribution over quantum states.
\item For two quantum random variables $X$ and $Y$, quantum distinguisher $D$ with quantum mixed state $\rho$
as auxiliary input, and $\mu \in \[0, 1\]$, we write $X \approx_{D(\rho),\mu} Y$ if
$$|\Pr[D(X; \rho) = 1] - \Pr[D(Y; \rho) = 1]| \leq \mu.$$
\item Two ensembles of quantum random variables $X = \{X_\secp\}_{\secp \in \mathbb{N}}$ and $Y = \{Y_\secp\}_{\secp \in \mathbb{N}}$ are said to be
computationally indistinguishable, denoted by $X \approx_c Y$, if for every polynomial-size non-uniform
quantum distinguisher with quantum advice 
$D = \{D_\secp, \rho_\secp\}_{\secp \in \mathbb{N}}$, there exists a negligible function $\mu$ such that for all $\secp \in \mathbb{N}$,
$$X_\secp \approx_{D_\secp(\rho_\secp), \mu(\secp)} Y_\secp.$$
\item The trace distance between two quantum distributions $X, Y$ , denoted by $\mathsf{TD}(X, Y)$, 
is a generalization of statistical distance to the quantum setting and represents the maximal distinguishing advantage between two quantum distributions by an unbounded quantum algorithm. We thus say that $X =
\{X_\secp\}_{\secp \in \mathbb{N}}$ and $Y = \{Y_\secp\}_{\secp \in \mathbb{N}}$ are statistically indistinguishable (and write $X \approx_s Y$), if for every
unbounded non-uniform quantum distinguisher $D = \{D_\secp\}_{\secp \in \mathbb{N}}$, there exists a negligible function $\mu$ such that for all $\secp \in \mathbb{N}$,
$\mathsf{TD}(X_\secp, Y_\secp) \leq \mu(\secp)$.
\end{itemize}

\subsection{Notation for Interactive Protocols}

Throughout, we will be considering interactive protocols, generally defined by a set of classical interactive Turing machines $\{\M_i\}_{i \in [n]}$. We denote by $\tau \gets \dist{\{\M_i(y_i)\}_{i \in [n]}}(x)$ the public transcript of their interaction on common input $x$, where each $\M_i$ has private input $y_i$. More precisely, $\tau$ consists of the messages sent between the $\{\M_i\}_{i \in [n]}$, and is a random variable over the random coins of each $\M_i$. We let $\view_{\M_j}(\dist{\{\M_i(y_i)\}_{i \in [n]}}(x))$ denote the view of some party $\M_j$ that results from this interaction, which consists of the portion of the transcript $\tau$ that includes messages sent by or received by $\M_j$, along with $\M_j$'s private state $\state$ at the end of the interaction. If $\M_j$ is a quantum machine, then $\state$ may be a quantum state. If $\M_j$ is defined to have some specific output at the end of the interaction, we denote this by $\output_{\M_j}(\dist{\{\M_i(y_i)\}_{i \in [n]}}(x))$. 

\begin{definition}[Explainable Transcript]
Let $\{\M_i\}_{i \in [n]}$ be a (classical) interactive protocol, and consider some subset of participants $\{\M_i\}_{i \in I}$. We say that a transcript $\tau$ consisting of messages sent by and received by $\{\M_i\}_{i \in I}$ is \emph{explainable} with respect to set $I$ if there exists some $\{\M^*_i\}_{i \notin I}$, inputs $\{y_i\}_{i \in I}$ and random coins $\{r_i\}_{i \in I}$ such that $\tau$ is consistent with the transcript of an execution $\dist{\{\M^*_i\}_{i \notin I},\{\M_i(y_i;r_i)\}_{i \in I}}$.
\end{definition}

\paragraph{Handling Abort and Misbehaviour.}  We set a general convention to handle publicly checkable misbehavior by parties in any interactive protocol.

\begin{itemize}
    \item For security parameter $\lambda$, for each message in the protocol, it will be known (publicly) based on $\lambda$, what is the length of each message (or upper and lower bounds on that length). If a party sends a message in an incorrect length, the receiving party fixes it locally and trivially; if the message is too long, it cuts the message in a suitable place, and if it's too short then pads with zeros.
    
    \item Whenever a party aborts, all other parties ends communication and output $\bot$. 
\end{itemize}




\subsection{Witness Indistinguishability}
We use classical constant-round proof systems for NP (where both honest prover and verifier are classical
efficient algorithms) that are witness-indistinguishable against quantum verifiers. That is, transcripts generated by the prover for two witnesses to the same instance are indistinguishable to quantum attackers.



\begin{definition}[WI Proof System for NP]
\label{def:wi}
A witness-indistinguishable proof system for a language $\cL \in \mathsf{NP}$ is a pair $(\zkP,\zkV)$ of classical PPT interactive Turing machines. $\zkP$ and $\zkV$ interact on common input $1^\secp$ and $x$, and $\zkP$ additionally takes a private input $w$. At the end of the interaction, $\zkV$ outputs a bit indicating whether it accepts or rejects. The proof system should satisfy the following properties.
\begin{enumerate}
\item {\bf Perfect Completeness:}
For any $\secp \in \mathbb{N}$, $x \in \cL \cap \{0, 1\}^\secp$, $w \in \cR_\cL(x)$,
$$\Pr[\output_\zkV \langle \zkP(w), \zkV \rangle (1^\secp,x) = 1] = 1.$$
\item {\bf Statistical Soundness:} For any non-uniform unbounded prover $\zkP^* = \{\zkP^*_\secp\}_{\secp \in \mathbb{N}}$, there exists a negligible function $\mu(\cdot)$ such that for any security parameter $\secp \in \mathbb{N}$ and any $x \in \{0, 1\}^\secp \setminus \cL$,
$$\Pr [\output_{\zkV} \langle \zkP^*_\secp, \zkV \rangle (1^\secp,x) = 1] = \mu(\secp).$$
\item {\bf Witness Indistinguishability:} 
For every non-uniform quantum polynomial-size verifier 
$\zkV^* = \{\zkV^*_\secp, \rho_\secp\}_{\secp \in \mathbb{N}}$, for any two sequences of witnesses $\{w_\secp\}_{\secp\in\mathbb{N}}, \{v_\secp\}_{\secp \in \mathbb{N}}$ s.t. for every $\secp \in \mathbb{N}$, $w_\secp$ and
$v_\secp$ are both witnesses for the same 
$x_\secp \in \cL \cap \{0, 1\}^\secp$, we have,
$$\{\view_{\zkV^*_\secp}
\langle \zkP (w_\secp), V^*_\secp (\rho_\secp) \rangle(1^\secp,x)\}_{\secp \in \mathbb{N}}
\approx_c
\{\view_{\zkV^*_\secp}
\langle \zkP (v_\secp), V^*_\secp (\rho_\secp) \rangle(1^\secp,x)\}_{\secp \in \mathbb{N}}.$$
\end{enumerate}
\end{definition}

\subsection{Sigma Protocol for NP}

\begin{definition}[Sigma Protocol for $\NP$] A sigma protocol for an $\NP$ relation $\cR$ is a pair $(\zkP = (\zkP_1,\zkP_2),\zkV = (\zkV_1,\zkV_2))$ of classical PPT Turing machines with the following syntax. Given an instance $x$ and witness $w$, $\zkP_1(x,w)$ outputs a string $\alpha$ and a prover state $\state$. $\zkV_1(1^{|x|})$ is public-coin, and outputs a uniformly random string $\beta$. Next, $\zkP_2(x,w,\state,\alpha,\beta)$ outputs a string $\gamma$ and finally, $\zkV_2(x,\alpha,\beta,\gamma)$ either accepts or rejects. The proof system should satisfy the following properties.
\begin{enumerate}
    \item {\bf Completeness: } For any $\secp \in \mathbb{N}, x \in \cL \cap \{0,1\}^\secp,w \in \cR_\cL(x),$ $$\Pr[\mathsf{OUT}_\zkV\dist{\zkP(w),\zkV}(x) = 1] = 1.$$
    \item {\bf Statistical Soundness: } For any non-uniform unbounded prover $\zkP^* = \{\zkP^*_\secp\}_{\secp \in \mathbb{N}}$, there exists a negligible function $\mu(\cdot)$ such that for any $\secp \in \mathbb{N}$ and any $x \in \{0, 1\}^\secp \setminus \cL$, $$\Pr[\mathsf{OUT}_\zkV\dist{\zkP^*_\secp,\zkV}(x) = 1] \leq \mu(\secp).$$
    \item {\bf Special Zero-Knowledge: } There exists a PPT simulator $\zkSim$ such that for any $\{x_\secp,w_\secp\}_{\secp \in \mathbb{N}}$ where $|x_\secp| = \secp$ and $(x_\secp,w_\secp) \in \cR$, and $\{\beta_\secp\}_{\secp \in \mathbb{N}}$ where $|\beta_\secp| = \secp$, $$\{(\alpha,\gamma) \ | \ (\alpha,\state) \gets \zkP_1(x_\secp,w_\secp), \gamma \gets \zkP_2(x_\secp,w_\secp,\state,\alpha,\beta_\secp)\}_{\secp \in \mathbb{N}} \approx_c \{(\alpha,\gamma) \gets \zkSim(x_\secp,\beta_\secp)\}_{\secp \in \mathbb{N}}.$$
    Observe that due to the prover's first message being generated independently of the verifier's message, this implies that for any $\{x_\secp,w_\secp\}_{\secp \in \mathbb{N}}$ where $|x_\secp| = \secp$ and $(x_\secp,w_\secp) \in \cR$,
    $$\{\alpha \gets \zkP_1(x_\secp,w_\secp)\}_{\secp \in \mathbb{N}} \approx_c \{\alpha \ | \ (\alpha,\gamma) \gets \zkSim(x_\secp,0^\secp)\}_{\secp \in \mathbb{N}}.$$
    We refer to this as \bf{First-Message Indistinguishability}.
\end{enumerate}
\end{definition}

Sigma protocols are known to follow from classical zero-knowledge proof systems such as the (parallel repetition) of the 3-coloring protocol~\cite{GolMicWig91}, which is in turn based on non-interactive perfectly-binding and computationally hiding commitments.

\subsection{Non-Interactive Commitment}


\begin{definition}[Quantum-secure Non-interactive Commitment]\label{def:com} A non-interactive commitment is defined by a PPT algorithm $\COM$ that takes as input security parameter $1^\secp$ 
and $x \in \{0,1\}^*$, and outputs a commitment $c$. The commitment algorithm satisfies:
\begin{enumerate}
\item {\bf Perfect Binding:} 
For any $x, x' \in \{0, 1\}^*$ of the same length, 
if $c \in \COM(1^\secp, x), c \in \COM(1^\secp, x')$, then $x = x'$.
\item {\bf Quantum Computational Hiding:} 
For any pair of $\poly(\secp)$-length strings $x_0 = \{x_{0,\secp}\}_{\secp \in \mathbb{N}}, x_1 =
\{x_{1,\secp}\}_{\secp \in \mathbb{N}}$, we have,
$$ \{\COM(1^\secp,x_{0,\secp})\}_{\secp \in \mathbb{N}} \approx_c \{\COM(1^\secp,x_{1,\secp})\}_{\secp \in \mathbb{N}}.$$
\end{enumerate}
\end{definition}

\paragraph{Instantiations.} Non-interactive commitments with quantum hiding are known based on various standard assumptions, including LWE \cite{GoyalHKW17}. 

\subsection{Compute and Compare Obfuscation}\label{sec:CC}

We start by defining the class of {\em compute and compare circuits.}

\begin{definition}[Compute and compare]
Let $f:\zo^n\rightarrow\zo^\secp$ be a circuit, and let $u\in\zo^\secp$ and $z \in \{0,1\}^*$ be two strings. Then $\CC{f}{u}{z}(x)$ is a circuit that returns $z$ if $f(x)=u$, and $0$ otherwise.  
\end{definition}

We now define compute and compare (CC) obfuscators with perfect correctness. In what follows $\obf$ is a \PPT algorithm that takes as input a CC circuit $\CC{f}{u}{z}$ and outputs a new circuit $\obfC$. (We assume that the CC circuit $\CC{f}{u}{z}$ is given in some canonical description from which $f$, $u$, and $z$ can be read.) 

\begin{definition}[CC obfuscator]
An algorithm $\obf$ is a compute and compare obfuscator if it satisfies:
\begin{enumerate}
\item
{\bf Perfect correctness:} For any circuit $f:\zo^n\rightarrow\zo^\secp, u\in \zo^\secp, z \in \{0,1\}^*$,
$$
\Pr\[\forall x \in \{0,1\}^n: \obfC(x)= \CC{f}{u}{z}(x) \pST \obfC\gets \obf(\CC{f}{u}{z})\]=1\enspace.
$$

\item
{\bf Simulation:} There exists a $\PPT$ simulator $\ccSim$ such that for any polynomial-size quantum circuit family $f = \{f_\secp\}_{\secp \in \mathbb{N}}$ and polynomial-length output string $z = \{z_\secp\}_{\secp \in \mathbb{N}}$,
$$\{\obfC | u \leftarrow U_\secp, \obfC \leftarrow \obf(\CC{f_\secp}{u}{z_\secp})\}_{\secp \in \mathbb{N}} \approx_c \{\mathsf{Sim}(1^{|f_\secp|}, 1^{|z_\secp|}, 1^\secp\}_{\secp \in \mathbb{N}}.$$
\end{enumerate}
\end{definition}

\paragraph{Instantiations.} Compute and compare obfuscators with almost perfect correctness are constructed in \cite{FOCS:GoyKopWat17,FOCS:WicZir17} based on quantum LWE, and recently with perfect correctness in~\cite{EPRINT:GKVW19} based on quantum LWE.

\subsection{Function-Hiding Secure Function Evaluation}

We define secure function evaluation protocols with statistical circuit privacy and quantum input privacy.

\begin{definition}[2-Message Function Hiding SFE]
\label{def:sfe}
A two-message secure function evaluation protocol $(\sfegen, \sfeenc, \sfeeval, \sfedec)$ has the following syntax:
\begin{itemize}
    \item $\dk \leftarrow \sfegen(1^\secp):$ a probabilistic algorithm that takes a security parameter $1^\secp$ and outputs a secret key $\dk$.
    \item $\ct \leftarrow \sfeenc(\dk,x):$ a probabilistic algorithm that takes a string $x \in \{0,1\}^*$ and outputs a ciphertext $\ct$.
    \item $\widehat{\ct} \leftarrow \sfeeval(C, \ct):$ a probabilistic algorithm that takes a classical circuit $C$ and ciphertext $\ct$ and outputs an  evaluated ciphertext $\widehat{\ct}$.
    \item $\widehat{x} = \sfedec(\dk,\widehat{\ct}):$ a deterministic algorithm that takes a ciphertext $\widehat{\ct}$ and outputs a string $\widehat{x}$.
 \end{itemize}
For any polynomial-size family of classical circuits $\mathcal{C} = \{\mathcal{C}_\secp\}_{\secp \in \mathbb{N}}$ the scheme satisfies:
\begin{itemize}
    \item {\bf Perfect Correctness:}
    For any $\secp \in \mathbb{N}, x \in \{0,1\}^*$ and circuit $C \in \mathcal{C}_\secp$,
    $$\Pr[\sfedec_\dk(\widehat{\ct}) = C(x) | \dk \leftarrow \sfegen(1^\secp), \ct \leftarrow \sfeenc_{\dk}(x), \widehat{\ct} \leftarrow \sfeeval(C, \ct)] = 1 $$
    \item {\bf Quantum Input Privacy: } For polynomial $\ell(\secp)$ and polynomial-size quantum adversary $\A^* = \{\A^*_\secp, \rho_\secp\}_{\secp \in \mathbb{N}}$, there exists a negligible function $\mu (\cdot)$ such that for every two length $\ell(\secp)$ messages $\{x_{0,\secp}\}_{\secp \in \mathbb{N}}, \{x_{0,\secp}\}_{\secp \in \mathbb{N}}$ for every $\secp \in \mathbb{N}$:
    $$\Pr[\A_\secp^*(\ct) = b| \dk \leftarrow \sfegen(1^\secp), \ct \leftarrow \sfeenc_{\dk}(x)] \leq \frac{1}{2} + \mu(\secp) $$
    \item {\bf Statistical Circuit Privacy: } There exist unbounded algorithms, probabilistic $\ccSim$ and deterministic $\mathsf{Ext}$ such that for every $x \in \{0,1\}^*, \ct \in \sfeenc(x)$, the extractor outputs $\mathsf{Ext}(\ct) = x$ and:
    $$\{\sfeeval(C, \ct^*)\}_{\substack{\secp \in \mathbb{N}, C \in \mathcal{C}_\secp, \\\ct^* \in \{0,1\}^{\poly(\secp)}}} \approx_s \{\mathsf{Sim}(C (\mathsf{Ext}(\ct^*; 1^\secp)); 1^\secp)\}_{\substack{\secp \in \mathbb{N}, C \in \mathcal{C}_\secp, \\ \ct^* \in \{0,1\}^{\poly(\secp)}}}$$
    Specifically, there exists a constant $c>0$ such that for large enough $\secp$, the statistical distance between the two distributions is at most $2^{-\secp^c}$.
\end{itemize}
\end{definition}

We will use the following claim in our analysis. This follows directly from the statistical circuit privacy property.

\begin{claim}[Evaluations of Agreeing Circuits are Statistically Close]
Let $\ct^* = \{\ct^*_\secp\}_{\secp \in \mathbb{N}}$ be any (possibly non-ciphertext) $\poly(\secp)$ length string and let $C_0 = \{\mathcal{C}_{0,\secp}\}_{\secp \in \mathbb{N}}, C_1 = \{\mathcal{C}_{1,\secp}\}_{\secp \in \mathbb{N}}$ be two families of circuits such that for all $\secp \in \mathbb{N}$, $C_{0,\secp}$ and $C_{1,\secp}$ have identical truth tables.
Then
$$\{\sfeeval(C_0, \ct^*)\}_{\secp \in \mathbb{N}, C_0 \in \mathcal{C}_{0,\secp}} \approx_s 
\{\sfeeval(C_1, \ct^*)\}_{\secp \in \mathbb{N}, C_1 \in \mathcal{C}_{1,\secp}}$$
Specifically, there exists a constant $c>0$ such that for large enough $\secp$, the statistical distance between the two distributions is at most $2^{-\secp^c}$.
\end{claim}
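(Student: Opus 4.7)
The plan is to directly invoke the statistical circuit privacy property of the SFE scheme (Definition~\ref{def:sfe}) twice and combine via the triangle inequality. Concretely, let $\mathsf{Sim}$ and $\mathsf{Ext}$ be the (unbounded) simulator and extractor guaranteed by statistical circuit privacy. For any $\secp \in \mathbb{N}$, any (possibly non-ciphertext) string $\ct^*_\secp$ of length $\poly(\secp)$, and any $b \in \{0,1\}$, statistical circuit privacy gives
\[
\{\sfeeval(C_{b,\secp}, \ct^*_\secp)\} \;\approx_s\; \{\mathsf{Sim}(C_{b,\secp}(\mathsf{Ext}(\ct^*_\secp; 1^\secp)); 1^\secp)\},
\]
with statistical distance bounded by $2^{-\secp^c}$ for some constant $c>0$ and all sufficiently large $\secp$.

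Next I would observe that, since $C_{0,\secp}$ and $C_{1,\secp}$ have identical truth tables by hypothesis, they agree as functions on every input, and in particular
\[
C_{0,\secp}(\mathsf{Ext}(\ct^*_\secp; 1^\secp)) \;=\; C_{1,\secp}(\mathsf{Ext}(\ct^*_\secp; 1^\secp)).
\]
Therefore the two simulated distributions $\{\mathsf{Sim}(C_{0,\secp}(\mathsf{Ext}(\ct^*_\secp; 1^\secp)); 1^\secp)\}$ and $\{\mathsf{Sim}(C_{1,\secp}(\mathsf{Ext}(\ct^*_\secp; 1^\secp)); 1^\secp)\}$ are \emph{identical} distributions (not just close): the simulator is invoked on exactly the same input in both cases.

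Finally, chaining the two applications of circuit privacy through this equality by the triangle inequality yields
\[
\{\sfeeval(C_{0,\secp}, \ct^*_\secp)\} \;\approx_s\; \{\sfeeval(C_{1,\secp}, \ct^*_\secp)\},
\]
with statistical distance at most $2 \cdot 2^{-\secp^c} \leq 2^{-\secp^{c'}}$ for any constant $c' < c$ and sufficiently large $\secp$, which gives the claimed bound. There is no real obstacle here; the only thing to be careful about is that statistical circuit privacy as stated holds for \emph{every} $\ct^*$ (not just honestly generated ciphertexts), which is exactly what allows us to use it on the arbitrary string $\ct^*_\secp$.
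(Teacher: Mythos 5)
Your proof is correct and follows exactly the route the paper has in mind: the paper asserts the claim with the single sentence that it "follows directly from the statistical circuit privacy property," and your triangle-inequality argument through the shared simulator output $\mathsf{Sim}(C_{b,\secp}(\mathsf{Ext}(\ct^*_\secp;1^\secp));1^\secp)$ is precisely what that sentence is gesturing at. The only detail worth noting is that the factor of $2$ forces you to absorb it by passing to a slightly smaller exponent constant $c' < c$, which you already do.
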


Secure function evaluation schemes satisfying Definition~\ref{def:sfe} for functions in $\mathsf{NC1}$ are known based on quantum hardness of LWE~\cite{TCC:BraDot18}.

We also define a superpolynomially secure variant of $2$-message function hiding SFE where the quantum input privacy property restricts adversaries to having smaller than inverse superpolynomial advantage, for a small superpolynomial function. 

\begin{definition}[2-Message Function Hiding SFE]
\label{def:sfe2}
A two-message SFE protocol with superpolynomial security is identical to the definition in Definition \ref{def:sfe}, except that it modifies the quantum input privacy requirement as follows:
There exists a constant $c>0$ such that for polynomial $\ell(\secp)$ and polynomial-size quantum adversary $\A^* = \{\A^*_\secp, \rho_\secp\}_{\secp \in \mathbb{N}}$, there exists a negligible function $\mu (\cdot)$ such that for every two length $\ell(\secp)$ messages $\{x_{0,\secp}\}_{\secp \in \mathbb{N}}, \{x_{0,\secp}\}_{\secp \in \mathbb{N}}$ for every $\secp \in \mathbb{N}$:
    $$\Pr[\A_\secp^*(\ct) = b| \dk \leftarrow \sfegen(1^\secp), \ct \leftarrow \sfeenc_{\dk}(x)] \leq \frac{1}{2} + \mu(\secp^{\mathsf{ilog}(c,\secp)}) $$
\end{definition}

Secure function evaluation schemes satisfying Definition~\ref{def:sfe2} for functions in $\mathsf{NC1}$ can be based on quantum slightly superpolynomial hardness of LWE~\cite{TCC:BraDot18}. Specifically, we assume that QPT distinguishers have advantage at most $\negl(\secp^{\mathsf{ilog}(c,\secp)})$ in distinguishing LWE samples from uniformly random matrices.

\subsection{Quantum Rewinding Lemma}

We will make use of the following lemma from~\cite{10.1137/060670997} and re-worded in~\cite{BS20}. 

\begin{lemma}\label{lemma:rewinding} There is a quantum algorithm $\R$ that gets as input:
\begin{itemize}
    \item A general quantum circuit $\Q$ with $n$ input qubits that outputs a classical bit $b$ and an additional $m$ qubits.
    \item An $n$-qubit state $\ket{\psi}$.
    \item A number $t \in \mathbb{N}$.
\end{itemize}
$\R$ executes in time $t \cdot \poly(|\Q|)$ and outputs a distribution over $m$-qubit states $D_\psi \coloneqq \R(\Q,\ket{\psi},t)$ with the following guarantees.

For an $n$-qubit state $\ket{\psi}$, denote by $\Q_\psi$ the conditional distribution of the output distribution $\Q(\ket{\psi})$, conditioned on $b=0$, and denote by $p(\psi)$ the probability that $b=0$. If there exist $p_0,q \in (0,1), \epsilon \in (0,\frac{1}{2})$ such that:
\begin{itemize}
    \item Amplification executes for enough time: $t \geq \frac{\log(1/\epsilon)}{4 \cdot p_0(1-p_0)},$
    \item There is some minimal probability that $b=0:$ For every $n$-qubit state $\ket{\psi},p_0 \leq p(\psi)$,
    \item $p(\psi)$ is input-independent, up to $\epsilon$ distance: For every $n$-qubit state $\ket{\psi}$, $|p(\psi)-q| < \epsilon$, and 
    \item $q$ is closer to $\frac{1}{2}:$ $p_0(1-p_0) \leq q(1-q)$,
\end{itemize}
then for every $n$-qubit state $\ket{\psi},$

$$\mathsf{TD}\left(\Q_\psi,D_\psi\right) \leq 4\sqrt{\epsilon}\frac{\log(1/\epsilon)}{p_0(1-p_0)}.$$

\end{lemma}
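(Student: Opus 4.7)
The plan is to follow Watrous's quantum amplitude amplification approach, which is tailored precisely for this kind of result. The first step is to dilate $\Q$ into a unitary $U$ acting on the $n$-qubit input register together with $m{+}1$ fresh ancillas initialized to $\ket{0}$, where the first output qubit carries the bit $b$. One then writes
$U\ket{\psi}\ket{0} = \sqrt{p(\psi)}\,\ket{0}\ket{\phi_0(\psi)} + \sqrt{1-p(\psi)}\,\ket{1}\ket{\phi_1(\psi)}$, so that the pure state $\ket{\phi_0(\psi)}\bra{\phi_0(\psi)}$ has marginal on the last $m$ qubits equal to $\Q_\psi$. The job of $\R$ is to produce an $m$-qubit state whose density matrix is close in trace distance to this marginal, for every choice of $\ket{\psi}$.

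Next I would define the two reflections $R_{\mathrm{good}} = I - 2(\ket{0}\bra{0}\otimes I^{\otimes m})$ on the output register and $R_{\psi} = I - 2(\ket{\psi}\bra{\psi}\otimes (\ket{0}\bra{0})^{\otimes(m+1)})$ on the combined register, and set $W := -\,U R_{\psi} U^\dagger R_{\mathrm{good}}$. The algorithm $\R$ applies $U$ to $\ket{\psi}\ket{0}$, then iterates $W$ for $t$ rounds, and outputs the last $m$ qubits. By Jordan's lemma, the two projections $\ket{0}\bra{0}\otimes I^{\otimes m}$ and $U(\ket{\psi}\bra{\psi}\otimes\ket{0}\bra{0}^{\otimes(m+1)})U^\dagger$ simultaneously block-diagonalise the Hilbert space into invariant subspaces of dimension at most $2$, and $U\ket{\psi}\ket{0}$ lies in a 2D block on which $W$ acts as a rotation by an angle $2\theta(\psi)$ with $\sin^2\theta(\psi) = p(\psi)$. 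In the idealised case where $p(\psi)\equiv q$, choosing $t = \Theta(\log(1/\epsilon)/(p_0(1-p_0)))$ rotates the state so that its weight on $\Pi_{\mathrm{good}}$ is $1 - O(\epsilon)$; then tracing out the first qubit returns a state whose marginal is exactly $\Q_\psi$ up to error $O(\epsilon)$ in trace distance.

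The main obstacle is the perturbation argument when $p(\psi)$ deviates from a common $q$ by up to $\epsilon$. The angle $\theta(\psi)$ then differs from a fixed $\theta_q$ by roughly $\epsilon/\sqrt{q(1-q)}$, which per step is a small operator-norm perturbation of the ideal $W_q$, but these errors can a priori accumulate linearly over the $t = \Theta(\log(1/\epsilon)/(p_0(1-p_0)))$ iterations; one also has to track the initial misalignment of $U\ket{\psi}\ket{0}$ with the ``ideal'' 2D subspace. The key estimates are (i) using the hypothesis $p_0(1-p_0)\le q(1-q)$ to control the sensitivity denominator in the angle derivative, (ii) bounding the per-step Euclidean drift by $O(\epsilon/(p_0(1-p_0)))$, (iii) chaining $t$ of these and applying the triangle inequality, and (iv) converting the resulting Euclidean error to trace distance via the standard quadratic loss, which produces the extra square root. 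Putting these together yields the stated bound $4\sqrt{\epsilon}\,\log(1/\epsilon)/(p_0(1-p_0))$, and it is this chained perturbation bookkeeping, not the Jordan-lemma skeleton, that is the delicate part.
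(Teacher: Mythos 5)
This lemma is not proved in the paper; it is stated as a black-box citation to Watrous~\cite{10.1137/060670997} (re-worded in~\cite{BS20}), so there is no in-paper proof to compare against. That said, your reconstruction attempt has a genuine gap: the algorithm you describe is not Watrous's algorithm, and it would not satisfy the stated bounds.

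You propose running fixed-round Grover-style amplitude amplification: prepare $U\ket{\psi}\ket{0}$, iterate $W = -U R_\psi U^\dagger R_{\mathrm{good}}$ exactly $t$ times, and measure at the end. But Grover amplification rotates in the 2D Jordan block by a fixed angle $2\theta(\psi)$ per step, with $\sin^2\theta(\psi)=p(\psi)$; to land near the good subspace one needs $(2t{+}1)\theta(\psi)\approx \pi/2$, i.e.\ $t\approx \pi/(4\sqrt{p(\psi)})$. This is incompatible with the lemma in two ways. First, the lemma's iteration count $t\geq \log(1/\epsilon)/(4p_0(1-p_0))$ scales like $1/p_0$, not $1/\sqrt{p_0}$; running Grover for that many steps overshoots $\pi/2$ and can land you on an almost-orthogonal state. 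Second, and more fundamentally, the correct rotation count depends on the unknown $p(\psi)$, and fixing $t$ up front with only $|p(\psi)-q|<\epsilon$ as a guarantee leads to an angular error of order $t\cdot\epsilon/\sqrt{q}\sim\epsilon/p_0$, which is unbounded as $p_0\to 0$. Your ``idealised case'' claim that $t=\Theta(\log(1/\epsilon)/(p_0(1-p_0)))$ rotations puts weight $1-O(\epsilon)$ on $\Pi_{\mathrm{good}}$ is therefore false.

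The key idea you are missing is that Watrous's procedure interleaves \emph{measurements} of the $b$-register between amplification steps. One applies $U$, measures $b$; if $b=0$ one halts and outputs. If $b=1$ one applies $U^\dagger$, reflects about $\ket{\psi}\ket{0}$, applies $U$, measures $b$ again, and so on for at most $t$ rounds. Because each measurement collapses the state onto one of the two vectors $\ket{0}\ket{\phi_0(\psi)}$ or $\ket{1}\ket{\phi_1(\psi)}$ in the 2D Jordan block, the post-measurement state \emph{conditioned on success} is exactly $\ket{\phi_0(\psi)}$, with no per-round angular drift to chain: the only error contribution is the event that all $t$ rounds fail. The failure probability in a round after the first is $\approx 1-4p(\psi)(1-p(\psi))$, and it is here that the hypotheses $p_0(1-p_0)\leq q(1-q)$ and $|p(\psi)-q|<\epsilon$ are used — to bound the residual failure mass and the slight $\psi$-dependence of the round-by-round measurement statistics — yielding the $\sqrt{\epsilon}\log(1/\epsilon)/(p_0(1-p_0))$ bound. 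Your sketched ``per-step operator-norm drift + triangle inequality + quadratic loss'' bookkeeping does not reflect how the error actually arises in Watrous's argument, and as applied to a fixed-round iteration it does not close the gap described above.
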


\section{Quantum Multi-Key Fully-Homomorphic Encryption}
\label{sec:spooky}

\subsection{Learning with Errors and Lattice Trapdoors}
\label{subsec:trapdoors}

The (decisional) learning with errors problem (LWE), introduced by~\cite{STOC:Regev05}, is parameterized by a modulus $q$, positive integers $n,m$, and an error distribution $\chi$. It asks to distinguish between the distributions $(\bA,\bA\bs+\be \bmod q)$ and $(\bA,\bu)$, where $\bA$ is uniformly random in $\bZ_q^{m \times n}$, $\bs$ is uniformly random in $\bZ_q^n$, $\bu$ is uniformly random in $\bZ_q^m$, and $\be$ is chosen from $\chi^m$. As shown in~\cite{STOC:Regev05,STOC:PeiRegSte17}, for \emph{any} sufficiently large modulus $q$, the LWE problem where $\chi$ is a discrete Gaussian distribution with parameter $\sigma = \alpha q \geq 2\sqrt{n}$ (i.e. the distribution over $\bZ$ where the probability of $x$ is proportional to $e^{-\pi(|x|/\sigma)^2}$), is at least as hard as approximating the shortest independent vector problem (SIVP) to within a factor of $\gamma = \tilde{O}(n/\alpha)$ in \emph{worst case} dimension $n$ lattices. One can truncate the discrete Gaussian distribution to have support only over integers bounded in absolute value by $\sigma \cdot \omega(\sqrt{\log(\secp)})$ while only introducing a negligible difference. Thus, we will use the fact that $\chi$ may be a $B$-bounded distribution, for some value $B$.

We will make use of the notion of a lattice trapdoor, defined in the following theorem~\cite{EC:MicPei12}.

\begin{theorem}[\cite{ICALP:Ajtai99,EC:MicPei12}]
\label{thm:trapdoor}
There is an efficient randomized algorithm $\gentrap(1^n,1^m,q)$ that, given any integers $n \geq 1, q \geq 2$, and sufficiently large $m = O(n\log q)$, outputs a matrix $\bA \in \bZ_q^{m \times n}$ and a trapdoor $\tau_\bA$ such that the distribution of $\bA$ is negligibly (in $n$) far from the uniform distribution. Moreover, there is an efficient deterministic algorithm $\invert$ that on input $\bA,\tau_\bA,$ and $\bA\bs+\be$, where $\bs$ is arbitrary in $\bZ_q^n$ and $||\be|| \leq q/(O(n\log q))$, returns $\bs$ and $\be$ with overwhelming probability over $(\bA,\tau_\bA) \gets \gentrap(1^n,1^m,q)$.
\end{theorem}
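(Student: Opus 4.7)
My plan is to follow the gadget-based trapdoor construction of Micciancio and Peikert. The starting point is the \emph{gadget matrix} $\bG \in \bZ_q^{m_0 \times n}$, where $m_0 = n \lceil \log q \rceil$ and $\bG$ is (up to a permutation of rows) the tensor product $\mathbf{I}_n \otimes (1, 2, 4, \ldots, 2^{\lceil \log q \rceil - 1})^\top$. This matrix admits an efficient inversion algorithm: given $\bG\bs + \be'$ with $\|\be'\|_\infty < q/4$, one recovers $\bs$ (and $\be'$) by a coordinate-wise bit-decomposition and rounding procedure.

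For $\gentrap(1^n, 1^m, q)$, I would sample a uniform $\bar{\bA} \in \bZ_q^{m_1 \times n}$ with $m_1 = m - m_0$, sample a short Gaussian matrix $\mathbf{R} \in \bZ^{m_1 \times m_0}$ with entries drawn from a discrete Gaussian of small parameter, and output
\[
\bA = \begin{pmatrix} \bar{\bA} \\ \bG - \mathbf{R}^\top \bar{\bA} \end{pmatrix} \in \bZ_q^{m \times n}, \qquad \tau_\bA = \mathbf{R}.
\]
To argue that $\bA$ is negligibly close to uniform, I would invoke a regularity (leftover hash) lemma for discrete Gaussians: for $m_1 = O(n \log q)$ sufficiently large, the joint distribution of $(\bar{\bA}, \mathbf{R}^\top \bar{\bA})$ is negligibly close to uniform over $\bZ_q^{m_1 \times n} \times \bZ_q^{m_0 \times n}$, so the same holds for $\bA$.

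For $\invert$, given $\bA\bs + \be$ with $\|\be\| \leq q/(O(n \log q))$, I would parse the input as $(\mathbf{y}_1, \mathbf{y}_2)$ with $\mathbf{y}_1 = \bar{\bA}\bs + \be_1$ and $\mathbf{y}_2 = (\bG - \mathbf{R}^\top \bar{\bA})\bs + \be_2$, then compute $\mathbf{y}_2 + \mathbf{R}^\top \mathbf{y}_1 = \bG\bs + (\be_2 + \mathbf{R}^\top \be_1)$. Applying gadget inversion to this expression recovers $\bs$, after which $\be$ is obtained as $(\bA\bs + \be) - \bA\bs$.

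The main obstacle will be parameter balancing. On one hand, the width $m_1$ and the Gaussian parameter of $\mathbf{R}$ must be large enough for the regularity lemma to guarantee a statistically uniform $\bA$. On the other hand, the Gaussian parameter must be small enough that the blown-up error $\be_2 + \mathbf{R}^\top \be_1$ remains within the $q/4$ tolerance of gadget inversion. The bound $\|\be\| \leq q/(O(n \log q))$ in the statement is precisely what accommodates the $\ell_\infty$ blow-up of multiplying by $\mathbf{R}^\top$; tracking constants through these opposing constraints is the subtle part. The remaining ingredients---correctness of gadget inversion, sub-Gaussian tail concentration for $\mathbf{R}^\top \be_1$, and the regularity lemma itself---are by now standard.
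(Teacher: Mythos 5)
The paper states this as a cited black-box result (Ajtai, ICALP 1999; Micciancio--Peikert, Eurocrypt 2012) and does not prove it, so there is no in-paper argument to compare against. Your sketch is a faithful reconstruction of the Micciancio--Peikert construction in the tall-matrix convention the paper uses: the decomposition $\bA = \begin{pmatrix}\bar{\bA}\\ \bG - \mathbf{R}^\top\bar{\bA}\end{pmatrix}$ with trapdoor $\mathbf{R}$, near-uniformity of $\bA$ via a Gaussian regularity / leftover-hash argument, and inversion via $\mathbf{y}_2 + \mathbf{R}^\top\mathbf{y}_1 = \bG\bs + (\be_2 + \mathbf{R}^\top\be_1)$ followed by gadget inversion are all correct. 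Your reading of the error bound $q/O(n\log q)$ is also right: the spectral norm of a Gaussian $\mathbf{R}$ with $m_1, m_0 = O(n\log q)$ is $O(\sqrt{n\log q})$ with overwhelming probability, so the amplified error comfortably stays under the $q/4$ gadget-inversion threshold, and the ``overwhelming probability over $\gentrap$'' qualifier in the statement is there precisely to exclude rare draws of $\mathbf{R}$ with atypically large norm. One forward-looking remark: the paper's \cref{lemma:perfecttd} upgrades this theorem to \emph{perfect} correctness by having $\gentrap$ run an efficient functionality check on the sampled pair and substitute a fixed good pair on failure; the trapdoor notion used there is a homogeneous one ($\tau_\bA \cdot \bA = 0 \bmod q$), whereas your $\mathbf{R}$ is inhomogeneous ($[\mathbf{R}^\top \mid \mathbf{I}]\bA = \bG$), a mismatch the paper flags in a footnote and asserts can be bridged by a standard conversion, so your $\gentrap$ would need that extra step if you intend to support that lemma directly.
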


In fact, we'll need a slightly stronger version of the above statement. In particular, we will actually need the correctness of $\invert$ to hold \emph{perfectly} rather than statistically over the randomness of $\gentrap$. This can be arranged by slightly tweaking the $\gentrap$ procedure.


\begin{lemma}
\label{lemma:perfecttd}
There exist algorithms $\mathsf{GenTrap}$ and $\mathsf{Invert}$ as described in~\cref{thm:trapdoor} where $\mathsf{Invert}$ returns $s,e$ with probability 1.
\end{lemma}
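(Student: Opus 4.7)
The plan is to obtain perfect correctness by performing rejection sampling inside $\mathsf{GenTrap}$, restricting its output to a ``good'' set of trapdoors on which $\mathsf{Invert}$ provably succeeds for every admissible input. Recall that in the $\mathsf{MP}12$ construction, the output has the form $\bA = [\bar{\bA} \,\|\, \bG - \bar{\bA}\bR]$, where $\bar{\bA}$ is uniform and $\bR$ is sampled from a discrete Gaussian with parameter $\sigma$; the trapdoor is $\bR$, and $\mathsf{Invert}$, given $\bA\bs + \be$ with $\|\be\| \leq q/(O(n \log q))$, recovers $(\bs, \be)$ by a procedure whose only failure mode is that the ``inverted noise'' $[\bR^\top \,\|\, \bI] \be$ exceeds the decoding radius of the gadget matrix $\bG$.

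First, I would fix a deterministic bound $B = \sigma \cdot \omega(\sqrt{\log \secp})$ and define the good set $\mathcal{G}$ to consist of those $\bR$ whose operator norm (or entrywise maximum) is at most $B$. The bound $B$ is chosen so that for every $\be$ with $\|\be\| \leq q/(O(n\log q))$ and every $\bR \in \mathcal{G}$, the quantity $\|[\bR^\top \,\|\, \bI]\be\|$ lies strictly inside the decoding radius of $\bG$. Because this is a worst-case deterministic statement about $\mathcal{G}$, it will imply that $\mathsf{Invert}(\bA, \bR, \bA\bs + \be) = (\bs, \be)$ with probability $1$ whenever $\bR \in \mathcal{G}$.

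Next, I would modify $\mathsf{GenTrap}$ to resample $\bR$ from the discrete Gaussian until $\bR \in \mathcal{G}$ (leaving $\bar{\bA}$ untouched, sampled once and uniformly). Standard Gaussian tail bounds guarantee that $\Pr[\bR \notin \mathcal{G}]$ is negligible in $n$, so the modified sampler terminates in expected constant time and its conditional distribution of $\bR$ is statistically within negligible distance of the original. Because $\bar{\bA}$ is uniform and independent of $\bR$, the marginal of $\bA$ under the modified sampler remains negligibly far from uniform, preserving the distributional guarantee of Theorem~3.1. Combined with the deterministic correctness on $\mathcal{G}$, this yields algorithms $\mathsf{GenTrap}$ and $\mathsf{Invert}$ for which $\mathsf{Invert}$ returns $(\bs, \be)$ with probability exactly $1$.

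The main subtlety, and the only step that really requires care, is choosing $B$ so that correctness of $\mathsf{Invert}$ on $\mathcal{G}$ is genuinely \emph{deterministic} (worst-case over all admissible $\be$) rather than merely high-probability, while simultaneously ensuring that the rejection event has negligible probability. This is a bookkeeping exercise with the subgaussian and singular-value bounds already established in~\cite{EC:MicPei12}, and does not require new machinery.
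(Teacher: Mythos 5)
Your argument is correct in substance but takes a somewhat different route from the paper's proof, and it is worth spelling out the contrast. The paper works with a \emph{homogeneous} trapdoor $\tau_\bA$ satisfying $\tau_\bA \cdot \bA = 0 \bmod q$ (derived from the MP12 inhomogeneous trapdoor), identifies four efficiently checkable conditions on the pair $(\bA,\tau_\bA)$ — full rank of $\bA$ mod $q$, the annihilation relation, smallness of the entries of $\tau_\bA$, and full rank of $\tau_\bA$ over the rationals — and observes that inversion (multiply by $\tau_\bA$ mod $q$, then by $\tau_\bA^{-1}$ over the rationals) succeeds \emph{deterministically} on any such ``functional'' pair. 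Its modified $\gentrap$ then samples once, checks the conditions, and \emph{swaps in a fixed functional pair} if the check fails. You instead stay with the MP12 inhomogeneous trapdoor $\bR$ and restrict to a good set $\mathcal{G}$ defined by a single norm bound $B$; because gadget decoding is exact for \emph{any} $\bs$ once $[\bR^\top \,\|\, \bI]\be$ is inside the decoding radius, you need no rank condition on $\bA$ or $\bR$, which makes your good set simpler to state. Both routes are sound and rest on the same underlying observation (the bad event is negligible-probability and avoidable, so conditioning/replacing perturbs the marginal of $\bA$ only negligibly). The one place your write-up needs a small repair: resampling $\bR$ ``until $\bR \in \mathcal{G}$'' gives an \emph{expected}-polynomial-time sampler, whereas $\gentrap$ should be a strict PPT algorithm; fix this by capping the number of trials at, say, $\secp$ and falling back to a fixed good $\bR$ if all trials fail (or, equivalently, do a single sample-then-replace as the paper does). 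With that change your proof is complete.
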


\begin{proof} (Sketch) Call a matrix-trapdoor pair $(\bA,\tau_\bA)$ ``functional'' if $\bA$ is full rank (rank $n$) mod $q$, $\tau_\bA$ is an $m \times m$ matrix such that $\tau_\bA \cdot \bA = 0 \bmod q$,\footnote{The trapdoor generation procedure presented in~\cite{EC:MicPei12} actually results in an ``inhomogeneous'' trapdoor, where it holds that $\tau_\bA \cdot \bA = \mathbf{G}$, for the gadget matrix $\mathbf{G}$. However, one can derive a trapdoor satisfying $\tau_\bA \cdot \bA = 0$ from an inhomogeneous trapdoor.} each entry of $\tau_\bA$ is ``small enough'', and $\tau_\bA$ is full rank over the rationals. Such a functional matrix-trapdoor pair may be used to invert \emph{any} vector $\bv = \bA\bs + \be$, for small enough $\be$, as follows. Left multiply $\bv$ by $\tau_\bA$ over $\bZ_q$, and then left multiply the result by $\tau_\bA^{-1}$ over the rationals, which recovers $\be$. Then subtract $\be$ from $\bv$ and recover $\bs$ by linear algebra. Now observe that the four conditions for $(\bA,\tau_\bA)$ to be functional are all efficiently checkable. Thus, the modified $\gentrap$ algorithm can operate as follows. Sample $(\bA,\tau_\bA)$ as before, then check if it is functional, and if not replace $(\bA,\tau_\bA)$ with some fixed functional pair. Since $\gentrap$ only outputs a non-functional pair with negligible probability, this modification maintains the requirement that the distribution of $\bA$ is negligibly close to uniform.
\end{proof}

\subsection{Definition}

\begin{definition}[Quantum Multi-Key Fully-Homomorphic Encryption (QMFHE)]
\label{def:qmfhe}
A quantum multi-key fully-homomorphic encryption scheme is given by six algorithms ($\qmfhe.\keygen$, $\qmfhe.\enc$, $\qmfhe.\qenc$, $\qmfhe.\eval$, $\qmfhe.\dec$, $\qmfhe.\qdec$) with the following syntax.
\begin{itemize}
    \item $(\pk,\sk) \leftarrow \qmfhe(1^\secp)$ : A PPT algorithm that given a security parameter, samples a classical public key and a classical secret key.
    \item $c \leftarrow \qmfhe.\enc(\pk,b)$ : A PPT algorithm that takes as input a bit $b$ and outputs a classical ciphertext.
    \item $|\phi\rangle \leftarrow \qmfhe.\qenc(\pk,|\psi\rangle)$ : A QPT algorithm that takes as input a qubit $|\psi\rangle$ and outputs a ciphertext represented in qubits.
    \item $\widehat{c},\ket{\widehat{\phi}} \gets \qmfhe.\eval((\pk_1,\dots,\pk_n),C,(\ket{\phi_1},\dots,\ket{\phi_n}))$: A QPT algorithm that takes as input 
    \begin{enumerate}
        \item A set of $n$ public keys.
        \item A general quantum circuit with $\ell_1 + \dots + \ell_n$ input qubits and $\ell'$ output qubits, out of which $m$ are measured.
        \item A set of $n$ ciphertexts where $\ket{\phi_i}$ encrypts an $\ell_i$-qubit state under public key $\pk_i$. Some of the $\ell_i$ ciphertexts are possibly classical ciphertexts (generated by the classical encryption algorithm) encrypting classical bits.
    \end{enumerate}
    The evaluation algorithm outputs a classical ciphertext $\widehat{c}$ encrypting $m$ bits (under keys $\pk_1,\dots,\pk_n$), plus a quantum ciphertext $\ket{\widehat{\phi}}$ encrypting an $(\ell'-m)$-qubit quantum state (under keys $\pk_1,\dots,\pk_n$).
    \item $b \gets \qmfhe.\dec((\sk_1,\dots,\sk_n),c)$: A PPT algorithm that takes as input a set of $n$ secret keys and a classical ciphertext $c$ and outputs a bit.
    \item $\ket{\psi} \gets \qmfhe.\qdec((\sk_1,\dots,\sk_n),\ket{\phi})$: A QPT algorithm that takes as input a set of $n$ secret keys and a quantum ciphertext $\ket{\phi}$ and outputs a qubit.
\end{itemize}


The scheme satisfies the following.
\begin{enumerate}
\item
{\bf Quantum Semantic Security:} The encryption algorithm maintains quantum semantic security.
\item
{\bf Compactness:} There exists a polynomial $\mathsf{poly}(\cdot)$ s.t. for every quantum circuit $C$ with $\ell'$ output qubits and an encryption of an input for $C$, the output size of the evaluation algorithm is $\mathsf{poly}(\secp, \ell')$,
where $\secp$ is the security parameter of the scheme.
    \item {\bf Classicality-Preserving Quantum Homomorphism: } Let $C = \{C_\secp\}_{\secp \in \mathbb{N}}$ be a polynomial-size quantum circuit, where $C_\secp$ has $\ell_1(\secp) + \dots + \ell_n(\secp)$ input qubits and $\ell'(\secp)$ output qubits, of which $m(\secp)$ are measured. Let $\ket{\phi_1},\dots,\ket{\phi_n} = \{\ket{\phi_1}_\secp,\dots,\ket{\phi_n}_\secp\}_{\secp \in \mathbb{N}}$ be an input state for $C$, let $(\pk_1,\sk_1),\dots,(\pk_n,\sk_n) = \{(\pk_1,\sk_1)_\secp,\dots,(\pk_n,\sk_n)_\secp\}_{\secp \in \mathbb{N}}$ be pairs of public and secret keys ($\forall i \in [n], \secp \in \mathbb{N}, (\pk_i,\sk_i)_\secp \in \qmfhe.\keygen(1^\secp)$) and let $r_1,\dots,r_n = \{(r_1)_\secp,\dots,(r_n)_\secp\}_{\secp \in \mathbb{N}}$ be $n$ random strings for the encryption algorithm. Then there exists a negligible function $\mu(\cdot)$ such that for all $\secp \in \mathbb{N}$, $$\mathsf{TD}(\rho_{0,\secp},\rho_{1,\secp}) \leq \mu(\secp),$$ where $\rho_0,\rho_1$ are quantum distributions defined as follows:
    \begin{itemize}
        \item $\rho_{0,\secp}$: For each $i \in [n]$, encrypt each classical bit of $\ket{\phi_i}$ with $\qmfhe.\enc(\pk_i,\cdot)$ and the rest with $\qmfhe.\qenc(\pk_i,\cdot)$ (using randomness $r_i$). Execute $\qmfhe.\eval((\pk_1,\dots,\pk_n),C,\cdot)$ on the $n$ encryptions to get $\widehat{c},\ket{\widehat{\phi}}$, where $\widehat{c}$ is a classical ciphertext encrypting $m(\secp)$ bits. Then output\\ $\qmfhe.\dec((\sk_1,\dots,\sk_n),\widehat{c}),\qmfhe.\qdec((\sk_1,\dots,\sk_n),\ket{\widehat{\phi}})$.
        \item $\rho_{1,\secp}$: Output $C(\ket{\phi_1,\dots,\phi_n})$.
    \end{itemize}
\end{enumerate}
\end{definition}

Known \emph{classical} LWE-based constructions of multi-key fully-homomorphic encryption~\cite{C:CleMcG15,EC:MukWic16,TCC:PeiShi16,TCC:BraHalPol17} do not quite satisfy the above syntax.\footnote{Though there are NTRU-based constructions that do~\cite{STOC:LopTroVai12,EPRINT:AJJM20}.} Instead, they relax the syntax to allow for some notion of setup. In this work, we will be interested in the notion of \emph{distributed setup} which was achieved in the classical setting by~\cite{TCC:BraHalPol17}.

\begin{definition}[QMFHE with Distributed Setup]
\label{def:distributed-setup}
A QMFHE scheme $\qmfhe$ has distributed setup if it includes the following algorithm.
\begin{itemize}
    \item $\qmfhe.\setup(1^\secp,1^n,i)$: A PPT algorithm that takes as input the security parameter, a number of parties, and an index $i \in [n]$, and outputs a string $\pp_i$.
\end{itemize}
We then define the public parameters of the scheme $\pp = (\pp_1,\dots,\pp_n)$ and assume that all other algorithms take $\pp$ as input.
\end{definition}

\begin{remark}
This notion of distributed setup gives rise to a stronger notion of semantic security, which considers \emph{rushing} adveraries that may generate $\{\pp_j\}_{j \neq i}$ maliciously, possibly depending on $\pp_i$. More formally, in the security game the adversary first picks $n$ and an $i \in [n]$ and sends these to its challenger. The challenger then runs $\pp_i \gets \qmfhe.\setup(1^\secp,1^n,i)$, and returns $\pp_i$ to the adversary. Then, the adversary generates $\{\pp_j\}_{j \in [n] \setminus \{i\}}$ arbitrarily and sends these to its challenger. Finally, the challenger draws a public key secret key pair based on $\{\pp_i\}_{i \in [n]}$, and the semantic security game continues are usual. This notion of semantic security was achieved in the classical setting by~\cite{TCC:BraHalPol17}.
\end{remark}

In this work, we also consider a more stringent requirement on the operation of the $\qmfhe.\dec$ algorithm, which we call \emph{nearly linear decryption of classical ciphertexts}. Essentially, this states that decrypting a classical ciphertext $c$ encrypted under keys $\sk_1,\dots,\sk_n$ amounts to computing a linear function $\cL_c$ (defined by $c$) on the concatenated secret keys $[\sk_1 \ | \ \dots \ | \ \sk_n]$ modulo some integer $q$, and then rounding.

\begin{definition}[QMFHE with Nearly Linear Decryption of Classical Ciphertexts]
\label{def:lindec}
A QMFHE scheme $\qmfhe$ has nearly linear decryption of classical ciphertexts if the $\qmfhe.\dec$ algorithm operates as follows.
\begin{itemize}
    \item $\qmfhe.\dec((\sk_1,\dots,\sk_n),c)$: There is an efficiently computable linear function $\cL_c$ (determined by $c$) and an (even) integer $q$ such that the decryption prodecure computes $$\cL_c(\sk_1,\dots,\sk_n) = b \cdot q/2 + e \bmod q$$ (where $e < q/4$) and returns $b \in \{0,1\}$. Equivalenty, we can define linear functions $\cL_c^{(1)},\dots,\cL_c^{(n)}$ such that the decryption procedure computes $$\sum_{i \in [n]}\cL_c^{(i)}(\sk_i) = b \cdot q/2 + e \bmod q.$$
\end{itemize}
\end{definition}

Finally, we remark that we do not consider an additional security property often found in classical constructions of multi-key FHE, which roughly stipulates that partial decryptions of other parties may be simulated. This property is most relevant when considering the direct application of multi-key FHE to MPC, but we will not need it in this work.

\subsection{Background} 
\label{subsec:background}

We follow the template given by~\cite{FOCS:Mahadev18b} for constructing a quantum fully-homormorphic encryption scheme.~\cite{FOCS:Mahadev18b} essentially showed that a classical fully-homomorphic scheme $\fhe$ can be converted into a quantum fully-homomorphic scheme, as long as it has a few additional properties, necessary for computing the so-called \emph{encrypted CNOT} operation (properties 2-4 in~\cref{def:quantum-capable}). 

Unfortunately, no known fully-homomorphic schemes immediatedly satisfy these properties. However,~\cite{FOCS:Mahadev18b} observed that the dual Regev (non-fully-homomorphic) encryption scheme of~\cite{STOC:GenPeiVai08} does satisfy these properties, and moreover, and that there exists a fully-homomorphic encryption scheme $\fhe$ (the dual version of~\cite{C:GenSahWat13}) with an efficient procedure for converting an $\fhe$ ciphertext into a dual Regev ciphertext encrypting the same message. In fact, the conversion procedure presented in~\cite{FOCS:Mahadev18b} simply consists of taking the last column of the $\fhe$ ciphertext. This suffices to give a quantum fully-homomorphic encryption scheme, since before every encrypted CNOT operation, the evaluator can convert any $\fhe$ ciphertext needed during the operation into a dual Regev ciphertext, and then proceed. The fourth property below is needed in order to convert an evaluated dual Regev ciphertext back into an $\fhe$ ciphertext upon completion of the encrypted CNOT operation.\footnote{For technical reasons, the \emph{randomness} in the dual Regev ciphertext must also be recovered, which motivates the need for a trapdoor rather than merely a secret key.}

This motivated the definition of a quantum-capable fully-homomorphic encryption scheme given in~\cite{FOCS:Mahadev18b}. Such a scheme admits an efficient procedure that converts ciphertexts into ciphertexts of an alternate encryption scheme $\ahe$ that satisfies the properties necessary to carry out the encrypted CNOT operation.~\cite{FOCS:Mahadev18b} showed that any such FHE scheme gives rise to an FHE scheme that can additionally encrypt quantum states and evaluate quantum circuits. Below we give the analogous definition for \emph{multi-key} fully-homomorphic encryption, and follow with a sketch of the analogous conversion from quantum-capable multi-key fully-homomorphic encryption to quantum multi-key fully-homomorphic encryption.


\begin{definition}[Quantum-Capable Multi-Key Fully-Homomorphic Encryption Scheme]
\label{def:quantum-capable}
Let $\mfhe$ be a classical multi-key fully-homomorphic encryption scheme. $\mfhe$ is quantum-capable if i) its $\keygen$ procedure outputs a public key $\pk$, secret key $\sk$, and ``trapdoor'' $\tau$, and ii) there exists an alternate encryption scheme $\ahe$ such that the following properties holds.

\begin{enumerate}
    \item There exists an algorithm $\mfhe.\convert$ that takes as input a set of public keys $\pk_1,\dots,\pk_t$ and a ciphertext $c$ encrypted under $\pk_1,\dots,\pk_t$, and outputs an encryption $\widehat{c}$ under $\ahe$ with public keys $\pk_1,\dots,\pk_t$, where $c$ and $\widehat{c}$ encrypt the same value.
    \item There exists an invertible operation $\oplus_H$ (which may depend on $\pk_1,\dots,\pk_t$) on $\ahe$ ciphertexts such that, for all $x_0,x_1,a \in \{0,1\}$, $\ahe.\enc((\pk_1,\dots,\pk_t),x_0) \oplus_H a\cdot\ahe.\enc((\pk_1,\dots,\pk_t),x_1)$ is an $\ahe$ encryption of $x_0 \oplus a \cdot x_1$ under $\pk_1,\dots,\pk_t$.
    \item There exists a distribution $\mathcal{D}$ (which may depend on $\pk_1,\dots,\pk_t$) such that for all ciphertexts $c$ that can arise during homomorphic evaluation,\footnote{This set will consist of all ciphertexts with noise below some fixed bound.} \begin{align*}&\{\ahe.\enc((\pk_1,\dots,\pk_t),x;r) \ | \ (x,r) \leftarrow \mathcal{D}\}\\ \approx_s &\{\ahe.\enc((\pk_1,\dots,\pk_t),x;r) \oplus_H c \ | \ (x,r) \leftarrow \mathcal{D}\},\end{align*} and there is an efficient procedure for generating the superposition $\sum_{x,r}\sqrt{\mathcal{D}(x,r)}\ket{x,r}.$
    \item There exists an efficient function $f$ such that for any $c = \ahe.\enc((\pk_1,\dots,\pk_t),x;r)$, $f((\tau_1,\dots,\tau_t),c) = (x,r)$.
\end{enumerate}
\end{definition}

\paragraph{From quantum-capability to multi-key quantum FHE}. We now sketch, following~\cite{FOCS:Mahadev18b}'s approach in the single-key setting, how a quantum-capable multi-key FHE scheme gives rise to a full-fledged quantum multi-key FHE scheme. The following description makes use of the quantum one-time pad (QOTP), which is a method of \emph{perfectly} encrypting arbitrary quantum states $\ket{\phi}$ using classical bits $k$. We refer the reader to~\cite{FOCS:Mahadev18b} for details about how the QOTP is constructed and proven secure. We do not present details about how individual quantum gates are evaluated, or the inner workings of the encrypted CNOT operation, electing instead to present a high-level picture. Again we refer the reader to~\cite{FOCS:Mahadev18b} for all of these details. The following assumes a quantum-capable multi-key FHE scheme $\qcmfhe$, and describes a quantum multi-key FHE scheme $\qmfhe$.

\begin{itemize}
    \item \textbf{Key generation}. This procedure generates $(\pk',\sk',\tau) \gets \qcmfhe.\keygen(1^\secp)$, computes a ciphertext $\ct^{(\tau)} \gets \qcmfhe.\enc(\pk',\tau)$, and sets the public key of $\qmfhe$ to $\pk \coloneqq (\pk',\ct^{(\tau)})$ and the secret key to $\sk \coloneqq \sk'$.
    \item \textbf{Encryption}. To encrypt a quantum state $\ket{\phi}$, sample a random QOTP key $k$ and release the ciphertext $\ct = (\qcmfhe.\enc(\pk',k),\text{QOTP}(k,\ket{\phi}))$. To encrypt a classical string $m$, sample a classical one-time pad key $k$ and release $\ct = (\qcmfhe.\enc(\pk',k),k \oplus m)$.
    \item \textbf{Homomorphic evaluation}. This operation takes as input $t$ public keys $\{\pk_i = (\pk_i',\ct_i^{(\tau)})\}_{i \in [t]}$ and $t$ ciphertexts $\{\ct_i = (\cm_i,\ket{\cm}_i)\}_{i \in [t]}$. It first expands each $\cm_i$ into a multi-key ciphertext $\cm'_i$ encrypted under all public keys $\pk'_1,\dots,\pk'_t$, and gathers all components into a quantum multi-key ciphertext $(\hat{\cm},\ket{\hat{\cm}}) = ((\cm'_1,\dots,\cm'_t),(\ket{\cm}_1,\dots,\ket{\cm}_t))$. It also expands and concatenates the ciphertexts $\{\ct_i^{(\tau)}\}_{i \in [t]}$ to produce an evaluation key $\hat{\ct}^{(\tau)}$ that encrypts the trapdoors $(\tau_1,\dots,\tau_t)$ under all public keys $\pk'_1,\dots,\pk'_t$. Next, it applies a quantum circuit gate by gate on the ciphertext, as follows.
    \begin{itemize}
        \item If the gate is of a particular type, namely, it is a Clifford operator, then homomorphically evaluating the gate can be done via a \emph{parallel} procedure, where a classical circuit is applied homomorphically over $\hat{\cm}$ to produce $\hat{\cm}'$, and a quantum circuit is applied \emph{directly} to $\ket{\cm}$ to produce $\ket{\hat{\cm}'}$. 
        \item Any universal gate set for quantum computation must contain at least one non-Clifford operator, and~\cite{FOCS:Mahadev18b} includes the Toffoli gate.~\cite{FOCS:Mahadev18b} gives a procedure for homomorphically applying the Toffoli gate that involves parallel operations as above along with an encrypted CNOT operation, which requires the following manipulation. First, $\qcmfhe.\convert$ is run on $\hat{\cm}$ to produce an $\ahe$ ciphertext $\hat{\mathsf{d}}$. This ciphertext is used to define a quantum circuit that is applied to $\ket{\cm}$ to produce $\ket{\cm'}$ along with a (measured) $\ahe$ ciphertext $\hat{\mathsf{d}}'$. Finally, the function $f$ (defined in property 4 of~\cref{def:quantum-capable}), with ciphertext $\hat{\mathsf{d}}'$ hard-coded, is applied homomorphically over $\hat{\ct}^{(\tau)}$ to produce a classical ciphertext $\cm'$ encrypting the message and randomness from $\hat{\mathsf{d}}'$. 
    \end{itemize}
    \item \textbf{Decryption}. Given a ciphertext $(\hat{\cm},\ket{\hat{\cm}})$ encrypting a qubit under $\pk_1,\dots,\pk_t$, and corresponding $\qcmfhe$ secret keys $\sk'_1,\dots,\sk'_t$, this operation runs $\qcmfhe.\dec.((\sk'_1,\dots,\sk'_t),\hat{\cm})$ to produce a key $k$, and then uses $k$ to decrypt the one-time padded state $\ket{\hat{\cm}}$. The same procedure works if $\ket{\hat{\cm}}$ was instead a classical string $k \oplus m$.
    
    
\end{itemize}

\subsection{Construction}
\label{subsec:keyswitch}

Existing classical multi-key fully-homorphic encryption schemes~\cite{STOC:LopTroVai12,C:CleMcG15,EC:MukWic16,TCC:PeiShi16,TCC:BraHalPol17,EPRINT:AJJM20} in the literature do not appear to admit a simple conversion procedure necessary for quantum-capability, such as the one enjoyed by dual-GSW in the single-key setting. However, we show that indeed there exists a general conversion procedure that works for \emph{any} multi-key fully-homomorphic encryption scheme with \emph{nearly linear decryption} (see~\cref{def:lindec}). This method is essentially key-switching (see for example~\cite{FOCS:BraVai11,TCC:BDGM19}), and relies on the existence of a multi-key \emph{linearly}-homomorphic encryption scheme. This multi-key linearly homomorphic scheme is an extension of dual Regev encryption, and is implicit in the construction that follows.

Let $\mfhe$ be a classical multi-key fully-homomorphic encryption scheme with nearly linear decryption. Consider the following scheme $\qcmfhe$, which is identical to $\mfhe$ except that it has a different $\keygen$ algorithm and it additionally supports a $\convert$ algorithm. Let $q$ be an even $k$-bit modulus, let $\mathbf{g} = (1,2,\dots,2^k)$, and for $y \in \mathbb{Z}_q$, let $\mathbf{g}^{-1}(y) \in \{0,1\}^k$ be the binary expansion of $y$, i.e. it holds that $\mathbf{g}^\top \cdot \mathbf{g}^{-1}(y) = y$.

\begin{itemize}
    \item $\qcmfhe.\keygen(1^\secp)$: 
    \begin{enumerate}
        \item Compute $(\mfhe.\pk,\mfhe.\sk) \gets \mfhe.\keygen(1^\secp)$, where $\mfhe.\sk \in \bZ_q^\ell$ and  $\ell = \poly(\secp)$.
        \item Let $n,m$ be positive integers and $\chi$ be a $B$-bounded error distribution, where $n,m,B = \poly(\secp)$.
        \item Draw $(\bB,\tau) \leftarrow \gentrap(1^n,1^m,q)$, $\mathbf{b} \gets \bZ_q^n$, and set $\bA = \begin{pmatrix} \bB \\ \mathbf{b}^\top \end{pmatrix}$.
        \item Parse $\mfhe.\sk \in \bZ_q^\ell$ as $\mu_1,\dots,\mu_\ell \in \bZ_q$, and for each $i \in [\ell]$, compute the following.
        \begin{enumerate}
            \item Draw $\bS \leftarrow \mathbb{Z}_q^{n \times k}$ and $\bE \leftarrow \chi^{(m+1) \times k}$.
            \item Set $\bC_i \coloneqq \bA \cdot \bS + \bE + \mu_i \cdot \mathbf{g}^\top \cdot \bu_{m+1}$, where $\bu_{m+1}$ is the $(m+1)$-dimensional vector with all 0s except the final coordinate is 1.
        \end{enumerate}
        \item Output $\pk \coloneqq (\mfhe.\pk,\bC_1,\dots,\bC_\ell)$, $\sk \coloneqq \mfhe.\sk$, and $\tau$.
    \end{enumerate}
    \item $\qcmfhe.\convert((\pk_1,\dots,\pk_t),c)$: Let the linear function $\cL_c$ determined by $c$ consist of coefficients $a_{1,1},\dots,a_{1,\ell},\dots,a_{t,1},\dots,a_{t,\ell}$. Parse each $\pk_i$ to obtain $\bC_{i,1},\dots,\bC_{i,\ell}$ and define $\widehat{\bC}_{i,j}$ as follows. Let $\bar{\bC}_{i,j}$ be the first $m$ rows of $\bC_{i,j}$ and $\mathbf{c}_{i,j}$ be the last row. Then $\hat{\bC}_{i,j} \in \mathbb{Z}_q^{(m\ell+1) \times k}$ is the matrix with 0s everywhere except that the $(i-1)m+1,\dots,im$ rows are set to $\bar{\bC}_{i,j}$ and the last row is set to $\mathbf{c}_{i,j}$. Output $$\sum_{i \in [t],j \in [\ell]}\hat{\bC}_{i,j} \cdot \mathbf{g}^{-1}(a_{i,j}).$$
\end{itemize}

We assume that the parameters of $\mfhe$ are instantiated in a particular way, namely, the modulus $q$ is set such that for any well-formed ciphertext $c$ (encrypted under a set of $t$ public keys) that may arise during homomorphic evaluation, $\cL_c(\sk_1,\dots,\sk_t) = q/2 + e \bmod q$, where $q \geq \omega(\poly(\secp)) \cdot |e|$. Recall that the linear function $\cL_c$ is guaranteed to exist by the nearly linear decryption property.

\begin{theorem}
\label{thm:qmfhe}
Assuming the existence of a multi-key fully-homomorphic encryption scheme $\mfhe$ with nearly linear decryption and a particular circular security property, there exists a quantum multi-key fully-homomorphic encryption scheme $\qmfhe$. Moreover, $\qmfhe$ satisfies the following properties.

\begin{enumerate}
    \item The $\qmfhe.\setup$ algorithm is equivalent to $\mfhe.\setup$.
    \item If $\mfhe$ is perfectly correct, then $\qmfhe$ satisfies Classicality-Preserving Quantum Homomorphism.
\end{enumerate}
\end{theorem}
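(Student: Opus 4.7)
The high-level plan is to show that the scheme $\qcmfhe$ described in~\cref{subsec:keyswitch} is a quantum-capable multi-key fully-homomorphic encryption scheme in the sense of~\cref{def:quantum-capable}, and then invoke the compiler sketched in~\cref{subsec:background} to obtain the desired $\qmfhe$. The circular security assumption is needed because $\qcmfhe.\keygen$ publishes an encryption (under the dual Regev--style matrix $\bA$) of values derived from the $\mfhe$ secret key, and the compiler further requires $\qcmfhe$-encryptions of the trapdoor $\tau$ inside the public key.

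First I would identify the alternate scheme $\ahe$ implicit in the construction: an $\ahe$ ciphertext encrypting $\mu$ under public keys $\pk_1,\dots,\pk_t$ is a vector of the form $\hat{\bA}\bs + \be + \mu\cdot(0,\dots,0,q/2)^\top$, where $\hat{\bA}$ is the ``expanded'' matrix whose $i$-th vertical block of $m$ rows is drawn from $\bB_i$ (the trapdoored part of $\bA_i$) and whose final row is $\sum_i \bb_i^\top$. With this definition, I would verify the four properties of~\cref{def:quantum-capable} in turn. For property 1 (convertibility), I would check that the last column of each $\bC_{i,j}$ is a dual Regev encryption of $\mu_{i,j}$ under $\bA_i$, then unfold the bit decomposition to observe that the $\convert$ procedure is computing $\hat{\bA}\bs' + \be' + \cL_c(\sk_1,\dots,\sk_t)\cdot\bu_{m\ell+1}$ for appropriate small $\bs',\be'$; combined with the nearly linear decryption of $\mfhe$ and the parameter choice $q \geq \omega(\poly(\secp))\cdot|e|$, this yields an $\ahe$ encryption of the same bit. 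For property 2 ($\oplus_H$), addition of dual Regev-style vectors modulo $q$ is the natural XOR-homomorphic operation. For property 4, I would exhibit $f$ as follows: parse $\hat{\bA}$ into its $t$ blocks of rows, apply the per-party trapdoor $\tau_i$ (guaranteed by~\cref{lemma:perfecttd} to invert perfectly) to the corresponding block of the ciphertext to recover the local contribution to $\bs$ and $\be$, then read off the final coordinate to extract $\mu$. Property 3 (rerandomization) is the usual dual Regev smudging/leftover-hash argument: $\mathcal{D}$ is a suitably wide Gaussian over $(\bs,\be)$ such that adding $\hat{\bA}\bs+\be$ to any bounded-noise ciphertext is statistically close to a fresh sample from the same distribution.

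Given quantum-capability, I would then feed $\qcmfhe$ into the Mahadev-style compiler sketched in~\cref{subsec:background}: the resulting $\qmfhe.\keygen$ runs $\qcmfhe.\keygen$ and publishes $\ct^{(\tau)} \gets \qcmfhe.\enc(\pk',\tau)$ (requiring the circular security assumption); encryption QOTPs the quantum state and encrypts the pad under $\qcmfhe$; homomorphic evaluation handles Clifford gates in parallel and uses $\qcmfhe.\convert$ together with the trapdoor-encrypting evaluation key to execute the encrypted CNOT subroutine for the non-Clifford gate; decryption applies $\qcmfhe.\dec$ to obtain the pad and then unpads. Semantic security and compactness follow from the corresponding properties of $\mfhe$ together with security of the QOTP. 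For the two ``moreover'' claims, I would observe that $\qcmfhe.\keygen$ (and thus $\qmfhe.\keygen$) draws $\bA_i,\tau_i$ locally with no dependence on any shared randomness, so if $\mfhe$ has a distributed setup then so does $\qmfhe$, with $\qmfhe.\setup = \mfhe.\setup$. Classicality-Preserving Quantum Homomorphism follows by tracking the error: when $\mfhe$ is perfectly correct, the only source of deviation in $\rho_0$ versus $\rho_1$ is the statistical error introduced by the encrypted CNOT operations (property 3), which by a union bound over the polynomially many non-Clifford gates contributes at most a negligible trace distance.

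The main obstacle I anticipate is property 4 in the multi-key setting: unlike the single-key dual-GSW argument of~\cite{FOCS:Mahadev18b}, I cannot hope to produce a single combined trapdoor from $\tau_1,\dots,\tau_t$. The key insight that makes the argument go through is that the conversion procedure is carefully designed to produce the block-diagonal $\hat{\bA}$ described above, so that the per-party trapdoors continue to act independently on disjoint row-blocks; the coupling across parties happens only in the final row, which carries the message and is read off after the noise is removed. A secondary subtlety is to confirm that the $\ahe$-encryption distribution produced by $\convert$, when applied to a ciphertext that has undergone arbitrary $\mfhe$ homomorphic evaluation, still lies within the ``bounded-noise'' regime required by property 3; this should follow by the standard bound $q \geq \omega(\poly(\secp)) \cdot |e|$ together with the smallness of the $\mathbf{g}^{-1}(\cdot)$ coefficients.
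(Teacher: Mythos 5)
Your proof plan is correct and follows essentially the same route as the paper: establish that $\qcmfhe$ is quantum-capable in the sense of~\cref{def:quantum-capable} by verifying the four properties (identifying the multi-key dual-Regev $\ahe$ with block-structured public key, arguing noise bounds for property~3, and using the per-party trapdoors acting on disjoint row-blocks for property~4), then invoke the compiler of~\cref{subsec:background}, with circular security handling the published encryption of the trapdoor and perfect correctness of $\mfhe$ together with~\cref{lemma:perfecttd} giving Classicality-Preserving Quantum Homomorphism. One small slip: the final row of the expanded matrix should be the concatenation $[\bb_1^\top \mid \dots \mid \bb_t^\top]$ rather than $\sum_i \bb_i^\top$ (which would not even have the right length $tn$), but this does not affect the argument.
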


\begin{proof}

$\qmfhe$ is obtained by first applying the construction described in~\cref{subsec:keyswitch} to $\mfhe$ to obtain $\qcmfhe$, followed by the construction sketched in~\cref{subsec:background}. 


First, we argue that $\qcmfhe$ is indeed quantum-capable. Consider the output of the $\qcmfhe.\convert$ algorithm. It is straightforward to verify that if $c$ is a well-formed encryption under public keys $\pk_1,\dots,\pk_t$ of the bit $\mu$, where each $\pk_i$ may be parsed as $\begin{pmatrix} \bB_i \\ \bb_i^\top \end{pmatrix}$, then the resulting vector may be written as $$\begin{pmatrix} \bB_1 & & \\ & \ddots & \\ & & \bB_t \\ \bb_1^\top & \dots & \bb^\top_t \end{pmatrix} \cdot \bs^* + \be^* + \frac{q}{2}\begin{pmatrix} 0 \\ \vdots \\ 0 \\ \mu\end{pmatrix},$$ for some $\bs^*\in \bZ_q^{tn},\be^* \in \bZ^{(m+1)\ell}$. This is exactly an encryption of $\mu$ under the dual Regev scheme with public key $$\begin{pmatrix} \bB_1 & & \\ & \ddots & \\ & & \bB_t \\ \bb_1^\top & \dots & \bb^\top_t \end{pmatrix}.$$ Thus the $\ahe$ scheme we use in~\cref{def:quantum-capable} is identical to the scheme used in~\cite{FOCS:Mahadev18b}. This shows that $\qcmfhe$ satisfies the first requirement in~\cref{def:quantum-capable}, and the fact that it satisfies also the second requirement is immediate.

To confirm that $\qcmfhe$ satisfies the third requirement, we take a closer look at $\be^*$. The distribution $\mathcal{D}$ used by~\cite{FOCS:Mahadev18b} samples $\mu$ and $\bs$ uniformly at random, and $\be$ from a discrete Guassian distribution with ``large enough'' parameter $B'$. This requirement will hold if $B'$ is super-polynomially larger than the entries of $\be^*$ (see Lemma 3.3 and Section 5.3 of~\cite{FOCS:Mahadev18b} for more details). Note that the modulus $q$ is super-polynomially larger than each entry of $\be^*$, by the assumption on parameters of $\mfhe$. Indeed, all but the last entry are bounded by $\ell \cdot k \cdot B = \poly(\secp)$, and the last entry is bounded by $t \cdot \ell \cdot k \cdot B$ plus the error that results from the nearly linear decryption, which is super-polynomially smaller than $q$. This allows us to define $B'$ to be large enough such that the third requirement will hold.

To confirm that $\qcmfhe$ satisfies the fouth requirement, note that $\bs^*$ may be written as a concatenation of $t$ $n$-dimensional vectors, and that the $i$'th such vector may be recovered by using $\tau_{i}$, by~\cref{lemma:perfecttd}. This process also recovers all but the last entry of $\be^*$. The last entry of $\be^*$ may then be recovered by subtracting the public key times $\bs^*$ and rounding the last element of the resulting vector. 

The above shows that $\qcmfhe$ is quantum-capable according to~\cref{def:quantum-capable}. Next, we discuss security of the scheme $\qmfhe$ obtained by applying the construction sketched in~\cref{subsec:background}. Observe that $\qmfhe.\keygen$ outputs a public key that contains a $\mfhe$ public key, a dual Regev public key, an encryption of the $\mfhe$ secret key under the dual Regev public key, and an encryption of the dual Regev \emph{trapdoor} under the $\mfhe$ secret key. Since $\qmfhe$ encryption involves encrypting a QOTP key under $\mfhe$ and using that key to \emph{perfectly} hide the message, it follows that security of $\qmfhe$ reduces to the security of $\mfhe$ in the presence of the particular two-cycle of keys described above (which at the very least relies on LWE to ensure security of dual Regev). Thus, as stated in the theorem, security follows from a particular circular security property of $\mfhe$.\footnote{This property is similar to the one needed by~\cite{FOCS:Mahadev18b} in the single-key setting, in the sense that encryption of a dual Regev trapdoor is part of the circular security requirement.}




It remains to argue that the two extra properties promised by the theorem statement hold. First, note that the constructions given in~\cref{subsec:keyswitch} and~\cref{subsec:background} do not alter any $\mfhe.\setup$ algorithm that may exist. Next, the second property boils down to showing that for \emph{every} choice of random coins used in $\qmfhe.\keygen$, homomorphic evaluation of quantum (or classical) circuits will be statistically correct. Perfect correctness of any Clifford operation follows directly from perfect correctness of $\mfhe$. Statistical correctness of the encrypted CNOT operation follows from properties 2 and 3 of~\cref{def:quantum-capable} (this analysis can be found in~\cite{FOCS:Mahadev18b}), plus the \emph{perfect} correctness of property 4, which is ensured by using the variant of the $\gentrap$ algorithm promised by~\cref{lemma:perfecttd}. 
\end{proof}



\subsection{Quantum Spooky Encryption}
\label{subsec:spooky}

We define the notion of spooky encryption for (classical) relations computable by quantum circuits, generalizing the purely classical notion from~\cite{C:DHRW16}. In favor of a simpler exposition we present the additive function sharing (AFS) variant of the notion, but we note that considering more general relations
(in the same spirit as~\cite{C:DHRW16}) is also possible.

\begin{definition}[Quantum AFS-Spooky Encryption]
A quantum AFS-spooky encryption scheme is given by six algorithms ($\spooky.\keygen$, $\spooky.\enc$, $\spooky.\qenc$, $\spooky.\eval$, $\spooky.\dec$, $\spooky.\qdec$) with the same syntax as the corresponding $\qmfhe$ algorithms defined in~\cref{def:qmfhe}, except for the following differences.


\begin{itemize}
    \item $b \gets \qmfhe.\dec(\sk,c)$: A PPT algorithm that takes as input a secret key and a classical ciphertext $c$ and outputs a bit. (This algorithm takes only one secret key, as opposed to $n$ secret keys in $\qmfhe$.)
    \item $\widehat{c}_1, \dots, \widehat{c}_n,\ket{\widehat{\phi}} \gets \spooky.\eval((\pk_1,\dots,\pk_n),C,(\ket{\phi_1},\dots,\ket{\phi_n}))$: A QPT algorithm that takes as input 
    \begin{enumerate}
        \item A set of $n$ public keys.
        \item A general quantum circuit with $\ell_1 + \dots + \ell_n$ input qubits and $\ell'$ output qubits, out of which $m$ are measured.
        \item A set of $n$ ciphertexts where $\ket{\phi_i}$ encrypts an $\ell_i$-qubit state under $\pk_i$. Some of the $\ell_i$ ciphertexts are possibly classical ciphertexts (generated by the classical encryption algorithm) encrypting classical bits.
    \end{enumerate}
    The evaluation algorithm outputs $n$ classical ciphertexts $(\widehat{c}_1, \dots, \widehat{c}_n)$ each encrypting $m$ bits under the corresponding $\pk_i$, plus a quantum ciphertext $\ket{\widehat{\phi}}$ encrypting an $(\ell'-m)$-qubit quantum state (under keys $\pk_1,\dots,\pk_n$). (This algorithm outputs $n$ classical ciphertexts, as opposed to one in $\qmfhe$.)
\end{itemize}
The scheme satisfies the same properties of quantum semantic security and compactness as defined in~\cref{def:qmfhe}. In the following we present the notion of correctness for quantum AFS-spooky encryption.

\begin{itemize}
    \item {\bf Correctness of Spooky Evaluation: } Let $C = \{C_\secp\}_{\secp \in \mathbb{N}}$ be a polynomial-size quantum circuit, where $C_\secp$ has $\ell_1(\secp) + \dots + \ell_n(\secp)$ input qubits and $\ell'(\secp)$ output qubits, of which $m(\secp)$ are measured. Let $\ket{\phi_1},\dots,\ket{\phi_n} = \{\ket{\phi_1}_\secp,\dots,\ket{\phi_n}_\secp\}_{\secp \in \mathbb{N}}$ be an input state for $C$, let $(\pk_1,\sk_1),\dots,(\pk_n,\sk_n) = \{(\pk_1,\sk_1)_\secp,\dots,(\pk_n,\sk_n)_\secp\}_{\secp \in \mathbb{N}}$ be pairs of public and secret keys ($\forall i \in [n], \secp \in \mathbb{N}, (\pk_i,\sk_i)_\secp \in \spooky.\keygen(1^\secp)$) and let $r_1,\dots,r_n = \{(r_1)_\secp,\dots,(r_n)_\secp\}_{\secp \in \mathbb{N}}$ be $n$ random strings for the encryption algorithm. Then there exists a negligible function $\mu(\cdot)$ such that for all $\secp \in \mathbb{N}$, $$\mathsf{TD}(\rho_{0,\secp},\rho_{1,\secp}) \leq \mu(\secp),$$ where $\rho_0,\rho_1$ are quantum distributions defined as follows:
    \begin{itemize}
        \item $\rho_{0,\secp}$: For each $i \in [n]$, encrypt each classical bit of $\ket{\phi_i}$ with $\spooky.\enc(\pk_i,\cdot)$ and the rest with $\spooky.\qenc(\pk_i,\cdot)$ (using randomness $r_i$). Execute $\spooky.\eval((\pk_1,\dots,\pk_n),C,\cdot)$ on the $n$ encryptions to get $(\widehat{c}_1, \dots, \widehat{c}_n),\ket{\widehat{\phi}}$, where $(\widehat{c}_1, \dots, \widehat{c}_n)$ are classical ciphertexts each encrypting $m(\secp)$ bits. Then output $$\mathop{\bigoplus}\limits_{i=1}^n\spooky.\dec(\sk_i,\widehat{c}_i),\spooky.\qdec((\sk_1,\dots,\sk_n),\ket{\widehat{\phi}}).$$
        \item $\rho_{1,\secp}$: Output $C(\ket{\phi_1,\dots,\phi_n})$.
    \end{itemize}
\end{itemize}
\end{definition}

\paragraph{(Classical) AFS-Spooky Encryption with Distributed Setup.} As a stepping stone towards the main result of this section, we show how to construct spooky encryption for \emph{classical} relations with a distributed setup. More precisely, assuming the hardness of the LWE problem, we show an instantiation of spooky encryption for any polynomial-size (classical) circuit where the parties jointly compute the public parameters of the system via a local algorithm $\spooky.\setup$ (with the same syntax as \cref{def:distributed-setup}). This stands in contrast with the scheme of~\cite{C:DHRW16}, where the common reference string is assumed to be sampled by a trusted party. Before describing the construction, we recall a useful lemma (rephrased) from~\cite{C:DHRW16}.

\begin{lemma}[\cite{C:DHRW16}]
Let $\spooky$ be an AFS-spooky encryption scheme that supports (i) single key additive homomorphism and (ii) two-key spooky multiplication. Then the same scheme supports the AFS-spooky evaluation of all polynomial-size (classical) circuits.
\end{lemma}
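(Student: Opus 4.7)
The plan is to proceed by induction on the structure of an arithmetic circuit representation of the target computation. We work with addition and multiplication gates (over $\mathbb{F}_2$ or a suitable ring) as a universal gate set; since we are dealing with polynomial-size classical circuits, this incurs only polynomial overhead. The invariant we maintain as we evaluate the circuit gate by gate is the following: for every wire value $w$ in the circuit, the evaluator holds ciphertexts $\enc(\pk_1, w_1), \dots, \enc(\pk_n, w_n)$ such that $w = \bigoplus_{i=1}^n w_i$, i.e.\ an AFS-encoding of $w$ across the $n$ parties. The base case is the input wires: each input $m_i$ is already provided as $\enc(\pk_i, m_i)$, and we can pad it to an AFS-encoding by setting $w_j = 0$ for $j \neq i$ (encrypted under $\pk_j$ via trivial fresh encryptions of zero, if needed for uniformity of format).

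For an addition gate with input wires $w$ and $w'$, sharing as $(w_1, \dots, w_n)$ and $(w'_1, \dots, w'_n)$, the output $w + w'$ is shared as $(w_1 + w'_1, \dots, w_n + w'_n)$. Each component $\enc(\pk_i, w_i + w'_i)$ is obtained by single-key additive homomorphism applied to $\enc(\pk_i, w_i)$ and $\enc(\pk_i, w'_i)$. For a multiplication gate, note that
\[
w \cdot w' \;=\; \Bigl(\sum_{i=1}^n w_i\Bigr)\Bigl(\sum_{j=1}^n w'_j\Bigr) \;=\; \sum_{i=1}^n w_i w'_i \;+\; \sum_{i \neq j} w_i w'_j.
\]
Each diagonal term $w_i w'_i$ can be computed as $\enc(\pk_i, w_i w'_i)$ using single-key homomorphism (single-key additive homomorphism plus a multiplication by a plaintext constant is not quite enough in general, but multiplying two ciphertexts under the \emph{same} key and getting a ciphertext under that same key is a special case of two-key spooky multiplication where both keys coincide; alternatively, we can just invoke the two-key spooky multiplication on $\pk_i, \pk_i$ and combine the two resulting shares additively under $\pk_i$). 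Each cross term $w_i w'_j$ for $i \neq j$ is handled by invoking the two-key spooky multiplication on $\enc(\pk_i, w_i)$ and $\enc(\pk_j, w'_j)$, producing ciphertexts $\enc(\pk_i, u_{i,j})$ and $\enc(\pk_j, v_{i,j})$ with $u_{i,j} \oplus v_{i,j} = w_i w'_j$.

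Finally, to assemble the AFS-encoding of $w \cdot w'$, we aggregate for each key $\pk_k$ all of the contributions that land under $\pk_k$: namely, the diagonal term share under $\pk_k$, together with every $u_{k,j}$-share (for $j \neq k$) and every $v_{i,k}$-share (for $i \neq k$). This aggregation is again a single-key additive operation under $\pk_k$, so we obtain $\enc(\pk_k, z_k)$ for each $k \in [n]$ such that $\bigoplus_k z_k = w \cdot w'$, restoring the invariant. Applying this gate by gate over the circuit $C$ of size $s$ produces, at each output wire, an AFS-encoding under $(\pk_1, \dots, \pk_n)$ of the corresponding output bit. The total number of invocations of two-key spooky multiplication is $O(s \cdot n^2)$, which is polynomial as required.

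The only conceptual subtlety is to verify that the ciphertexts produced by the two-key spooky multiplication can be re-fed as inputs to subsequent single-key additive homomorphism and to further two-key spooky multiplication; this is precisely what a usable homomorphic scheme guarantees, but one should make explicit the compactness of the intermediate ciphertexts, i.e.\ that their size does not blow up with the circuit depth. Since the scheme is assumed to be a (compact) AFS-spooky encryption scheme that already supports the two listed operations, the output ciphertexts of those operations lie in the same ciphertext space, so repeated composition is well-defined and the size of the final ciphertexts depends only on $\secp$ and $n$, not on $|C|$. Correctness of the final AFS-relation $\bigoplus_i y_i = C(m_1, \dots, m_n)$ then follows from a straightforward induction using the invariant above and the correctness of the two assumed operations.
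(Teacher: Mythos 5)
Your proof reconstructs the standard argument from the cited work~\cite{C:DHRW16}: maintain an additive sharing of each wire value across the $n$ keys, handle addition gates componentwise via single-key additive homomorphism, and expand each multiplication gate into the $n^2$ bilinear cross terms $w_i w'_j$, each resolved by a two-key spooky multiplication and then re-aggregated additively under the appropriate keys. The paper invokes this lemma as a black box without giving its own proof, and your reconstruction is correct and matches the approach in the cited source, including the correct identification of the multi-hop/compactness requirement needed for gate-by-gate composition.
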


It follows that it suffices to construct a spooky encryption that supports a single multiplication over an arbitrary pair of keys. We do this by showing that the scheme from~\cite{TCC:BraHalPol17} supports two-key spooky multiplication. This follows from the fact that the decryption circuit is identical to that of \cite{C:CleMcG15,EC:MukWic16}, which was shown to support two-key spooky multiplication in~\cite{C:DHRW16}. For completness, we recall the modified algorithms in the following.

\begin{itemize}
    \item $\spooky.\enc(\pk, m)$: Same as $\mfhe.\enc$ but append an extra $\delta = 0$ to the resulting ciphertext.
    \item $\spooky.\dec(\sk, c)$: Let $\cL_c$ be the linear function defined by $c$, compute $$v = \cL_c(\sk) +\delta \mod{q}$$ and return $0$ if $|v|< q/4$ and $1$ otherwise.
    \item $ \spooky.\eval((\pk_1,\pk_2),C,(c_1, c_2))$: Compute $$\widehat{c} \gets \mfhe.\eval\left((\pk_1, \pk_2), \prod, (c_1, c_2)\right)$$ and let $\cL^{(1)}_{\widehat{c}}$ and $\cL^{(2)}_{\widehat{c}}$ be the linear functions defined by the resulting $\widehat{c}$. Sample a uniform $\delta$ from $\mathbb{Z}_q$ and return $(\cL^{(1)}_{\widehat{c}}, \delta)$ and $(\cL^{(2)}_{\widehat{c}}, -\delta)$.
\end{itemize}
As discussed above, the scheme is quantum semantically secure assuming the hardness of the LWE problem. Correctness follows, for the same choice of parameters of~\cite{TCC:BraHalPol17}, by an invocation of the following lemma.
\begin{lemma}[\cite{C:DHRW16}]
Fix a modulus $q \in \mathbb{Z}$, a bit $b\in \{0,1\}$ and a value $v \in \mathbb{Z}_q$ such that $v = q/2 \cdot b + e \mod{q}$, for some $|e| < q/4$. Sample $v_1$ and $v_2$ uniformly at random from $\mathbb{Z}_q$ constrained on the fact that $v_1 + v_2 = v \mod{q}$, and let $b_i = 0$ if $|v_i| < q/4$ and $b_i = 1$ otherwise. Then
$$\Pr[b_1 \oplus b_2 = b] > 1 - 2(|e|+1)/q$$
over the random choice of $v_1$ and $v_2$.
\end{lemma}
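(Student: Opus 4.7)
The plan is to view elements of $\mathbb{Z}_q$ in their centered representation $[-q/2, q/2)$ and analyze when the natural decoder $D: \mathbb{Z}_q \to \{0,1\}$, defined by $D(x) = 0$ if $|x| < q/4$ and $D(x) = 1$ otherwise, satisfies $D(v_1) \oplus D(v_2) = b$. Since $v_1$ is uniform in $\mathbb{Z}_q$ and $v_2 = v - v_1 \bmod q$ is determined by $v_1$, bounding the failure probability reduces to bounding the size of the set $B = \{v_1 \in \mathbb{Z}_q : D(v_1) \oplus D(v - v_1) \neq b\}$ and dividing by $q$.

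First I would set up the following canonical decomposition: for each $v_i \in \mathbb{Z}_q \setminus \{\pm q/4\}$, write $v_i = (q/2) \cdot b_i + e_i$ with $b_i = D(v_i) \in \{0,1\}$ and $|e_i| < q/4$, where $b_i$ and $e_i$ are uniquely determined. The boundary points $\pm q/4$ contribute at most $2$ values of $v_1$ and can be absorbed into the slack, which explains the "$+1$" appearing in the bound. Summing, $v_1 + v_2 \equiv (q/2)(b_1 + b_2) + (e_1 + e_2) \pmod q$, and this must also equal $v = (q/2) b + e \pmod q$ with $|e| < q/4$. Since $|e_1 + e_2| < q/2$, the two expressions for $v$ are related by a shift of at most $q$, and one can check that $b_1 \oplus b_2 = b$ holds iff $e_1 + e_2$ lies in the ``safe'' window $(-q/2, q/2)$ \emph{and} combines with $(q/2)(b_1 + b_2)$ without wrap-around inconsistency; the failure set $B$ is exactly the preimage under $v_1 \mapsto (v_1, v - v_1)$ of the configurations where wrap-around forces a parity flip.

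Next I would count $|B|$ directly by case analysis on $b$. For $b = 0$ (so $v = e$), the failures are $\{b_1, b_2\} = \{0, 1\}$. The bad set splits into two symmetric sub-cases: $v_1 \in I_0$, $v_2 \in I_1$, and $v_1 \in I_1$, $v_2 \in I_0$, where $I_0 = (-q/4, q/4)$ and $I_1 = \mathbb{Z}_q \setminus I_0$. Substituting $v_2 = e - v_1$, each sub-case reduces to an intersection of an interval of length $q/2$ with a shifted interval, whose size is exactly $|e|$ (one verifies this by writing out the endpoints and noting that the only ``non-trivial'' overlap arises from the shift $e$). Thus $|B| \leq 2|e| + 2$ (with the $+2$ absorbing the two boundary points). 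The case $b = 1$ is symmetric, with $v = q/2 + e$: one performs the same count after translating $v_1 \to v_1 - q/4$, and the same bound $2|e| + 2$ emerges.

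Finally, combining everything yields $\Pr[b_1 \oplus b_2 \neq b] = |B|/q \leq 2(|e| + 1)/q$, as desired. The main subtlety, and the step where I would be most careful, is the counting in the wrap-around regime: one must ensure that the interval intersections are computed in the circular metric on $\mathbb{Z}_q$ rather than the Euclidean one, and that the ``boundary'' points $\pm q/4$ and $v \pm q/4$ (where $D$ transitions) are correctly handled as they may coincide or be off by $1$ due to discreteness. Once this bookkeeping is done, the statement follows by a direct probability computation over the uniform choice of $v_1$.
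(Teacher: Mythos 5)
The paper imports this lemma from [DHRW16] without reproving it, so there is no in-paper proof to compare against; I evaluate your argument on its own merits. Your core strategy --- work in centered representatives, reduce to bounding the size of the bad set $B = \{v_1 \in \mathbb{Z}_q : D(v_1) \oplus D(v - v_1) \neq b\}$ over the uniformly random $v_1$, and count $B$ by intersecting intervals --- is sound and does produce a bound of the right order. In fact, for $q \equiv 0 \bmod 4$ the count is exactly $|B| = 2|e|$ (each of the two symmetric sub-cases contributes exactly $|e|$ points, with no boundary correction needed), which gives the claimed \emph{strict} inequality directly. Your looser $|B| \le 2|e| + 2$ only yields the non-strict $\Pr[b_1 \oplus b_2 = b] \ge 1 - 2(|e|+1)/q$, so for a fully matching statement you would want to carry the exact count through rather than padding with $+2$.

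Two intermediate steps should be repaired. First, the ``wrap-around'' discussion in the decomposition paragraph does not describe the real failure mode: since $|e_1 + e_2| < q/2$ always, the condition ``$e_1 + e_2$ lies in $(-q/2, q/2)$'' is vacuous and nothing ever wraps. The correct characterization is that $b_1 \oplus b_2 \neq b$ holds iff $|e_1 + e_2| \ge q/4$, or equivalently (using $(q/2)(b_1 + b_2) \equiv (q/2)(b_1 \oplus b_2) \bmod q$ and the constraint $v_1 + v_2 \equiv v$) iff $e_1 + e_2 \in \{e - q/2,\ e + q/2\}$ rather than $e_1 + e_2 = e$. Since you ultimately count $B$ directly via intervals anyway, this paragraph could simply be deleted. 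Second, the reduction of the $b = 1$ case to $b = 0$ via ``translating $v_1 \to v_1 - q/4$'' is not correct: the decoder $D$ does not commute with a $q/4$-shift, so the translated problem has a different decoder and is not an instance of the $b=0$ case. The clean reduction uses the identity $D(x + q/2) = 1 \oplus D(x)$: writing $v = q/2 + e$, one gets $D(v_1) \oplus D(v - v_1) = D(v_1) \oplus D(q/2 + (e - v_1)) = 1 \oplus D(v_1) \oplus D(e - v_1)$, which identifies the $b=1$ bad set with the $b=0$ bad set for the \emph{same} $e$ and reuses the $2|e|$ count verbatim. (Alternatively, simply redo the interval intersection from scratch for $b=1$; it comes out to $2|e|$ again.) With these two repairs the argument is complete and gives the lemma as stated.
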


\paragraph{Quantum AFS-Spooky Encryption with Distributed Setup.} Finally, we show how to combine a classical AFS-spooky encryption scheme $\spooky$ (with distributed setup) with a quantum multi-key fully-homomorphic encryption $\qmfhe$ (with distributed setup) to obtain a quantum AFS-spooky encryption scheme $\qspooky$. Since both of the building blocks have a distributed setup, then so does the resulting encryption scheme. The scheme is described below.
\begin{itemize}
    \item $\qspooky.\setup(1^\secp)$: Compute $\pp \gets \qmfhe.\setup(1^\secp)$ and $\tilde{\pp} \gets \spooky.\setup(1^\secp)$ and return $(\pp, \tilde{\pp})$.
    \item $\qspooky.\keygen(1^\secp,\pp)$: Sample $$(\pk', \sk') \gets \qmfhe.\keygen(1^\secp, \pp)\text{ and }(\tilde{\pk}, \tilde{\sk}) \gets \spooky.\keygen(1^\secp, \tilde{\pp})$$ and compute $\tilde{c} \gets \spooky.\enc(\tilde{\pk}, \sk')$. Return $\pk \coloneqq (\pk', \tilde{\pk}, \tilde{c})$ as the public key and $\sk \coloneqq (\sk', \tilde{\sk})$ as the secret key.  
    \item $\qspooky.\enc(\pk, m)$: Return $\qmfhe.\enc(\pk', m)$.
    \item $\qspooky.\qenc(\pk, \ket{\psi})$: Return $\qmfhe.\qenc(\pk', \ket{\psi})$.
    \item $ \qspooky.\eval((\pk_1,\dots, \pk_n),C,(\ket{\phi_1}, \dots, \ket{\phi_n}))$: Compute $$(\widehat{c},\ket{\widehat{\phi}}) \gets \qmfhe.\eval((\pk'_1,\dots,\pk'_n),C,(\ket{\phi_1},\dots,\ket{\phi_n}))$$ and let $(\tilde{c}_1, \dots, \tilde{c}_n)$ be the corresponding element of each public key. Compute $$(\widehat{c}_1, \dots, \widehat{c}_n)\gets \spooky.\eval((\tilde{\pk}_1,\dots, \tilde{\pk}_n),\qmfhe.\dec(\cdot, \widehat{c}),(\tilde{c}_1, \dots, \tilde{c}_n))$$ and return $(\widehat{c}_1, \dots, \widehat{c}_n, \ket{\phi})$.
    \item $\qspooky.\dec(\sk, c)$: Return $\spooky.\dec(\tilde{\sk}, c)$.
    \item $\qspooky.\qdec((\sk_1,\dots,\sk_n),\ket{\phi})$: Return $\qmfhe.\qdec((\sk'_1,\dots,\sk'_n),\ket{\phi})$.
\end{itemize}

The following theorem establishes our claim.

\begin{theorem}
\label{thm:qspooky}
Assuming that $\qmfhe$ is a quantum multi-key fully-homomorhic encryption scheme and that $\spooky$ is a classical AFS-spooky encryption, $\qspooky$ is a quantum AFS-spooky encryption scheme.\footnote{In fact, we also need the quantum spooky encryption scheme to be \emph{multi-hop}, which follows if the classical AFS-spooky scheme is multi-hop (which is satisfied by~\cite{C:DHRW16}).}
\end{theorem}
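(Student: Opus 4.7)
The plan is to verify, in turn, the three defining properties of a quantum AFS-spooky encryption scheme: quantum semantic security, compactness, and correctness of spooky evaluation. The construction is a clean two-layer composition, where the inner $\qmfhe$ layer produces a standard multi-key ciphertext $(\widehat{c}, \ket{\widehat{\phi}})$ and the outer classical $\spooky$ layer is used solely to ``split'' the classical part $\widehat{c}$ into $n$ additive shares decryptable individually by each party. Because of this clean layering, most of the work amounts to chaining together the guarantees of the two building blocks without introducing any circular dependency.

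For correctness, I would chain two invocations. First, by the classicality-preserving quantum homomorphism of $\qmfhe$, the pair $(\widehat{c}, \ket{\widehat{\phi}})$ returned by $\qmfhe.\eval$ satisfies that $\qmfhe.\dec((\sk'_1,\dots,\sk'_n), \widehat{c})$ gives the $m$ classical measured bits of $C(\ket{\phi_1},\dots,\ket{\phi_n})$ and $\qmfhe.\qdec((\sk'_1,\dots,\sk'_n), \ket{\widehat{\phi}})$ gives the remaining quantum output, jointly $\negl(\secp)$-close in trace distance to $C(\ket{\phi_1},\dots,\ket{\phi_n})$. Second, since $\tilde{c}_i = \spooky.\enc(\tilde{\pk}_i, \sk'_i)$ is a classical spooky encryption of the $i$'th $\qmfhe$ secret key, applying $\spooky.\eval$ with the polynomial-size classical circuit $\qmfhe.\dec(\cdot, \widehat{c})$ (with $\widehat{c}$ hardcoded) to $(\tilde{c}_1,\dots,\tilde{c}_n)$ yields, by the classical AFS-spooky correctness established earlier in the section, ciphertexts $(\widehat{c}_1,\dots,\widehat{c}_n)$ satisfying $\bigoplus_{i}\spooky.\dec(\tilde{\sk}_i,\widehat{c}_i) = \qmfhe.\dec((\sk'_1,\dots,\sk'_n),\widehat{c})$ up to negligible statistical error. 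A triangle inequality then gives the required closeness to $C(\ket{\phi_1},\dots,\ket{\phi_n})$.

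For quantum semantic security, I would use a two-step hybrid argument that exploits the fact that only $\sk'$ is encrypted under $\spooky$ (and not vice versa), so no new circular security assumption is needed beyond the one already baked into $\qmfhe$. Concretely, first replace $\tilde{c} \gets \spooky.\enc(\tilde{\pk},\sk')$ in every honest public key by $\spooky.\enc(\tilde{\pk}, 0)$, invoking quantum semantic security of $\spooky$; in this hybrid no information about $\sk'$ is leaked beyond $\pk'$, so one can then invoke quantum semantic security of $\qmfhe$ on the outer ciphertext $\qmfhe.\enc(\pk',m)$. Compactness is inherited directly: $|\widehat{c}|$ and $|\ket{\widehat{\phi}}|$ are $\poly(\secp, m, n)$ by compactness of $\qmfhe$, and each $|\widehat{c}_i|$ is $\poly(\secp, m)$ since $\qmfhe.\dec(\cdot,\widehat{c})$ has output length $m$, by compactness of $\spooky$. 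Preservation of distributed setup is syntactic: $\qspooky.\setup$ simply runs $\qmfhe.\setup$ and $\spooky.\setup$ in parallel and concatenates the results.

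The main technical delicacy I expect is the footnoted multi-hop requirement: to be usable inside our parallel extraction template the output ciphertexts $(\widehat{c}_1,\dots,\widehat{c}_n, \ket{\widehat{\phi}})$ must themselves be fed into further $\qspooky.\eval$ calls. For the quantum component this follows from multi-key compactness of $\qmfhe$, while for the classical components one inherits multi-hop-ness of the underlying classical $\spooky$ scheme from the construction based on~\cite{TCC:BraHalPol17} and the techniques of~\cite{C:DHRW16}. A second, more mundane point to check is that the $\qmfhe$ decryption circuit $\qmfhe.\dec(\cdot,\widehat{c})$ is indeed a polynomial-size classical circuit so that classical AFS-spooky evaluation applies; this is immediate from the nearly linear decryption property of~\cref{def:lindec}, which reduces decryption to evaluating a linear function mod $q$ followed by rounding.
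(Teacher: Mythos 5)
Your proposal is correct and follows the same route as the paper's (quite terse) proof: correctness by chaining the classicality-preserving quantum homomorphism of $\qmfhe$ with classical AFS-spooky correctness, and semantic security by a two-step hybrid that first switches $\tilde{c}$ to a $\spooky$ encryption of $0$ (cleanly, since $\tilde{\sk}$ is never encrypted under $\pk'$, so no new circularity is introduced) and then invokes $\qmfhe$ semantic security. One small inaccuracy worth flagging: to conclude that $\qmfhe.\dec(\cdot,\widehat{c})$ is a polynomial-size classical circuit you do not need the nearly linear decryption property of~\cref{def:lindec}, which is a feature of the specific key-switching instantiation of~\cref{thm:qmfhe} and is not among the hypotheses of~\cref{thm:qspooky}; the syntax of~\cref{def:qmfhe} already requires $\qmfhe.\dec$ to be a PPT classical algorithm on classical ciphertexts, which is exactly what the $\spooky.\eval$ call needs, so your conclusion stands unchanged.
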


\begin{proof}
Assuming quantum semantic security of $\spooky$, the changes in the key generation algorithm do not affect the security of the scheme. Then quantum semantic security follows from an invocation of the quantum semantic security of $\qmfhe$. Correctness of spooky evaluation follows from the classicality-preserving homomorphism of $\qmfhe$ and from the correctness of classical spooky evaluation of $\spooky$.
\end{proof}
%
\section{Quantum-Secure Multi-Committer Extractable Commitment}
\label{sec:pecom}

In this section, we follow the outline presented in~\cref{sec:over-pzk} to construct a commitment scheme that allows for simultaneous extraction from multiple parallel committers. The protocol is somewhat more involved than the high-level description given earlier, so we briefly highlight the differences. 

First, the committer is instructed to (non-interactively) commit to its message and trapdoor at the very beginning of the protocol. We use these commitments to take advantage of non-uniformity in the reductions betwen hybrids in the extractability proof. In particular, hybrids that come before the step where the simulator goes ``under the hood'' of the FHE may still need access to the trapdoor and commitment, and this can be given to any reduction via non-uniform advice consisting of each committer's first message and corresponding openings.

Next, the CDS described earlier is replaced with a function-hiding secure function evaluation (SFE) protocol. In order to rule out the malleability attack mentioned in~\cref{sec:over-pzk}, where a malicious receiver mauls the AFS-spooky encryption of the committer's trapdoor into an SFE encryption of the trapdoor, we do the following. The first message sent by the receiver to each committer $\eC_i$ will actually be a commitment to some key $k_i$ of a generic secret-key encryption scheme. After $\eC_i$ sends its AFS-spooky encryption ciphertext and compute and compare obfuscation, the receiver prepares and sends a secret-key encryption of an arbitrary message. Then, the receiver's input to the SFE consists of the opening to its earlier commitment $k_i$, and the SFE checks if the secret-key encryption sent by the receiver is actually an encryption of the committer's trapdoor under secret key $k_i$. If so, it returns the lock and otherwise it returns $\bot$. This setup ensures that a malicious receiver cannot maul the AFS-spooky encryption of the committer's trapdoor, for the following reason. If it could, then a non-uniform reduction to the semantic security of AFS-spooky encryption may obtain the receiver's committed $k_i$ as advice and decrypt the receiver's secret-key encryption to obtain the trapdoor. Of course, this assumes the receiver actually acted explainably in sending a valid commitment at the beginning of the protocol, and this is ensured by the opening check performed under the SFE. We note that this mechanism is somewhat different than what was presented in~\cite{BS20}, as they directly build a zero-knowledge argument (i.e. without first constructing a stand-alone extractable commitment) and are able to take advantage of witness indistinguishability to enforce explainable behavior.

\paragraph{Compliant Distinguishers.} Finally, we discuss the issue of \emph{committer} explainability. Recall from the high-level overview that a simulator is able to extract from a committer by homomorphically evaluating its code on an AFS-spooky encryption ciphertext \emph{generated by the committer}. Thus, if the committer acts arbitrarily maliciously and does not return a well-formed ciphertext, the extraction may completely fail. Again,~\cite{BS20} address this issue by only analyzing their commitment within the context of a larger zero-knowledge argument protocol, and having the verifier prove to the prover using a witness indistinguishable proof that it performed the commitment explainably. 

Thus, without adding zero-knowledge and performing~\cite{JC:GolKah96}-style analysis to handle non-explainable and aborting committers, we will only obtain extractability against explainable committers. However, since we will be using this protocol inside larger protocols where participants are not assumed to be acting explainably, restricting the class of committers we consider in our definition is problematic. We instead consider arbitrary committers but restrict the class of \emph{distinguishers} (who are supposed to decide whether they received the view of a committer interacting in the real protocol or the view of a committer interacting with the extractor) to those that always output 0 on input a non-explainable transcript. In other words, any advantage these distinguishers may have must be coming from their behavior on input explainable views. Even though checking whether a particular view is explainable or not is not efficient, it turns out that this definition lends itself quite nicely to composition, since one can use witness indistinguishability/zero-knowledge to construct provably compliant distinguishers between hybrids for the larger protocols.

For completeness, and because post-quantum multi-committer extractable commitments may be of independent interest, we also show in~\cref{sec:fullecom} how to add zero-knowledge within the extractable commitment protocol itself to obtain security against arbitrary committers.


\subsection{Definition}
\begin{definition}[Quantum-Secure Multi-Committer Extractable Commitment] \label{defn:qspec}
A quantum-secure multi-committer extractable commitment scheme is a pair $(\eC,\eR)$ of classical PPT interactive Turing machines. In the commit phase, $\eR$ interacts with $n$ copies $\{\eC_i\}_{i \in [n]}$ of $\eC$ (who do not interact with each other) on common input $1^\secp$ and $1^n$, with each $\eC_i$ additionally taking a private input $m_i \in \{0,1\}^*$. This produces a transcript $\tau$, which may be parsed as a set of $n$ transcripts $\{\tau_i\}_{i \in [n]}$, one for each set of messages exchanged between $\eR$ and $\eC_i$. In the decommitment phase, each $\eC_i$ outputs $m_i$ along with its random coins $r_i$, and $\eR$ on input $(1^\secp,\tau_i,m_i,r_i)$ either accepts or rejects. The scheme should satisfy the following properties.
\begin{itemize}
    \item {\bf Perfect Correctness: } For any $\secp,n \in \mathbb{N}, i \in [n]$, $$\Pr[\eR(1^\secp,\tau_i,m_i,r_i) = 1 \ | \ \{\tau_i\}_{i \in [n]} \gets \dist{\eR,\eC_1(m_1;r_1),\dots,\eC_n(m_n;r_n)}(1^\secp,1^n)] = 1.$$
    \item {\bf Perfect Binding: } For any $\secp \in \mathbb{N}$ and string $\tau \in \{0,1\}^*$, there does not exist $(m,r)$ and $(m',r')$ with $m \neq m'$ such that $\eR(1^\secp,\tau,m,r) = \eR(1^\secp,\tau,m',r')=1$.
    \item {\bf Quantum Computational Hiding: } For any non-uniform quantum polynomial-size receiver $\eR^* = \{\eR^*_\secp,\rho_\secp\}_{\secp \in \mathbb{N}}$, any polynomial $\ell(\cdot)$, and any sequence of sets of strings $\{m^{(0)}_{\secp,1},\dots,m^{(0)}_{\secp,n}\}_{\secp,n \in \mathbb{N}}$, $\{m^{(1)}_{\secp,1},\dots,m^{(1)}_{\secp,n}\}_{\secp,n \in \mathbb{N}}$ where each $|m^{(b)}_{\secp,i}| = \ell(\secp)$,
    \begin{align*}
        &\{\view_{\eR^*_\secp}(\dist{\eR^*_\secp(\rho_\secp),\eC_1(m^{(0)}_{\secp,1}),\dots,\eC_{n}(m^{(0)}_{\secp,n})}(1^\secp,1^n))\}_{\secp,n \in \mathbb{N}} \\ \approx_c  &\{\view_{\eR^*_\secp}(\dist{\eR^*_\secp(\rho_\secp),\eC_1(m^{(1)}_{\secp,1}),\dots,\eC_{n}(m^{(1)}_{\secp,n})}(1^\secp,1^n))\}_{\secp,n \in \mathbb{N}}.
    \end{align*}
    \end{itemize}
   The extractability property will require the following two definitions. First, for any adversary $\eC^* = \{\eC^*_\secp,\rho_\secp\}_{\secp \in \mathbb{N}}$ representing a subset $I \subseteq [n]$ of $n$ committers, any honest party messages $\{m_i\}_{i \notin I}$, and any security parameter $\secp \in \mathbb{N}$, define $\mathsf{VIEW}^{\mathsf{msg}}_{\eC^*_{\secp}}(\dist{\eR,\eC^*_\secp(\rho_\secp),\{\eC_i(m_i)\}_{i \notin I}}(1^\secp,1^n))$ to consist of the following.
    \begin{enumerate}
        \item The view of $\eC^*_\secp$ on interaction with the honest receiver $\eR$ and set $\{\eC_i(m_i)\}_{i \notin I}$ of honest parties; this view includes a set of transcripts $\{\tau_i\}_{i \in I}$ and a state $\state$.
        \item A set of strings $\{m_i\}_{i \in I}$, where each $m_i$ is defined relative to $\tau_i$ as follows. If there exists $m'_i,r_i$ such that $\eR(1^\secp,\tau_i,m'_i,r_i) = 1$, then $m_i = m'_i$, otherwise, $m_i = \bot$.
    \end{enumerate}
    Next, we consider distinguishers $\D = \{\D_\secp,\sigma_\secp\}_{\secp \in \bN}$ that take as input a sample $(\{\tau_i\}_{i \in I},\state,\{m_i\}_{i \in I})$ from the distribution just described. We say that $\D$ is \emph{compliant} if whenever $\{\tau_i\}_{i \in I}$ is not an explainable transcript with respect to the set $I$, $\D$ outputs 0 with overwhelming probability (over the randomness of $\D$).
    \begin{itemize}
    \item {\bf Multi-Committer Extractability:} There exists a quantum expected-polynomial-time extractor $\cE$ such that for any \emph{compliant} non-uniform polynomial-size quantum distinguisher $\D = \{\D_\secp,\sigma_\secp\}_{\secp \in \bN}$, there exists a negligible function $\mu(\cdot)$, such that for all adversaries $\eC^* = \{\eC^*_\secp,\rho_\secp\}_{\secp \in \mathbb{N}}$ representing a subset of $n$ committers, namely, $\{\eC_i\}_{i \in I}$ for some set $I \subseteq [n]$, the following holds for all polynomial-size sequences of inputs $\{\{m_{i,\secp}\}_{i \notin I}\}_{\secp \in \mathbb{N}}$ and $\secp \in \bN$.
    \begin{align*}&\big|\Pr[\D_\secp(\mathsf{VIEW}^{\mathsf{msg}}_{\eC^*_\secp}(\dist{\eR,\eC^*_\secp(\rho_\secp),\{\eC_i(m_{i,\secp})\}_{i \notin I}}(1^\secp,1^n)),\sigma_\secp) = 1]\\ &- \Pr[\D_\secp( \cE(1^\secp,1^n,I,\eC^*_\secp,\rho_\secp),\sigma_\secp)=1]\big| \leq \mu(\secp).
    \end{align*}

\end{itemize}
\end{definition}

\begin{remark}
Observe that the above definition of quantum computational hiding does not consider potentially malicious committers that interact in the protocol to try to gain information about commitments made by other committers. This is without loss of generality, since all communication occurs between $\eR$ and some $\eC_i$. In particular, no messages are sent between any $\eC_i$ and $\eC_j$. 
\end{remark}

\subsection{Construction}

\paragraph{Ingredients:} All of the following are assumed to be quantum-secure.
\begin{itemize}
    \item A non-interactive perfectly-binding commitment $\COM$.
    \item A secret-key encryption scheme $(\enc,\dec)$.\footnote{We use the syntax that for key $k$, a ciphertext of message $m$ is computed as $\ct \gets \enc(k,m)$ and decrypted as $m \coloneqq \dec(k,\ct)$.}
    \item A compute-and-compare obfuscator $\obf$.
    \item A quantum AFS-spooky encryption scheme with distributed setup ($\spooky.\setup$,$\spooky.\keygen$, $\spooky.\enc$, $\spooky.\qenc$, $\spooky.\eval$, $\spooky.\dec$, $\spooky.\qdec$).
    \item A two-message function-hiding secure function evaluation scheme $(\sfegen, \sfeenc, \sfeeval, \sfedec)$.
\end{itemize}


\protocol
{\proref{fig:ext_com}}
{A constant-round quantum-secure multi-committer extractable commitment.}
{fig:ext_com}
{
\begin{description}
    \item[Common input:] $1^\secp,1^n$. 
    \item[$\eC_i$'s additional input:] 
    A string $m_i$.
\end{description}
\begin{enumerate}
    \item Each $\eC_i$ computes $\td_i \gets U_\secp$ and sends $\cm_i^{(\msg)} \gets \COM(1^\secp,m_i)$, $\cm_i^{(\td)} \gets \COM(1^\secp,\td_i)$ to $\eR$. 
    \item For each $i \in [n]$, $\eR$ computes $k_i,r_i \gets U_\secp$ and sends $\cm_i^{(\key)} \coloneqq \COM(1^\secp,k_i;r_i)$ to $\eC_i$.
    \item Each $\eC_i$ computes and sends $\pp_i \gets \spooky.\setup(1^\secp)$ to $\eR$.
    \item $\eR$ defines $\pp \coloneqq \{\pp_i\}_{i \in [n]}$, and sends $\pp$ to each $\eC_i$. Each $\eC_i$ checks that the $\pp_i$ it received matches the $\pp_i$ it sent in Step 3, and if not, it aborts.
    \item Each $\eC_i$ computes 
    \begin{itemize}
        \item $\lock_i \gets U_\secp$,
        \item $(\pk_i,\sk_i) \gets \spooky.\keygen(1^\secp,\pp)$,
        \item $\ct_i \gets \spooky.\enc(\pk_i,\td_i)$,
        \item and $\obfC_i \gets \obf\left(\CC{\spooky.\dec(\sk_i, \cdot)}{\lock_i}{(\sk_i,m_i)}\right)$,
    \end{itemize}
    and sends $(\pk_i,\ct_i,\obfC_i)$ to $\eR$.
    
    \item For each $i \in [n]$, $\eR$ computes $\ct_i^{(\td)} \gets \enc(k_i,0^\secp)$, $\dk_i \gets \sfegen(1^\lambda)$, and $\ct_i^{(\sfe)} \gets \sfeenc(\dk_i, (k_i,r_i))$ and sends $(\ct_i^{(\td)},\ct_i^{(\sfe)})$ to $\eC_i$.

    \item Define the circuit $\mathsf{C}[\cm_i^{(\key)},\ct_i^{(\td)},\td_i,\lock_i](\cdot)$ to take as input $(k_i,r_i)$, check if $\cm_i^{(\key)}$ opens to $k_i$ with opening $r_i$ and if $\td_i = \dec(k_i,\cm_i^{(\td)})$, and if so output $\lock_i$, and otherwise output $\bot$. Each $\eC_i$ computes and sends $\widehat{\ct}_i^{(\sfe)} \gets \sfe.\eval(\mathsf{C}[\cm_i^{(\key)},\ct_i^{(\td)},\td_i,\lock_i],\ct_i^{(\sfe)})$.
\end{enumerate}
}

\subsection{Hiding}

    
Perfect correctness and perfect binding are immediate, so we move to quantum computational hiding. 

\begin{lemma}
\proref{fig:ext_com} is quantum computational hiding.
\end{lemma}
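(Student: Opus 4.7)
The plan is to prove hiding via a sequence of global hybrids that progressively scrub $m_i$, $\td_i$, $\lock_i$, and $\sk_i$ from the receiver's view, swap $\cm_i^{(\msg)}$ to commit to $m_i^{(1)}$, and then reverse the scrubbing. Because the receiver interacts with each committer on independent channels and the messages from different $\eC_i$ are generated independently, every global transition will be broken into $n$ sequential sub-hybrids that modify one committer at a time, so a single reduction to a single-instance assumption suffices. The main subtlety is that $\td_i$ appears in three places ($\cm_i^{(\td)}$, $\ct_i$, and the circuit $\mathsf{C}[\cm_i^{(\key)},\ct_i^{(\td)},\td_i,\lock_i]$ hardwired into $\widehat{\ct}_i^{(\sfe)}$) and $\lock_i$ appears in two ($\obfC_i$ and $\widehat{\ct}_i^{(\sfe)}$), so changes must be sequenced carefully.

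The forward half of the hybrid sequence is:
\begin{itemize}
    \item $H_1$: For each $i$, switch $\ct_i$ to $\spooky.\enc(\pk_i,0^\secp)$. Quantum semantic security of $\spooky$ in the distributed-setup setting (\cref{thm:qspooky} plus the stronger notion from \cref{def:distributed-setup}) applies; the reduction samples $\td_i$ itself and uses it in $\cm_i^{(\td)}$ and inside the SFE circuit, while forwarding the challenge public key and ciphertext.
    \item $H_2$: For each $i$, switch $\cm_i^{(\td)}$ to $\COM(0^\secp)$, using quantum hiding of $\COM$; the reduction again picks $\td_i$ (for use inside $\mathsf{C}$) and forwards the two plaintexts $\td_i$ and $0^\secp$ to the $\COM$ challenger.
    \item $H_3$: For each $i$, replace $\widehat{\ct}_i^{(\sfe)}$ by $\sfeSim(\bot)$. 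This is a purely statistical step: statistical circuit privacy (\cref{def:sfe}) lets us replace $\sfeeval(\mathsf{C}[\cdots,\td_i,\lock_i],\ct_i^{(\sfe)})$ with $\sfeSim\bigl(\mathsf{C}[\cdots,\td_i,\lock_i](\mathsf{Ext}(\ct_i^{(\sfe)}))\bigr)$; after $H_1$--$H_2$, $\td_i$ is information-theoretically uniform conditioned on the receiver's view, so for any extracted input $(k_i^*,r_i^*)$ the circuit outputs $\lock_i$ with probability at most $2^{-\secp}$ and otherwise $\bot$.
    \item $H_4$: For each $i$, replace $\obfC_i$ by the CC simulator output $\ccSim(1^{|f|},1^{|z|},1^\secp)$. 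After $H_3$, $\lock_i$ appears nowhere else in the view, so it retains full entropy conditioned on everything else, and CC simulation gives quantum-computational indistinguishability.
    \item $H_5$: For each $i$, replace $\cm_i^{(\msg)}$ by $\COM(m_i^{(1)})$, using quantum hiding of $\COM$; this is sound because after $H_4$ the string $m_i$ appears only inside $\cm_i^{(\msg)}$.
\end{itemize}
Hybrids $H_5$--$H_1$ are then undone in reverse order (now with $m_i^{(1)}$ in place of $m_i^{(0)}$), arriving at the honest execution on $\{m_{\secp,i}^{(1)}\}$.

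The main obstacle is $H_3$: $\td_i$ is still used inside the evaluated SFE circuit, and a computational reduction cannot efficiently invoke the unbounded circuit-privacy simulator $\sfeSim$. I handle this by making $H_3$ a purely statistical step; this is possible only after $H_1$ and $H_2$ have removed $\td_i$ from the rest of the view, which is why $H_1$ and $H_2$ must come first even though they look like they ought to commute with $H_4$. A second subtle point is that the reductions in $H_1$ and $H_2$ must still evaluate the honest SFE, which depends on $\td_i$; this is fine because the reductions choose $\td_i$ themselves as the challenge plaintexts sent to the $\spooky$ or $\COM$ oracle. Finally, because the receiver is allowed to be arbitrarily malicious, I rely on the fact that the inefficient extractor $\mathsf{Ext}$ from \cref{def:sfe} is guaranteed to succeed on any (even malformed) ciphertext $\ct_i^{(\sfe)}$, so no explainability assumption on the receiver is needed in $H_3$.
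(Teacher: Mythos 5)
Your $H_1$ step — switching $\ct_i$ from $\spooky.\enc(\pk_i,\td_i)$ to $\spooky.\enc(\pk_i,0^\secp)$ via semantic security of $\spooky$ — does not go through, because the reduction you describe cannot produce $\obfC_i = \obf\bigl(\CC{\spooky.\dec(\sk_i, \cdot)}{\lock_i}{(\sk_i, m_i)}\bigr)$: that program hard-codes $\sk_i$, and a semantic-security reduction never learns $\sk_i$. You might hope to have the reduction hand the receiver a simulated $\obfC_i$ and bridge back via CC simulation, but CC simulation needs $\lock_i$ to be fresh given the rest of the receiver's view, and at this point $\lock_i$ also appears inside $\widehat{\ct}_i^{(\sfe)}$. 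This is a genuine circular dependency: to simulate $\obfC_i$ you need $\lock_i$ fresh, which needs the SFE evaluation to hide $\lock_i$, which needs $\td_i$ hidden from the receiver, which needs $\ct_i$ to stop encrypting $\td_i$, which is exactly the $\spooky$-semantic-security reduction that cannot produce $\obfC_i$. Without $H_1$, your claim that $H_3$ is purely statistical also collapses, since $\td_i$ is still embedded in $\ct_i$ at that point.

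The paper avoids the circle by never changing $\ct_i$ at all. After zeroing the Step-1 commitments, it argues the circuit $\mathsf{C}[\cm_i^{(\key)},\ct_i^{(\td)},\td_i,\lock_i]$ is \emph{with overwhelming probability} functionally equal to the all-$\bot$ circuit $\mathsf{C}_\bot$, so that circuit privacy alone makes the SFE step statistically close. The overwhelming-probability claim is itself proved computationally: a receiver that makes the circuit non-$\bot$-equivalent with noticeable probability must be producing a valid secret-key encryption of $\td_i$, yielding a $\spooky$-semantic-security breaker. Crucially, that sub-reduction only runs the receiver through Step 6, at which point $\lock_i$ occurs \emph{only} inside $\obfC_i$ (the Step-7 message hasn't been computed yet), so the sub-reduction can hand the receiver a simulated $\obfC_i$ and justify this by CC simulation — $\lock_i$ really is fresh in the truncated view; it then decrypts $\ct_i^{(\td)}$ using the non-uniformly supplied key $k_i$ extracted from $\cm_i^{(\key)}$. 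Your ordering has no analogous escape hatch because the $H_0 \to H_1$ reduction must output the receiver's full view through Step 7, where $\lock_i$ appears in two places. You should drop $H_1$ and fold the $\spooky$-semantic-security reduction inside the SFE transition, exactly as the paper does.
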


\begin{proof}

Fix any non-uniform quantum polynomial-size receiver $\eR^* = \{\eR^*_\secp,\rho_\secp\}_{\secp \in \mathbb{N}}$, a polynomial $\ell(\cdot)$, and two sequences of sets $\{m^{(0)}_{\secp,1},\dots,m^{(0)}_{\secp,n}\}_{\secp,n \in \mathbb{N}}$, $\{m^{(1)}_{\secp,1},\dots,m^{(1)}_{\secp,n}\}_{\secp,n \in \mathbb{N}}$ where each $|m^{(b)}_{\secp,i}| = \ell(\secp)$. Consider the following sequence of hybrids for each $i \in [n]$, where each alters the view of $\eR^*$ in its interaction with $\eC_i$. The lemma follows immediately once we show that for all $i \in [n]$, $\hyb_{i,0} \approx_c \hyb_{i,6}$.

\begin{itemize}
    \item $\hyb_{i,0}$: $\{\view_{\cR^*_\secp}(\dist{\eR^*_\secp,\eC_i(m^{(0)}_{\secp,i})})(1^\secp,1^n)\}_{\secp \in \mathbb{N}}$.
    \item $\hyb_{i,1}$: Same as $\hyb_{i,0}$ except that in Step 1, $\cm^{(\msg)}_i$ and $\cm^{(\td)}_i$ are commitments to $0$.
    \item $\hyb_{i,2}$: Same as $\hyb_{i,1}$ except that in Step 7, $\eC_i$ computes $\sfe.\eval$ on the circuit $\mathsf{C}_\bot$ that always outputs $\bot$. 
    \item $\hyb_{i,3}$: Same as $\hyb_{i,2}$ except that in Step 5, the compute-and-compare obfuscation is simulated: $\obfC\gets \mathsf{Sim}^{\mathsf{CC}}(1^{|\spooky.\dec(\sk_i, \cdot)|},1^{|\sk_i| + |\ell(\secp)|},1^\secp)$.
    \item $\hyb_{i,4}$: Same as $\hyb_{i,3}$ except that in Step 5, the compute-and-compare obfuscation is performed honestly with respect to message $m^{(1)}_{\secp,i}$.
    \item $\hyb_{i,5}$: Same as $\hyb_{i,4}$ except that in Step 7, the $\sfe.\eval$ is performed honestly.
    \item $\hyb_{i,6}$: Same as $\hyb_{i,5}$ except that in Step 1, $\cm_i^{(\msg)}$ is a commitment to $m^{(1)}_{\secp,i}$ and $\cm_i^{(\td)}$ is a commitment to $\td_i$. Note that is this exactly $\{\view_{\cR^*_\secp}(\dist{\eR^*_\secp,\eC_i(m^{(1)}_{\secp,i})})(1^\secp,1^n)\}_{\secp \in \mathbb{N}}$.
\end{itemize}

Now we argue indistinguishability between each hybrid.

\begin{itemize}
    \item $\hyb_{i,0} \approx_c \hyb_{i,1}$: This follows directly from the quantum computational hiding of $\COM$.
    \item $\hyb_{i,1} \approx_s \hyb_{i,2}$: We consider two cases. First, conditioned on $\eC_i$ aborting in Step 4, the hybrids are trivially indistinguishable. Next, conditioned on $\eC_i$ not aborting in Step 4, we show below that with overwhelming probability (over the randomness of $\eC_i$ and $\eR$), the circuit $\mathsf{C}[\cm_i^{(\key)},\ct_i^{(\td)},\td_i,\lock_i]$ is functionally equivalent to $\mathsf{C}_\bot$. Given this, the indistinguishability of hybrids $\hyb_{i,1}$ and $\hyb_{i,2}$ follows directly from the circuit privacy of $\sfe$.
    
    Assuming that the circuits are not functionally equivalent with noticeable probability, we construct a non-uniform $\A = \{\A_\secp,\rho^{\A}_\secp\}_{\secp \in \mathbb{N}}$ that breaks the distributed-setup quantum semantic security of $\spooky$ (see~\cref{def:distributed-setup}). In the security game, $\A$ interacts with a challenger to generate $\pp \coloneqq \{\pp_i\}_{i \in [n]}$ for $n$ parties. Then, the challenger draws a public key $\pk_i$ based on $\pp$, a random $\td_i \gets U_\secp$, and outputs an encryption $\ct_i$ of $\td_i$ under $\pk_i$. $\A$ wins if it returns $\td_i$, which would clearly break semantic security.
    
    Now, we describe the distribution $\rho^{\A}_\secp$ that $\A$ receives as non-uniform advice (this distribution will ultimately be fixed to the advice state that gives $\A$ the best advantage). It will be generated as follows. 
    \begin{enumerate}
        \item Run $\eR^*_\secp$ on $\rho_\secp$, and feed to $\eR^*_\secp$ the first messages $\{\cm_i^{(\msg)},\cm_i^{(\td)}\}_{i \in [n]}$ it expects from $\{\eC_i\}_{i \in [n]}$ (which are commitments to 0).
        \item Continue running $\eR^*_\secp$ until it outputs its set of messages $\{\cm_i^{(\key)}\}_{i \in [n]}$. 
        \item Output the inner state of $\eR^*_\secp$, the messages exchanged so far, and the following. For $\cm_i^{(\key)}$, check (inefficiently) if it is a commitment to some $k_i$ and if so, output $k_i$.
    \end{enumerate}
    
    Finally, we describe $\A$. $\A_\secp$ receives from its challenger the $i$'th public parameters $\pp_i$. It then runs $\eR^*_\secp$ on the state it received as advice and $\pp_i$. $\eR^*_\secp$ returns $\pp = \{\pp_j\}_{j \in [n]}$, where by assumption $\pp$ includes the same $\pp_i$ that it took as input. $\A_\secp$ then forwards $\{\pp_j\}_{j \in [n] \setminus \{i\}}$ to its challenger, who returns with a public key $\pk_i$ and a ciphertext $\ct_i$. At this point, $\A_\secp$ generates $\{(\pk_j,\ct_j,\obfC_j)\}_{j \in [n] \setminus \{i\}}$ honestly and for party $i$, fixes $(\pk_i,\ct_i)$ along with $\obfC_i \gets \mathsf{Sim}^{\mathsf{CC}}(1^{|\spooky.\dec(\sk_i, \cdot)|},1^{|\sk_i| + |\ell(\secp)|},1^\secp)$. It then continues to run $\eR^*_\secp$ on input all of these tuples.
    
    
    When $\eR^*_\secp$ returns $\ct_i^{(\td)}$, $\A_\secp$ checks if it received some $k_i$ as part of its non-uniform advice, and if so, it decrypts $\ct_i^{(\td)}$ using key $k_i$ to recover a message $\td_i$. It returns $\td_i$ to the challenger, who then determines if $\A_\secp$ succeeded. 
    
    Note that, by the simulation security of compute-and-compare obfuscation, the probability that $\A_\secp$ succeeds in this game is negligibly close to the probability it succeeds if it gave $\eR^*_\secp$ an honest compute-and-compare obfuscation $\obfC_i$. This follows because the lock value $\lock_i$ is completely independent of $\eR^*_\secp$'s view through Step 6. Finally, the probability that $\A_\secp$ succeeds in returning $\td$ is at least the probability that $\cm_i^{(\key)}$ is a well-formed commitment to $k_i$ and $\ct_i^{(\td)}$ is an encryption of $\td$ under key $k_i$, which is exactly the probability that the circuits described above are \emph{not} functionally equivalent. Thus $\A_\secp$ has non-negligible advantage in this game, a contradiction.
    \item $\hyb_{i,2} \approx_c \hyb_{i,3}$: This follows directly from the simulation security of compute-and-compare obfuscation, since at this point, the lock value $\lock_i$ is independent of the rest of the distribution. 
    \item $\hyb_{i,3} \approx_c \hyb_{i,4}$: Same argument as $\hyb_{i,2} \approx_c \hyb_{i,3}$.
    \item $\hyb_{i,4} \approx_c \hyb_{i,5}$: Same argument as $\hyb_{i,1} \approx_c \hyb_{i,2}$.
    \item $\hyb_{i,5} \approx_c \hyb_{i,6}$: Same argument as $\hyb_{i,0} \approx_c \hyb_{i,1}$.
\end{itemize}

\end{proof}

\subsection{Extractability}

\begin{lemma}
\proref{fig:ext_com} is multi-committer extractable.
\end{lemma}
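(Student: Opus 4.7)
The plan is to construct a straight-line extractor $\cE$ that, given the adversary $\eC^*_\secp$ with advice $\rho_\secp$, runs the honest-receiver code through Steps 2, 4, and 6, then instead of producing real receiver messages in Step 6, goes ``under the hood'' of the AFS-spooky encryptions $(\ct_1,\dots,\ct_n)$ to peel out the committed messages one key at a time, exactly as sketched in \cref{sec:over-pzk}. Concretely, $\cE$ first runs the adversary through Step 5 to collect the tuples $\{(\pk_i,\ct_i,\obfC_i)\}_{i\in I}$, sampling honest $\{k_i,r_i\}_{i\in I}$ to produce the receiver's commitments $\cm_i^{(\key)}$. Then, rather than sending $(\ct_i^{(\td)},\ct_i^{(\sfe)})$, $\cE$ treats the adversary's next-message circuit (the $\sfe.\eval$ computation of Step 7 followed by whatever internal processing the adversary performs on its residual state) as a quantum circuit, and homomorphically evaluates it under the tuple of keys $(\pk_1,\dots,\pk_n)$ on an AFS-spooky encryption of a correctly distributed receiver message together with the adversary's register. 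This yields AFS-spooky ciphertexts encrypting $(\lock_1,\dots,\lock_n)$ together with an encrypted residual adversary state. Applying the recursive peeling described in \cref{sec:over-pzk}, using the AFS structure to reshape $\ct_i$ into a single-key ciphertext under $\pk_i$ and then homomorphically invoking $\obfC_i$, $\cE$ recovers $(\sk_1,m_1),\dots,(\sk_n,m_n)$ in sequence. Finally, with all secret keys in hand, $\cE$ fully decrypts the residual state and outputs a transcript-view pair together with the list $\{m_i\}_{i\in I}$.

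The correctness of extraction will be shown via a hybrid argument, which I plan to run per index $i\in I$ (or, if cleaner, as a global sequence). The endpoints are $\hyb_0 = \view^{\mathsf{msg}}$ as produced by the real interaction, and $\hyb_{\mathrm{end}} = \cE$'s output. Intermediate hybrids replace (i) the real Step 6 receiver messages by dummy ones produced inside the AFS-spooky evaluation, invoking quantum semantic security of the AFS-spooky scheme (with non-uniform advice consisting of openings of $\{\cm_i^{(\key)}\}$, exactly as was done in the hiding proof to rule out malleability); (ii) the intermediate plaintext uses of $(k_i,r_i)$ by the SFE's statistical circuit privacy, ensuring the homomorphic trace matches what the adversary would have computed in a real interaction; and (iii) the compute-and-compare outputs by $\mathsf{Sim}^{\mathsf{CC}}$ outputs when their lock values are unreachable, using the simulation security of $\obf$. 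For the ``peel'' step itself, correctness of spooky evaluation ensures that applying $\obfC_n$ homomorphically on the isolated single-key ciphertext yields the encryption of $(\sk_n,m_n)$ whenever the adversary's ciphertext $\ct_n$ is a valid AFS-spooky encryption of $\td_n$ under $\pk_n$.

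The main obstacle, I anticipate, is handling adversarial committers whose Step 5 (or Step 3) messages are not well-formed. In that case the AFS-spooky evaluation is not guaranteed to produce a ciphertext encrypting the correct $\lock_i$, so the peel step may return garbage rather than $(\sk_i,m_i)$, and the list of $m_i$'s output by $\cE$ may diverge from the list defined by the real transcript $\tau_i$. This is precisely where the \emph{compliance} restriction on $\D$ is crucial: if any $\tau_i$ is not explainable with respect to $I$, then by definition $\D$ outputs $0$ with overwhelming probability on both distributions, so the two probabilities in the extractability inequality agree up to a negligible term regardless of what $\cE$ produces. Thus we only need indistinguishability conditioned on the transcript being explainable, in which case every $\eC^*_i$'s message can be traced to some honest random coins producing a well-formed $(\pk_i,\ct_i,\obfC_i)$. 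Under this event, perfect/statistical correctness of spooky evaluation and $\obf$ guarantee that peeling recovers exactly the $(\sk_i,m_i)$ that a valid opening $(m_i,r_i)$ for $\tau_i$ must match, by perfect binding of $\COM$ on $\cm_i^{(\msg)}$.

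One auxiliary subtlety I will have to address carefully is that the peel procedure avoids cloning the adversary's register: the recursion strictly shrinks the set of active keys, with each iteration homomorphically updating (not copying) the ciphertexts under the remaining keys. This is exactly the no-cloning property emphasized in \cref{sec:over-pzk}, and it lets us invoke a sequence of semantic-security / circuit-privacy / obfuscation hybrids that each apply straight-line to the adversary's single state. I expect the final write-up to follow the same structure as the hybrids in the hiding proof, with an extra stage of hybrids formalizing the recursive peel; once that bookkeeping is in place, indistinguishability of $\cE$'s output from the real $\mathsf{VIEW}^{\mathsf{msg}}$ distribution follows by a standard composition of negligible distances, yielding the claimed bound $\mu(\secp)$.
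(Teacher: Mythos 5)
Your high-level description of the extractor (homomorphic evaluation of the adversary's code under the tuple of keys, followed by the recursive AFS-spooky peel) and your observation that the compliance restriction is exactly what lets us restrict attention to explainable transcripts both match the paper. However, the hybrid argument you sketch does not, and the primitives you invoke are the wrong ones — they are the primitives from the \emph{hiding} proof of this same commitment, not the ones the extractability proof actually needs. Concretely, you claim the hybrids are driven by (i) AFS-spooky semantic security with non-uniform advice containing openings of $\cm_i^{(\key)}$, (ii) SFE statistical circuit privacy, and (iii) simulation security of $\obf$ via $\mathsf{Sim}^{\mathsf{CC}}$. None of these is used in the paper's extractability argument. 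The actual hybrid chain (per $i \in I$) uses SFE \emph{input} privacy (not circuit privacy), quantum computational hiding of $\COM$ for the receiver's commitment $\cm_i^{(\key)}$, and the semantic security of the symmetric cipher $\enc$; and the non-uniform advice it needs is $\{m_i, \td_i\}_{i\in I}$ inefficiently extracted from the \emph{committer's} first-message commitments $\cm_i^{(\msg)}, \cm_i^{(\td)}$, not openings of the receiver's $\cm_i^{(\key)}$.

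The step you are missing is the central maneuver of the proof: before comparing to $\cE$, the intermediate hybrids rewrite the honest receiver's Step-6 message so that $\ct_i^{(\td)}$ actually encrypts the true trapdoor $\td_i$ (i.e.\ it becomes $\enc(k_i,\td_i)$ rather than $\enc(k_i,0^\secp)$). Getting there requires the two-step detour through $\cm_i^{(\key)}$ being a commitment to $0$ (commitment hiding), swapping the plaintext of $\ct_i^{(\td)}$ (symmetric-encryption semantic security, feeding in $\td_i$ as non-uniform advice), and swapping it back, plus the two SFE input-privacy steps to take $\ct_i^{(\sfe)}$ through $(0,0)$. Only once the receiver's plaintext messages have been made to encode the real trapdoor is the final transition to $\cE$'s output purely \emph{statistical}: both sides now run the same SFE circuit whose compute-and-compare output is $\lock_i$, and the only difference is whether the computation happens in the clear or under the AFS-spooky encryption; perfect correctness of spooky evaluation and \emph{perfect correctness} (not simulation security) of $\obf$ close the gap. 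Invoking $\mathsf{Sim}^{\mathsf{CC}}$ is backwards here: in extraction the locks must be \emph{reachable} so that $\obfC_i(\cdot)$ actually returns $(\sk_i, m_i)$; simulation security is relevant only in the hiding proof where the locks are hidden from the receiver. Without the inefficient-extraction-of-$\td_i$ step, your $\hyb_0$ and $\hyb_{\mathrm{end}}$ cannot be brought statistically close, and replacing it with AFS-spooky semantic security as you propose destroys exactly the functional correspondence that the extractor relies on.
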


\begin{proof}

In the following we describe the extractor. For notational convenience we assume that the set of corrupted parties $I$ is of size $|I| = \ell$ and we assume without loss of generality that $I = [\ell]$.

\medskip

\noindent $\cE(1^\secp,1^n,I,\eC^*_\secp,\rho_\secp)$:
\begin{enumerate}
    \item Set $\rho_\secp$ to be the inner state of $\eC^*_\secp$. Begin running $\eC^*_\secp$ until it outputs $\{\cm_i^{(\msg)},\cm_i^{(\td)}\}_{i \in I}$.
    \item Compute the commitments $\{\cm_i^{(\key)}\}_{i\in I}$ as specified in the protocol.
    \item Receive $\{\pp_i\}_{i \in I}$ from $\eC^*_\secp$, draw $\pp_i \gets \spooky.\setup(1^\secp)$ for each $i \notin I$, and send $|I|$ copies of $\pp \coloneqq \{\pp_i\}_{i \in [n]}$ to $\eC^*_\secp$.
    \item Receive $\{\pk_i,\ct_i,\obfC_i\}_{i \in I}$.
    \item Let $\ket{\phi}$ be the inner state of $\eC^*_\secp$ at this point, compute $\ct_{\ket{\phi}} \gets \spooky.\qenc(\pk_{1},\ket{\phi})$.
    \item For each $i \in I$, compute $\tilde{\ct}_i^{(\td)} \gets \spooky.\eval(\pk_i,\enc(k_i, \cdot),\ct_i)$ and
    $\tilde{\ct}_i^{(\sfe)} \gets \spooky.\enc(\pk_i,\ct_i^{(\sfe)})$
    where $\ct_i^{(\sfe)}$ is computed as specified in the protocol.

    \item Let $\eC^*_\mathsf{Final}$ be the quantum circuit (derived from the adversary) that, on input the ciphertexts $\{\ct_i^{(\td)},\ct_i^{(\sfe)}\}_{i\in I}$ and the quantum state of the adversary, computes the messages of the corrupted parties (corresponding to Step 7 of the protocol) and the updated quantum state of the adversary. Compute the following spooky evaluation procedure:
    $$(\bar{\ct}_{1},\dots,\bar{\ct}_{\ell},\widehat{\ct}_{\ket{\phi}}) \gets \spooky.\eval((\pk_1,\dots,\pk_\ell),\eC^*_\mathsf{Final},(\ct_{\ket{\phi}},\{\tilde{\ct}_i^{(\td)},\tilde{\ct}_i^{(\sfe)}\}_{i\in I}))$$
    where each $\bar{\ct}_{i} = (\bar{\ct}_{i}^{(1)},\dots,\bar{\ct}_{i}^{(\ell)})$ are classical ciphertexts encrypted under $\pk_1,\dots,\pk_\ell$.
    
    \item For all $i \in I$ compute 
$$(\widehat{\ct}_i^{(1)},\dots, \widehat{\ct}_i^{(\ell)})  \gets \spooky.\eval\left((\pk_1,\dots,\pk_\ell), \sfedec\left(\dk_i, \mathop{\bigoplus}\limits_{j=1}^\ell \cdot\right), \bar{\ct}_i\right).$$

    \item For all $i\in[\ell]$ define $\eC_{\mathsf{Nest},i}$ to be the classical circuit that, on input a set of strings $\{y_j \}_{j\in[i-1]}$, a public key $\pk$, and a ciphertext $\ct$, computes 
    $$\ct' \gets \spooky.\eval(\pk, y_1 \oplus \dots \oplus y_j \oplus \cdot, \ct).$$
    The circuit returns $\obfC_i(\ct')$.
    
\item For all $i \in [\ell, \dots, 2]$ compute iteratively
$$(\tilde{\ct}_{\sk, i}, \tilde{\ct}_{m,i}) \gets \spooky.\eval((\pk_1,\dots,\pk_{i-1}),\eC_{\mathsf{Nest},i}(\cdot, \pk_i, \widehat{\ct}^{(i)}_i),(\widehat{\ct}^{(1)}_i, \dots, \widehat{\ct}^{(i-1)}_i))$$ and for all $j \in [i-1]$ update the variables
$$(\widehat{\ct}_j^{(1)},\dots,\widehat{\ct}_{j}^{(i-1)}) \gets \spooky.\eval((\pk_1,\dots,\pk_{i-1}), \eC_{\mathsf{Rec},i}[\widehat{\ct}_j^{(i)}], (\widehat{\ct}_j^{(1)},\dots,\widehat{\ct}_{j}^{(i-1)}, \tilde{\ct}_{\sk, i}))$$
where $\eC_{\mathsf{Rec},i}[\widehat{\ct}_j^{(i)}]$ is the circuit that takes as input $2i-2$ strings $(z_1, \dots, z_{i-1})$ and $(s_1, \dots, s_{i-1})$ and computes $$\mathop{\bigoplus}\limits_{k=1}^{i-1} z_k\oplus \spooky.\dec\left(\mathop{\bigoplus}\limits_{k=1}^{i-1} s_k, \widehat{\ct}_j^{(i)} \right).$$
    
\item At the end of the iteration compute $({\sk}_1, m_1) \gets \obfC_1(\widehat{\ct}_1^{(1)})$, then for all $i \in [2, \dots, \ell]$ compute
$${\sk}_i \gets \mathop{\bigoplus}\limits_{k=1}^{i-1} \spooky.\dec(\sk_k, \tilde{\ct}_{\sk, i}^{(k)})$$ and
$${m}_i \gets \mathop{\bigoplus}\limits_{k=1}^{i-1} \spooky.\dec(\sk_k, \tilde{\ct}_{m, i}^{(k)}).$$

\item Use the extracted keys $({\sk}_1, \dots, {\sk}_\ell)$ to decrypt the state of the adversary from $\widehat{\ct}_{\ket{\phi}}$ and the ciphertexts $(\widehat{\ct}^{(\sfe)}_1, \dots, \widehat{\ct}^{(\sfe)}_\ell)$ from the ciphertexts $(\bar{\ct}_1, \dots, \bar{\ct}_\ell)$ as defined in Step 7 of the extractor. Return the transcript together with the state and the extracted messages $(m_1, \dots, m_\ell)$.
\end{enumerate}
We are now going to show that the transcript output by the extractor is computationally indistinguishable (with respect to compliant distinguishers) from that resulting from the real execution of the protocol. We do this by defining a sequence of hybrid distributions (for all $i\in I$) where we modify the interaction with the $i$-th corrupted party. In some of the following hybrids, the simulator inefficiently extracts the messages $m_i$ and the trapdoors $\td_i$ from the messages $(\cm_i^{(\msg)}, \cm_i^{(\td)})$ of the corrupted parties. Note that this implies that each distribution is not necessarily computable in polynomial time. However, these hybrids should be thought of as mental experiments, which are going to be helpful in arguing about the indistinguishability of the simulator (which instead runs in strict quantum polynomial time).

\begin{itemize}
    \item $\hyb_{0}$: $\{\mathsf{VIEW}^{\mathsf{msg}}_{\eC^*_\secp}(\dist{\eR,\eC^*_\secp(\rho_\secp),\{\eC_i(m_{i,\secp})\}_{i \notin I}}(1^\secp,1^n)\}_{\lambda\in\mathbb{N}}$. Recall that this distribution includes the messages $\{m_i\}_{i \in I}$ committed by the transcript (if they exist). 
    
    
    \item $\hyb_{i,1}$: Same as $\hyb_{0}$ except that $\ct_i^{(\sfe)}$ is computed as $\sfeenc(\dk_i, (0^\lambda,0^\lambda))$.
    
    \item $\hyb_{i,2}$: Same as $\hyb_{i,1}$ except that $\cm_i^{(\key)}$ is computed as a commiment to $0^\lambda$.
    
    \item $\hyb_{i,3}$: Same as $\hyb_{i,2}$ except that $\ct_i^{(\td)}$ is computed as $\enc(k_i, \td_i)$, where $\td_i$ is extracted (inefficiently) from $\cm_i^{(\td)}$.
    
    \item $\hyb_{i,4}$: Same as $\hyb_{i,3}$ except that $\cm_i^{(\key)}$ is computed as a commitment to $k_i$, as specified in the protocol.
    
    \item $\hyb_{i,5}$: Same as $\hyb_{i,4}$ except that $\ct_i^{(\sfe)}$ is computed as $\sfeenc(\dk_i, (k_i, r_i))$, as specified in the protocol. 
\end{itemize}
We then define the last hybrid below. Note that the distribution induced by this hybrid is computable in (quantum) polynomial time.
\begin{itemize}
    \item $\hyb_{6}$: This is the output of the extractor as described above.
\end{itemize}
Now we argue indistinguishability between each hybrid. All reductions for distinguishers between hybrids below may receive the state of $\eC^*$ after Step 1 along with the corresponding committed values $\{m_i,\td_i\}_{i \in I}$ (if they exist) as non-uniform advice. Also, we only consider distinguishers that succeed with non-negligible probability, even conditioned on the event that the transcript received is \emph{explainable}, i.e., each message lies in the support of the corresponding algorithm. This is because we consider only distinguishers that are \emph{compliant}, i.e., they output $0$ with overwhelming probability if the transcript is not explainable.

\begin{itemize}

    \item $\hyb_{0} \approx_c \hyb_{i,1}$: This follows from the quantum input privacy of the SFE protocol. The reduction takes the transcript of the protocol after executing Step 1 (including the state of the adversary and the messages $\{m_i\}_{i\in I}$) as non-uniform advice and continues to run the protocol honestly through Step 5. It then sets $(0,0)$ and $(k_i, r_i)$ as the challenge messages for SFE. In Step 6, the reduction sets $\ct_i^{(\sfe)} = \ct^\ast$, where $\ct^\ast$ is the challenge ciphertext. The rest of the protocol proceeds without changes. The reduction returns whatever the distinguisher returns.
    
    Clearly if $\ct^\ast = \sfeenc(\dk_i, (0,0))$, then the distribution is identical to $\hyb_{i,2}$. On the other hand if $\ct^\ast = \sfeenc(\dk_i, (k_i,r_i))$, then the distribution induced by the reduction is identical to $\hyb_{i,1}$. This implies that the two hybrids are computationally indistinguishable.
    
    \item $\hyb_{i,1} \approx_c \hyb_{i,2}$:  This follows from an invocation of the (non-uniform) quantum computational hiding of the commitment scheme.
    
    \item $\hyb_{i,2} \approx_c \hyb_{i,3}$: This follows from a (non-uniform) reduction to the quantum semantic security of the secret-key encryption scheme, where $\td_i$ (together with the transcript so far and the messages $\{m_i\}_{i\in I}$) is given as non-uniform advice to the reduction.
    
    \item $\hyb_{i,3} \approx_c \hyb_{i,4}$:
    Same argument as $\hyb_{i,1} \approx_c \hyb_{i,2}$.
    \item $\hyb_{i,4} \approx_c \hyb_{i,5}$: Same argument as $\hyb_{0} \approx_c \hyb_{i,1}$.
    
    \item $\hyb_{i,5} \approx_s \hyb_{6}$: We are going to argue that, conditioned on the event that the messages of the corrupted parties are explainable, the two hybrids are identical, except if an error in the evaluation (and consequently in the decryption) of the AFS-spooky encryption scheme occurs. Once that is estabilshed, statistical indistinguishability follows from the correctness of the AFS-spooky encryption scheme (which holds for \emph{all choices} of the random coins used in the setup, key generation, and encryption algorithms).
    To substantiate this claim, recall that 
        $$(\widehat{\ct}_{\ket{\phi}},\bar{\ct}_{1},\dots,\bar{\ct}_{\ell}) \gets \spooky.\eval((\pk_1,\dots,\pk_\ell),\eC^*_\mathsf{Final},(\ct_{\ket{\phi}},\{\tilde{\ct}_i^{(\td)},\tilde{\ct}_i^{(\sfe)}\}_{i\in I}))$$
    where 
    \begin{align*}
    \tilde{\ct}_i^{(\sfe)} &= \spooky.\enc(\pk_i,\ct_i^{(\sfe)})\\
    &= \spooky.\enc(\pk_i,\sfeenc(\dk_i, (k_i, r_i)))\\
    \end{align*}
    and
    \begin{align*}
        \tilde{\ct}_i^{(\td)} &= \spooky.\eval(\pk_i,\enc(k_i, \cdot),\ct_i)\\
        &= \spooky.\eval(\pk_i,\enc(k_i, \cdot),\spooky.\enc(\pk_i, \td_i))\\
        &= \spooky.\enc(\pk_i,\enc(k_i, \td_i)).
    \end{align*}
    Therefore, by definition of $\eC^*_\mathsf{Final}$ we have that for all $i\in[\ell]$ and $j \in [\ell]$
    \begin{align*}
    \bar{\ct}^{(j)}_i = \spooky.\enc(\pk_j, x^{(j)}_i)
    \end{align*}
    such that
    \begin{align*}
    \mathop{\bigoplus}\limits_{j=1}^{\ell} x^{(j)}_i &= \widehat{\ct}_i^{(\sfe)}
    =  \sfe.\eval(\mathsf{C}[\cm_i^{(\key)},\ct_i^{(\td)},\td_i,\lock_i],\ct_i^{(\sfe)}).
    \end{align*}
    Recall that
    \begin{align*}
        (\widehat{\ct}_i^{(1)},\dots, \widehat{\ct}_i^{(\ell)})  &= \spooky.\eval\left((\pk_1,\dots,\pk_\ell), \sfedec\left(\dk_i, \mathop{\bigoplus}\limits_{j=1}^\ell \cdot\right), \bar{\ct}_i\right)
    \end{align*}
    and therefore for all $i\in[\ell]$ and $j\in [\ell]$ we have that $\widehat{\ct}_i^{(j)} = \spooky.\enc(\pk_j, y_i^{(j)})$ such that
    \begin{align*}
    \mathop{\bigoplus}\limits_{j=1}^{\ell} y^{(j)}_i &= \sfedec\left(\dk_i, \mathop{\bigoplus}\limits_{j=1}^\ell x_i^{(j)}\right) \\
    &= \sfedec\left(\dk_i, \sfe.\eval(\mathsf{C}[\cm_i^{(\key)},\ct_i^{(\td)},\td_i,\lock_i],\ct_i^{(\sfe)})\right) \\
    &= \lock_i
    \end{align*}
    by the perfect correctness of the SFE protocol. Now recall that
    $$(\tilde{\ct}_{\sk, \ell}, \tilde{\ct}_{m,\ell}) = \spooky.\eval((\pk_1,\dots,\pk_{\ell-1}),\eC_{\mathsf{Nest},\ell}(\cdot, \pk_\ell, \widehat{\ct}^{(\ell)}_\ell),(\widehat{\ct}^{(1)}_\ell, \dots, \widehat{\ct}^{(\ell-1)}_\ell))$$
    which implies that the two ciphertexts encode the output of the obfuscated program $\obfC_\ell(\ct_\ell')$ where
    \begin{align*}
        \ct_\ell' &= \spooky.\eval(\pk_\ell, y_\ell^{(1)} \oplus \dots \oplus y_{\ell}^{(\ell-1)} \oplus \cdot, \widehat{\ct}_\ell^{(\ell)})\\
        &= \spooky.\enc(\pk_\ell, y_\ell^{(1)} \oplus \dots \oplus y_{\ell}^{(\ell)})\\
        &= \spooky.\enc(\pk_\ell, \lock_\ell).
    \end{align*}
 By the perfect correctness of the compute-and-compare obfuscation, the two variables $(\tilde{\ct}_{\sk, \ell}, \tilde{\ct}_{m,\ell})$ are AFS-spooky encryptions of $(\sk_\ell, m_\ell)$, under $(\pk_1, \dots, \pk_\ell)$.
This implies that the variables $(\widehat{\ct}_1, \dots, \widehat{\ct}_{\ell-1})$ are correctly updated to
\begin{align*}
    \widehat{\ct}_i &= (\widehat{\ct}_i^{(1)},\dots,\widehat{\ct}_{i}^{(\ell-1)})\\
    &= \spooky.\eval((\pk_1,\dots,\pk_{\ell-1}), \eC_{\mathsf{Rec},\ell}[\widehat{\ct}_i^{(\ell)}], (\widehat{\ct}_i^{(1)},\dots,\widehat{\ct}_{i}^{(\ell-1)}, \tilde{\ct}_{\sk, \ell}))
\end{align*}
where $\widehat{\ct}_i^{(j)} = \spooky.\enc(\pk_j, \tilde{y}_i^{(j)})$ such that
\begin{align*}
\mathop{\bigoplus}\limits_{j=1}^{\ell-1} \tilde{y}^{(j)}_i &= \mathop{\bigoplus}\limits_{k=1}^{\ell-1} z_k\oplus \spooky.\dec\left(\mathop{\bigoplus}\limits_{k=1}^{\ell-1} s_k, \widehat{\ct}_i^{(\ell)} \right)\\
&= \mathop{\bigoplus}\limits_{k=1}^{\ell-1} y_i^{(k)}\oplus \spooky.\dec\left(\sk_\ell, \widehat{\ct}_i^{(\ell)} \right)\\
&= y_i^{(1)} \oplus \dots \oplus y_i^{(\ell)}\\
&= \lock_i.
\end{align*}
by the definition of $\eC_{\mathsf{Rec},\ell}[\widehat{\ct}_i^{(\ell)}]$.
Recursively applying the above procedure, we obtain that
\begin{align*}
    \obfC_1(\widehat{\ct}_1)
     &= \obfC_1(\spooky.\enc(\pk_1, \lock_1))\\
     &=  ({\sk}_1, m_1)
\end{align*}
by the perfect correctness of the compute-and-compare obfuscation. It follows that the extractor successfully recomputes $\sk_1$, which allows it to iteratively recover $(\sk_2, \dots, \sk_\ell)$ from $(\tilde{\ct}_{\sk, 2}, \dots, \tilde{\ct}_{\sk, \ell})$. Consequently, the decrypted transcript, the (possibly quantum) state of the adversary, and the messages $(m_1, \dots, m_\ell)$ are distributed identically as in the previous hybrid, conditioned on the fact that no error occurs during the evaluation algorithm. 
\end{itemize}

\end{proof}
\section{Quantum-Secure Multi-Verifier Zero-Knowledge}
\label{sec:pzk}

In this section, we use standard techniques to derive a multi-verifier zero-knowledge protocol from our multi-committer extractable commitment. We follow the approach given in~\cite{JC:GolKah96} to upgrade a commit-challenge-response $\Sigma$ protocol to a full-fledged zero-knowledge protocol in constant rounds. In particular, the (multiple) verifiers will each commit to their challenge before the $\Sigma$ protocol is executed, using our multi-committer extractable commitment scheme. A simulator will then be able to extract the challenge from all verifiers \emph{simultaneously} and proceed to simulate each $\Sigma$ protocol. 

As in~\cite{BS20}, a couple of subtleties arise in the proof. First, the extractable commitment guarantee does not hold against arbitrary malicious verifiers, as captured by our notion of simulation indistinguishability against \emph{compliant} distinguishers. Thus, we have the verifier attach a witness indistinguishable proof (WI) that it acted explainably during the commitment phase, and indeed committed to the challenge that is sent during the $\Sigma$ protocol. However, in the proof of soundness, the verifier's initial commitment must be switched to a commitment to 0, since the reduction will receive the $\Sigma$ protocol challenge from its challenger. This requires the verifier to prove a different statement under the WI, which must only be possible when interacting with a cheating prover. Details can be found in the description of~\proref{fig:parallel_zk}.

\paragraph{Simulation Strategy.} Following~\cite{BS20}, we construct a zero-knowledge simulator that makes use of two non-rewinding sub-routines. Given an arbitrary malicious (multi-)verifier $\zkV^*$, we consider the following two distributions. First, consider the real distribution over the final state of $\zkV^*$ on interaction with the honest prover, except that any time $\zkV^*$ aborts, the distribution outputs only a $\bot$ symbol. We refer to this as $\mathsf{RealNoAbort}_\bot(\zkV^*)$. Next, consider the real distribution except that any time $\zkV^*$ \emph{does not} abort, the distribution outputs only a $\bot$ symbol. We refer to this as  $\mathsf{RealAbort}_\bot(\zkV^*)$. 

As a stepping stone towards proving zero-knowledge, we construct an entirely straight-line simulator $\mathsf{SimNoAbort}_\bot$ such that $\mathsf{SimNoAbort}_\bot(\zkV^*)$ is indistiguishable from $\mathsf{RealNoAbort}_\bot(\zkV^*)$. By entirely straight-line, we mean that not only does $\mathsf{SimNoAbort}_\bot$ not rewind $\zkV^*$, it never even re-starts $\zkV^*$ from the beginning. Analogously, we also construct a simulator $\mathsf{SimAbort}_\bot$ such that $\mathsf{SimAbort}_\bot(\zkV^*)$ is indistiguishable from $\mathsf{RealAbort}_\bot(\zkV^*)$, and $\mathsf{SimAbort}_\bot$ is entirely straight-line.

Now, we combine the above simulators into a straight-line simulator $\mathsf{SimComb}_\bot$ that succeeds with probability negligibly close to 1/2. $\mathsf{SimComb}_\bot$ simply chooses uniformly at random whether to run $\mathsf{SimNoAbort}_\bot$ or $\mathsf{SimAbort}_\bot$ and outputs the resulting view if the sub-routine is successful and $\bot$ otherwise. Finally, we invoke the Watrous rewinding lemma to amplify the success probability of $\mathsf{SimComb}_\bot$, resulting in the final simulator $\simulator$. 

We will actually make use of the sub-routines $\mathsf{SimNoAbort}_\bot$ and $\mathsf{SimAbort}_\bot$ explicitly in later sections, where the entirely straight-line nature of these procedures will be useful. In particular, we use both simulators in constructing non-malleable commitments (\cref{sec:nmc}) and just $\mathsf{SimNoAbort}_\bot$ in the coin-flipping protocol in~\cref{sec:coin-tossing} (since we define an alternate/simpler abort generation procedure in that protocol).

\subsection{Definition}


\begin{definition}[Quantum-Secure Multi-Verifier Zero-Knowledge Argument for $\mathsf{NP}$]
\label{def:qsczk}
A quantum-secure multi-verifier zero-knowledge argument for a language $\mathcal{L} \in \mathsf{NP}$ is a pair $(\zkP, \zkV)$ of classical PPT interactive Turing machines. $\zkP$ interacts with $n$ copies $\{\zkV_i\}_{i \in [n]}$ of $\zkV$ (who do not interact with each other) on common input $1^\secp$ and $1^n$, with each $\zkV_i$ additionally taking an input $x_i \in \cL$, and $\zkP$ additionally taking inputs $\{x_i, w_i \in \mathcal{R}_{\mathcal{L}}(x_i)\}_{i \in [n]}$. At the end of the interaction, each $\zkV_i$ outputs a bit, indicating whether it accepts or rejects.
\begin{enumerate}
    \item {\bf Perfect Completeness:} For any $\secp,n \in \mathbb{N}, i \in [n], x \in \mathcal{L} \cap \{0,1\}^{\secp}$, and $w \in \mathcal{R}_{\mathcal{L}}(x)$,
    $$\Pr[\mathsf{OUT}_{\zkV_i}\langle\zkP(x,w),\zkV_i(x)\rangle (1^\secp,1^n) = 1] = 1.$$
    \item {\bf Quantum Computational Soundness:} For any non-uniform quantum polynomial-size prover $\zkmP = \{\zkmP_\secp, \rho_\secp\}_{\secp \in \mathbb{N}}$, there exists a negligible function $\mu(\cdot)$ such that for all $\secp,n \in \mathbb{N}$, $i \in [n]$, and any $x \in \{0,1\}^\secp \setminus \mathcal{L}$,
    $$\Pr[\mathsf{OUT}_{\zkV_i}\langle\zkmP_\secp(\rho_\secp),\zkV_i(x)\rangle (1^\secp,1^n) = 1] \leq \mu(\secp).$$
    \item {\bf Quantum Computational Zero-Knowledge:} There exists a quantum expected polynomial-time simulator $\mathsf{Sim}$ such that for any non-uniform quantum polynomial-size adversary $\zkmV = \{\zkmV_\secp,\rho_\secp\}_{\secp \in \mathbb{N}}$ representing a subset of $n$ verifiers, namely, $\{\zkV_i\}_{i \in I}$ for some set $I \subseteq [n]$,
    
    \begin{align*}
    &\left\{\view_{\zkmV_\secp} \left\langle \begin{array}{c} \zkP(\{x_i,w_i\}_{i \in [n]}), \\\zkmV_\secp(\{x_i\}_{i \in I},\rho_\secp), \\ \{\zkV_i(x_i)\}_{i \notin I} \end{array}\right\rangle (1^\secp,1^n)  \right\}_{\secp,\{x_i\}_{i \in [n]},\{w_i\}_{i \in [n]}}\\ \approx_c 
    &\{\mathsf{Sim}(1^\secp,1^n,I,\{x_i\}_{i \in n}, \zkmV_\secp, \rho_\secp)\}_{\secp,\{x_i\}_{i \in [n]},\{w_i\}_{i \in [n]}},
    \end{align*}
    
    where $\secp \in \mathbb{N},x_i \in \cL \cap \{0,1\}^\secp,w_i \in \cR_\cL(x_i)$.
    
    
\end{enumerate}
\end{definition}

\subsection{Construction}

\paragraph{Ingredients:} All of the following are assumed to be quantum-secure.
\begin{itemize}
    \item A non-interactive perfectly-binding commitment $\COM$.
    \item A multi-committer extractable commitment $\PECom = (\PECom.\eC,\PECom.\eR)$.
    \item A WI proof system $\mathsf{WI} = (\wiP,\wiV)$.
    \item A sigma protocol for $\NP$ $\Sigma = (\Sigma.\zkP,\Sigma.\zkV)$.
\end{itemize}

\begin{remark}
Observe that since the sigma protocol $\Sigma$ is public-coin, \proref{fig:parallel_zk} is publicly-verifiable. That is, any third party, upon observing the transcript of interaction between $\zkP$ and $\zkV$, can deduce whether $\zkV$ accepted or not. This fact will be used in~\cref{sec:coin-tossing}.
\end{remark}

\protocol
{\proref{fig:parallel_zk}}
{A constant-round quantum-secure multi-verifier zero-knowledge argument for $\cL \in$ NP.}
{fig:parallel_zk}
{
\begin{description}
    \item[Common input:] $1^\secp$ and $1^n$.
    \item[$\zkV_i$'s additional input:] $x_i \in \cL$.
    \item[$\zkP$'s additional input:] $\{x_i,w_i \in \cR_\cL(x_i)\}_{i \in [n]}$.
\end{description}
\begin{enumerate}
    \item For each $i \in [n]$, $\zkP$ computes and sends $\cm_i \gets \COM(1^\secp,w_i)$ to $\zkV_i$.
    \item Each $\zkV_i$ computes a challenge $\beta_i \gets \Sigma.\zkV_1(1^{|x_i|})$. Then, $\zkP$ and $\{\zkV_i\}_{i \in [n]}$ interact, with $\zkP$ taking the role of $\PECom.\eR$ and $\zkV_i$ taking the role of $\PECom.\eC_i(\beta_i)$, to produce $\{\tau_i\}_{i \in [n]} 
    \gets \dist{\PECom.\eR,\{\PECom.\eC_i(\beta_i)\}_{i \in [n]}}(1^\secp,1^n)$.
    
    \item For each $i \in [n]$, $\zkP$ computes $(\alpha_i,\state_i) \gets \Sigma.\zkP_1(x_i,w_i)$ and sends $\alpha_i$ to $\zkV_i$.
    \item Each $\zkV_i$ sends $\beta_i$.
    \item For each $i \in [n]$, $\zkP$ and $\zkV_i$ interact (in parallel) with $\zkP$ taking the role of $\wiV$ and $\zkV_i$ taking the role of $\wiP$ to give $\zkP$ a WI proof that 
    \begin{itemize}
        \item $\tau_i$ is explainable, and opens to $\beta_i$,
        \item \textbf{OR,} $\cm_i$ opens to a non-witness $z_i \notin \cR_\cL(x_i)$.
    \end{itemize}
    \item For each $i \in [n]$, $\zkP$ and $\zkV_i$ interact (in parallel) with $\zkP$ taking the role of $\wiP$ and $\zkV_i$ taking the role of $\wiV$ to give $\zkV_i$ a WI proof that 
    \begin{itemize}
        \item $\cm_i$ opens to some string $z_i$,
        \item \textbf{OR,} $x_i \in \cL$.
    \end{itemize}
    \item For each $i \in [n]$, $\zkP$ computes and sends $\gamma_i = \Sigma.\zkP_2(x_i,w_i,\state_i,\alpha_i,\beta_i)$ to $\zkV_i$.
    \item Each $\zkV_i$ accepts if $\Sigma.\zkV_2(x_i,\alpha_i,\beta_i,\gamma_i) = 1$. 
\end{enumerate}
}

\subsection{Soundness}

\begin{lemma}
\proref{fig:parallel_zk} has quantum computational soundness.
\end{lemma}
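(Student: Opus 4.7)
The plan is to reduce to the statistical soundness of the underlying sigma protocol $\Sigma$. Fix any $i \in [n]$ and $x_i \notin \cL$, and let $\zkmP$ be a non-uniform quantum polynomial-size cheating prover that makes $\zkV_i$ accept with probability $\epsilon(\secp)$. The other verifiers $\{\zkV_j\}_{j \neq i}$ can be simulated honestly by the reduction throughout (using freshly sampled challenges $\beta_j$ whose openings and commitment randomness the reduction retains). The difficulty is that a direct reduction to sigma soundness would need to send an extractable commitment to $\beta_i$ in Step~2 \emph{before} seeing $\alpha_i$ in Step~3, whereas the sigma challenger only hands out $\beta_i$ after receiving $\alpha_i$; the extra OR clause of the Step~5 WI was designed precisely to bridge this gap.

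I will proceed via three hybrids over the interaction with $\zkV_i$. In $\hyb_1$, the verifier switches to the second witness of the Step~5 WI, namely that $\cm_i$ opens to some non-witness $z_i$. This statement is true by perfect binding of $\COM$ combined with $x_i \notin \cL$, and indistinguishability from the real interaction follows from quantum WI of $\mathsf{WI}$. The reduction needs the non-witness opening $(z_i,r_i)$ in order to compute this WI proof, which is supplied as non-uniform advice tailored to a fixed first message $\cm_i^*$ of $\zkmP$, chosen so that the joint probability of $\cm_i=\cm_i^*$ and $\zkV_i$ accepting is at least a polynomial fraction of $\epsilon$ (by an averaging argument on the classical distribution of $\cm_i$, restricted to a polynomially large support, which may be assumed without loss of generality by conditioning). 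In $\hyb_2$, the verifier replaces its Step~2 extractable commitment from a commitment to $\beta_i$ to a commitment to $0^{|\beta_i|}$; $\beta_i$ is still revealed in Step~4, and the Step~5 WI now only uses the non-witness branch, so it no longer depends on the randomness of that commitment. Indistinguishability from $\hyb_1$ follows from quantum hiding of $\PECom$, applied to the single committing party $\zkV_i$ while the reduction continues to play the other honest committers $\{\zkV_j\}_{j \neq i}$ internally and exposes only the receiver interface to $\zkmP$.

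In $\hyb_2$, the challenge $\beta_i$ now appears only in Step~4 and in the final acceptance check of Step~8, so its sampling can be deferred until just before Step~4. This yields a straight-line reduction to statistical soundness of $\Sigma$: obtain $\alpha_i$ from $\zkmP$ in Step~3, forward it to an external sigma challenger to receive a uniform $\beta_i$, send $\beta_i$ in Step~4, complete the Step~5 WI using the non-uniform advice $(z_i^*,r_i^*)$, verify the Step~6 WI honestly, collect $\gamma_i$ in Step~7, and output $(\alpha_i,\beta_i,\gamma_i)$ as an accepting sigma transcript for $x_i \notin \cL$. Since $\Sigma$ is statistically sound, this event occurs with at most negligible probability, contradicting the non-negligibility of $\epsilon$ (after absorbing the polynomial losses incurred in selecting $\cm_i^*$ and in the hybrid steps). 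The main obstacle is the WI witness switch in $\hyb_1$, because $\cm_i$ is perfectly binding but not efficiently openable; this is handled via non-uniform advice, which is permissible because soundness is defined against non-uniform quantum polynomial-size adversaries, and is the reason the protocol explicitly incorporates the $\cm_i$-commitment together with the non-witness OR clause.
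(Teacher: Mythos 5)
Your overall strategy matches the paper's proof closely: fix the cheating prover's first message by averaging, then a first hybrid that swaps the verifier's Step~5 WI witness to the ``non-witness'' branch using a non-uniform opening of $\cm_i$, a second hybrid that replaces the $\PECom$ commitment to $\beta_i$ with a commitment to zeroes, and finally a reduction that defers $\beta_i$ to Step~4 and interprets the accepting transcript as a break of the sigma protocol's statistical soundness. However, there is one genuine gap in your justification of $\hyb_1$. You assert that the Step~5 non-witness disjunct ``$\cm_i$ opens to a non-witness $z_i \notin \cR_\cL(x_i)$'' is true ``by perfect binding of $\COM$ combined with $x_i \notin \cL$.'' Perfect binding guarantees \emph{uniqueness} of an opening; it does not guarantee \emph{existence} of one. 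A malicious prover can send a string $\cm_i^*$ that lies outside the image of $\COM$ entirely, in which case the non-witness branch has no witness, the verifier cannot complete the Step~5 WI proof, and the non-uniform advice $(z_i^*, r_i^*)$ you rely on simply does not exist. The paper closes this gap by invoking the \emph{statistical soundness of the Step~6 WI}: since $x_i \notin \cL$, the Step~6 statement reduces to ``$\cm_i$ opens to some string,'' so conditioned on the prover convincing $\zkV_i$ to accept with noticeable probability, the fixed first message $\cm_i^*$ is well-formed except with negligible probability, and a valid opening $(z_i^*, s_i^*)$ can legitimately be supplied as advice. Your proof should invoke Step~6 soundness at this point rather than perfect binding. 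Separately, the parenthetical about restricting to a ``polynomially large support'' in the averaging step is both unnecessary and potentially misleading; the standard averaging $\Pr[\mathrm{accept}] = \sum_c \Pr[\cm_i=c]\cdot\Pr[\mathrm{accept}\mid\cm_i=c]$ yields a fixed $\cm_i^*$ with noticeable conditional acceptance probability regardless of the support size, and this conditional probability (not the joint probability) is what the reduction needs.
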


\begin{proof}

Assume towards contradication that there exists a non-uniform quantum polynomial-size prover $\zkP^* = \{\zkP^*_\secp,\rho^*_\secp\}_{\secp \in \mathbb{N}}$ that with noticeable probability, convinces $\zkV_i$ to accept on input instances $\{x_\secp\}_{\secp \in \mathbb{N}}$ where $|x_\secp| = \secp$ and $x_\secp \notin \cL$. Let $\zkV = \zkV_i$, and by averaging, we can assume that $\zkP^*_\secp$ sends a fixed first message $\cm_\secp$ to $\zkV$. Furthermore, since $\zkP^*_\secp$ succeeds in convincing $\zkV$ to accept instances $x_\secp \notin \cL$ with noticeable probability, the statistical soundness of the WI in Step 6 of the protocol implies that $\cm_\secp$ must be a well-formed commitment, that is, $\cm_\secp = \COM(1^\secp,z_\secp;s_\secp)$ for some $(z_\secp,s_\secp)$. Now, consider the following sequence of computationally indistinguishable hybrid distributions.

\begin{itemize}
    \item $\hyb_0$: $\{\view_{\zkP^*}\dist{\zkP^*(\rho_\secp),\zkV}(1^\secp,x_\secp)\}_{\secp \in \mathbb{N}}$.
    \item $\hyb_1$: Same as $\hyb_0$ except that $\zkV$ uses $(z_\secp,s_\secp)$ as the witness for the second part of the WI statement given in Step 5 of the protocol. $\hyb_0 \approx_c \hyb_1$ follows from the witness indistinguishability of WI, where the reduction is given $(z_\secp,s_\secp)$ as non-uniform advice.
    \item $\hyb_2$: Same as $\hyb_1$ except that in Step 2, $\zkV$ takes the role of $\PECom.\eC(0^\secp)$ rather than $\PECom.\eC(\beta)$. $\hyb_1 \approx_c \hyb_2$ follows from the computational hiding of $\PECom$.
\end{itemize}

Using $\zkP^*$, we construct a cheating prover $\Sigma.\zkP^* = \{\Sigma.\zkP^*_\secp,\Sigma.\rho^*_\secp\}_{\secp \in \mathbb{N}}$ for the sigma protocol. The non-uniform advice $\Sigma.\rho^*_\secp$ is generated as follows. Run $\zkP^*(\rho^*_\secp)$ until it outputs its first message $\cm_\secp$. Extract from $\cm_\secp$ the message $z_\secp$ committed and the corresponding opening $s_\secp$ and define the resulting advice to consist of the state of $\zkP^*$ at this point, along with $(z_\secp,s_\secp)$.\\

\noindent$\Sigma.\zkP^*_\secp(\Sigma.\rho^*_\secp):$
\begin{enumerate}
    \item Interact with $\zkP^*_\secp$, taking the role of $\PECom.\eC(0^\secp)$.
    \item Continue running $\zkP^*_\secp$, obtaining the message $\alpha$.
    \item Send $\alpha$ to $\Sigma.\zkV$, receive $\beta$, and send $\beta$ to $\zkP^*_\secp$.
    \item Interact with $\zkP^*_\secp$ to give $\zkP^*_\secp$ a WI proof as in Step 5 of the protocol, using witness $(z_\secp,s_\secp)$.
    \item Interact with $\zkP^*_\secp$ to receive a WI proof from $\zkP^*_\secp$ as in Step 6 of the protocol.
    \item Continue running $\zkP^*_\secp$, obtaining the message $\gamma$, and send $\gamma$ to $\Sigma.\zkV$.
\end{enumerate}

Now note that $\zkP^*_\secp$'s view in this interaction is exactly $\hyb_2$. Thus, since $\zkP^*$ succeeds in convincing $\zkV$ to accept with noticeable probability, and $\hyb_0 \approx_c \hyb_2$, it must be the case that $\Sigma.\zkV$ accepts with noticeable probability, a contradiction.

\end{proof}

\subsection{Zero-Knowledge}
\label{subsec:ZK}

\begin{theorem}
\label{thm:ZK}
\proref{fig:parallel_zk} is quantum computational zero-knowledge.
\end{theorem}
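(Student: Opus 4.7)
The plan is to follow the \cite{BS20} template and construct the expected quantum polynomial-time simulator $\zkSim$ by building two \emph{entirely} straight-line sub-simulators $\zkSimNoAbort$ and $\zkSimAbort$, combining them into a single procedure $\mathsf{SimComb}_\bot$ that flips a coin and runs one of them, and amplifying via the Watrous rewinding lemma (\lemref{lemma:rewinding}). Here $\zkSimNoAbort(\zkmV)$ should be indistinguishable from the real distribution conditioned on $\zkmV$ not aborting (and output $\bot$ otherwise), and $\zkSimAbort(\zkmV)$ should handle the aborting case symmetrically. Because each sub-simulator is entirely straight-line (no restarts), the combined success probability sits close to $1/2$ and is nearly input-independent, which are the hypotheses of the rewinding lemma.

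The heart of the argument is $\zkSimNoAbort$. On input $\zkmV$ controlling index set $I$, it will: send $\cm_i \gets \COM(1^\secp,0)$ in Step~1; run the multi-committer extractor of \proref{fig:ext_com} against $\zkmV$'s Step~2 messages to recover $\{\beta_i\}_{i\in I}$ in parallel with $\zkmV$'s execution; compute $(\alpha_i,\gamma_i) \gets \Sigma.\zkSim(x_i,\beta_i)$ and send $\alpha_i$ in Step~3; continue $\zkmV$ through Steps~4--5; in Step~6 give the WI proof using the opening of $\cm_i$ to $0$ as witness (the first disjunct); and in Step~7 send $\gamma_i$ provided that $\zkmV$ never aborted, every WI proof from $\zkmV$ verified, and the $\beta_i'$ received in Step~4 matches the extracted $\beta_i$ for every $i\in I$, outputting $\bot$ otherwise. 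Indistinguishability from $\mathsf{RealNoAbort}_\bot(\zkmV)$ will be shown by a hybrid chain: (i) augment the real experiment with an internal abort on $\beta_i \neq \beta_i'$ while $\cm_i$ still commits to the true witness $w_i$, invoking multi-committer extractability against a compliant distinguisher together with statistical soundness of the Step~5 WI (which forces the explainable disjunct while $\cm_i$ commits to a witness); (ii) switch the Step~6 WI witness from $w_i$ to the opening of $\cm_i$ by witness indistinguishability; (iii) switch $\cm_i$ to $\COM(1^\secp,0)$ by quantum hiding of $\COM$; and (iv) replace the honest sigma-prover execution with $\Sigma.\zkSim(x_i,\beta_i)$ by special zero-knowledge of $\Sigma$, which applies since $\beta_i$ is already in hand via extraction.

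The construction of $\zkSimAbort$ will be analogous, but short-circuits: after the step at which $\zkmV$ first aborts, the remaining prover messages are never observed and may be set to trivial defaults, yielding an almost verbatim copy of the hybrid sequence above up to that abort point. The main technical obstacle is the ordering of hybrids in the $\zkSimNoAbort$ analysis, specifically the need to install the PECom-vs-Step~4 consistency check \emph{before} switching $\cm_i$ to commit to $0$: once $\cm_i$ commits to a non-witness, the second disjunct of the Step~5 WI becomes available and $\zkmV$ may send a $\beta_i'$ decoupled from its PECom commitment, breaking the consistency argument. Relatedly, each reduction to multi-committer extractability must supply a \emph{compliant} distinguisher --- one that outputs $0$ on non-explainable PECom transcripts --- which I will arrange by having the distinguisher first verify $\zkmV$'s Step~5 WI proof, since while $\cm_i$ still commits to a witness, perfect binding of $\COM$ plus statistical soundness of the WI ensure that any accepting transcript induces an explainable PECom transcript.
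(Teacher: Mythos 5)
Your overall decomposition mirrors the paper exactly: two entirely straight-line sub-simulators $\zkSimNoAbort$ and $\zkSimAbort$, combined by a coin flip into a procedure that succeeds with probability close to $1/2$, then amplified via the Watrous rewinding lemma. The abort-side analysis, the use of extraction over the parallel $\PECom$ session to obtain the $\Sigma$-challenges, and your observation that the consistency check must be installed \emph{before} $\cm_i$ is switched to a commitment to $0$ are all correct.

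However, there is a genuine ordering gap in your $\zkSimNoAbort$ hybrid chain. You propose step~(ii) (switch the Step~6 WI witness from $w_i$ to the opening of $\cm_i$) before step~(iii) (switch $\cm_i$ from $\COM(1^\secp,w_i)$ to $\COM(1^\secp,0)$). But the reduction for step~(iii) is to the hiding of $\COM$: it receives a challenge commitment $\cm_i^*$ \emph{without its opening}. After step~(ii), the hybrid's prover produces the Step~6 WI proof using precisely the opening of $\cm_i$ as its witness, so a reduction that lacks that opening cannot reproduce either of the two neighboring hybrid distributions — and falling back to $w_i$ as the WI witness silently defines a third distribution that matches neither side of the transition. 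The paper avoids this by doing the commitment switch \emph{first}: while $\cm_i$ is changed, the WI proof is still given from the disjunct $x_i \in \cL$ with witness $w_i$ (available as non-uniform advice), and only afterward is the WI witness swapped to the opening of $\cm_i$, which at that point the simulator itself generated and hence knows. Swapping your steps~(ii) and~(iii) — first $\cm_i \to \COM(1^\secp,0)$ (by hiding, WI still proved from $w_i$), then WI witness $\to$ opening of $\cm_i$ (by witness indistinguishability, with both witnesses in hand) — closes the gap and recovers the paper's chain. The remainder of your proposal, including the final simulator structure and the compliance argument for the $\PECom$ extractability reductions, is sound.
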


\begin{proof}


We begin by describing the two sub-routines $\zkSimNoAbort$ and $\zkSimAbort$ mentioned above. Then we combine them into $\mathsf{SimComb}_\secp$, which we use to derive the final simulator $\simulator$.\\



\noindent $\zkSimNoAbort(1^\secp,1^n,I,\{x_i\}_{i \in [n]},\zkV^*_\secp,\rho_\secp)$:
\begin{enumerate}
    \item Set $\rho_\secp$ to be the inner state of $\zkV^*_\secp$. For each $i \in I$, compute and send $\cm_i \gets \COM(1^\secp,0)$ to $\zkV^*_\secp$.
    
    \item Let $\zkV.\PECom^*_\secp$ be the portion of $\zkV^*_\secp$ that interacts with $\zkP$ in Step 2 above. Note that its state at the beginning of this interaction is $\rho_\secp, \{\cm_i\}_{i \in I}$. Compute 
    
    $$(\{\tau_i\}_{i \in I},\state,\{\beta'_i\}_{i \in I}) \gets \PECom.\cE(1^\secp,1^n,I,\zkV.\PECom^*_\secp,(\rho_\secp, \{\cm_i\}_{i \in I})).$$
    
     If $\PECom.\cE$ produced an abort transcript, then halt and return $\bot$. Otherwise continue, setting $\state$ to be the inner state of $\zkV^*_\secp$.
    \item For all $i \in I$, compute $(\alpha_i,\gamma_i) \gets \Sigma.\zkSim(x_i,\beta'_i)$ and send $\alpha_i$ to $\zkV^*_\secp$.
    \item $\zkV^*_\secp$ returns $\{\beta_i\}_{i \in I}$.
    \item Take the role of the honest prover $\wiP$ in the $|I|$ WI proofs that $\zkV^*_\secp$ gives. If $\zkV^*_\secp$ fails to prove any of the statements, then halt and output $\bot$.
    \item Give $\zkV^*_\secp$ a total of $|I|$ WI proofs using the $|I|$ witnesses that show $\{\cm_i\}_{i \in I}$ are valid commitments. Then, send $\{\gamma_i\}_{i \in I}$ to $\zkV^*_\secp$.
    \item Output the inner state of $\zkV^*_\secp$.
\end{enumerate}

\medskip

\noindent $\zkSimAbort(1^\secp,1^n,I,\{x_i\}_{i \in [n]},\zkV^*_\secp,\rho_\secp)$:
 \begin{enumerate}
        \item Set $\rho_\secp$ to be the inner state of $\zkV^*_\secp$.
        \item Interact with $\zkV^*_\secp$ as the honest prover until the end of Step 6 of the protocol, with exactly 3 differences:
        \begin{itemize}
            \item The commitments $\cm_i$ in Step 1 are to $0$ rather than $w_i$.
            \item The messages $\alpha_i$ sent in Step 3 are generated by the simulator of the sigma protocol, $(\alpha_i,\gamma_i) \gets \Sigma.\zkSim(x_i,0^\secp)$.
            \item In Step 6, the witnesses used for the WI proofs are for the first statement (that $\cm_i$ is a valid commitment).
        \end{itemize}
        \item If at some point during the interaction $\zkV^*_\secp$ either aborts or fails in one of its WI proofs, halt and output $\zkV^*_\secp$'s inner state. Otherwise, output $\bot$.
    \end{enumerate}
    
\medskip

\noindent $\mathsf{SimComb}_\bot(1^\secp,1^n,I,\{x_i\}_{i \in [n]},\zkV^*_\secp,\rho_\secp)$: With probability 1/2, execute $\zkSimNoAbort(1^\secp,1^n,I,\{x_i\}_{i \in [n]},\zkV^*_\secp,\rho_\secp)$ and otherwise execute $\zkSimAbort(1^\secp,1^n,I,\{x_i\}_{i \in [n]},\zkV^*_\secp,\rho_\secp)$.

\medskip
\noindent $\simulator(1^\secp,1^n,I,\{x_i\}_{i \in [n]},\zkV^*_\secp,\rho_\secp)$: Let $\overline{\mathsf{SimComb}}_\bot(\cdot) \coloneqq \mathsf{SimComb}_\bot(1^\secp,1^n,I,\{x_i\}_{i \in [n]},\zkV^*_\secp,\cdot)$ be the circuit $\mathsf{SimComb}_\bot$ with all inputs hard-coded except for $\rho_\secp$, and output $\R(\overline{\mathsf{SimComb}}_\bot,\rho_\secp,\secp)$, where $\R$ is the algorithm from~\cref{lemma:rewinding}.

\medskip
    
Next, we introduce some notation. For $\zkV^* = \{\zkV^*_\secp,\rho_\secp\}_{\secp \in \bbN}$, let $\mathsf{Real}(\zkV^*)$ denote 

$$\left\{\view_{\zkmV_\secp} \left\langle \begin{array}{c} \zkP(\{x_i,w_i\}_{i \in [n]}), \\\zkmV_\secp(\{x_i\}_{i \in I},\rho_\secp), \\ \{\zkV_i(x_i)\}_{i \notin I} \end{array}\right\rangle (1^\secp,1^n)  \right\}_{\secp,\{x_i\}_{i \in [n]},\{w_i\}_{i \in [n]}}$$ for $\secp \in \mathbb{N},x_i \in \cL \cap \{0,1\}^\secp,w_i \in \cR_\cL(x_i)$, and let

\begin{itemize}
    \item $\mathsf{RealNoAbort}_\bot(\zkV^*)$ be the distribution $\mathsf{Real}(\zkV^*)$, except that whenever an abort occurs, the distribution outputs $\bot$, and 
    \item $\mathsf{RealAbort}_\bot(\zkV^*)$ be the distribution $\mathsf{Real}(\zkV^*)$, except that if an abort \emph{does not} occur, the distribution outputs $\bot$.
\end{itemize}



We continue by proving two lemmas that will be useful on their own in later sections, and will also be useful in proving the quantum computational zero-knowledge of $\zk$.

\begin{lemma}
\label{claim:ind-non-abort}
For any $\zkV^*$, $\mathsf{RealNoAbort}_\bot(\zkV^*) \approx_c \zkSimNoAbort(\zkV^*)$.
\end{lemma}
\begin{proof} We prove this via a sequence of hybrids.
\begin{itemize}
    \item $\hyb_0$: $\mathsf{RealNoAbort}_\bot(\zkV^*)$.
    \item $\hyb_1$: Same as $\hyb_0$ except that after Step 2, the values $\{\beta'_i\}_{i \in I}$ committed by the transcripts $\{\tau_i\}_{i \in I}$ are inefficiently extracted, and after Step 4, if any $\beta'_i \neq \beta_i$, the hybrid aborts (outputs $\bot$).
    \item $\hyb_2$: Same as $\hyb_1$ except that the transcript of the extractable commitment is simulated. In particular, the hybrid computes  $$(\{\tau_i\}_{i \in I},\state,\{\beta'_i\}_{i \in I}) \gets \PECom.\cE(1^\secp,1^n,I,\zkV.\PECom^*_\secp,(\rho_\secp, \{\cm_i\}_{i \in I})),$$
    and proceeds to run the verifier with inner state $\state$.
    \item $\hyb_3$: Same as $\hyb_2$ except that the equality checks introduced in $\hyb_1$ are removed, and the sigma protocol is simulated. In particular, for each $i \in I$, the hybrid computes $(\alpha_i,\gamma_i) \gets \Sigma.\zkSim(x_i,\beta_i')$.
    \item $\hyb_4$: Same as $\hyb_3$ except that each commitment for $i \in I$ in the first message sent by the prover is $\cm_i \gets \COM(1^\secp,0)$.
    \item $\hyb_5$: Same as $\hyb_4$ except that the WI proofs given by the prover for $i \in I$ are generated with witnesses showing that $\cm_i$ is a valid commitment.
\end{itemize}
Observe that $\hyb_5$ is exactly $\zkSimNoAbort(\zkV^*)$. Now we show that each consecutive pair of hybrids is indistiguishable.
\begin{itemize}
    \item $\hyb_0 \approx_s \hyb_1$: This follows from the statistical soundness of the WI proved in Step 5. 
    \item $\hyb_1 \approx_c \hyb_2$: Assume there exists a distinguisher $\D$ for $\hyb_1$ and $\hyb_2$ that succeeds with non-negligible probability. We build a compliant\footnote{Recall that such a distinguisher is guaranteed to output 0 with overwhelming probability on input any non-explainable view.} distinguisher $\D'$ that breaks the extractability property of $\PECom$.
    
    First, we fix a sequence of instance-witness pairs $\{\{x_{i,\secp},w_{i,\secp}\}_{i \in I}\}_{\secp \in \mathbb{N}}$, and first messages $\{\{\cm_{i,\secp}\}_{i \in I}\}_{\secp \in \mathbb{N}}$ for which $\D$ succeeds with non-negligible probability. The witnesses $\{\{w_{i,\secp}\}_{i \in I}\}_{\secp \in \mathbb{N}}$ will be given as non-uniform advice to $\D'$. Now, $\D'$ will take as input either the real or the simulated view with respect to committer $\{\zkV.\PECom^*_\secp,(\rho_\secp,\{\cm_{i,\secp}\}_{i \in I})\}_{\secp \in \mathbb{N}}$. This view includes the messages committed and the final state of the committer, which is the state of $\zkV^*$ after Step 2 of the protocol. $\D'$ proceeds to simulate the rest of the interaction between $\zkP$ and $\zkV^*$, making use of the witnesses it received as non-uniform advice during Steps 3 and 7, as well as the committed messages it received from its challenger to implement the check introduced in $\hyb_1$. If $\zkV^*$ aborts or fails to prove any of the WI statements in Step 5, $\D'$ outputs 0. Otherwise, it queries $\D$ with $\zkV^*$'s final view and outputs what $\D'$ outputs.
    
    Observe that $\D'$'s advantage is equivalent to $\D$'s advantage. This follows becuase i) whenever $\D'$ queries $\D$ with a transcript, it is a faithful execution of either $\hyb_1$ or $\hyb_2$, depending on whether $\PECom$ was simulated or not, and ii) whenever $\D'$ does \emph{not} query $\D$, it means that $\zkV^*$ failed to prove one of its WI statements, so $\D$'s input would have been $\bot$. Finally, $\D'$ is compliant by the statistical soundness of the WI.

    \item $\hyb_2 \approx_c \hyb_3$: This follows from the special zero-knowledge property of $\Sigma$.
    \item $\hyb_3 \approx_c \hyb_4$: This follows from the computational hiding of $\COM$.
    \item $\hyb_4 \approx_c \hyb_5$: This follows from the witness indistinguishability of WI.
\end{itemize}

\end{proof}

\begin{lemma}
\label{claim:ind-abort}
For any $\zkV^*$, $\mathsf{RealAbort}_\bot(\zkV^*) \approx_c \zkSimAbort(\zkV^*)$.
\end{lemma}
\begin{proof}
We prove this via a sequence of hybrids.
\begin{itemize}
    \item $\hyb_0$: $\mathsf{RealAbort}_\bot(\zkV^*)$. Note that if $\zkV^*$ has not aborted at some point during Steps 1-6 of the protocol, this distribution outputs $\bot$.
    \item $\hyb_1$: Same as $\hyb_0$ except that in Step 3, for each $i \in I$, $\zkP$ sends $\alpha_i$ where  $(\alpha_i,\gamma_i) \gets \zkSim(x_i,0^\secp)$.
    \item $\hyb_2$: Same as $\hyb_1$ except that each commitment for $i \in I$ in the first message sent by the prover is $\cm_i \gets \COM(1^\secp,0)$.
    \item $\hyb_3$: Same as $\hyb_2$ except that the WI proofs for $i \in I$ given by the prover are generated with witnesses showing that $\cm_i$ is a valid commitment.
\end{itemize}
Observe that $\hyb_3$ is exactly $\zkSimAbort(\zkV^*)$. Now we show that each consecutive pair of hybrids in indistinguishable.
\begin{itemize}
    \item $\hyb_0 \approx_c \hyb_1$: This follows from the first-message indistinguishability of $\Sigma$.
    \item $\hyb_1 \approx_c \hyb_2$: This follows from the computational hiding of $\COM$.
    \item $\hyb_2 \approx_c \hyb_3$: This follows from the witness indistinguishability of WI.
\end{itemize}
\end{proof}

\noindent To finish the proof of zero-knowledge, we introduce some more notation.

\begin{itemize}
    \item Let $\mathsf{Pr}^{\mathsf{Abort}}_{\mathsf{Real}}(\zkV^*)$ be the probability that $\zkV^*$ aborts in the real interaction with the honest prover.
    \item Let $\mathsf{Pr}^{\mathsf{Abort}}_{\mathsf{SimNoAbort}}(\zkV^*)$ be the probability that $\zkV^*$ aborts in $\zkSimNoAbort$ (i.e. the outcome is $\bot$).
    \item Let $\mathsf{Pr}^{\mathsf{Abort}}_{\mathsf{SimAbort}}(\zkV^*)$ be the probability that $\zkV^*$ aborts in $\zkSimAbort$ (i.e. the outcome is not $\bot$).
    \item Let $\zkSimNoAbort(\zkV^*) \coloneqq \{\zkSimNoAbort(1^\secp,n,I,\{x_i\}_{i \in [n]},\zkV^*_\secp,\rho_\secp)\}_{\secp,\{x_i\}_{i \in [n]},\{w_i\}_{i \in [n]}}$.
    \item Let $\zkSimAbort(\zkV^*) \coloneqq \{\zkSimAbort(1^\secp,n,I,\{x_i\}_{i \in [n]},\zkV^*_\secp,\rho_\secp)\}_{\secp,\{x_i\}_{i \in [n]},\{w_i\}_{i \in [n]}}$.
    \item Let $\mathsf{SimComb}_\bot(\zkV^*) \coloneqq \{\mathsf{SimComb}_\bot(1^\secp,1^n,I,\{x_i\}_{i \in [n]},\zkV^*_\secp,\rho_\secp)\}_{\secp,\{x_i\}_{i \in [n]},\{w_i\}_{i \in [n]}}$.
    \item Let $\zkSim(\zkV^*) \coloneqq \{\zkSim(1^\secp,1^n,I,\{x_i\}_{i \in [n]},\zkV^*_\secp,\rho_\secp)\}_{\secp,\{x_i\}_{i \in [n]},\{w_i\}_{i \in [n]}}$.
    \item Let $\mathsf{RealNoAbort}(\zkV^*)$ be the distribution $\mathsf{Real}(\zkV^*)$ conditioned on there not being an abort.
    \item Let $\mathsf{RealAbort}(\zkV^*)$ be the distribution $\mathsf{Real}(\zkV^*)$ conditioned on there being an abort.
    \item Let $\mathsf{SimNoAbort}(\zkV^*)$ be the distribution $\zkSimNoAbort$ conditioned on there not being an abort (i.e. conditioned on the output not being $\bot$).
    \item Let $\mathsf{SimAbort}(\zkV^*)$ be the distribution $\zkSimAbort$ conditioned on there being an abort (i.e. conditioned on the output not being $\bot$).
    \item Let $\mathsf{SimComb}(\zkV^*)$ be the distribution $\mathsf{SimComb}_\bot$ conditioned on the output not being $\bot$.
\end{itemize}

Following~\cite{BS20}, we show that $\mathsf{Real}(\zkV^*) \approx_c \mathsf{SimComb}(\zkV^*)$ via a sequence on hybrids. In particular, we show that 

\begin{align*}
    \mathsf{Real}(\zkV^*) &\substack{(1) \\ \equiv \\ \ } (1-\mathsf{Pr}^\mathsf{Abort}_\mathsf{Real}(\zkV^*))\mathsf{RealNoAbort}(\zkV^*) + (\mathsf{Pr}^\mathsf{Abort}_\mathsf{Real}(\zkV^*))\mathsf{RealAbort}(\zkV^*)\\
    &\substack{(2) \\ \approx_s \\ \ }(1-\mathsf{Pr}^\mathsf{Abort}_\mathsf{SimNoAbort}(\zkV^*))\mathsf{RealNoAbort}(\zkV^*) + (\mathsf{Pr}^\mathsf{Abort}_\mathsf{SimAbort}(\zkV^*))\mathsf{RealAbort}(\zkV^*)\\
    &\substack{(3) \\ \approx_c \\ \ } (1-\mathsf{Pr}^\mathsf{Abort}_\mathsf{SimNoAbort}(\zkV^*))\mathsf{SimNoAbort}(\zkV^*) + (\mathsf{Pr}^\mathsf{Abort}_\mathsf{SimAbort}(\zkV^*))\mathsf{RealAbort}(\zkV^*)\\
    &\substack{(4) \\ \approx_c \\ \ } (1-\mathsf{Pr}^\mathsf{Abort}_\mathsf{SimNoAbort}(\zkV^*))\mathsf{SimNoAbort}(\zkV^*) + (\mathsf{Pr}^\mathsf{Abort}_\mathsf{SimAbort}(\zkV^*))\mathsf{SimAbort}(\zkV^*)\\
    &\substack{(5) \\ \approx_s \\ \ }\mathsf{SimComb}(\zkV^*),
\end{align*}
    
where 

\begin{enumerate}
    \item The equality $(1)$ follows by definition.
    \item The indistinguishability $(2)$ follows as a corollary of~\cref{claim:ind-non-abort} and~\cref{claim:ind-abort}. Indeed, $\mathsf{RealNoAbort}_\bot(\zkV^*) \approx_c \zkSimNoAbort(\zkV^*)$ in particular implies that the difference in the probability that the verifier aborts in the real interaction versus the simulated interaction is negligible, and likewise for $\mathsf{RealAbort}_\bot(\zkV^*) \approx_c \zkSimAbort(\zkV^*)$.
    \item The indistinguishability $(3)$ follows as a corollary of~\cref{claim:ind-non-abort}. This can be seen by considering two cases. First, if the probability that the verifier aborts in the real interaction is negligible, then $\mathsf{RealNoAbort}(\zkV^*) \approx_c \mathsf{SimNoAbort}(\zkV^*)$ directly follows from~\cref{claim:ind-non-abort}, and the indistinguishability follows. Otherwise, this probability is non-negligible, meaning that $\mathsf{RealAbort}(\zkV^*)$ is efficiently sampleable. Thus, a reduction to~\cref{claim:ind-non-abort} can sample from the distribution $\mathsf{RealAbort}(\zkV^*)$ whenever it receives $\bot$ from its challenger.\footnote{A more formal analysis of this can be found in~\cite[Proposition~3.4]{BS20}.}
    \item The indistinguishability $(4)$ follows as a corollary of~\cref{claim:ind-abort} via a similar analysis as the last step.
    \item The indistinguishability $(5)$ follows from the definition of $\zkSim(\zkV^*)$ and the claim that the difference between $\mathsf{Pr}^\mathsf{Abort}_\mathsf{SimNoAbort}$ and $\mathsf{Pr}^\mathsf{Abort}_\mathsf{SimAbort}$ is negligible (which follows as a corollary of~\cref{claim:ind-non-abort} and~\cref{claim:ind-abort}).
\end{enumerate}

Finally, this implies that $\mathsf{Real}(\zkV^*) \approx_c \zkSim(\zkV^*)$ by applying~\cref{lemma:rewinding} for each $(\secp,\{x_i\}_{i \in [n]},\{w_i\}_{i \in [n]})$ with the following parameters. Set $\Q \coloneqq \overline{\mathsf{SimComb}}_\bot$ as defined in the description of $\simulator$, and set $\epsilon \coloneqq \negl(\secp) + 2^{-\secp \cdot \frac{3}{4}},p_0 \coloneqq 1/4,$ and $q \coloneqq 1/2$, as described in~\cite[Proposition 3.5]{BS20}. This completes the proof of quantum computational zero-knowledge.

\end{proof}

\section{Quantum-Secure Non-Malleable Commitments}

\label{sec:nmc}
\subsection{Definition}
In this section, we define quantum-secure non-malleable commitments w.r.t. commitment.
We consider the synchronous setting where there is a quantum man-in-the-middle adversary $\mim = \{\mim_\lambda, \rho_\lambda\}_{\lambda \in \bbN}$ interacting with a classical honest committer $\cC$ with tag $\tagg_\cC$ (where $\cC$ commits to value $v$) in the left session, and interacting with classical honest receiver $\cR$ in the right session. The \mim uses tag $\tagg_\mim$ in its interaction with \cR. Prior to the interaction, the value $v$ is given to $\cC$ as local input.

Then the commit phase is executed. 
After obtaining an honest left message in any round, the \mim sends its own right message. 
And after obtaining an honest right message in any round, the \mim sends its own left message.
Let $\viewval_{\mim_\lambda} \dist{\cC(v), \mim(\rho_\lambda), \cR} (1^\lambda, \tagg_\cC, \tagg_\mim)$ denote a random variable that describes the value $v'$ committed by the \mim in the right session, jointly with the view of the \mim in the full (both left and right sessions) experiment. 
If the $\tagg_\cC$ used by $\cC$ in the left interaction is identical to the $\tagg_\mim$ used by the \mim in the right interaction, then the value $v'$ committed to in the right interaction is defined to be $\bot$. If the \mim sends a message that causes an honest party to abort in either the left or the right execution, then the value $v'$ committed to in the right interaction is also defined to be $\bot$.

We will concern ourselves with computationally hiding and statistically binding commitments that additionally satisfy the non-malleability property defined below.

\begin{definition}
[Quantum-Secure Non-Malleable Commitments with respect to Commitment] \label{def:nmc}
For any $\ell = \ell(\secp)$ and $p = p(\secp)$, a commitment scheme $\dist{\cC, \cR}$ is said to be quantum secure non-malleable with respect to commitment for tags in $[\ell]$ if 
for every $v_1, v_2 \in \{0, 1\}^{2p(\secp)}$, for every quantum polynomial-size $\mim = \{\mim_\secp, \rho_\secp\}_{\secp \in \bbN}$ and every quantum polynomial-size distinguisher $\eD = \{\eD_\secp, \sigma_\secp\}_{\secp \in \bbN}$, there exists a negligible function $\eta(\cdot)$ such that for all large enough $\secp \in \bbN$, and for all $\tagg_\cC, \tagg_\mim \in [\ell]$ where $\tagg_\cC \ne \tagg_\mim$, the following holds:
\begin{align}
& \Big|
\Pr[\eD_\secp \big( \viewval_{\mim_\secp} \dist{\cC(v_1), \mim_\secp(\rho_\secp), \cR} (1^\secp, \tagg_\cC, \tagg_\mim), \sigma_\secp \big) = 1] \nonumber \\
& - \Pr[\eD_\secp \big( \viewval_{\mim_\secp} \dist{\cC(v_2), \mim_\secp(\rho_\secp), \cR} (1^\secp, \tagg_\cC, \tagg_\mim), \sigma_\secp \big) = 1] \Big|
 = \eta(\secp)
\end{align}
\end{definition}

We will also consider a more general setting where the \mim interacts with polynomially many committers in the left session, and a single honest receiver in the right session. 
For any polynomial $n = n(\secp)$ number of left sessions, we will let $\viewval_{\mim_\lambda} \dist{\cC(\{v_i\}_{i \in [n]}), \mim(\rho_\lambda), \cR} (1^\lambda, \tagg_\cC, \tagg_\mim)$ denote a random variable that describes the value $v'$ committed by the \mim in the right session, jointly with the view of the \mim in the full (both left and right sessions) experiment. 

\begin{definition}
[Many-one Quantum-Secure Non-Malleable Commitments with respect to Commitment] \label{def:many-nmc}
For any $\ell = \ell(\secp), p = p(\secp)$ and $n = n(\secp)$, a commitment scheme $\dist{\cC, \cR}$ is said to be quantum secure many-one non-malleable with respect to commitment for tags in $[\ell]$ if 
for every pair of tuples $(\{v_i^1\}_{i \in [n]}), (\{v_i^2\}_{i \in [n]}) \in \{0, 1\}^{2np(\secp)}$, for every quantum polynomial-size $\mim = \{\mim_\secp, \rho_\secp\}_{\secp \in \bbN}$ and every quantum polynomial-size distinguisher $\eD = \{\eD_\secp, \sigma_\secp\}_{\secp \in \bbN}$, there exists a negligible function $\eta(\cdot)$ such that for all large enough $\secp \in \bbN$, and for all $(\{\tagg_i^\cC\}_{i \in [n]}), \tagg^\mim$ where each tag is in $[\ell]$ such that $\tagg^\mim \not\in \{\tagg_i^\cC\}_{i \in [n]}$, the following holds:
\begin{align}
& \Big|
\Pr[\eD_\secp \big( \viewval_{\mim_\secp} \dist{\cC(\{v_i^1\}_{i \in [n]}), \mim_\secp(\rho_\secp), \cR} (1^\secp, \tagg_\cC, \tagg_\mim), \sigma_\secp \big) = 1] \nonumber \\
& - \Pr[\eD_\secp \big( \viewval_{\mim_\secp} \dist{\cC(\{v_i^2\}_{i \in [n]}), \mim_\secp(\rho_\secp), \cR} (1^\secp, \tagg_\cC, \tagg_\mim), \sigma_\secp \big) = 1] \Big|
 = \eta(\secp)
\end{align}
\end{definition}

\subsection{Non-Malleable Commitments for Small Tags}

First, we provide an overview of our scheme for tags in $[N]$ where $N = \secp^{\mathsf{ilog}(c+1,\secp)}$. We will assume non-interactive perfectly binding commitments and two-message SFE which can be broken with advantage at most $\negl(\secp^{\mathsf{ilog}(c,\secp)})$ by polynomial size quantum circuits.
Recall that as discussed in the technical overview, we will have the committer and receiver establish an erasure channel via a two-party input-hiding SFE. This channel will transmit the committer's value with probability $\epsilon$, depending on their $\tagg$.
Here, we discuss our construction in more detail.

The committer on input $m \in \{0,1\}^{p(\secp)}$ sends a perfectly binding, computationally hiding commitment to $m$, denoted by $\COM(m)$. 
Next, the committer and receiver run an SFE execution, where the receiver input is a uniformly random $r_1$ and committer input is $m$ along with uniformly random $s_1$ that is of the same length as $r_1$,
and $m$ is transmitted to the receiver if and only if $s_1 = r_1$. 
The length of $r_1$ and $s_1$ is carefully chosen so that the probability that they are equal is $\eta^{-\tagg}$, where $\eta = \secp^{\mathsf{ilog}(c+1,\secp)}$ is a small superpolynomial value.
Additionally, the SFE scheme is such that evaluations of agreeing circuits are (subexponentially) statistically close.
This essentially means that no matter how a malicious committer or receiver may behave, the message $m$ is revealed with probability close to $\eta^{-\tagg}$ (with an error of $\negl(\eta^{\tagg})$).

Now, let us consider a setting where the \mim uses $\tagg_\mim$ and honest committer uses $\tagg_\cC$ such that $\tagg_\mim < \tagg_\cC$.
In this case, the SFE statistically hides the committed message except with probability roughly $\eta^{-\tagg_\cC}$, and on the other hand, the \mim's message is revealed with probability roughly $\eta^{-\tagg_\mim}$, which is greater than $\eta^{-\tagg_\cC}$. 
Intuitively, this means that any \mim that tries to maul or copy the committed message cannot succeed, at least in executions where the \mim's value was revealed but the honest committer's was not. 
We generalize this to all transcripts by relying on the fact that no \mim can actually tell whether the \mim's value was revealed, 
and therefore cannot behave any differently in transcripts where extraction occured vs where it didn't.

Formally, we will prove that the joint distribution $\cV_1$ of the \mim's view and committed value when the honest commitment is to $v_1$, is indistinguishable from the joint distribution $\cV_2$ when the honest commitment is to $v_2$. This is done as follows.
\begin{itemize}
\item First, we use the input-hiding property of SFE to argue that any distinguisher $\eD$ that distinguishes $\cV_1$ from $\cV_2$, {\em must also distinguish these distributions when restricted to executions where $s_1 = r_1$ in the right execution} where the \mim is the committer. We prove that if this is not the case, then $\eD$ can be used to guess the input $r_1$ of the honest receiver in the right execution, contradicting the input-hiding property of SFE.
\item Once this is established, we restrict ourselves to transcripts where $s_1 = r_1$ in the right execution.
\item We rely on our setting of parameters to ensure that the transcripts where $s_1 = r_1$ in the left execution can only form a negligible fraction of all transcripts where $s_1 = r_1$ in the right execution.
\item Roughly, this means that for an overwhelming fraction of transcripts where $s_1 = r_1$ in the right SFE execution, the left SFE execution perfectly erases the honest committer's message.
\item As a result, any $\eD$ that distinguishes between the distributions $\cV_1$ and $\cV_2$, also distinguishes between these distributions when restricted to $s_1 = r_1$ in the right execution. We note that conditioned on $s_1 = r_1$, the \mim's message can be efficiently extracted, and therefore $\eD$ can be used to carefully break the (super-polynomial) hiding of the commitment $\COM$.
\end{itemize}
This completes a sketch of our argument when $\tagg_\mim < \tagg_\cC$. 
In case $\tagg_\mim > \tagg_\cC$, this argument does not go through, since the honest committer's message is revealed with probability that is larger than the \mim's message.
To deal with this situation, we append another sequential instance of SFE to our commitment, where the probability of extraction varies as a function of $2N - \tagg$, instead of as a function of $\tagg$. This means that a committer with $\tagg_\cC$ will run two instances of SFE, one which transmits the committed message with probability $\eta^{-\tagg_\cC}$, and another that transmits it with probability $\eta^{2N - \tagg_\cC}$.
Now, for $\tagg_\mim \neq \tagg_\cC$, in at least one of these sessions, the probability that the \mim's message is revealed will be larger than the probability that the committer's message is revealed. 
Moreover, since all these probabilities of revealing messages are negligible, the other session will not reveal the committer's message except with negligible probability, and therefore, we can switch to a hybrid where the other session never outputs the committer's message.
Finally, since proving security against synchronous \mim adversaries suffices for our applications, we only focus on formally proving synchronous security here, but we suspect that similar arguments would suffice to prove security of our construction against non-synchronous adversaries.

\subsubsection{Construction}
\paragraph{Ingredients and notation:} We will assume the existence of
\begin{itemize}
\item
  A non-interactive perfectly-binding quantum computationally hiding commitment scheme $\COM$ where there exists a constant $c_1 > 0$ s.t. no QPT adversary has advantage better than $\negl(\secp^{\mathsf{ilog}(c_1,\secp)})$ in the hiding game.
\item
 A two-message SFE satisfying Definition \ref{def:sfe2}. This means that there exists a constant $c_2 > 0$ s.t. no QPT adversary has advantage better than $\negl(\secp^{\mathsf{ilog}(c_2,\secp)})$ in the quantum input privacy game.
 \item A quantum-secure zero-knowledge argument for NP ($\zk.\zkP,\zk.\zkV$). (We do not require multi-verifier zero-knowledge for this section.)
\end{itemize} 
Let $c = \mathsf{max}(c_1, c_2), \eta = \secp^{\mathsf{ilog}(c+1, \secp)}$, and $N = \mathsf{ilog}(c+1, \secp)$. We describe the protocol for tags or identities in $[N]$ in \proref{fig:basic_nmcom}. Also, define the language
\begin{gather*}
 \mathcal{L} = \left\{\left(\begin{array}{c}\mathsf{c}, \ct_1, \ct_2, \ct_1', \\\ct_2', \tagg, N\end{array}\right): \exists \left(\begin{array}{c}m, r, s_1, s_2, \\ u_1, u_2\end{array}\right) \text{ s.t. } \begin{array}{l}
 |s_1| = \tagg \cdot (\log \eta),\\ |s_2| = (2N - \tagg) \cdot (\log \eta),\\
\mathsf{c} = \COM(1^\secp, m; r),\\
 \ct_1 = \sfeeval (\CC{\mathsf{Id}(\cdot)}{s_1}{(m||r)}, \ct_1'; u_1 ),\\ \ct_2 = \sfeeval (\CC{\mathsf{Id}(\cdot)}{s_2}{(m||r)}, \ct_2'; u_2 ) \end{array}\right\}.
 \end{gather*}
where $\mathsf{Id}(\cdot)$ denotes the identity function.

\protocol
{\proref{fig:basic_nmcom}}
{A constant round non-malleable commitment for tags in $[N]$, where $N = \mathsf{ilog}(c+1,\lambda)$.}
{fig:basic_nmcom}
{
\begin{description}
    \item[Common Input:] $1^\secp$ and a $\tagg \in [N]$.
    Set $t_1 = \tagg \cdot (\log \eta)$, and $t_2 = (2N - \tagg) \cdot (\log \eta)$.
    \item[$\eC$'s Input:] A message $m \in \{0,1\}^{p(\secp)}$.
    \item[Commit Stage:]
\end{description}

\begin{enumerate}
\item
$\eC$ samples $r \leftarrow U_\secp$, and sends $\mathsf{c}_1 = \COM(1^\secp,m;r)$.

\item 
$\eR$ samples $\dk_1 \leftarrow \sfegen(1^\secp), r_1 \leftarrow U_{t_1}$, and 
 sends $\ct_{1,\eR} \leftarrow \sfeenc_{\dk_1}(r_1)$.
\item $\eC$ samples $s_1 \leftarrow U_{t_1}$, $u_1 \leftarrow U_\secp$ and
sends $\ct_{1} = \sfeeval \Big(\CC{\mathsf{Id}(\cdot)}{s_1}{(m||r)}, \ct_{1,\eR}; u_1 \Big)$ and $s_1$, where $\mathsf{Id}(\cdot)$ is the identity function.

\item 
$\eR$ samples $\dk_2 \leftarrow \sfegen(1^\secp), r_2 \leftarrow U_{t_2}$, and 
 sends $\ct_{2,\eR} \leftarrow \sfeenc_{\dk_2}(r_2)$.
\item $\eC$ samples $s_2 \leftarrow U_{t_2}$, $u_2 \leftarrow U_\secp$ and sends $\ct_2 \leftarrow \sfeeval \Big(\CC{\mathsf{Id}(\cdot)}{s_2}{(m||r)}, \ct_{2,\eR}; u_2 \Big)$ and $s_2$, where $\mathsf{Id}(\cdot)$ is the identity function.

\item $\eC$ runs $\zk.\zkP(x,w)$ and $\eR$ runs $\zk.\zkV(x)$ in an execution of $\zk$ (with common input $(1^\secp)$) for language $\mathcal{L}$ (defined above), where $x = (\mathsf{c}_1, \ct_1, \ct_2, \ct_{1,\eR}, \ct_{2,\eR}, \tagg, N)$ and $w = (m, r, s_1, s_2,u_1,u_2)$.
\end{enumerate}
}

\subsubsection{Analysis}
In the reveal stage, the committer outputs $(m, r)$ and the receiver accepts the decommitment if $\mathsf{c}_1 = \COM(1^\secp,m;r)$.
Perfect binding follows due to the perfect binding property of $\COM$, and hiding follows by non-malleability, which we formally prove below.
\begin{lemma}
\proref{fig:basic_nmcom} is a non-malleable commitment according to \defref{def:nmc} for tags in $[N]$.
\end{lemma}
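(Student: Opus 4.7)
Without loss of generality, I treat the case $\tagg_\mim < \tagg_\cC$; the case $\tagg_\mim > \tagg_\cC$ is handled symmetrically by using the second SFE execution, whose transmission probability $\eta^{-(2N-\tagg)}$ moves in the opposite direction with $\tagg$. In the asymmetric regime at hand the right-side transmission probability $\epsilon_R := \eta^{-\tagg_\mim}$ is super-polynomially larger than the left-side leakage probability $\epsilon_L := \eta^{-\tagg_\cC}$, since $\epsilon_R/\epsilon_L \geq \eta = \secp^{\mathsf{ilog}(c+1,\secp)}$. The plan is to leverage this extraction gap to reduce any successful distinguisher to quantum computational hiding of $\COM$.

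Suppose for contradiction a QPT distinguisher $\eD$ has noticeable advantage $\delta$. The committed value $v'$ in $\viewval\dist{\cC(v_b),\mim,\cR}$ is defined by perfect binding of $\COM$ but is not in general efficiently recoverable, which is the principal obstacle to a straight-line reduction. To bypass it, I play the role of the honest right receiver (so the reduction knows $(\dk_1^R, r_1^R)$) and build an efficient proxy: whenever the event $T_R := [s_1^{*,R} = r_1^R]$ occurs, decrypt the right SFE1 output to recover $(m^*, r^*)$, check that $\mathsf{c}_1^R = \COM(m^*; r^*)$, and set $v'_{\mathrm{ext}} := m^*$; otherwise set $v'_{\mathrm{ext}} := \bot$. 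Define the efficient distinguisher $\eD^\star(\view) := \eD(\view, v'_{\mathrm{ext}}) \cdot \mathbf{1}[T_R]$. By perfect binding of $\COM$ together with statistical soundness of the right ZK argument, $v'_{\mathrm{ext}}$ equals $v'$ on $T_R$ up to negligible error, so $\eD^\star$ agrees with the definition-level distinguisher on $T_R$.

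The bulk of the argument is a hybrid sequence that morphs the left committer from $\cC(v_1)$ to $\cC(v_2)$ while preserving the right-session extraction and tracking the change in $\eD^\star$'s output: (i) replace the left ZK with its quantum-secure simulator (computational); (ii) simulate the left SFE1 evaluation via statistical circuit privacy applied to the input $r_1^{*,L}$ extracted from $\mim$'s left SFE1 receiver message; (iii) resample $s_1^L$ uniformly from $\{0,1\}^{t_1} \setminus \{r_1^{*,L}\}$ at statistical cost $\epsilon_L$, which is super-polynomially small; (iv) drop the $(m, r)$ dependence in the simulated SFE1 output since the hardwired circuit now deterministically returns $\bot$; (v) repeat (ii)--(iv) for the left SFE2, incurring further statistical loss $\eta^{-(2N - \tagg_\cC)} \leq \eta^{-N}$; (vi) switch $\mathsf{c}_1^L$ from $\COM(v_1)$ to $\COM(v_2)$ via quantum computational hiding of $\COM$. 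Summing the (super-polynomially small statistical) hybrid losses with the $\negl(\secp^{\mathsf{ilog}(c_1,\secp)})$ loss from (vi) bounds $\eD^\star$'s total change in advantage across the chain. Separately, a transfer argument based on quantum input privacy of the right SFE1 shows that $T_R$ is computationally nearly independent of $\mim$'s view, so $\eD^\star$ inherits advantage $\Omega(\epsilon_R \cdot \delta)$ from $\eD$. Since $\epsilon_R \cdot \delta$ is super-polynomially larger than the accumulated losses by the chosen parameters ($\epsilon_R \geq 1/\eta$ beats $\negl(\secp^{\mathsf{ilog}(c_1,\secp)})$), a contradiction ensues.

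The hard part will be executing the transfer argument rigorously despite the inefficiently-defined $v'$ sitting inside $\viewval\dist{\cC(v_b),\mim,\cR}$: SFE1 input privacy only bounds the dependence of \emph{efficient} observables on $r_1^R$, so one cannot apply it directly to the observable $[\eD(\view, v') = 1]$. The resolution is that $\eD^\star$ uses only the efficient proxy $v'_{\mathrm{ext}}$ and outputs $0$ outside $T_R$, so one first substitutes $v'_{\mathrm{ext}}$ for $v'$ throughout (the two agree on $T_R$ by perfect binding plus right-ZK soundness, and on $\neg T_R$ the multiplication by $\mathbf{1}[T_R]$ discards the value) and only then invokes SFE1 input privacy via a fully efficient reduction. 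Combined with the use of \emph{both} SFE executions to guarantee an extraction gap for either ordering of $\tagg_\mim$ and $\tagg_\cC$, this careful handling of the inefficient $v'$ is the key technical crux of the proof.
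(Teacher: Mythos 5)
Your plan follows the paper's approach closely: an extraction gap via the two SFE executions, an efficient proxy for the right-side committed value that is activated on the event $T_R$, a ZK-simulate / SFE-circuit-privacy / $\COM$-hiding hybrid chain on the left, and a transfer argument via SFE input privacy of the right session showing that $T_R$ is computationally near-independent of the distinguisher's decision. Your identification of the ``inefficient $v'$'' issue and the proxy/transfer resolution is precisely the technical crux the paper navigates.

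One detail in your description is off. The definition forces $v' = \bot$ whenever the \mim causes an abort --- in particular whenever its right ZK argument rejects --- but your proxy $v'_{\mathrm{ext}}$ is computed from the right SFE1 decryption and opening check alone, without consulting whether the right proof ultimately verifies. On transcripts where $T_R$ holds but the right proof later fails, you thus have $v'_{\mathrm{ext}} \neq \bot = v'$, so the claim ``$v'_{\mathrm{ext}}$ equals $v'$ on $T_R$ up to negligible error'' is false on that event, and $\eD^\star$ does not cleanly agree with the definition-level distinguisher. The paper sidesteps this by decomposing the argument into an abort case and a no-abort case, invoking the straight-line sub-simulators $\zkSimAbort$ and $\zkSimNoAbort$ respectively from the ZK construction: in the abort case $v' \equiv \bot$ always and hiding of $\COM$ closes the argument on its own; in the no-abort case the right proof verifies, soundness guarantees explainability, and the extraction-then-transfer argument applies exactly as you describe. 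Your single-chain ``replace the left ZK with its quantum-secure simulator'' elides this split; either add an explicit right-proof check to $v'_{\mathrm{ext}}$ and argue the abort transcripts separately, or mirror the paper's abort/no-abort decomposition --- which is also the reason the paper composes the straight-line sub-simulators rather than the full rewinding $\zkSim$ inside its hybrids.
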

\begin{proof}
It suffices to show that for every $v_1, v_2 \in \{0,1\}^{2p(\secp)}$ and every $\mathsf{QPT}$ $\mim = \{\mim_\secp, \rho_\secp\}$ and $\eD = \{\eD_\secp, \rho_\secp\}$, there exists a negligible function $\eta(\cdot)$ such that for large enough $\secp \in \mathbb{N}$, for all $\tagg_{\cC}, \tagg_\mim \in [\ell]$ where $\tagg_{\cC} \neq \tagg_\mim$, the following holds.
\begin{align*}
& \Big|
\Pr[\eD_\secp \big( \viewval_{\mim_\secp} \dist{\cC(v_1), \mim_\secp(\rho_\secp), \cR} (1^\secp, \tagg_\cC, \tagg_\mim), \sigma_\secp \big) = 1] \nonumber \\
& - \Pr[\eD_\secp \big( \viewval_{\mim_\secp} \dist{\cC(v_2), \mim_\secp(\rho_\secp), \cR} (1^\secp, \tagg_\cC, \tagg_\mim), \sigma_\secp \big) = 1] \Big|
 = \eta(\secp)
\end{align*}
To that end,
we define the distributions 
$$\{\hyb_{v_1,\sigma_\secp} := \big( \viewval_{\mim_\secp} \dist{\cC(v_1), \mim_\secp(\rho_\secp), \cR} (1^\secp, \tagg_\cC, \tagg_\mim), \sigma_\secp \big)\}_{\secp \in \bbN},$$
$$\{\hyb_{v_2,\sigma_\secp} := \big( \viewval_{\mim_\secp} \dist{\cC(v_2), \mim_\secp(\rho_\secp), \cR} (1^\secp, \tagg_\cC, \tagg_\mim), \sigma_\secp \big)\}_{\secp \in \bbN}$$
We also define the following collections of random variables (each indexed by $\secp$). Each is defined w.r.t. a (fixed) adversary $\mim = \{\mim_\secp, \rho_\secp\}_{\secp \in \bbN}$, but we sometimes drop this adversary from notation for convenience. We will also sometimes condition on the \mim aborting in the left execution (where it acts as receiver). By this, we will refer to an execution where the \mim sends a message that causes an honest party to abort.

\begin{itemize}
    \item Let $\mathsf{Pr}^{\mathsf{Abort}}_{x}$ be the probability that $\mim$ aborts in $\hyb_{x,\sigma_\secp}$, where $x \in \{v_1, v_2\}$.
    \item Let $\hyb_{x,\sigma_\secp}^{\mathsf{No} \ \mathsf{Abort}}$ be the distribution $\hyb_{x,\sigma_\secp}$ conditioned on there not being an abort.
    \item Let $\hyb_{x,\sigma_\secp}^{\mathsf{Abort}}$ be the distribution $\hyb_{x,\sigma_\secp}$ conditioned on there being an abort. Note that by definition, in this distribution, the value committed by the \mim is always set to $\bot$.
\end{itemize}

The following distributions will not be used explicitly in the hybrids, but will be convenient to define for the proof.

\begin{itemize}
    \item Let $\hyb_{x,\sigma_\secp,\bot}$ be the distribution $\hyb_{x,\sigma_\secp}$ except whenever an abort occurs, the distribution outputs $\bot$.
    \item Let $\hyb_{x,\sigma_\secp,\bot}^{\mathsf{Abort}}$ be the distribution $\hyb_{x,\sigma_\secp}$ except whenever an abort does not occur, the distribution outputs $\bot$.
\end{itemize}

We show that $\{\hyb_{v_1,\sigma_\secp}\}_{\secp \in \bbN} \approx_c \{\hyb_{v_2,\sigma_\secp}\}_{\secp \in \bbN}$ via a sequence of hybrids. 
In particular, we show that 
\begin{align*}
    \hyb_{v_1,\sigma_\secp} &\substack{(1) \\ \equiv \\ \ } (1-\mathsf{Pr}^\mathsf{Abort}_{v_1})
    \hyb_{v_1, \sigma_\secp}^{\mathsf{No} \ \mathsf{Abort}} + (\mathsf{Pr}^\mathsf{Abort}_{v_1})
    \hyb_{v_1, \sigma_\secp}^{\mathsf{Abort}}\\
    &\substack{(2) \\ \approx_s \\ \ } (1-\mathsf{Pr}^\mathsf{Abort}_{v_2})
    \hyb_{v_1, \sigma_\secp}^{\mathsf{No} \ \mathsf{Abort}} + (\mathsf{Pr}^\mathsf{Abort}_{v_2})
    \hyb_{v_1, \sigma_\secp}^{\mathsf{Abort}}\\
    &\substack{(3) \\ \approx_c \\ \ } (1-\mathsf{Pr}^\mathsf{Abort}_{v_2})
    \hyb_{v_2, \sigma_\secp}^{\mathsf{No} \ \mathsf{Abort}} + (\mathsf{Pr}^\mathsf{Abort}_{v_2})
    \hyb_{v_1, \sigma_\secp}^{\mathsf{Abort}}\\
    &\substack{(4) \\ \approx_c \\ \ } (1-\mathsf{Pr}^\mathsf{Abort}_{v_2})
    \hyb_{v_2, \sigma_\secp}^{\mathsf{No} \ \mathsf{Abort}} + (\mathsf{Pr}^\mathsf{Abort}_{v_2})
    \hyb_{v_2, \sigma_\secp}^{\mathsf{Abort}}\\
    &\substack{(5) \\ \equiv \\ \ }\hyb_{v_2,\sigma_\secp},
\end{align*}
where 
\begin{enumerate}
    \item The equalities $(1)$ and $(5)$ follow by definition.
    \item The indistinguishability $(2)$ follows as a corollary of~\cref{clm:nm-base-na}. Indeed, $\hyb_{v_1, \sigma_\secp, \bot} \approx_c \hyb_{v_2, \sigma_\secp, \bot}$ in particular implies that the difference in the probability that the $\mim$ aborts in both executions is negligible.
    \item The indistinguishability $(3)$ follows as a corollary of~\cref{clm:nm-base-na}. This can be seen by considering two cases. First, if the probability that the $\mim$ aborts in $\hyb_{v_1,\sigma_\secp}$ is negligible, then $\hyb_{v_1,\sigma_\secp}^{\mathsf{No} \ \mathsf{Abort}} \approx_c \hyb_{v_2,\sigma_\secp}^{\mathsf{No} \ \mathsf{Abort}}$ directly follows from~\cref{clm:nm-base-na}, and the indistinguishability follows. Otherwise, this probability is non-negligible, meaning that $\hyb_{v_1,\sigma_\secp}^{\mathsf{Abort}}$ is efficiently sampleable. Thus, a reduction to~\cref{clm:nm-base-na} can sample from the distribution $\hyb_{v_1,\sigma_\secp}^{\mathsf{Abort}}$ whenever it receives $\bot$ from its challenger.\footnote{A more formal analysis of this can be found in~\cite[Lemma~3.2]{BS20}.}
    \item The indistinguishability $(4)$ follows as a corollary
    of~\cref{clm:nm-base-abort} via a similar analysis as the last step.
\end{enumerate}

\begin{claim}
\label{clm:nm-base-na}
$$\{\hyb_{v_1, \sigma_\secp, \bot}\}_{\secp \in \bbN} \approx_c \{\hyb_{v_2, \sigma_\secp, \bot}\}_{\secp \in \bbN}$$
\end{claim}
\begin{proof}
We will prove this claim via the following sequence of hybrids. We set some notation before defining these hybrids. We will set $t_1 = \tagg_\cC \cdot (\log \eta)$ and $t_2 = (2N - \tagg_\cC) \cdot (\log \eta)$.
We also set $t_1' = \tagg_\mim \cdot (\log \eta)$ and $t_2' = (2N - \tagg_\mim) \cdot (\log \eta)$.
As a general rule, when refering to some protocol variable $y$ in the left execution, we will use the variable as is (and denote it by $y$), and in the right execution, we will denote this variable by $y'$.\\


\noindent We let $\hyb_{v_1, \sigma_\secp, \bot} = \hyb_0$.\\

\noindent $\hyb_1:$ In this hybrid, the challenger executes the simulator $\pibs.\zkSimNoAbort(1^\secp, x_\secp, \zkV^*_\secp, \sigma_\secp^{(x_\secp)})$\footnote{Note that we drop the input $I$ since there is only one verifier in this setting.} for $\pibs$ on $\zkV^*_\secp$, which denotes a wrapper around the portion of the \mim that participates in Step 6 of the protocol, and an instance-advice distribution $(x_\secp, \sigma_\secp^{(x_\secp)})$ defined as follows:
\begin{itemize}
    \item Set the state of $\mim_\secp$ to $\rho_\secp$.
    \item Execute Steps 1-5 of the protocol the same way as in the experiment $\hyb_{v_1, \sigma_\secp, \bot}$, and set $(x, w, \cL)$ according to \proref{fig:basic_nmcom} on behalf of $\mathcal{C}$.
    \item Let $\sigma_\secp^{(x_\secp)}$ denote the joint distribution of the protocol transcript, the state of the \mim at the end of Step 5, and the value $v'$ committed by the \mim in Step 1.
\end{itemize}
If there is an abort during sampling, then output $\bot$.
Otherwise, the output of this hybrid is the output of $\pibs.\zkSimNoAbort$. 
By \cref{claim:ind-non-abort}, $$\hyb_0 \approx_c \hyb_1.$$
\noindent $\hyb_2:$ This is identical to $\hyb_1$ except the following change.

In Step 3, $\eC$ sends $\ct_1 = \sfeeval \Big(\CC{\mathsf{Id}(\cdot)}{s_1}{(0^{p(\secp)+\secp})}, \ct_{1,\eR}; u_1 \Big)$ and $s_1$. 
Here $(x_\secp, \sigma_\secp^{(x_\secp)})$ and $\zkV^*_\secp$ are defined identically to $\hyb_1$ except with the updated $\ct_1$ from Step 3, and the simulator $\pibs.\zkSimNoAbort(1^\secp, x_\secp, \zkV^*_\secp, \sigma_\secp^{(x_\secp)})$ is executed.
If there is an abort during sampling, then output $\bot$.
Otherwise, the output of this hybrid is the output of $\pibs.\zkSimNoAbort$.
We prove in Claim~\ref{claim:ab}, that $$\hyb_1 \approx_s \hyb_2.$$

\noindent $\hyb_3:$ This is identical to $\hyb_2$ except the following change.

In Step 5, $\eC$ sends $\ct_2 = \sfeeval \Big(\CC{\mathsf{Id}(\cdot)}{s_2}{(0^{p(\secp)+\secp})}, \ct_{2,\eR}; u_2 \Big)$ and $s_2$. 
Here $(x_\secp, \sigma_\secp^{(x_\secp)})$ and $\zkV^*_\secp$ are defined identically to $\hyb_2$ except with the updated $\ct_2$ from Step 5, and the simulator $\pibs.\zkSimNoAbort(1^\secp, x_\secp, \zkV^*_\secp, \sigma_\secp^{(x_\secp)})$ is executed.
If there is an abort during sampling, then output $\bot$.
Otherwise, the output of this hybrid is the output of $\pibs.\zkSimNoAbort$.
We prove in Claim~\ref{claim:bc}, that $$\hyb_2 \approx_s \hyb_3.$$

\noindent $\hyb_4:$ This is identical to $\hyb_3$ except the following change.

In Step 1, $\eC$ sets $\mathsf{c}_1 = \COM(1^\secp,0;r)$. 
Here $(x_\secp, \sigma_\secp^{(x_\secp)})$ and $\zkV^*_\secp$ are defined identically to $\hyb_3$ except with the updated $\mathsf{c}_1$ from Step 1, and the simulator $\pibs.\zkSimNoAbort(1^\secp, x_\secp, \zkV^*_\secp, \sigma_\secp^{(x_\secp)})$ is executed.
If there is an abort during sampling, then output $\bot$.
Otherwise, the output of this hybrid is the output of $\pibs.\zkSimNoAbort$.
We prove in Claim~\ref{claim:cd}, that $$\hyb_3 \approx_c \hyb_4.$$

\begin{claim}
\label{claim:ab}
$$\Delta(\hyb_1, \hyb_2) \leq 2^{-t_1} + \negl(2^{t_1})$$
\end{claim}
\begin{proof}
Note that the output of $\sfeeval \Big(\CC{\mathsf{Id}(\cdot)}{s_1}{(m||r)}, \ct_{1,\eR}; u_1 \Big)$ is identical in both hybrids, unless $s_1 = r_1$.
Denote by $\hyb_1'$ the distribution that is identical to $\hyb_1$ except it outputs $\bot$ when $s_1 = r_1$.
Denote by $\hyb_2'$ the distribution that is identical to $\hyb_2$ except it outputs $\bot$ when $s_1 = r_1$.
Now by statistical circuit privacy, we have that there exists a constant $c > 0$ such that $\Delta(\hyb'_1, \hyb'_2) \leq 2^{-\secp^{c}}$.

Finally, note that in each one of $\hyb_1, \hyb_2, \hyb_1', \hyb_2'$, 
$$\Pr[s_1 = r_1] \leq {2^{-t_1}}.$$
Thus we have,
\begin{align*}
    \Delta(\hyb_1, \hyb_2) 
    \leq 
    \Delta(\hyb'_1, \hyb'_2) + 2 \cdot \Pr[s_1 = r_1] 
    \leq 2^{-t_1} + 2 \cdot 2^{-\secp^c} 
    \leq 2^{-t_1} + \negl(2^{t_1})
\end{align*}
where the last equation follows by our setting of $t_1$.

\end{proof}

\begin{claim}
\label{claim:bc}
$$\Delta(\hyb_2, \hyb_3) \leq 2^{-t_2} + \negl(2^{t_2})$$
\end{claim}
\begin{proof}
The proof follows nearly identically to that of \cref{claim:ab}.
\end{proof}

\begin{claim}
\label{claim:cd}
$$\hyb_3 \approx_c \hyb_4$$
\end{claim}
\begin{proof}

Throughout this proof, we will use the notation $\Pr[\mathsf{E}|\hyb]$ to refer to the probability that event $\mathsf{E}$ occurs in the output of distribution $\hyb$.

Suppose $$\Pr[\mim \text{ aborts}|\hyb_3] = 1 - \negl(\secp).$$
Then by hiding of the commitment $\COM$ $$\Pr[\mim \text{ aborts}|\hyb_4] = 1 - \negl(\secp),$$
so both hybrids output $\bot$ except with negligible probability, and are therefore computationally indistinguishable.

Thus for the rest of this proof, we will assume that there exists a polynomial $p(\cdot)$ such that:
$$\Pr[\mim \text{ does not abort }|\hyb_3] \geq \frac{1}{p(\secp)}.$$
By the hiding of the commitment $\COM$, 
$$\Pr[\mim \text{ does not abort }|\hyb_4] \geq \frac{1}{p(\secp)} - \negl(\secp).$$

For $x \in [0,4]$, we will denote by $\hyb_x^{\na}$ the distribution $\hyb_x$ conditioned on the \mim not aborting.
Recall that whenever the $\mim$ aborts, the two hybrids output $\bot$. Therefore, it suffices to prove that 
$$\hyb_3^{\na} \approx_c \hyb_4^{\na}.$$

Now, recall that the variables $r_1',s_1',t_1'$ and so on refer to the \emph{right} execution in each experiment. For $r_1' \leftarrow U_{t_1'}$ and $s_1'$ sampled independently of $r_1'$, 
we have that $\Pr[s_1' = r_1'] = 2^{-t_1'}.$ Then, by quantum input privacy of SFE according to \defref{def:sfe2}, there exists a negligible function $\mu(\cdot)$ such that for any $x \in [0,4]$, $r_1' \leftarrow U_{t_1'}$ and $s_1'$ chosen by the \mim in $\hyb_x$,
\begin{equation}
\label{eq:abcd}
{2^{-t_1'}} - {\mu(\secp^{\mathsf{ilog}(c,\secp)})}\leq
\Pr[s_1' = r_1'|\hyb_x^{\na}] \leq 
{2^{-t_1'}} + {\mu(\secp^{\mathsf{ilog}(c,\secp)})}.
\end{equation}

Similarly, by quantum input privacy of SFE according to \defref{def:sfe2}, there exists a negligible function $\mu'(\cdot)$ such that for any $x \in [0,4]$, $r_2' \leftarrow U_{t_2'}$ and $s_2'$ output by the \mim in $\hyb_x$,
\begin{equation}
\label{eq:bcd}
{2^{-t_2'}} - {\mu'(\secp^{\mathsf{ilog}(c,\secp)})} \leq
\Pr[s_2' = r_2'|\hyb_x^{\na}] \leq 
{2^{-t_2'}} + {\mu'(\secp^{\mathsf{ilog}(c,\secp)})}
\end{equation}
\noindent Soundness of the ZK argument, together with setting $x = 0$ in equations~(\ref{eq:abcd}) and~(\ref{eq:bcd}) implies that for $i \in [2]$,
\begin{align}
\label{eq:efg}
& \Pr\Big[\sfedec_{\dk_i'}(\ct_i') \rightarrow (m',r') \text{ s.t. } \mathsf{c}'_1 = \COM(1^\secp, m'; r') \wedge (s_i' = r_i') \Big| \hyb_0^{\na} \Big] \nonumber \\
& \geq (1 - \negl(\secp)) \cdot \Big( {2^{-t_i'}} - {\mu(\secp^{\mathsf{ilog}(c,\secp)})} \Big).
\end{align}
Note that for $i \in [2]$, $\sfedec_{\dk_i'}(\ct_i') \rightarrow (m',r') \text{ s.t. } \mathsf{c}'_1 = \COM(1^\secp, m'; r') \wedge (s_i' = r_i')$ can be efficiently checked by a challenger that samples $\dk_i'$ and $r_i'$. 

Therefore, by combining \cref{claim:ind-non-abort} with equation~(\ref{eq:efg}), we have that for $i \in [2]$, 
\begin{align}
\label{eq:xy}
& \Pr\Big[\sfedec_{\dk_i'}(\ct'_i) \rightarrow (m',r') \text{ s.t. } \mathsf{c}'_1 = \COM(1^\secp, m'; r') \wedge (s_i' = r_i') \Big|  \hyb_1^{\na} \Big] \nonumber \\
& \geq (1 - \negl(\secp) - \negl(\secp)) \cdot  \Big( {2^{-t_i'}} - {\secp^{\mathsf{ilog}(c,\secp)}} \Big) \nonumber \\
& \geq (1 - \negl(\secp) ) \cdot  \Big( {2^{-t_i'}} - {\secp^{\mathsf{ilog}(c,\secp)}} \Big)
\end{align}
where the previous equation, for $i \in [2]$, follows by considering a non-uniform reduction to Claim~\ref{claim:ind-non-abort} that fixes any transcript where $\sfedec_{\dk'_i}(\ct'_i) \rightarrow (m',r') \text{ s.t. } \mathsf{c}'_1 = \COM(1^\secp, m'; r') \wedge (r'_i = s'_i)$.


Next, we split our analysis into two cases. Depending on whether $\tagg_\cC$ is greater or smaller than $\tagg_\mim$, one of the two cases will always be true.
\begin{itemize}
\item {\bf Case 1: $\tagg_\cC > \tagg_\mim$.}
In this case, $t_1 = \tagg_\cC \cdot (\log \eta)$, $t_1' = \tagg_\mim \cdot (\log \eta)$.

Now combining \cref{claim:ab} and equation~(\ref{eq:xy}) with $i$ set to $1$, implies:
\begin{align}
\label{eq:z}
& \Pr\Big[\sfedec_{\dk'_1}(\ct'_1) \rightarrow (m',r') \text{ s.t. } \mathsf{c}'_1 = \COM(1^\secp, m'; r') \wedge (s_1' = r_1') \Big| \hyb_2^{\na} \Big] \nonumber \\
& \geq (1 - \negl(\secp))  \cdot \Big( 2^{-t'_1} - \mu({\secp^{\mathsf{ilog}(c,\secp)}}) \Big) - 2^{-t_1} - \negl(2^{t_1})
\end{align}
Next, we will carefully combine equation~(\ref{eq:z}) with \cref{claim:bc}. 
First, we note that the check 
$\sfedec_{\dk'_1}(\ct'_1) \rightarrow (m',r') \text{ s.t. } \mathsf{c}'_1 = \COM(1^\secp, m'; r') \wedge (r'_1 = s'_1)$ is performed before Step 5.
Additionally, the only difference between $\hyb_2^{\na}$ and $\hyb_3^{\na}$ is in Step 5. 
As a result, for {\em every} (fixed) prefix of the transcript until Step 4, the distribution of Steps 5 and 6 generated according to $\hyb_2^{\na}$ is at most $2^{-t_2} + \negl(2^{-t_2})$-far from their distribution generated according to $\hyb_3^{\na}$. 
This implies:
\begin{align}
\label{eq:f}
& \Pr\Big[\sfedec_{\dk'_1}(\ct'_1) \rightarrow (m',r') \text{ s.t. } \mathsf{c}'_1 = \COM(1^\secp, m'; r') \wedge (s_1' = r_1') \Big| \hyb_3^{\na} \Big] \nonumber \\
& \geq (1 - {2^{-t_2}} - \negl(2^{t_2}))  \cdot
\Bigg( (1 - \negl(\secp)) \cdot \Big( 2^{-t'_1} - \mu({\secp^{\mathsf{ilog}(c,\secp)}}) \Big) - 2^{-t_1} - \negl(2^{t_1}) \Bigg) \nonumber \\
& \geq (1 - \negl(\secp) - {2^{-t_2}} - \negl(2^{t_2}))  \cdot \Big( 2^{-t'_1} - \mu({\secp^{\mathsf{ilog}(c,\secp)}}) \Big) - 2^{-t_1} + 2^{-t_1 - t_2} - \negl(2^{t_1}) \nonumber \\
& \geq (1 - \negl(\secp) )  \cdot \Big( 2^{-t'_1} - \mu({\secp^{\mathsf{ilog}(c,\secp)}}) \Big) - 2^{-t_1} - \negl(2^{t_1})
\end{align}
Combining equation~(\ref{eq:f}) with the $\negl(\secp^{\mathsf{ilog}(c, \secp)})$- hiding of the commitment $\COM$,
\begin{align}
\label{eq:g}
& \Pr\Big[\sfedec_{\dk'_1}(\ct'_1) \rightarrow (m',r') \text{ s.t. } \mathsf{c}'_1 = \COM(1^\secp, m'; r') \wedge (s_1' = r_1') \Big| \hyb_4^{\na} \Big] \nonumber \\
& \geq (1 - \negl(\secp))  \cdot \Big( 2^{-t'_1} - \mu({\secp^{\mathsf{ilog}(c,\secp)}}) \Big) - 2^{-t_1} - \negl(2^{t_1}) - \negl(\secp^{\mathsf{ilog}(c, \secp)}).
\end{align}
Combining equations~(\ref{eq:f}), (\ref{eq:g}) with equation~(\ref{eq:abcd}), for $x \in [3,4]$,
\begin{align}
\label{eq:h}
& \Pr\Big[\sfedec_{\dk'_1}(\ct'_1) \rightarrow (m',r') \text{ s.t. } \mathsf{c}'_1 = \COM(1^\secp, m'; r') \Big| (s_1' = r_1'), \hyb_x^{\na} \Big]\nonumber \\
& \geq 
\frac{(1 - \negl(\secp) )  \cdot \Big( 2^{-t'_1} - \mu({\secp^{\mathsf{ilog}(c,\secp)}}) \Big) - 2^{-t_1} - \negl(2^{t_1}) - \negl(\secp^{\mathsf{ilog}(c, \secp)})}{2^{-t_1'} + \mu(\secp^{\mathsf{ilog}(c,\secp)})} \nonumber \\
& \geq 1 - \negl(\secp)
\end{align}
where the last equation follows by recalling that 
$2^{-t_1'} = \eta^{-\tagg_\mim}, 2^{-t_1} = \eta^{-\tagg_\cC}$ and
$\tagg_\cC \geq (\tagg_\mim + 1)$, which implies
$$ 2^{-t_1} = \eta^{-\tagg_\cC} \leq \eta^{-\tagg_\mim - 1} = \frac{2^{-t_1'}}{\eta}$$
for $\eta = \secp^{\mathsf{ilog}(c+1,\secp)}$.

Let us assume towards a contradiction that there exists a quantum polynomial size distinguisher $\eD$ and a polynomial $\poly$ such that for large enough $\secp \in \bbN$: 
\begin{equation}
\label{eq:l}
\Pr[\eD = 1| \hyb_3^{\na}] - \Pr[\eD = 1|\hyb_4^{\na}] \geq \frac{1}{\poly(\secp)}.
\end{equation}

By quantum input privacy of the SFE scheme, for each $x \in [3,4]$, 
$$|\Pr[\eD = 1|(s_1' \neq r_1'), \hyb_x^{\na}] - \Pr[\eD = 1|(s'_1 = r'_1), \hyb_x^{\na}]| = \negl(\secp)$$
which combined with equation~(\ref{eq:l}) implies that there exists a polynomial $q(\cdot)$ such that
\begin{equation}
\label{eq:m}
\Pr[\eD = 1|(s'_1 = r'_1),\hyb_3^{\na}] - \Pr[\eD = 1|(s'_1 = r'_1),\hyb_4^{\na}] \geq \frac{1}{q(\secp)}
\end{equation}
Equations~(\ref{eq:h}) and~(\ref{eq:m}) together imply that:
\begin{align*}
& \Pr[\eD = 1 \wedge \sfedec_{\dk_1'}(\ct'_1) \rightarrow (m',r') \text{ s.t. } \mathsf{c}'_1 = \COM(1^\secp, m'; r') | (s'_1 = r'_1), \hyb_3^{\na}] \\
& - \Pr[\eD = 1 \wedge \sfedec_{\dk'_1}(\ct'_1) \rightarrow (m',r') \text{ s.t. } \mathsf{c}'_1 = \COM(1^\secp, m'; r') | (s'_1 = r'_1), \hyb_4^{\na}]\\
& \geq \frac{1}{q(\secp)} \cdot (1 - \negl(\secp) ) \geq \frac{1}{2q(\secp)}
\end{align*}
This, combined with equation~(\ref{eq:abcd}) gives a distinguisher that distinguishes $\COM(m)$ and $\COM(0)$ with advantage at least $2^{-2t_1'}$, contradicting the hiding of $\COM$ as desired. 

\item {\bf Case 2: $\tagg_\mim > \tagg_\cC$.}
A similar analysis as in Case 1 implies that for any distinguisher $\eD$ distinguishing $\hyb_3^{\na}$ from $\hyb_4^{\na}$, there exists a polynomial $q'(\cdot)$ such that
\begin{align*}
& \Pr[\eD = 1 \wedge \sfedec_{\dk_2'}(\ct'_2) \rightarrow (m',r') \text{ s.t. } \mathsf{c}'_1 = \COM(1^\secp, m'; r') | (s'_2 = r'_2), \hyb_3^{\na}] \\
& - \Pr[\eD = 1 \wedge \sfedec_{\dk'_2}(\ct'_2) \rightarrow (m',r') \text{ s.t. } \mathsf{c}'_1 = \COM(1^\secp, m'; r') | (s'_2 = r'_2), \hyb_4^{\na}]\\
& \geq \frac{1}{q'(\secp)}
\end{align*}
This, combined with equation~(\ref{eq:bcd}) gives a distinguisher that distinguishes $\COM(m)$ and $\COM(0)$ with advantage at least $2^{-2t_2'}$, contradicting the hiding of $\COM$ as desired.
\end{itemize}
\end{proof}

\cref{clm:nm-base-na} follows by observing that in $\hyb_4$, we erased all information about either one of $v_1$ or $v_2$. Therefore one can perform the above hybrids in reverse order, while arguing indistinguishability, until one ends up with $\hyb_{v_2, \sigma_\secp, \bot}$.
\end{proof}

\begin{claim}
\label{clm:nm-base-abort}
$$\{\hyb_{v_1, \sigma_\secp, \bot}^{\mathsf{Abort}}\}_{\secp \in \bbN} \approx_c \{\hyb_{v_2, \sigma_\secp, \bot}^{\mathsf{Abort}}\}_{\secp \in \bbN}$$
\end{claim}
\begin{proof}
Note that in this case, the value committed by the \mim is always $\bot$.
Therefore, proving indistinguishability of these distributions is significantly more straightforward than in \cref{clm:nm-base-na}. 
The proof again relies on a sequence of hybrid experiments, that we define below.
Recall that when refering to some protocol variable $y$ in the left execution, we will use the variable as is (and denote it by $y$), and in the right execution, we will denote this variable by $y'$.\\


\noindent We let $\hyb_{v_1, \sigma_\secp, \bot}^{\mathsf{Abort}} = \hyb_0$.\\

\noindent $\hyb_1:$ In this hybrid, the challenger executes one run of the simulator $\pibs.\zkSimAbort(1^\secp, x_\secp, \zkV^*_\secp, \sigma_\secp^{(x_\secp)})$\footnote{Note that we drop the input $I$ since there is only one verifier in this setting.} for $\pibs$ on $\zkV^*_\secp$, which denotes a wrapper around the portion of the \mim that participates in Step 6 of the protocol, and an instance-advice distribution $(x_\secp, \sigma_\secp^{(x_\secp)})$ defined as follows:
\begin{itemize}
    \item Set the state of $\mim_\secp$ to $\rho_\secp$.
    \item Execute Steps 1-5 of the protocol the same way as in the experiment $\hyb_{v_1, \sigma_\secp, \bot}$, and set $(x, w, \cL)$ according to \proref{fig:basic_nmcom} on behalf of $\mathcal{C}$.
    \item Let $\sigma_\secp^{(x_\secp)}$ denote the joint distribution of the protocol transcript, the state of the \mim at the end of Step 5, and the value $v'$ committed by the \mim in Step 1.
\end{itemize}
If there is an abort during sampling then output the transcript generated until the abort happens and the state of the \mim. Otherwise, the output of the hybrid is the output of $\pibs.\zkSimAbort(1^\secp, x_\secp, \zkV^*_\secp, \sigma_\secp^{(x_\secp)})$. 
By \cref{claim:ind-abort}, $$\hyb_0 \approx_c \hyb_1.$$
\noindent $\hyb_2:$ This is identical to $\hyb_1$ except the following change.

In Step 3, $\eC$ sends $\ct_1 = \sfeeval \Big(\CC{\mathsf{Id}(\cdot)}{s_1}{(0^{p(\secp)+\secp})}, \ct_{1,\eR}; u_1 \Big)$ and $s_1$. 
Here $(x_\secp, \sigma_\secp^{(x_\secp)})$ and $\zkV^*_\secp$ are defined identically to $\hyb_1$ except with the updated $\ct_1$ from Step 3, and the simulator $\pibs.\zkSimAbort(1^\secp, x_\secp, \zkV^*_\secp, \sigma_\secp^{(x_\secp)})$ is executed. If there is an abort during sampling then output the transcript generated until the abort happens and the state of the \mim. Otherwise, the output of the hybrid is the output of $\pibs.\zkSimAbort(1^\secp, x_\secp, \zkV^*_\secp, \sigma_\secp^{(x_\secp)})$.
We prove in Claim~\ref{claim:abp}, that $$\hyb_1 \approx_s \hyb_2.$$

\noindent $\hyb_3:$ This is identical to $\hyb_2$ except the following change.

In Step 5, $\eC$ sends $\ct_2 = \sfeeval \Big(\CC{\mathsf{Id}(\cdot)}{s_2}{(0^{p(\secp)+\secp})}, \ct_{2,\eR}; u_2 \Big)$ and $s_2$. 
Here $(x_\secp, \sigma_\secp^{(x_\secp)})$ and $\zkV^*_\secp$ are defined identically to $\hyb_2$ except with the updated $\ct_2$ from Step 5, and the simulator $\pibs.\zkSimAbort(1^\secp, x_\secp, \zkV^*_\secp, \sigma_\secp^{(x_\secp)})$ is executed. If there is an abort during sampling then output the transcript generated until the abort happens and the state of the \mim. Otherwise, the output of the hybrid is the output of $\pibs.\zkSimAbort(1^\secp, x_\secp, \zkV^*_\secp, \sigma_\secp^{(x_\secp)})$.
We prove in Claim~\ref{claim:bcp}, that $$\hyb_2 \approx_s \hyb_3.$$

\noindent $\hyb_4:$ This is identical to $\hyb_3$ except the following change.

In Step 1, $\eC$ sets $\mathsf{c}_1 = \COM(1^\secp,0;r)$. 
Here $(x_\secp, \sigma_\secp^{(x_\secp)})$ and $\zkV^*_\secp$ are defined identically to $\hyb_3$ except with the updated $\mathsf{c}_1$ from Step 1, and the simulator $\pibs.\zkSimAbort(1^\secp, x_\secp, \zkV^*_\secp, \sigma_\secp^{(x_\secp)})$ is executed. If there is an abort during sampling then output the transcript generated until the abort happens and the state of the \mim. Otherwise, the output of the hybrid is the output of $\pibs.\zkSimAbort(1^\secp, x_\secp, \zkV^*_\secp, \sigma_\secp^{(x_\secp)})$.
We prove in Claim~\ref{claim:cdp}, that $$\hyb_3 \approx_c \hyb_4.$$

\begin{claim}
\label{claim:abp}
$$\Delta(\hyb_1, \hyb_2) \leq 2^{-t_1} + \negl(2^{t_1})$$
\end{claim}
\begin{proof}
Note that the output of $\sfeeval \Big(\CC{\mathsf{Id}(\cdot)}{s_1}{(m||r)}, \ct_{1,\eR}; u_1 \Big)$ is identical in both hybrids, unless $s_1 = r_1$.
Denote by $\hyb_1'$ the distribution that is identical to $\hyb_1$ except it outputs $\bot$ when $s_1 = r_1$.
Denote by $\hyb_2'$ the distribution that is identical to $\hyb_2$ except it outputs $\bot$ when $s_1 = r_1$.
Now by statistical circuit privacy, we have that there exists a constant $c > 0$ such that $\Delta(\hyb'_1, \hyb'_2) \leq 2^{-\secp^{c}}$.

Finally, note that in each one of $\hyb_1, \hyb_2, \hyb_1', \hyb_2'$, 
$$\Pr[s_1 = r_1] \leq {2^{-t_1}}.$$
Thus we have,
\begin{align*}
    \Delta(\hyb_1, \hyb_2) 
    \leq 
    \Delta(\hyb'_1, \hyb'_2) + 2 \cdot \Pr[s_1 = r_1] 
    \leq 2^{-t_1} + 2 \cdot 2^{-\secp^c} 
    \leq 2^{-t_1} + \negl(2^{t_1})
\end{align*}
where the last equation follows by our setting of $t_1$.

\end{proof}

\begin{claim}
\label{claim:bcp}
$$\Delta(\hyb_2, \hyb_3) \leq 2^{-t_2} + \negl(2^{t_2})$$
\end{claim}
\begin{proof}
The proof follows nearly identically to that of \cref{claim:ab}.
\end{proof}

\begin{claim}
\label{claim:cdp}
$$\hyb_3 \approx_c \hyb_4$$
\end{claim}
\begin{proof}
Recall that both hybrids only output transcripts, the value committed by the \mim, and the \mim's state when the \mim aborts at some point. Otherwise both hybrids output $\bot$. 
As such, the value committed by the \mim in every transcript in the distributions $\hyb_3$, and $\hyb_4$, is $\bot$. 

It remains to prove that the joint distribution of the transcript and state of the \mim in $\hyb_3$ is indistinguishable from $\hyb_4$, which follows immediately by the hiding of $\COM$.
\end{proof}

\cref{clm:nm-base-abort} follows by observing that in $\hyb_4$, we erased all information about either one of $v_1$ or $v_2$. Therefore one can perform the above hybrids in reverse order, while arguing indistinguishability, until one ends up with $\hyb_{v_2, \sigma_\secp, \bot}^{\mathsf{Abort}}$.
\end{proof}

This concludes the proof of the lemma.
\end{proof}

\subsection{Tag Amplification}
\label{sec:tag-amp}

Beginning with quantum-secure non-malleable commitments that support tags in $[t]$, we describe how to obtain quantum-secure non-malleable commitments that support tags in $[2^{t/2}]$, for any $t \leq \poly(\secp)$. We will apply this compiler recursively a constant number of times to our base construction to obtain a scheme for tags in $[2^\secp]$.

The protocol itself follows nearly identically along the lines of existing tag amplification compilers in the literature~\cite{FOCS:Wee10,EC:PasWee10}. 
Each larger $\tagg$ (in $2^{t/2}$) is encoded into a set $S_\tagg$ of $t$ small tags, and this set satisfies the following property.
For each pair $\tagg, \tagg' \in 2^{t/2}$ where $\tagg \neq \tagg'$, there exists an element in the set $S_{\tagg'}$ that does not lie in the set $S_{\tagg}$. 
Now the committer on input a message $m$ and tag $\tagg$, generates $t$ commitments,
where the $i^{th}$ commitment commits via the underlying non-malleable commitment to the the message $m$ using as tag the $i^{th}$ member of the set $S_\tagg$.
The committer then proves to the receiver (in zero-knowledge) that all the commitments were correctly generated according to protocol specifications.

Now, the property of the tag encoding scheme guarantees that the \mim will always end up using {\em at least one small tag} that is different from all small tags used by the honest committer.
As such, we can argue that the value committed by the \mim using this differing tag is independent of the honest committer's input. 

We formally prove this via a hybrid argument, where we first rely on soundness of the ZK argument to argue that we can ``focus'' solely on the values committed by the \mim using the differing tag. 
Next, we simulate the ZK argument on behalf of the honest committer. Then we modify the values committed via all the honest small tags, while arguing that the value committed by the \mim under the differing tag does not change.

Finally, we point out an interesting feature of our amplification proof: 
we split the use of the ZK simulator into two cases: one simulator that only outputs non-aborting views (and otherwise outputs $\bot$), and a separate one that only outputs aborting views (and otherwise outputs $\bot$). 

\subsubsection{Construction}
We describe our compiler that converts a tag-based non-malleable commitment scheme for tags in $t(\secp)$, where $t(\secp) \leq \poly(\secp)$, into one that supports tags in $2^{t/2}$, while adding only a constant number of rounds. This is formalized in \proref{fig:tag_amplification_nmcom}.
We will let $\nmcsmall$ denote a non-malleable commitment for tags in $t(\secp)$, and we denote the message length by $p(\secp)$.\\

\noindent \textbf{Ingredients and notation:} We will assume the existence of a quantum-secure zero-knowledge argument for NP  ($\zk.\zkP,\zk.\zkV$). (We do not require multi-verifier zero-knowledge for this section.)



\protocol
{\proref{fig:tag_amplification_nmcom}}
{A constant round non-malleable commitment for tags in $[2^{t/2}]$.}
{fig:tag_amplification_nmcom}
{
\begin{description}
    \item[Common Input:] Security parameter $1^\secp$, $\tagg \in [2^{t/2}]$ represented as $\{ \tagg_i \}_{i \in [t/2]}$. Here $\tagg_i = i || \tagg[i]$ where $\tagg[i]$ denotes the $i^{th}$ bit of $\tagg$. 
    \item[$\mathsf{C}$'s Input:] A string $m \in \{0, 1\}^{p(\secp)}$.
\end{description}

\textbf{Commit Stage:}

\begin{itemize}
    
    \item {\bf Stage 1:} In parallel, for all $i \in [k]$,
            $\mathsf{C}$ runs $\nmcsmall.\mathsf{C}(m; r_{i, C})$ and $\mathsf{R}$ runs $\mathsf{R}(r_{i,R})$ with common input $\tagg_i$. Let $c_i$ and $\phi_i$
            denote the set of all messages generated by $\mathsf{C}$ and $\mathsf{R}$ respectively in the $i^{th}$ parallel execution. 
    
    \item {\bf Stage 2:} $\mathsf{C}$ executes $\zk.\zkP(x, w, \lang)$ and $\mathsf{R}$ executes $\zk.\zkV(x, \lang)$  where:
        \begin{itemize}
            \item $x =  \{c_i\}_{i \in [t/2]}$,  $w = \big(m, \{r_{i, \mathsf{C}}\}_{i \in [t/2]} \big)$, 
            \item $\lang = \big\{ \{ b_i \}_{i \in [t/2]} \; \big| \exists \big(a, \;\{s_i\}_{i \in [t/2]} \big) \text{ s.t. } \forall i \in [t/2], \nmcsmall.\mathsf{C}(a,\tagg_i,\phi_i;s_i) = b_i\big\}$
            \item where $\nmcsmall.\mathsf{C}(a,\tagg_i,\phi_i;s_i)$ denotes the transcript output by $\mathsf{C}_i$ on input receiver messages $\phi_i$. 
\end{itemize}
\end{itemize}


} 
                     

\subsubsection{Analysis.}
In the reveal stage, the committer outputs $(m,r_{1,C})$ and the receiver accepts the decommitment if this produces is a valid decommitment of $c_1$ according to $\nmcsmall$.
The perfect binding property of the scheme in \proref{fig:tag_amplification_nmcom} follows directly from the perfect binding property of the underlying protocol $\nmcsmall$. Hiding follows from non-malleability, the proof of which is in Appendix~\ref{app:tagamp}.

\begin{lemma} \label{lem:tagampsecurity}
\proref{fig:tag_amplification_nmcom} is a one-one non-malleable commitment according to Definition \ref{def:nmc} for tags in $[2^{t/2}]$.
\end{lemma}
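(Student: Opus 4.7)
The plan is to prove non-malleability of~\proref{fig:tag_amplification_nmcom} via a standard tag-amplification argument, adapted to handle the quantum setting by using the zero-knowledge simulator from~\cref{sec:pzk} (which is entirely straight-line up to Watrous rewinding). Fix any QPT MIM $\mim$ with left tag $\tagg^\cC \neq \tagg^\mim$. By the encoding $\tagg_i = i \| \tagg[i]$, we can identify a ``good'' index $i^* \in [t/2]$ such that $\tagg^\mim_{i^*} \notin \{\tagg^\cC_j\}_{j \in [t/2]}$; indeed, pick $i^*$ to be any position where the bits of $\tagg^\cC$ and $\tagg^\mim$ differ, which guarantees $\tagg^\mim_{i^*}$ has its first half equal to $i^*$ (so differs from $\tagg^\cC_j$ for $j \neq i^*$) and has a different last bit from $\tagg^\cC_{i^*}$.

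The hybrid sequence to show $\hyb_{v_1,\sigma_\secp} \approx_c \hyb_{v_2,\sigma_\secp}$ is as follows. First I would pass through an intermediate hybrid $\hyb_1$ in which the value extracted from the MIM's right execution is redefined to be the value committed by $c'_{i^*}$ (rather than $c'_1$, which the protocol officially uses for the reveal stage). Statistical soundness of the right $\zk$ argument makes $\hyb_1$ statistically close to $\hyb_0 = \hyb_{v_1,\sigma_\secp}$: whenever the MIM's $\zk$ verifies, all of the MIM's small-tag commitments $\{c'_i\}$ must, except with negligible probability, commit to a common value. Next, $\hyb_2$ replaces the honest left $\zk$ argument by the quantum ZK simulator applied to the portion of $\mim$ that acts as the verifier of this proof; by~\thmref{thm:ZK} (or its single-verifier special case) this is computationally indistinguishable. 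Then in hybrids $\hyb_{2+j}$ for $j = 1, \dots, t/2$, I would switch the $j$-th left small-tag commitment $c_j$ from a commitment to $v_1$ to a commitment to $v_2$. Finally I would un-simulate the left $\zk$ using the new witness $(v_2, \{r_{j,\cC}\})$ and revert the extraction-index change, arriving at $\hyb_{v_2,\sigma_\secp}$.

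The main technical step is bounding $|\hyb_{2+j-1} - \hyb_{2+j}|$. This is a reduction to one-to-one quantum non-malleability of the base scheme $\nmcsmall$. A reduction plays MIM in a single $\nmcsmall$ execution: it forwards the challenger's left messages into slot $j$ of the larger protocol's left execution, and forwards $\mim$'s messages from slot $i^*$ of the right execution out to the challenger. All other left commitments (slots $i \neq j$) are generated honestly by the reduction using inputs $v_1$ or $v_2$ as non-uniform advice; all other right commitments are simply run internally (the reduction plays honest receiver $\eR$ for them). The reduction also simulates the left $\zk$ argument using $\mim$'s verifier subroutine. Because $\tagg^\mim_{i^*} \neq \tagg^\cC_j$ by choice of $i^*$, the non-malleability guarantee of $\nmcsmall$ applies and the value the challenger extracts from slot $i^*$ is indistinguishable across the two challenge inputs. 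Composing through $t/2$ such hops yields the result, since $t(\secp) = \poly(\secp)$.

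The hardest part will be ensuring that the ZK simulation in the left execution is compatible with invoking non-malleability of $\nmcsmall$ in the right. Concretely, the simulator of~\cref{subsec:ZK} is defined as the combination of $\zkSimNoAbort$ and $\zkSimAbort$ wrapped in Watrous amplification; we need the joint distribution of $(\mathsf{view}_\mim, v')$ under this combined simulator to be indistinguishable from the one produced by the real prover strategy, and this must hold \emph{simultaneously} with the non-malleability reduction. Following the structure of the proof of~\thmref{thm:ZK}, one shows this by separately invoking $\lemref{claim:ind-non-abort}$ and $\lemref{claim:ind-abort}$ against a distinguisher that absorbs the non-malleability challenger into its own code, and then glues the pieces together with~\lemref{lemma:rewinding}. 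The other subtlety, already flagged above, is that in hybrid $\hyb_1$ the modification to the value-extraction function is inefficient (one inefficiently inspects $c'_{i^*}$ for a committed value), but this is fine because the intermediate hybrids are purely mental experiments and only the extremal distributions need to be efficient.
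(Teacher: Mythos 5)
Your proposal follows essentially the same route as the paper: identify the unique small-tag index $\alpha$ where the MIM's tag is fresh, use soundness of the right-side $\zk$ argument to re-route value extraction through $c'_\alpha$, simulate the left-side $\zk$, then switch the $t/2$ base commitments one at a time by reduction to one-one non-malleability of $\nmcsmall$, handling aborting and non-aborting transcripts separately via $\zkSimNoAbort$ and $\zkSimAbort$. Two small points where your description diverges from what the paper actually does. First, the aborting/non-aborting pieces are \emph{not} glued back together with the Watrous rewinding lemma; the paper uses only the simple probability-averaging decomposition (the same five-line chain of equalities and indistinguishabilities as in the proof of~\thmref{thm:ZK} \emph{before} the final rewinding step). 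Invoking~\lemref{lemma:rewinding} here would be problematic, since the non-malleability challenger is stateful and cannot be re-run inside an amplification loop; the whole point of isolating $\zkSimNoAbort$ and $\zkSimAbort$ as entirely straight-line subroutines is to avoid any rewinding in the reduction. Second, in the aborting branch the committed right-side value is always $\bot$, so the paper reduces those hybrid switches to mere \emph{hiding} of $\nmcsmall$ rather than full non-malleability; invoking non-malleability there as you propose is not wrong, but the cleaner and weaker hypothesis suffices.
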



%
We also have the following lemma, that follows by a standard hybrid argument, due to~\cite{LPV08}.
\begin{lemma}
\label{lem:lpv}
Every quantum-secure non-malleable commitment satisfying Definition \ref{def:nmc} also satisfies Definition \ref{def:many-nmc}.
\end{lemma}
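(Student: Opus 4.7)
The plan is to follow the standard classical hybrid argument of~\cite{LPV08}, carefully verifying that all steps go through in the quantum setting without any rewinding. Fix any polynomial $n = n(\secp)$, and any sequences of tuples $(\{v_i^1\}_{i \in [n]})$ and $(\{v_i^2\}_{i \in [n]})$. For each $j \in \{0,1,\dots,n\}$, define the hybrid distribution $\hyb_j$ to be the joint distribution of the $\mim$'s view and the value it commits to on the right when interacting in the many-one experiment in which, on the left, committer $i$ commits to $v_i^2$ for all $i \leq j$ and to $v_i^1$ for all $i > j$. Note that $\hyb_0$ is exactly the real distribution $\viewval_{\mim_\secp}\dist{\cC(\{v_i^1\}),\mim_\secp(\rho_\secp),\cR}$ and $\hyb_n$ is exactly $\viewval_{\mim_\secp}\dist{\cC(\{v_i^2\}),\mim_\secp(\rho_\secp),\cR}$. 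By a standard hybrid argument, it suffices to show that for every $j \in [n]$ and every compliant QPT distinguisher, $\hyb_{j-1} \approx_c \hyb_j$.

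For each $j$, I would prove $\hyb_{j-1} \approx_c \hyb_j$ by reducing to one-one non-malleability (\defref{def:nmc}). Given a QPT distinguisher $\eD$ that distinguishes $\hyb_{j-1}$ from $\hyb_j$, construct a new one-one man-in-the-middle $\mim'_\secp$ as follows. $\mim'_\secp$ takes as advice the state $\rho_\secp$ and the sequences $\{v_i^1\}_{i > j}$ and $\{v_i^2\}_{i < j}$ (hardcoded as non-uniform advice). It internally runs $\mim_\secp(\rho_\secp)$ and simulates the $n-1$ honest left committers other than the $j$-th one using these hardcoded values with fresh random coins (honest commitments are PPT). The messages exchanged with the $j$-th left committer are forwarded between the internal $\mim_\secp$ and the external left challenger, using tag $\tagg^\cC_j$. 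Similarly, $\mim'_\secp$ forwards the right-session messages between the internal $\mim_\secp$ and the external right receiver, using tag $\tagg^\mim$. Since $\tagg^\mim \notin \{\tagg^\cC_i\}_{i \in [n]}$, in particular $\tagg^\mim \neq \tagg^\cC_j$, so the one-one non-malleability experiment applies. Crucially, this reduction is entirely \emph{straight-line}: $\mim'_\secp$ never rewinds or clones the internal $\mim_\secp$'s quantum state, so it remains a legitimate non-uniform QPT MIM. The new compliant distinguisher $\eD'$ is essentially identical to $\eD$: given the view of $\mim'_\secp$ (which contains the view of the internal $\mim_\secp$) and the external extracted value, $\eD'$ reconstructs the many-one view and the right-committed value and outputs $\eD$'s verdict.

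By construction, when the external left committer commits to $v^1_j$, the distribution of $(\mathsf{view},\mathsf{value})$ output by $\mim'_\secp$ is identical to $\hyb_{j-1}$, and when it commits to $v^2_j$ the distribution is identical to $\hyb_j$. Hence $\eD'$'s distinguishing advantage between these two one-one experiments equals $\eD$'s advantage between $\hyb_{j-1}$ and $\hyb_j$, which by one-one non-malleability must be negligible. Chaining this across all $n$ steps gives indistinguishability between $\hyb_0$ and $\hyb_n$ with at most $n(\secp) \cdot \negl(\secp) = \negl(\secp)$ advantage. The only subtle point to check is that the definition of the ``value committed by MIM'' behaves consistently across the reduction: in the one-one setting the value is defined to be $\bot$ only when $\tagg^\cC_j = \tagg^\mim$, while in the many-one setting it is $\bot$ when $\tagg^\mim$ coincides with \emph{any} $\tagg^\cC_i$; since we restricted attention to the case $\tagg^\mim \notin \{\tagg^\cC_i\}_{i \in [n]}$, both definitions yield the genuine committed value on the right, and the reduction preserves the joint distribution exactly.
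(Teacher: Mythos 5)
Your proof is correct and takes exactly the approach the paper intends; the paper simply cites~\cite{LPV08} for this ``standard hybrid argument'' without spelling out the details, and your write-up fills them in faithfully, including the essential observation that the reduction is straight-line and hence valid for QPT man-in-the-middle adversaries. One small slip: you write ``compliant QPT distinguisher'' twice, but ``compliant'' is a notion from the extractable-commitment definition (\defref{defn:qspec}) and does not appear in \defref{def:nmc} or \defref{def:many-nmc}, which quantify over \emph{all} non-uniform QPT distinguishers---your argument works against arbitrary distinguishers anyway, so just drop that word.
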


We conclude this section with the following theorem, that can be obtained by applying the compiler in \proref{fig:tag_amplification_nmcom} $(c+1)$ times to the base non-malleable commitment from \proref{fig:basic_nmcom}, and then applying Lemma~\ref{lem:lpv}.
\begin{theorem}
Assuming there exists a constant $c \in \mathbb{N}$ such that at all quantum polynomial size circuits have advantage $\negl(\secp^{\mathsf{ilog}(c,\secp)})$ in distinguishing LWE samples from uniform, and spooky encryption for relations computable by quantum circuits, 
there exist quantum-secure constant round non-malleable commitments satisfying Definition \ref{def:many-nmc}.
\end{theorem}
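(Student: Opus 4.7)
The plan is to instantiate the base non-malleable commitment from \proref{fig:basic_nmcom}, iteratively apply the tag amplification compiler from \proref{fig:tag_amplification_nmcom}, and finally invoke \lemref{lem:lpv} to upgrade the one-to-one guarantee into a many-to-one guarantee. The cryptographic ingredients needed throughout the construction are available under the stated assumption: the mildly super-polynomial quantum hardness of LWE yields both the $\negl(\secp^{\mathsf{ilog}(c,\secp)})$-hiding non-interactive commitment and the two-message SFE satisfying \defref{def:sfe2} used in the base scheme, while quantum polynomial-time hardness of LWE together with quantum AFS-spooky encryption yields the constant-round multi-verifier zero-knowledge argument of \proref{fig:parallel_zk} used both in the base scheme and (as a single-verifier special case) in the compiler.

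First, I would instantiate \proref{fig:basic_nmcom} with these ingredients to obtain a constant-round non-malleable commitment $\nmCom_0$ supporting tags in $[N_0]$, where $N_0 = \mathsf{ilog}(c+1,\secp)$. By the lemma associated with \proref{fig:basic_nmcom}, this scheme satisfies \defref{def:nmc}. Next, I would apply the amplification compiler in \proref{fig:tag_amplification_nmcom} a total of $(c+1)$ times in succession. If $\nmCom_k$ supports tags in $[N_k]$, then by \lemref{lem:tagampsecurity} the amplified scheme $\nmCom_{k+1}$ supports tags in $[N_{k+1}] = [2^{N_k/2}]$ and again satisfies \defref{def:nmc}. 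An easy induction on $k$ using $2^{\mathsf{ilog}(j+1,\secp)} = \mathsf{ilog}(j,\secp)$ shows that after $(c+1)$ applications, $N_{c+1}$ is super-polynomial in $\secp$, and in particular exceeds $2^\secp$, which is sufficient to accommodate any polynomial number of parties.

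The round complexity bound is immediate: \proref{fig:basic_nmcom} uses $O(1)$ rounds, and each invocation of the compiler in \proref{fig:tag_amplification_nmcom} adds only the rounds of $\nmcsmall$ run in parallel plus one constant-round zero-knowledge argument, for an additive constant blow-up. Since the compiler is applied a constant number ($c+1$) of times, the final scheme $\nmCom_{c+1}$ remains constant-round. Finally, I would apply \lemref{lem:lpv} to $\nmCom_{c+1}$, which states that any quantum-secure commitment satisfying \defref{def:nmc} also satisfies the many-to-one analogue \defref{def:many-nmc}, completing the proof.

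The only genuinely delicate points are bookkeeping: verifying that the $(c+1)$-fold iteration indeed pushes the tag space past $2^\secp$, and tracking that the security loss at each compilation step is polynomially bounded (so that a constant number of applications preserves negligible advantage against QPT adversaries). Both are standard and follow from the quantitative statements of \lemref{lem:tagampsecurity} and \lemref{lem:lpv}; no new technical obstacle arises in combining them.
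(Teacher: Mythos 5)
Your proposal matches the paper's own (one-sentence) argument exactly: instantiate the base scheme of \proref{fig:basic_nmcom}, apply the tag-amplification compiler of \proref{fig:tag_amplification_nmcom} a total of $(c+1)$ times, and then invoke \lemref{lem:lpv} to pass from one-to-one to many-to-one non-malleability. The only minor nit is in the bookkeeping: the compiler maps tags in $[t]$ to tags in $[2^{t/2}]$ rather than $[2^t]$, so the growth per step is slightly slower than your identity $2^{\mathsf{ilog}(j+1,\secp)} = \mathsf{ilog}(j,\secp)$ suggests, but this does not affect the conclusion that $(c+1)$ applications suffice to push the tag space past $2^\secp$.
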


Finally, we remark that a folklore technique~\cite{DDN91} where the committer and receiver participate in rounds, sending $\bot$ in every round, except for the committer sending $\mathsf{ECom}(m;r)$ in round $i$ (where $i = \tagg$) using a single-committer extractable commitment yields a one-to-one non-malleable commitment for $\tagg \in [N]$ in the synchronous setting, in $O(N)$ rounds, for any $N \leq \poly(\secp)$.
Setting $N = \mathsf{ilog}(c,\secp)$ for any constant $c \in \mathbb{N}$ yields a protocol with $O(\mathsf{ilog}(c,\secp))$ rounds for $\mathsf{ilog}(c,\secp)$ tags. Applying our tag amplification compiler to this scheme $(c+1)$ times, yields a non-malleable commitment for tags in $2^\secp$ against synchronous adversaries, in $O(\mathsf{ilog}(c,\secp))$ rounds.
The underlying extractable commitment can be instantiated using the technique of~\cite{BS20} based on polynomial quantum hardness of LWE and polynomial hardness of QFHE. This yields $O(\mathsf{ilog}(c,\secp))$ round post-quantum non-malleable commitments from polynomial hardness assumptions.
\section{Quantum-Secure Multi-Party Coin-Flipping}
\label{sec:coin-tossing}

We now combine the primitives constructed in earlier sections to build a constant-round coin-flipping protocol secure against quantum polynomial-time adversaries. The protocol was described at a high level in~\cref{sec:over-putting}, and is given in full detail in \proref{fig:ct}. As explained in the overview, each party will first commit to random strings $c_i,r_i$ using a non-malleable commitment, then commit to $c_i$ using our parallel extractable commitment with randomness $r_i$, and finally broadcast $c_i$. The parties will output $\bigoplus_{i \in [n]} c_i$ as the common output if all parties manage to prove in zero-knowledge that they behaved honestly throughout the protocol.

\paragraph{Proof Strategy.} Our simulator will be structurally similar to the zero-knowledge simulator described in~\cref{sec:pzk}, in the sense that we build a simulator $\mathsf{SimNoAbort}_\bot$ specifically for non-aborting transcripts and a simulator $\mathsf{SimAbort}_\bot$ specifically for aborting transcripts. The bulk of the work in $\mathsf{SimNoAbort}_\bot$ involves sampling instances and an advice state (consisting of the adversary's view through Step 5 of the protocol) for the final part of the adversary, which in particular interacts with honest parties in order to verify their zero-knowledge arguments in Step 6. These arguments are then simulated by (part of) the zero-knowledge simulator $\zk.\mathsf{SimNoAbort}_\bot$, which takes as input the adversary and the sampled advice state. 

However, as alluded to in~\cref{sec:over-putting}, once this simulation is performed in the hybrids, it is no longer possible to directly invoke the soundness of the adversary's zero-knowledge arguments in Step 6, when changing how $\mathsf{SimNoAbort}_\bot$ samples the advice state. Thus, we invoke soundness in the very first hybrid to claim that the following ``check'' never fails, except with negligible probability. 

The check fails if the Step 1-5 messages sent by at least one of the malicious parties are not explainable, yet the honest parties do not abort. 
If this check fails, we simply append $\cfail$ to the transcript.
Now, this check will continue to be computed in later hybrids, but we can claim that since (as we show) all later hybrids are indistinguishable from the first hybrid, $\cfail$ must also only appear with negligible probability in these later hybrids. When it comes time to invoke the non-malleability of the honest party commitments, we can use the fact that $\cfail$ appears with negligible probability to show that a malicious party cannot even change its \emph{extractable} commitment based on the changing simulated view. If it could, then since non-malleability implies that its previously sent non-malleable commitment couldn't change, then it must be the case that its messages are no longer explainable (since its two commitments are no longer consistent). Thus, the check will fail and appear in the hybrid's output, a contradiction. Of course, turning this intuition into a formal proof requires much care, especially since this check is inefficient. Thus, we will make liberal use of non-uniform fixing arguments. 

We also remark that it would be most natural to rely on \emph{many-to-many} non-malleable commitments in this multi-party setting. However, we only have a post-quantum construction of one-one commitments in Section~\ref{sec:nmc}. Thus, when invoking non-malleability, a reduction must isolate the commitment of a single malicious party that would constitute a mauling attack. We again use both non-uniformity and the $\cfail$ condition here, showing that any mauling attack would cause $\cfail$ to appear, and thus that there must exist \emph{some} malicious party for which the check fails over specifically its messages. The identity of this party can then be given as non-uniform advice to a reduction.

\subsection{Definition}

\begin{definition}[Quantum-Secure Fully-Simulatable Multi-Party Coin-Flipping] \label{def:coin tossing}
Let $k = k(\lambda)$ be any fixed polynomial.
An fully-simulatable $n$-party $k$-coin-flipping protocol with quantum security is given by $n$ classical interactive Turing machines $(P_1, \ldots, P_n)$ with joint input $(1^\lambda,1^n)$ and outputs $r_i \in \zo^{k(\lambda)} \cup \{\bot\}$. 

Given a coin-flipping protocol and an adversary $\A^* = \{\A^*_\lambda, \rho_\lambda\}_{\lambda \in \bbN}$ that corrupts a set of parties $\mathbb{S} \subset [n]$, let $\honest$ denote $[n]\setminus \mathbb{S}$ and define the random variable $\mathsf{Real}(\A^*_\secp,\rho_\secp)$ to consist of the outputs of honest parties $\{P_i\}_{i \in \honest}$ as well as the view $\view_{\A^*_\lambda}\dist{\A^*_\secp(\rho_\lambda), \{P_i\}_{i \in \honest} }(1^\lambda,1^n)$ of $\A^*_\secp$ after executing the protocol in the presence of $\A^*_\secp$. 

We require the following security property. Fix any $\mathbb{S} \subset [n]$. There exists a quantum expected polynomial-time simulator \simulator, such that for any quantum polynomial-size adversary $\A^* = \{\A^*_\lambda, \rho_\lambda\}_{\lambda \in \bbN}$ that participates in the protocol, generating joint messages on behalf of all algorithms in $\mathbb{S}$,
    \begin{align}
        \{\mathsf{Real}(\A^*_\secp,\rho_\secp)\}_{\lambda \in \bbN} 
        \approx_c
        \{\simulator(1^\secp,1^n,r, \A^*_\lambda, \rho_\lambda) \ | \ r \leftarrow U_{k(\lambda)}\}_{\lambda \in \bbN},
    \end{align}
    where for the ``protocol output'' part of its simulation output, $\simulator(1^\secp,1^n,r, \A^*_\lambda, \rho_\lambda)$ is restricted to output either $r^{|\honest|}$ or $\bot^{|\honest|}$.

        
\end{definition}



\subsection{Construction}

\paragraph{Ingredients}: All of the following are assumed to be quantum-secure.
\begin{itemize}
    \item A many-to-one non-malleable commitment $\nmCom = (\nmCom.\eC,\nmCom.\eR)$.
    \item A multi-committer extractable commitment $\PECom = (\PECom.\eC,\PECom.\eR)$.
    \item A multi-verifier publicly-verifiable zero-knowledge argument for NP $\zk = (\zk.\zkP,\zk.\zkV)$.
\end{itemize}

\paragraph{Languages.} We define two NP languages $\LL$ and $\LLL$. Let $(x,y) \coloneqq \dist{\nmCom.\eC(c;r),\nmCom.\eR_y}$ denote the transcript of an execution of $\nmCom$ where the receiver messages are fixed to $y$ and $x$ is the set of resulting committer messages. Similarly, let $(x,y) \coloneqq \dist{\PECom.\eC(c;r),\PECom.\eR_{y}}$ denote the transcript of an execution between some $\PECom.\eC$ and $\PECom.\eR$ (which may be part of a larger $\PECom$ transcript involving other committers) where the receiver messages are fixed to $y$ and $x$ is the set of resulting committer messages. Then $\LL$ and $\LLL$ are  defined as follows.

\begin{align*}
    &\LL \coloneqq \left\{(x,y) \ \bigg| \ \exists (c,r) \text{ s.t. } (x,y) \coloneqq \dist{\PECom.\eC(c;r),\PECom.\eR_{y}}\right\} \\
    &\LLL \coloneqq \left\{(x,y,x',y',c) \ \bigg| \ \exists (r,s) \text{ s.t. } \begin{array}{l} (x,y) \coloneqq \dist{\nmCom.\eC((c,r);s),\nmCom.\eR_y}, \\ (x',y') \coloneqq \dist{\PECom.\eC(c;r),\PECom.\eR_{y'}}\end{array}\right\} 
\end{align*}

\protocol
{\proref{fig:ct}}
{A quantum-secure constant-round coin-flipping protocol.}
{fig:ct}
{
\textbf{Common input}: Security parameter $1^\lambda$ and number of parties $1^n$.

\begin{enumerate}
    \item For all $i \in [n]$, $P_i$ samples $c_i \leftarrow \zo^{k(\secp)}, \{r_{i,j}, s_{i,j}\}_{j \in [n] \setminus \{i\}} \leftarrow \zo^{2(n-1)\secp}$.
    \item For all $i \in [n], j \in [n] \setminus \{i\}$, $P_i$ runs
    $\nmCom.\eC((c_i,r_{i,j});s_{i,j})$ with tag $i$ and $P_j$ runs $\nmCom.\eR$ to produce $$\left(\alpha_{i,j}^{(\com)},\alpha_{i,j}^{(\rec)}\right) \gets \dist{\nmCom.\eC((c_i,r_{i,j});s_{i,j}),\nmCom.\eR}(1^\secp),$$ where $\alpha_{i,j}^{(\com)}$ denotes the committer messages sent by $P_i$ and $\alpha_{i,j}^{(\rec)}$ denotes the receiver messages sent by $P_j$.
    \item For all $j \in [n]$, $P_j$ runs $\PECom.\eR$ and each $P_i$ for $i \neq j$ runs $\PECom.\eC_i(c_i;r_{i,j})$ to produce $$\left(\left\{\beta_{i,j}^{(\com)}\right\}_{i \in [n] \setminus \{j\}},\left\{\beta_{i,j}^{(\rec)}\right\}_{i \in [n] \setminus \{j\}}\right) \gets \dist{\{\PECom.\eC_i(c_i;r_{i,j})\}_{i \in [n] \setminus \{j\}},\PECom.\eR}(1^\secp,1^{n-1})$$ where $\beta_{i,j}^{(\com)}$ denotes the committer messages sent by $P_i$ and $\beta_{i,j}^{(\rec)}$ denotes the receiver messages sent by $P_j$.
    
    \item For all $i \in [n]$, $P_i$ runs $\zk.\zkP(\{x_{i,j},w_{i,j}\}_{j \in [n] \setminus \{i\}})$ and each $\{P_j\}_{j \in [n] \setminus \{i\}}$ runs $\zk.\zkV_j(x_{i,j})$ in an execution of $\zk$ with common input $(1^\secp,1^{n-1})$ for language $\LL$ (defined above), where $x_{i,j} = (\beta_{i,j}^{(\com)}, \beta_{i,j}^{(\rec)})$ and $w_{i,j} = (c_i,r_{i,j})$.
    \item For all $i \in [n]$, $P_i$ broadcasts $c_i$.
    \item For all $i \in [n]$, $P_i$ runs $\zk.\zkP(\{x_{i,j},w_{i,j}\}_{j \in [n] \setminus \{i\}})$ and each $\{P_j\}_{j \in [n] \setminus \{i\}}$ runs $\zk.\zkV_j(x_{i,j})$ in an execution of $\zk$ with common input $(1^\secp,1^{n-1})$ for language $\LLL$ (defined above), where $x_{i,j} = (\alpha_{i,j}^{(\com)}, \alpha_{i,j}^{(\rec)}, \beta_{i,j}^{(\com)}, \beta_{i,j}^{(\rec)}, c_i)$ and $w_{i,j} = (r_{i,j},s_{i,j})$. 
    \item For all $i \in [n]$, $P_i$ runs the ZK verification algorithm on all $2n(n-1)$ proofs provided in Steps 4 and 6. If all proofs are accepting, $P_i$ outputs $\bigoplus_{i \in [n]} c_i$, and otherwise outputs $\bot$.
\end{enumerate}
}



\subsection{Security}


\begin{theorem}
For any $n$ and polynomial $k(\secp)$, \proref{fig:ct} is a quantum-secure fully-simulatable $n$-party $k$-coin-flipping protocol.
\end{theorem}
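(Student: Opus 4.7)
The plan is to construct the simulator $\simulator(1^\secp,1^n,r,\A^*_\secp,\rho_\secp)$ by following the same two-track template as in~\cref{subsec:ZK}: build an entirely straight-line $\mathsf{SimNoAbort}_\bot$ that produces correct non-aborting views (and $\bot$ otherwise), an entirely straight-line $\mathsf{SimAbort}_\bot$ that produces correct aborting views (and $\bot$ otherwise), combine them via a fair coin into $\mathsf{SimComb}_\bot$, and amplify to near-$1$ success via~\cref{lemma:rewinding}. The core algorithm is $\mathsf{SimNoAbort}_\bot$: it runs the honest parties' Steps 1--3 honestly (with dummy values for their committed strings), invokes $\PECom.\cE$ on the adversary's parallel $\PECom$ sub-transcripts to recover $\{c_j\}_{j\in\corrupt}$ together with a (possibly quantum) simulated state of $\A^*_\secp$, then picks honest shares $\{c_i\}_{i\in\honest}$ subject to $\bigoplus_{i\in[n]} c_i = r$ (with one honest party's share chosen as the XOR-correction), broadcasts them in Step 5, and invokes $\zk.\mathsf{SimNoAbort}_\bot$ on the composite verifier controlling all corrupted parties' Step 4 and Step 6 checks. $\mathsf{SimAbort}_\bot$ is the analogue: commit to dummy values throughout, simulate all honest ZK arguments, and output the first aborting transcript encountered.

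The main bulk of the argument is the hybrid sequence establishing $\mathsf{Real}(\A^*_\secp,\rho_\secp) \approx_c \mathsf{SimComb}(\A^*_\secp,\rho_\secp)$. First I would introduce the inefficient flag $\cfail$ that is appended to the transcript exactly when the adversary's Steps 1--5 are \emph{not} explainable (with respect to $\corrupt$) yet no honest party aborts in Step 7; by the statistical soundness of the adversary's Step 6 ZK arguments for language $\LLL$, $\cfail$ occurs only with negligible probability in the real execution, and hence --- since every subsequent hybrid will be shown indistinguishable from this one --- only with negligible probability throughout. Next I would replace each honest party's Step 4 and Step 6 proofs, one honest party at a time, by the entirely straight-line $\zk.\mathsf{SimNoAbort}_\bot$ / $\zk.\mathsf{SimAbort}_\bot$ of~\cref{thm:ZK} (instantiated for the verifier obtained by wrapping the remainder of $\A^*_\secp$); indistinguishability follows from~\cref{claim:ind-non-abort} and~\cref{claim:ind-abort} together with the $\mathsf{SimComb}$ combination argument used in~\cref{subsec:ZK}. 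Once all honest ZK proofs are simulated, I would switch honest parties' $\PECom$ commitments in Step 3 to commitments to $0$ (invoking quantum computational hiding of $\PECom$), then insert the $\PECom.\cE$ invocation in place of the real $\PECom$ transcripts (using multi-committer extractability, with the distinguishers between hybrids made \emph{compliant} by the soundness of the ZK arguments), and finally replace honest parties' $\nmCom$ commitments in Step 2 by commitments to $0$.

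The hard part, which requires the most care, is switching the honest broadcast values $\{c_i\}_{i\in\honest}$ from uniformly random (as in the real protocol) to $r$-correcting (as required for full simulation) \emph{without} the output of the protocol --- and in particular the XOR $\bigoplus_{j\in\corrupt} c_j$ of the values committed by the adversary in their $\nmCom$'s --- changing in a correlated way. I would handle this by a hybrid that changes one honest party's share at a time, invoking the many-one non-malleability guaranteed by~\cref{def:many-nmc} on the tag of the honest party being modified. The subtlety is that $\PECom$ is not itself non-malleable, so a priori the extracted $\{c_j\}_{j\in\corrupt}$ could shift when the honest $\nmCom$'s change; here the $\cfail$ flag pays off. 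Conditioned on $\lnot\cfail$, the adversary's Steps 1--5 are explainable, and then statistical soundness of the adversary's Step 6 ZK for $\LLL$ forces the value extracted from $\PECom_j$ to equal the value committed inside $\nmCom_j$ for every $j\in\corrupt$; non-malleability of $\nmCom$ then pins down this latter value, which is exactly what lets us argue that the adversary's effective output contribution is unchanged by the switch.

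A few further technical issues I anticipate: (i) the $\nmCom$ scheme is only many-one, so when isolating a single $j\in\corrupt$ target for non-malleability, the other corrupted $\nmCom$ transcripts must be handled as part of the many-one reduction, and the identity of the ``mauling'' party together with the Steps 1--5 transcript prefix must be given as non-uniform advice; (ii) the extractability reduction for $\PECom$ requires a compliant distinguisher, which I would build by attaching the (inefficient but non-uniformly advised) openings of the corrupted $\nmCom$ transcripts and the adversary's state through Step 2, using them to check explainability and output $0$ on unexplainable prefixes; (iii) the final Watrous amplification requires that the success probability of $\mathsf{SimComb}_\bot$ concentrates around $1/2$ independently of the advice state $\rho_\secp$ up to negligible slack, which follows from~\cref{claim:ind-non-abort},~\cref{claim:ind-abort}, and the corresponding closeness of abort probabilities between the real and simulated executions, exactly as in~\cref{subsec:ZK}. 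Putting these pieces together yields the desired fully-simulatable guarantee of~\cref{def:coin tossing}.
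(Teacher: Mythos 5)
Your high-level template matches the paper's: two entirely straight-line sub-simulators (non-abort and abort), a fair coin to combine them, Watrous amplification, and the inefficient $\cfail$ flag used to bootstrap the adversary's explainability from soundness of its Step~6 arguments into later hybrids, so that non-malleability of $\nmCom$ can be invoked even after the adversary's ZK is no longer directly sound. Your points (i)--(iii) about the many-one reduction with a non-uniformly advised target index, the compliant distinguisher built from non-uniform openings, and the Watrous parameters are all essentially what the paper does.

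The genuine gap is in the structure of the simulator itself. You propose to have \emph{all} honest parties commit to dummy values in Steps~1--3, then broadcast fresh $r$-correcting shares in Step~5, and consequently you need to simulate \emph{every} honest party's Step~4 and Step~6 ZK argument. This does not fit the tool you have: \cref{thm:ZK} gives a simulator for a \emph{single} prover running a multi-verifier ZK against an adversary controlling a subset of the verifiers. If several honest parties each need a simulated proof in the same round, you face a multi-prover (one-per-honest-party) simulation problem over a shared, uncloneable adversary state, which the paper's ZK simulator does not address and which a naive ``one honest party at a time'' hybrid does not resolve either --- at an intermediate hybrid, two or more honest parties would simultaneously lack valid witnesses for parallel Step~6 sessions, so you would again need to simulate two provers at once.

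The paper avoids this entirely by modifying only the single honest party $P_{i^*}$ of smallest index: every other honest party $P_i$, $i \in \honest\setminus\{i^*\}$, draws a genuinely uniform $c_i$, commits to it honestly in $\nmCom$ and $\PECom$, broadcasts the same $c_i$, and proves its Step~4 and Step~6 statements with its real witness. Only $P_{i^*}$'s Step~2/Step~3 commitments are replaced by commitments to $0$, only $P_{i^*}$ broadcasts the $r$-correcting share $c'_{i^*} = \bigoplus_{j\in\corrupt} c'_j \oplus \bigoplus_{i \in \honest\setminus\{i^*\}} c_i \oplus r$, and only $P_{i^*}$'s Step~6 argument is simulated (by a single invocation of $\zk.\zkSimNoAbort$, with the other honest parties' honest prover/verifier code folded into the composite verifier $\zkV^*$). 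The hybrid sequence in the paper correspondingly changes only $P_{i^*}$'s messages: its $\nmCom$, its Step~4 proof (temporarily simulated, then re-proved honestly with the ``zero'' witness $(0^{k(\secp)},r_{i^*,j})$, which is still a valid $\LL$ witness), its $\PECom$, and finally its broadcast. This is the key structural idea your proposal is missing, and without it the proof as you sketch it does not go through.
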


\begin{proof}

Fix a number of parties $n$, a polynomial $k = k(\lambda)$, and a set $\mathbb{S} \subset [n]$ of malicious parties. 
We construct a simulator $\simulator$ that for every quantum polynomial-size adversary $\A^* = \{\A^*_\lambda, \rho_\lambda\}_{\lambda \in \mathbb{N}}$ corrupting parties in $\mathbb{S}$,  outputs a distribution
$$\{\simulator(1^\secp, 1^n, r, \A^*_\lambda, \rho_\lambda) \ | \ r \leftarrow U_{k(\lambda)}\}_{\lambda \in \bbN}$$
that satisfies the conditions of Definition \ref{def:coin tossing}. Similar to the proof of zero-knowledge in~\cref{subsec:ZK}, the simulator $\simulator$ will make use of two sub-routines, $\cfSimNoAbort$ and $\cfSimAbort$.\\

\noindent $\cfSimNoAbort(1^\secp, 1^n, r, \A^*_\lambda,\rho_\secp)$:
\begin{enumerate}

\item Let $i^*$ denote the smallest index of a party in $\mathbb{H}$. Define the machine $\zkV^*_\secp$ as follows. $\zkV^*_\secp$ will act on behalf of verifiers $\{\zk.\zkV_j\}_{j \in \corrupt}$ in the $\zk$ session in Step 6 of \proref{fig:ct} where party $P_{i^*}$ is the prover. Thus, it consists of the portion of $\A^*_\secp$ that interacts during this step \emph{as well as} the portion of the honest parties $\honest$ that interact in the $n-1$ sessions where $P_{i^*}$ is \emph{not} the prover.

We will next describe how a particular instance-advice distribution $(\{x_j\}_{j \in \corrupt}, \sigma^{\{x_j\}_{j \in \corrupt}})$ is generated for $\zkV^*_\secp$. Generating this distribution will involve simulating Steps 1-5 of \proref{fig:ct} for adversary $\A^*_\secp$. In particular, the advice state $\sigma^{\{x_j\}_{j \in \corrupt}}$ will include the transcript $\tau^{(5)}$ of the entire simulated execution through Step 5, the inner state $\rho^{(5)}$ of $\A^*_\secp$ at this point, as well as the witnesses $\{r_{i,j},s_{i,j}\}_{i \in \honest \setminus \{i^*\},j \in [n] \setminus \{i\}}$ to be used by parties $\{P_i\}_{i \in \honest \setminus \{i^*\}}$ in Step 6 of \proref{fig:ct}. The instances $\{x_j\}_{j \in \corrupt}$ will be a subset of $\tau^{(5)}$. In particular, for each $j \in \corrupt$, $x_j$ will be set to $(\alpha_{i^*,j}^{(\com)},\alpha_{i^*,j}^{(\rec)},\beta_{i^*,j}^{(\com)},\beta_{i^*,j}^{(\rec)},c'_{i^*})$, which are the messages exchanged by $i^*$ and $j$ during Steps 2 and 3 when $P_{i^*}$ was acting as a committer, as well the value $c'_{i^*}$ broadcast by $P_{i^*}$ in Step 5. 

This instance-advice distribution is generated as follows.

    
\begin{enumerate}
    
    \item For each party $\{P_i\}_{i \in \honest \setminus \{i^*\}}$, sample $c_i,\{r_{i,j},s_{i,j}\}_{j \in [n] \setminus \{i\}}$ as in Step 1 of \proref{fig:ct}.
    
    \item Set $\rho^{(1)} \coloneqq \rho_\secp$ to be the inner state of $\A^*_\secp$, and interact with $\A^*_\secp$ to run Step 2 of \proref{fig:ct} honestly, with the only difference being that party $P_{i^*}$ commits to $(0^{k(\secp)},0^\secp)$. Let $\rho^{(2)}$ be the resulting inner state of $\A^*_\secp$, and let $\{\alpha_{i^*,j}^{(\com)},\alpha_{i^*,j}^{(\rec)}\}_{j \in \corrupt}$ be the messages sent between $P_{i^*}$ and $\corrupt$ in commitments where $P_{i^*}$ was the committer.
    
    
    
    \item Define the machine $\eC^*_\secp$ as follows. $\eC^*_\secp$ will act on behalf of committers $\{\PECom.\eC_j\}_{j \in \corrupt}$ in the $\PECom$ session in Step 3 of \proref{fig:ct} where party $P_{i^*}$ is the receiver. Thus, it consists of the portion of $\A^*_\secp$ that interacts during this step \emph{as well as} the portion of the honest parties $\honest$ that interact in the $n-1$ sessions where $P_{i^*}$ is \emph{not} the receiver. The advice state $\sigma_\secp$ given to $\eC^*_\secp$ will include the transcript $\tau^{(2)}$ of the execution so far, the inner state $\rho^{(2)}$ of $\A^*_\secp$ at this point, and the messages and randomness $\{c_i,r_{i,j}\}_{i \in \honest,j \in [n] \setminus \{i\}}$ to be used in the commitments by honest players in Step 3, where $c_{i^*} = 0^{k(\secp)}$ and $r_{i^*,j}$ is uniformly and independently sampled from the rest of the transcript. The view of $\eC^*_\secp$ at the end of this interaction includes the updated execution transcript $\tau^{(3)}$ as well as the updated inner state $\rho^{(3)}$ of $\A^*_\secp$. 
    
    Now, compute $$(\{\tau_j\}_{j \in \corrupt},\state,\{c'_j\}_{j \in \corrupt}) \gets \PECom.\cE(1^\secp,1^{n-1},\corrupt,\eC^*_\secp,\sigma_\secp),$$
    
    and parse $(\{\tau_j\}_{j \in \corrupt},\state)$ to obtain $\tau^{(3)}$ and $\rho^{(3)}$, where $\tau^{(3)}$ in particular includes the messages $\{\beta_{i^*,j}^{(\com)},\beta_{i^*,j}^{(\rec)}\}_{j \in \corrupt}$ exchanged by $P_{i^*}$ and $\corrupt$ in commitments where some party $j \in \corrupt$ was the receiver.  If $\PECom.\cE$  produced an abort transcript, then return $\bot$, and otherwise continue, setting $\rho^{(3)}$ to be the state of $\A^*_\secp$.

    \item Interact with $\A^*_\secp$ to run Steps 4 and 5 of the protocol honestly, with the only difference being that party $P_{i^*}$ broadcasts $c'_{i^*} \coloneqq \bigoplus_{j \in \mathbb{S}}c'_j \bigoplus_{j \in \mathbb{H} \setminus \{i^*\}}c_j \oplus r$ in Step 5.

    
    
    \item If there exists $j \in \corrupt$ such that the output $c_j$ of $P_j$ in Step 5 is not equal to $c'_j$, then return $\bot$. Otherwise, let $\tau^{(5)}$ denote the transcript so far and let $\rho^{(5)}$ denote the state of $\A^*_\secp$ at the end of Step 5. For each $j \in \corrupt$ set $x_j = (\alpha_{i^*,j}^{(\com)},\alpha_{i^*,j}^{(\rec)},\beta_{i^*,j}^{(\com)},\beta_{i^*,j}^{(\rec)},c'_{i^*})$, and set $\sigma^{\{x_j\}_{j \in \corrupt}} = (\tau^{(5)},\rho^{(5)},\{r_{i,j},s_{i,j}\}_{i \in \honest \setminus \{i^*\},j \in [n] \setminus \{i\}})$.
    
\end{enumerate}


\item Now, compute $$\view \leftarrow \zk.\zkSimNoAbort(1^\secp,1^{n-1},\corrupt,\{x_j\}_{j \in \corrupt},\zkV^*_\secp,\sigma^{\{x_j\}_{j \in \corrupt}}).$$ If the proofs (which are included in $\view$) output by $\A^*_\secp$ are accepting, then output $\view$, and otherwise output $\bot$. 
\end{enumerate}

\noindent $\cfSimAbort(1^\secp, 1^n, r, \A^*_\secp,\rho_\secp)$:
\begin{enumerate}
    \item Set $\rho_\secp$ as the initial state of $\A^*_\secp$.
    Execute Steps 1-6 with $\A^*_\secp$ using honest party strategy according to \proref{fig:ct} on behalf of parties $\{P_j\}_{j \in \mathbb{H}}$.
    If sampling this distribution leads to an abort at any point, halt and return the view of $\A^*_\secp$. Otherwise, return $\bot$.
\end{enumerate}

\medskip

\noindent $\mathsf{SimComb}_\bot(1^\secp,1^n,I,\A^*_\secp,\rho_\secp)$: With probability 1/2, execute $\zkSimNoAbort(1^\secp,1^n,I,\{x_i\}_{i \in [n]},\zkV^*_\secp,\rho_\secp)$ and otherwise execute $\zkSimAbort(1^\secp,1^n,I,\A^*_\secp,\rho_\secp)$.

\medskip
\noindent $\simulator(1^\secp,1^n,I,\A^*_\secp,\rho_\secp)$: Let $\overline{\mathsf{SimComb}}_\bot(\cdot) \coloneqq \mathsf{SimComb}_\bot(1^\secp,1^n,I,\A^*_\secp,\cdot)$ be the circuit $\mathsf{SimComb}_\bot$ with all inputs hard-coded except for $\rho_\secp$, and output $\R(\overline{\mathsf{SimComb}}_\bot,\rho_\secp,\secp)$, where $\R$ is the algorithm from~\cref{lemma:rewinding}.

\medskip

This concludes the description of the simulator. Before proceeding to the proof of indistinguishability, we define the following collections of random variables (each indexed by $\secp$). Each is defined with respect to the adversary $\A^* = \{\A^*_\secp,\rho_\secp\}_{\secp \in \mathbb{N}}$ that we are considering. Throughout, whenever we say abort, we mean that either one of the parties controlled by the adversary aborts, or it fails to prove one of its statements. 

\begin{itemize}
    \item Let $\zkSimNoAbort(\A^*) \coloneqq \{\zkSimNoAbort(1^\secp,1^n,r,\A^*_\secp,\rho_\secp)\}_{\secp \in \mathbb{N}}$.
    \item Let $\zkSimAbort(\A^*) \coloneqq \{\zkSimAbort(1^\secp,1^n,r,\A^*_\secp,\rho_\secp)\}_{\secp \in \bN}$.
    \item Let $\mathsf{SimComb}_\bot(\A^*) \coloneqq \{\mathsf{SimComb}_\bot(1^\secp,1^n,r,\A^*_\secp,\rho_\secp)\}_{\secp \in \bN}$.
    \item Let $\zkSim(\zkV^*) \coloneqq \{\zkSim(1^\secp,1^n,r,\A^*_\secp,\rho_\secp)\}_{\secp \in \bN}$.
    \item Let $\mathsf{RealNoAbort}_\bot(\A^*)$ be the distribution $\mathsf{Real}(\A^*)$, except that whenever an abort occurs, the distribution outputs $\bot$.
    \item Let $\mathsf{RealAbort}_\bot(\A^*)$ be the distribution $\mathsf{Real}(\A^*)$, except that if an abort \emph{does not} occur, the distribution outputs $\bot$.
    \item Let $\mathsf{SimComb}(\A^*)$ be the distribution $\mathsf{SimComb}_\bot$ conditioned on the output not being $\bot$.
\end{itemize}

Next, we prove the following claim.

\begin{claim}
\label{lem:ind-non-abort-1}
$$\mathsf{RealNoAbort}_\bot(\A^*) \approx_c \cfSimNoAbort(\A^*)$$
\end{claim}
\begin{proof}
This can be proved via the following sequence of hybrids.
\begin{itemize}

\item $\hyb_0:$ $\mathsf{RealNoAbort}_\bot(\A^*)$.

\item $\hyb_1:$ 
This hybrid is the same as $\hyb_0$, except that it attaches $\cfail$ to the output if the following (inefficient check) on Steps 1-4 of the transcript, fails. 

Let $i^*$ denote the smallest index in $\mathbb{H}$. For $j \in \corrupt$, let $y_j \coloneqq (\alpha_{j,i^*}^{(\com)}, \alpha_{j,i^*}^{(\rec)}, \beta_{j,i^*}^{(\com)}, \beta_{j,i^*}^{(\rec)},c_j)$ be the messages exchanged between $P_{i^*}$ and $P_j$ in Steps 2 and 3 when $P_j$ was acting as the committer, along with the message broadcast by $P_j$ in Step 4. The check fails if there exists a $j \in \corrupt$ such that $y_j \notin \LLL$.

\item $\hyb_2:$ Let $\zkV^*_\secp$ be the machine defined in the description of $\cfSimNoAbort$. Sample instance-advice distribution $(\{x_j\}_{j \in \corrupt}, \sigma^{\{x_j\}_{j \in \corrupt}})$ as described below.
\begin{enumerate}
\item Execute Steps 1-4 of the protocol identically to $\hyb_2$.
Let $\tau^{(4)}$ denote the transcript generated so far, let $\rho^{(4)}$ denote the state of $A^*_\secp$ at the end of Step 4, and let $\{r_{i,j},s_{i,j}\}_{i \in \honest \setminus \{i^*\},j \in [n] \setminus \{i\}}$ be strings drawn in Step 1 of the protocol.
\item If the check described in $\hyb_2$ fails, then attach $\cfail$ to the transcript.
\item For $j \in \corrupt$, let $x_j \coloneqq (\alpha_{i^*,j}^{(\com)},\alpha_{i^*,j}^{(\rec)},\beta_{i^*,j}^{(\com)},\beta_{i^*,j}^{(\rec)},c'_{i^*})$ be the messages exchanged between $P_{i^*}$ and $P_j$ in Steps 2 and 3 when $P_{i^*}$ was acting as the committer and $P_j$ was acting as the receiver, along with the message broadcast by $P_{i^*}$ in Step 4. Set $\sigma^{\{x_j\}_{j \in \corrupt}} = (\tau^{(4)},\rho^{(4)},\{r_{i,j},s_{i,j}\}_{i \in \honest \setminus \{i^*\},j \in [n] \setminus \{i\}})$.
\end{enumerate}
Now, compute $$\view \leftarrow \zk.\zkSimNoAbort(1^\secp,1^{n-1},\corrupt,\zkV^*_\secp,(\{x_j\}_{j \in \corrupt}, \sigma^{\{x_j\}_{j \in \corrupt}})).$$ If the proofs (which are included in $\view$) output by $A^*_\secp$ are accepting, then output $\view$, and otherwise output $\bot$.  \\

\item $\hyb_3:$ 
Sample instance-advice distribution identically to $\hyb_2$, except that in Step 2, party $P_{i^*}$ commits to $(0^{k(\secp)},0^\secp)$.

\item $\hyb_4:$ Sample instance-advice distribution identically to $\hyb_3$, except that in Step 4, the $\zk$ session where $P_{i^*}$ is the prover is simulated.

\item $\hyb_5:$ Sample instance-advice distribution identically to $\hyb_4$, except that in Step 3, for all $j \in {[n] \setminus \{i^*\}}$, party $P_{i^*}$ commits to $0^{k(\secp)}$.

\item $\hyb_6:$ Sample instance-advice distribution identically to $\hyb_5$, except that in Step 4, the $\zk$ session where $P_{i^*}$ is the prover is performed honestly, with witnesses $\{(0^{k(\secp)},r_{i^*,j})\}_{j \in \corrupt}$.

\item $\hyb_7:$ Let $\eC^* = (\eC^*_\secp,\sigma_\secp)_{\secp \in \bN}$ be the machine and corresponding non-uniform advice as defined in the description of $\cfSimNoAbort$. Sample instance-advice distribution identically to $\hyb_6$, except that the values committed by $\A^*$ in interaction with $P_{i^*}$ are extracted as in the description of $\simulator$. Additionally, this hybrid outputs $\bot$ if the values $\{c_j\}_{j \in \mathbb{S}}$ output by parties $\{P_j\}_{j \in \mathbb{S}}$ in Step 5 do not match the extracted $\{c'_j\}_{j \in \corrupt}$.


\item $\hyb_8:$ This hybrid is the distribution $\cfSimNoAbort(A^*)$. The only differences between $\hyb_7$ and $\hyb_8$ are the check introduced in $\hyb_1$ is removed, and:
\begin{itemize}
    \item In $\hyb_7$, the challenger generates $P_{i^*}$'s message in Step 4 by sampling uniformly random $c_{i^*}$.
    \item In $\hyb_8$, the challenger generates $P_{i^*}$'s message in Step 4 as $c_{i^*} = \bigoplus_{j \in \mathbb{S}}c'_j \bigoplus_{j \in \mathbb{H} \setminus \{i^*\}}c_j \oplus r$.
\end{itemize}
\end{itemize}



Now we show that each consecutive pair of hybrids is indistinguishable. We let $\mathsf{BAD}_i$ be the event that in hybrid $\hyb_i$, $\mathsf{check}\text{-}\mathsf{fail}$ appears in the output distribution (meaning that the check introduced in $\hyb_1$ failed AND there was no abort).

\begin{itemize}
    \item $\hyb_0 \approx_s \hyb_1:$ It suffices to show that $\Pr[\mathsf{BAD}_1] = \negl(\secp)$, which follows directly from the quantum computational soundness of $\zk$.
    
    \item $\hyb_1 \approx_c \hyb_2:$ This follows from~\cref{claim:ind-non-abort}.
    \item $\hyb_2 \approx_c \hyb_3:$ We define distributions $\hyb_2'$ and $\hyb_3'$ that are identical to $\hyb_2$ and $\hyb_3$ respectively, except that $\hyb_2'$ and $\hyb_3'$ {\em do not} perform the additional check described in $\hyb_1$, and as such, never attach $\cfail$ to the output.
    
    To prove that $\hyb_2$ and $\hyb_3$ are computationally indistinguishable, it suffices to prove that $\hyb_2 \approx_c \hyb_2'$, $\hyb_2' \approx_c \hyb_3'$, and $\hyb_3' \approx_c \hyb_3$. The first indistinguishability follows from the fact that $\hyb_2 \approx_c \hyb_1$ and $\Pr[\mathsf{BAD}_1] = \negl(\secp)$, which means that $\Pr[\mathsf{BAD}_2] = \negl(\secp)$. The second indistinguishability follows directly from the hiding of $\nmCom$ (implied by~\cref{def:nmc}). In what follows, we show that $\Pr[\mathsf{BAD}_3] = \negl(\secp)$, which implies that $\hyb_3' \approx_c \hyb_3$.

Let $\mathsf{BAD}_{3,j}$ be the event that, in hybrid $\hyb_3$, $y_j \notin \LLL$ (where $y_j$ was defined in $\hyb_1$) and yet the hybrid did not abort. Now suppose that there exists a polynomial $p(\cdot)$ such that for large enough $\secp \in \bbN$, $\Pr[\mathsf{BAD}_3] \geq 1/p(\secp)$.  Assuming that $n = \poly(\secp)$, this implies that there exists a polynomial $p'(\cdot)$ such that for large enough $\secp \in \mathbb{N}$, there exists some $j^*_\secp$ such that $\Pr[\mathsf{BAD}_{3,j^*}] \geq 1/p'(\secp)$.

   
We will use this to contradict many-to-one non-malleability of $\nmCom$, by building a $\mathsf{QPT}$ man-in-the-middle adversary $\mim = \{\mim_\secp,\sigma_\secp\}_{\secp \in \bbN}$ that uses $\A^*$ to contradict~\cref{def:nmc}.


$\mim_\secp$ obtains as non-uniform advice i) the index $j^*_\secp$ that maximizes $\Pr[\mathsf{BAD}_{3,j^*}]$ and ii) $\A^*_\secp$'s advice state $\rho_\secp$. It simulates the first two steps of the coin-flipping protocol in the presence of $\A^*_\secp$. During Step 2, it interacts with a challenger on the left committing to either $(c_{i^*}, r_{i^*,j})$ for each ${j \in \mathbb{S}}$, or to $(0^{k(\secp)},0^\lambda)$ for each ${j \in \mathbb{S}}$, on behalf of $P_{i^*}$. It forwards these to $\A^*_\secp$ on behalf of $P_{i^*}$ and uses the strategy in $\hyb_2$ to generate messages on behalf of all other honest parties. When $\A^*_\secp$ outputs committer messages computed on behalf of $P_{j^*_\secp}$ in its interaction with $P_{i^*}$, $\mim_\secp$ forwards these to a challenger on the right, and in return obtains receiver messages on behalf of $P_{i^*}$.

In other words, $\mim_\secp$'s interaction with its challengers generates either the random variable (defined in~\cref{def:nmc})

$$\viewval_{\mim_\secp} \dist{\{\cC(c_{i^*},r_{i^*,j})\}_{j \in \corrupt}, \mim_\secp(\rho_\secp), \cR} (1^\secp, \tagg_{i^*}, \tagg_{j^*_\secp})$$

or the random variable 

$$\viewval_{\mim_\secp} \dist{\{\cC(0^{k(\secp)},0^\secp)\}_{j \in \corrupt}, \mim_\secp(\rho_\secp), \cR} (1^\secp, \tagg_{i^*}, \tagg_{j^*_\secp}),$$

depending on which strings the challenger on the left is committing to. 


We now show the existence of a quantum polynomial-time $\D = \{\D_\secp\}_{\secp \in \bbN}$ that succeeds in distinguishing these distributions with non-negligible advantage, which contradicts the non-malleability of $\nmCom$ as defined in~\cref{def:nmc}. The distribution received by $\D_\secp$ includes the message $(c^*, r^*)$ committed by $\mim_\secp$ in its interaction with $P_{i^*}$ on the right, along with the final view $\view$ of $\mim_\secp$, which includes $\A^*$'s view after Step 2 of \proref{fig:ct}. It then simulates the remainder of the coin-flipping protocol in the presence of $\A^*_\secp$, with one difference. Instead of implementing the check introduced in $\hyb_1$, it checks only that $y_{j^*_\secp} \in \LLL$, using the message $(c^*,r^*)$. If this check failed and there was no abort, it outputs 1 and otherwise outputs 0.

Now observe that
\begin{align*}
    &\Pr\left[\D_\secp(\viewval_{\mim_\secp} \dist{\{\cC(c_{i^*},r_{i^*,j})\}_{j \in \corrupt}, \mim_\secp(\rho_\secp), \cR} (1^\secp, \tagg_{i^*}, \tagg_{j^*_\secp})) = 1 \right]\\ & \ \ \ \ \ = \Pr[\mathsf{BAD}_{2,j^*_\secp}] \leq \Pr[\mathsf{BAD}_2] = \negl(\secp), \text{ and}\\
    &\Pr\left[\D_\secp(\viewval_{\mim_\secp} \dist{\{\cC(0^{k(\secp)},0^\secp)\}_{j \in \corrupt}, \mim_\secp(\rho_\secp), \cR} (1^\secp, \tagg_{i^*}, \tagg_{j^*_\secp})) = 1 \right] \\ &\ \ \ \ \ =\Pr[\mathsf{BAD}_{3,j^*_\secp}] \geq 1/p'(\secp),
\end{align*}

which establishes that $\D$ has a non-negligible advantage, a contradiction.




\item $\hyb_3 \approx_c \hyb_4:$ This follows from the quantum zero-knowledge of $\zk$. The non-uniform advice given to the malicious verifier derived from $\A^*$ will include the transcript of the first two rounds of \proref{fig:ct} executed with adversary $\A^*$, along with the openings (if they exist) of the commitments made by $\A^*$ in Step 2. The final view of this verifier will also include these openings, allowing the reduction to efficiently simulate the remainder of the protocol, in particular using these openings to efficiently implement the check introduced in $\hyb_1$.

\item $\hyb_4 \approx_c \hyb_5:$ We consider a sequence of sub-hybrids $\hyb_{4,0},\dots,\hyb_{4,|\corrupt|}$. Associate the set $\corrupt$ with the set $[1,|\corrupt|]$, and define $\hyb_{4,j}$ so that in Step 3, $P_{i^*}$ commits to $0^{k(\secp)}$ when interacting with adversarial parties $P_k$ for $k \leq j$ and commits to $c_{i^*}$ when interacting with adversarial parties $P_k$ for $k > j$. Observe that $\hyb_4 = \hyb_{4,0}$ and $\hyb_5 = \hyb_{4,|\corrupt|}$. 

We now show that for any $j \in [1,\corrupt]$, the indistinguishability $\hyb_{4,j-1} \approx_c \hyb_{4,j}$ follows from the quantum computational hiding of $\PECom$. Indeed, define a receiver $\eR^* = (R^*_\secp, \rho_\secp)$ that interacts with a single committer committing to either $c_{i^*}$ or $0^{k(\secp)}$ as follows. It takes as non-uniform advice $\rho_\secp$ the transcript of the first two rounds of \proref{fig:ct} executed with adversary $\A^*_\secp$, along with the openings (if they exist) of the commitments made by $\A^*_\secp$ in Step 2. It then simulates the remainder of the protocol, interacting with the challenger to implement $P_{i^*}$'s messages in Step 3 during the $\PECom$ session when $P_j$ is the receiver. Observe that $\eR^*$ can indeed efficiently simulate the entire protocol, in particular it can implement the check introduced in $\hyb_1$ since it has the openings to the commitments given by $\A^*_\secp$ in Step 2. Any efficient distinguisher that distinguishes between $\hyb_{4,j-1}$ and $\hyb_{4,j}$ with non-negligible advantage immediately implies that $\eR^*$ distinguishes with non-negligible advantage, breaking quantum computational hiding of $\PECom$.

\item $\hyb_5 \approx_c \hyb_6:$ Same argument as $\hyb_3 \approx_c \hyb_4$.

\item $\hyb_6 \approx_c \hyb_7:$ Assume that there exists a distinguisher $\eD$ that can distinguish between the outputs of these hybrids with non-negligible advantage. We build a compliant\footnote{Recall that such a distinguisher is guaranteed to output 0 with overwhelming probability on input any non-explainable view.} distinguisher $\D'$ that breaks the extractability property of $\PECom$.

$\D'$ will receive as non-uniform advice i) the transcript of the first two rounds of \proref{fig:ct} executed with adversary $\A^*$, ii) the state of $\A^*$ at this point, and iii) the openings (if they exist) of the commitments made by \emph{each} party in Step 2. Note that the non-uniform advice given to the committer $\eC^*$ defined in the description of $\cfSimNoAbort$ is a strict subset of this advice. Now, $\D'$ will forward this subset (which consists of the state of $\A^*$ and the commitment openings of parties $\{P_i\}_{i \in \honest}$) to its challenger, and receive either the real or simulated view with respect to committer $\eC^*$. It can then efficiently generate the rest of the distribution using its non-uniform advice and the view it received from the challenger, additionally returning an abort if the messages $\{c'_i\}_{i \in \corrupt}$ it received as part of the challenge distribution do not match the messages $\{c'_i\}_{i \in \corrupt}$ broadcast in Step 5. If during Step 4, any of the parties $\{P_i\}_{i \in \corrupt}$ fails to prove it $\zk$ statement, $\D'$ outputs 0. Otherwise, it queries $\D$ with the final distribution and outputs what $\D$ outputs.

It remains to show that i) $\D'$'s advantage is negligibly close to $\D$'s advantage, and ii) $\D'$ is compliant. The first point requires two observations. First, whenever $\D'$ does \emph{not} query $\D$, it means that $\A^*$ failed to prove one of it $\zk$ statements, so $\D$'s input would have been $\bot$ anyway. Next, we need to show that when $\D'$ \emph{does} query $\D$ with a transcript, it is a faithful execution of either $\hyb_6$ or $\hyb_7$, depending on whether $\PECom$ was simulated or not. If $\PECom$ was simulated, the distribution is equivalent to $\hyb_7$. If not, the distribution is equivalent to $\hyb_6$, except for the extra abort condition carried out by the reduction. However, observe that the probability that the reduction produces an abort but $\hyb_6$ does not is at most $\Pr[\mathsf{BAD}_6]$. Since $\hyb_6 \approx_c \hyb_1$, and $\Pr[\mathsf{BAD}_1] = \negl(\secp)$, this quantity is negligible. Now it remains to argue that $\D'$ is compliant, but this follows directly from the quantum computational soundness of $\zk$.



\item $\hyb_7 \approx_s \hyb_8$: First, $\Pr[\mathsf{BAD}_7] = \negl(\secp)$, since $\hyb_7 \approx_c \hyb_1$. Next, switching $P_{i^*}$'s message in Step 4 is perfectly indistinguishable since $r$ is uniformly random.
\end{itemize}
\end{proof}

Now, note that $\mathsf{RealAbort}_\bot(\A^*) \approx_c \cfSimAbort(\A^*)$ follows by definition. Then, it follows identically to the proof of~\cref{thm:ZK} that $\mathsf{Real}(\A^*) \approx_c \mathsf{SimComb}(\A^*)$, and then by applying~\cref{lemma:rewinding}, that $\mathsf{Real}(\A^*) \approx_c \mathsf{Sim}(\A^*)$. 




\end{proof}

\section{Quantum-Secure Multi-Party Computation}
\label{sec:mpc}

\subsection{Definition}

We follow the standard real/ideal world paradigm for defining secure multi-party computation (MPC) as in \cite{Goldbook}, replacing classical adversaries with quantum adversaries. 

Consider $n$ parties $P_1,\ldots,P_n$ with inputs $x_1,\ldots,x_n$ that wish to interact in a protocol $\Pi$ to evaluate any functionality $f$ on their joint inputs. The security of protocol $\Pi$ (with respect to a functionality $f$) is defined by comparing the real-world execution of the protocol with an ideal-world evaluation of $f$ by a trusted party. Informally, it is required that for every quantum adversary $\A = \{\A_\secp\}_{\secp \in \bN}$ that corrupts some subset of the parties $I \subset [n]$ and participates in the real execution of the protocol, there exists an adversary $\simulator$, also referred to as a simulator, that can \emph{achieve the same effect} in the ideal world. In fact, we provide a strictly stronger definition that allows the adversary $\A$ some arbitrary non-uniform quantum advice $\{\rho_\secp\}_{\secp \in \bbN}$ thay may even depend on the inputs $x_1,\dots,x_n$.


We now formally describe the security definition, which only considers the case of fully \emph{malicious} adversaries. Let $\vec{x} = (x_1,\ldots,x_n)$ be the set of inputs. 

\paragraph{The Real Execution}. In the real execution, the $n$-party protocol $\Pi$ for computing $f$ is executed in the presence of a quantum polynomial-time adversary $\A = \{\A_\secp,\rho_\secp\}_{\secp \in \bbN}$, where $\A$ corrupts some set $I \subset [n]$ of the parties. The honest parties follow the instructions of $\Pi$, and $\A$ sends all messages of the protocol on behalf of the corrupted parties following any arbitrary quantum polynomial-time strategy. 
 
The interaction of $(\A_\secp,\rho_\secp)$ in the protocol $\Pi$ defines a random variable $\REAL_{\Pi,\A}(\secp,\vec{x},\rho_\secp)$ whose value is determined by the randomness of the adversary and the honest parties. This random variable contains the output of the adversary (which may be an arbitrary function of its view and in particular may be a quantum state) as well as the outputs of the honest parties.

\paragraph{The Ideal Execution}. In the ideal execution, an ideal world adversary $\simulator$ interacts with a trusted party, as follows. 


\begin{itemize}

\item{ \bf Send inputs to the trusted party:} Each honest party sends its input to the trusted party. Each corrupt party $P_i$, (controlled by $\simulator$) may either send its input $x_i$, or send some other input of the same length to the trusted party. Let $x'_i$ denote the value sent by party $P_i$. 

\item{ \bf Trusted party sends output to the adversary:} The trusted party computes $f(x'_1,\ldots,x'_n) = (y_1,\ldots,y_n)$ and sends $\{y_i\}_{i\in I}$ to the adversary.

\item{ \bf Adversary instructs trusted party to abort or continue:} This is formalized by having the adversary send either an abort or continue message to the trusted party. In the latter case, the trusted party sends to each honest party $P_i$ its output value $y_i$. In the former case, the trusted party sends the special symbol $\bot$ to each honest party.

\item{\bf Outputs:} 
$\simulator$ outputs an arbitrary function of its view, and the honest parties output the values obtained from the trusted party.

\end{itemize}

The interaction of $\simulator$ with the trusted party defines a random variable $\IDEAL_{f,\simulator}(\secp,\vec{x},\rho_\secp)$. Having defined the real and the ideal worlds, we now proceed to define our notion of security.

\begin{definition}
\label{def:mpcdefn}
Let $f$ be an $n$-party functionality, and $\Pi$ be an $n$-party protocol. Protocol $\Pi$ securely computes $f$ if for every quantum polynomial-time real-world adversary $\A = \{\A_\secp\}_{\secp \in \bN}$ corrupting a set of at most $n-1$ players, there exists a quantum polynomial-time ideal-world adversary $\simulator$ such that for any set of inputs $\vec{x} \in (\{0,1\}^*)^n$ and any non-uniform quantum advice $\rho = \{\rho_\secp\}_{\secp \in \bbN}$, 

$$\{\REAL_{\Pi,\A}(\secp,\vec{x},\rho_\secp)\}_{\secp \in \bbN} \approx_c \{\IDEAL_{f,\simulator}(\secp,\vec{x},\rho_\secp)\}_{\secp \in \bbN}.$$
\end{definition}



\subsection{Construction}
\label{subsec:mpc}

Given the construction of quantum-secure multi-party coin-flipping from~\cref{sec:coin-tossing}, it is straightforward to achieve quantum-secure multi-party computation, due to the following lemma adapted from~\cite{EC:KatOstSmi03}. For completeness, we give a sketch of the proof.

\begin{lemma}
Given a quantum-secure multi-party coin-flipping protocol and a quantum-secure protocol $\Pi$ for computing $f$ in the common random string (CRS) model with straight-line black-box simulation, the natural composition of the two is a quantum-secure protocol for computing $f$ with no setup assumptions.
\end{lemma}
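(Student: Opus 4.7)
The plan is to reduce security of the composed protocol to security of its two components via a standard ``hybrid simulator'' argument. Let $\Pi_{\mathsf{CF}}$ denote the coin-flipping protocol from~\cref{sec:coin-tossing} and let $\Pi_{\mathsf{CRS}}$ denote the $n$-party protocol computing $f$ in the CRS model, whose straight-line black-box simulator we call $\simulator_{\mathsf{CRS}}$. The composed protocol $\Pi$ first runs $\Pi_{\mathsf{CF}}$ to produce a string $s$, and then runs $\Pi_{\mathsf{CRS}}$ using $s$ as the CRS. If any honest party obtains $\bot$ from $\Pi_{\mathsf{CF}}$ then that party aborts the execution and outputs $\bot$.

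Fix a QPT real-world adversary $\A = \{\A_\secp, \rho_\secp\}_{\secp \in \bN}$ corrupting the set $I \subset [n]$. I construct an ideal-world adversary $\simulator$ as follows. First, $\simulator$ internally invokes the $\Pi_{\mathsf{CRS}}$-simulator $\simulator_{\mathsf{CRS}}$, which (by the programmability / straight-line black-box property) produces a programmed CRS $s^* \in \{0,1\}^{k(\secp)}$ together with a residual simulation state. Next, $\simulator$ runs the coin-flipping simulator $\simulator_{\mathsf{CF}}$ on input $(1^\secp,1^n, s^*, \A_\secp^{(1)}, \rho_\secp)$, where $\A_\secp^{(1)}$ denotes the portion of $\A_\secp$ that interacts during the coin-flipping phase; by~\cref{def:coin tossing}, $\simulator_{\mathsf{CF}}$ outputs a view of $\A_\secp^{(1)}$ indistinguishable from real, together with a ``protocol output'' that is either $s^{*|\honest|}$ or $\bot^{|\honest|}$. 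If the output is $\bot^{|\honest|}$, $\simulator$ instructs the trusted party for $f$ to abort and outputs the resulting adversary view. Otherwise $\simulator$ continues by running $\simulator_{\mathsf{CRS}}$ against the residual adversary $\A_\secp^{(2)}$ (whose initial state is taken from the simulated coin-flipping transcript), forwarding any input queries made by $\simulator_{\mathsf{CRS}}$ on behalf of corrupt parties to the trusted party for $f$, and delivering the received outputs back. The final output is the joint view of $\A_\secp$ together with the outputs honest parties would derive from the trusted party.

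Indistinguishability is then established by a two-step hybrid. In the first hybrid, replace the real coin-flipping execution with $\simulator_{\mathsf{CF}}(\cdot, s, \cdot)$ where $s \gets U_{k(\secp)}$ is drawn uniformly (as would occur in the real protocol on honest completion); by~\cref{def:coin tossing} and the fact that the CRS phase is treated as arbitrary post-processing of the coin-flipping view, this hybrid is computationally indistinguishable from the real execution. In the second hybrid, replace the honest CRS-model execution with $\simulator_{\mathsf{CRS}}$, which programs its own string $s^*$; indistinguishability from the previous hybrid follows because $s^*$ is distributed identically to a uniformly random $s$ (programmability) and because $\simulator_{\mathsf{CF}}$ is oblivious to the particular target string. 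This second hybrid is exactly $\IDEAL_{f,\simulator}$, completing the argument.

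The main subtlety, and the step I expect to require the most care, is the composition point at which $\simulator_{\mathsf{CF}}$'s output is handed to $\simulator_{\mathsf{CRS}}$: one must argue that the adversary's residual quantum state at the end of the simulated coin-flipping phase is a valid starting state for the straight-line simulator $\simulator_{\mathsf{CRS}}$, and that the ``force-or-abort'' restriction of $\simulator_{\mathsf{CF}}$ (which cannot produce arbitrary honest-party outputs, only $s^{*|\honest|}$ or $\bot^{|\honest|}$) is compatible with the honest parties' abort-on-$\bot$ convention in $\Pi$. Both are handled by the protocol-level convention that an honest party aborts $\Pi$ whenever it obtains $\bot$ from $\Pi_{\mathsf{CF}}$, so that the abort branch of $\simulator_{\mathsf{CF}}$ is faithfully mirrored by instructing the trusted party to abort. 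Straight-line simulation for $\Pi_{\mathsf{CRS}}$ is essential here, since any rewinding in $\simulator_{\mathsf{CRS}}$ would require cloning the coin-flipping residual state; this is precisely the property we assumed.
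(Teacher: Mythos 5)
Your proposal is correct and follows essentially the same approach as the paper: split the adversary into its coin-flipping and CRS-phase portions, run $\simulator_\Pi$ first to obtain a programmed CRS (crucially possible before any ideal-functionality call because the simulator is straight-line and black-box), feed that CRS as the target string to the coin-flipping simulator, and then hand the residual adversary state back to $\simulator_\Pi$ to finish. The paper leaves the indistinguishability argument as a sketch; your two-step hybrid and explicit handling of the abort branch are a reasonable elaboration of the same structure rather than a different route.
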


\begin{proof} (sketch) Consider any adversary $(\A_\secp,\rho_\secp)$ for the composed protocol. $\A_\secp$ may be split into two parts: $\A_1$ interacts in the coin-flipping protocol and produces a state $\state$, which is passed to $\A_2$, who interacts in $\Pi$. We now construct a simulator $\simulator$ for the composed protocol as follows. It begins by running the straight-line black-box simulator $\simulator_\Pi$ for $\Pi$ until it outputs a CRS $r$ (note that since $\simulator_\Pi$ is straight-line, this CRS-generation step is independent of the adversary and advice, and does not require a call to the ideal functionality). At this point, $\simulator$ runs the simulator for the multi-party coin-flipping protocol on input $r$, adversary $\A_1$, and non-uniform advice $\rho_\secp$.  This simulation produces a final state $\state$. Finally, $\simulator$ completes the execution of $\simulator_\Pi$ on input $\A_2(\state)$ and outputs what $\simulator_\Pi$ outputs.

\end{proof}

\section{Acknowledgments} 
Part of this work was done during a visit to the Simons Institute Berkeley for the “Lattices: Algorithms, Complexity, and Cryptography” program.

This material is based on work supported in part by DARPA under Contract Nos. HR001120C0024 (for AA and DK) and HR001120C0025 (for VG). Any opinions, findings and conclusions or recommendations expressed in this material are those of the author(s) and do not necessarily reflect the views of the United States Government or DARPA.

The authors thank Zvika Brakerski and Rishab Goyal for insightful discussions. The authors are also grateful to Daniel Wichs for pointing out the counterexample in~\cref{app:zkcounterexample}, which we included with his permission.

\bibliography{abbrev2,crypto,main.bib,refs}

\appendix
\section{Simple Polynomial-Round Extractable Commitments}
\label{app:poly-round}

In what follows, we describe ideas in~\cite{BS20} that can be used to convert a post-quantum zero-knowledge protocol to an extractable commitment scheme (assuming quantum hardness of LWE). Specifically, we start with the (polynomial-round) zero-knowledge protocol in~\cite{10.1137/060670997} that can be based on any quantum one-way function, and convert it into a (polynomial-round) extractable commitment scheme.

Let $\Com(\alpha;\beta)$ denote a non-interactive perfectly binding, quantum-hiding commitment to classical string $\alpha$ with randomness $\beta$.
We also let $\cdsSch$ denote a two-party two-message conditional disclosure of secrets protocol, where in the first message, the receiver outputs a statement $x$, an NP language $L$, and purportedly commits to an NP witness for this statement. 
Next, the sender encodes a secret $m$ in such a way that the receiver can recover $m$ if and only if it previously committed to an NP witness for $x$. The (informal) security property is that the NP witness is hidden from a semi-honest sender, and the sender's secret $m$ is hidden from a malicious receiver whenever $x \not \in L$. It is well-known (eg.,~\cite{OstrovskyPP14,BadrinarayananG17}) that this can be achieved by combining a specific type of two-message OT (called statistically sender-private OT, that can itself be based on quantum hardness of LWE~\cite{BrakerskiD18}) with garbled circuits.

Given these components, a polynomial-round extractable commitment is described in Figure~\ref{fig:extcom-intro}.

\begin{figure}[ht!]
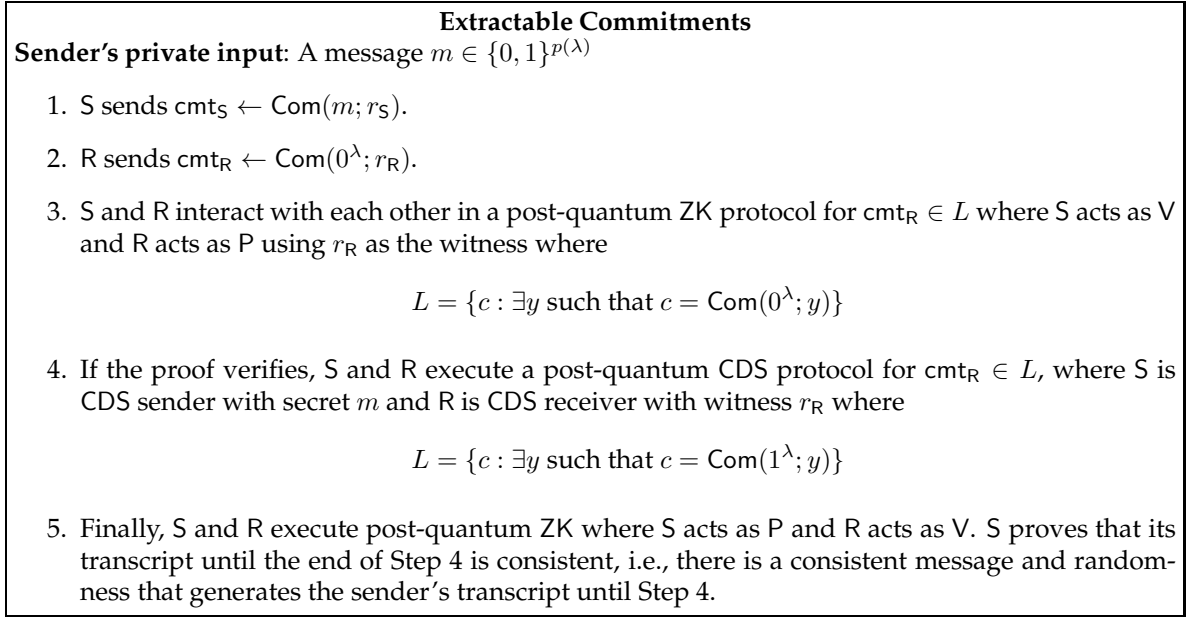

\begin{boxedalgo}
\begin{center}
\textbf{Extractable Commitments}
\end{center}
\textbf{Sender's private input}: A message $m \in \zo^{p(\lambda)}$
\begin{enumerate}
\item $\sender$ sends $\cmt_\sender \gets \COM(m; r_\sender)$.

\item $\receiver$ sends $\cmt_\receiver \gets \COM(0^\secp; r_\receiver)$.

\item $\sender$ and $\receiver$ interact with each other in a post-quantum $\zk$ protocol for $\cmt_\receiver \in L$ where $\sender$ acts as $\zkV$ and $\receiver$ acts as $\zkP$ using $r_\receiver$ as the witness where
$$ L = \{ c: \exists y \text{ such that } c = \Com(0^\secp; y)\} $$

\item If the proof verifies, $\sender$ and $\receiver$ execute a post-quantum $\cdsSch$ protocol for $\cmt_\receiver \in L$, 
where $\sender$ is $\cdsSch$ sender with secret $m$ and $\receiver$ is $\cdsSch$ receiver with witness $r_\receiver$ where
$$ L = \{ c: \exists y \text{ such that } c = \Com(1^\secp; y)\} $$

\item Finally, $\sender$ and $\receiver$ execute post-quantum $\zk$ where $\sender$ acts as $\zkP$ and $\receiver$ acts as $\zkV$. $\sender$ proves that its transcript until the end of Step 4 is consistent, i.e., there is a 
consistent message and randomness that generates the sender's transcript until Step 4.
\end{enumerate}


\end{boxedalgo}
\caption{Extractable Commitments}
\label{fig:extcom-intro}
\end{figure}

At a high level, the commitment is hiding because any cheating receiver that completes Step 3 will, by the soundness of the ZK protocol, have committed to $0^\lambda$ in Step 2. By the perfect binding property of the commitment, this means that the statement of the CDS protocol, in Step 4, is false. Therefore, because the indistinguishability-based security of CDS (as discussed above), the committer's message $m$ remains hidden from a QPT receiver. 

The commitment is extractable against quantum committers (which also implies binding), because of the following argument: Consider extractor $\extractor$ that in Step 2 generates $\cmt_\receiver$ as $\Com(1^\lambda;r_\receiver)$ instead of committing to $0^\lambda$. Next, $\extractor$ runs $\zk.\zkSim$ to simulate the proof in Step 3. After this point, $\extractor$ uses $r_\receiver$ as witness in Step 4, which enables it to successfully retrieve $m$ from $\cdsSch$. 
\section{An explicit quantum attack against a classically-secure ZK protocol}
\label{app:zkcounterexample}

In this section, we will present the construction of ZK protocol for which the zero knowledge property holds against classical verifiers, however, there exists an explicit attack w.r.t a malicious quantum verifier. This example is inspired by the recent construction of quantum extraction schemes for NP relations\cite{EPRINT:ALP19} where the setting is as follows: The sender $\cS$ and the receiver $\cR$ hold an NP instance $x$. Additionally, $\cS$ holds a witness $w$ for the instance $x$. The desired property is the following: i) Extractability: For any QPT malicious $\cS$, there exists a QPT extractor that can extract a valid witness $w'$ for $x$, ii) Zero-Knowledge: For any PPT malicious $\cR$, there exists a PPT simulator which can simulate the view of $\cR$ without having access to $w$.

The extraction scheme presented in \cite{EPRINT:ALP19} makes use of the ``test of quantumness'' protocol \cite{FOCS:BCMVV18} as a key ingredient which, as the name suggests, is used to attest whether the prover is a quantum machine or a classical one. The authors leverage this test to construct a quantum extraction protocol (i.e it admits a quantum extractor) secure against classical receivers which is referred to as to as cQEXT.  We note that the quantum extractor construction presented in \cite{EPRINT:ALP19} is straight-line i.e. it does not perform any kind of quantum rewinding \cite{10.1137/060670997, EC:Unruh12} on $\cS$. Also, the extractor makes only black-box use of the malicious $\cS$ i.e. it does not make any use of the circuit representation of a malicious $\cS$. As we will see shortly, these two properties will be crucial in the construction of our counterexample. Our counterexample involves a classical prover $\cP(x, w)$ interacting with a classical verifier $\cV(x)$ in the following manner:

\begin{enumerate}
    \item $\cP$ and $\cV$ engage in a cQEXT protocol where $\cP$ acts a cQEXT sender using $(x, w)$ and $\cV$ acts as a cQEXT receiver using $x$.
    \item $\cP(x, w)$ and $\cV(x)$ engage in a standard classical zero knowledge proof protocol which is post-quantum secure.
    \item $\cV$ outputs 1 if the proof in Step 2 is accepting. Otherwise, it outputs 0.
\end{enumerate}


The soundness of the above protocol follows from the soundness of the ZK protocol in Step 2. Also, zero-knowledge property of the above protocol w.r.t classical verifiers follows from the zero-knowledge property of the cQEXT protocol in Step 1 and ZK protocol in Step 2. However, the protocol is not zero-knowledge against a malicious QPT verifier $\cV^*$ for the following simple reason: $\cV^*$ can simply execute the extractor algorithm for the cQEXT protocol in Step 1 and therefore retrieve the witness $w$ completely. The reason it will be able to do so without any issue is because the cQEXT extractor is black-box and straight-line. Hence, a malicious QPT verifier, which does not have any rewinding ability or access to the code of prover, can still execute the extractor algorithm seamlessly.
\section{Tag Amplification: Remaining Analysis}
\label{app:tagamp}
Here, we prove Lemma \ref{lem:tagampsecurity}.

Let $\tagg \in [2^{t/2}]$ denote the tag used by the committer in the left session and $\tagg' \in [2^{t/2}]$ be the tag used by the \mim in the right session. 
Observe that, two sets of decomposed tags $\{ \tagg_i \}_{i \in [t/2]}$ and $\{ \tagg'_i \}_{i \in [t/2]}$, derived from two distinct tags, $\tagg$ and $\tagg'$, 
are such that $\exists \alpha \in [t/2]\; \suchthat \; \forall i \in [t/2]: \tagg'_\alpha \ne \tagg_i$

For any values $u$ (respectively $v$) committed to by $\mathsf{C}$ in the left session, denote by $u'$ (respectively $v'$) the value committed to by the \mim in the right session. Additionally, let $u_i$ (resp, $v_i$) denote the value committed to by $\mathsf{C}$ in the $i^{th}$ parallel execution of \nmcsmall as part of the left commitment and let $u'_i$ (resp, $v'_i$) denote the value committed to by \mim in the $i^{th}$ parallel execution of \nmcsmall as part of the right commitment. 

\noindent The soundness of $\zk$ ensures that when the proof verifies:
\begin{align} \label{eq:zkprop}
    \Pr [ u' \ne u'_\alpha] = \negl(\secp) \text{ and }
    \Pr [ v' \ne v'_{\alpha}] = \negl(\secp)
\end{align}
where the probability is over the randomness of honest verifier, and $\alpha$ denotes the first index in the real (resp., simulated) experiments such that for every $i \in [t/2]$, $\tagg'_\alpha \neq \tagg_i$.
Whenever the proof does not verify, the commitment is not `valid' and $u'$ (resp. $v'$) $ = \bot$.\\

Next, 
recall that in the real world, $\view_{\mim_\secp} \dist{\mathsf{C}(u), \mim(\rho_\secp), \mathsf{R}} (1^\secp, \tagg, \tagg')$ denotes the joint distribution of the view of \mim along with the value $u'$ committed to in the right session when the left committer obtains input $u$. Similarly, $\view_{\mim_\secp} \dist{\mathsf{C}(v), \mim(\rho_\secp), \mathsf{R}} (1^\secp, \tagg, \tagg')$ denotes the joint distribution of the view of \mim along with the value $v'$ committed to in the right session when the right committer obtains input $v$.\\

By Equation (\ref{eq:zkprop}), whenever the \mim's proof verifies, $v'$ can be replaced by $v'_\alpha$ in the distribution $\view_{\mim_\secp} \dist{\mathsf{C}(v), \mim(\rho_\secp), \mathsf{R}} (1^\secp, \tagg, \tagg')$ to yield a statistically indistinguishable distribution $\view'_{\mim_\secp} \dist{\mathsf{C}(v), \mim(\rho_\secp), \mathsf{R}} (1^\secp, \tagg, \tagg')$. Similarly, $u'$ can be replaced by $u'_{\alpha}$ in the distribution $\view_{\mim_\secp} \dist{\mathsf{C}(u), \mim(\rho_\secp), \mathsf{R}} (1^\secp, \tagg, \tagg')$ to yield a statistically indistinguishable distribution $\view'_{\mim_\secp} \dist{\mathsf{C}(u), \mim(\rho_\secp), \mathsf{R}} (1^\secp, \tagg, \tagg')$.\\

It suffices to prove that:
\begin{align}
    \{ \view'_{\mim_\secp} \dist{\mathsf{C}(u), \mim(\rho_\secp), \mathsf{R}} (1^\secp, \tagg, \tagg') \}_{\secp \in \bbN}
    \approx_c
    \view'_{\mim_\secp} \dist{\mathsf{C}(v), \mim(\rho_\secp), \mathsf{R}} (1^\secp, \tagg, \tagg')
\end{align}

To that end, we define the following collections of random variables (each indexed by $\secp$). Each is defined with respect to the adversary $\mim = \{\mim_\secp,\rho_\secp\}_{\secp \in \mathbb{N}}$ that we consider. Throughout, when we say abort, we mean that $\mim^*$ aborts before Step 2, or that the $\mim$ fails to provide an accepting proof.

\begin{itemize}
    \item Let $\mathsf{Pr}^{\mathsf{Abort}}_{\mathsf{u}}(\mim)$ be the probability that $\mim$ aborts in $\view'_{\mim_\secp} \dist{\mathsf{C}(u), \mim(\rho_\secp), \mathsf{R}} (1^\secp, \tagg, \tagg')$.
    \item Let $\mathsf{Pr}^{\mathsf{Abort}}_{\mathsf{v}}(\mim)$ be the probability that $\mim$ aborts in $\view'_{\mim_\secp} \dist{\mathsf{C}(v), \mim(\rho_\secp), \mathsf{R}} (1^\secp, \tagg, \tagg')$.
    \item Let $\view_u(\mim) \coloneqq \{\view'_{\mim_\secp} \dist{\mathsf{C}(u), \mim(\rho_\secp), \mathsf{R}} (1^\secp, \tagg, \tagg')\}_{\lambda \in \bbN} $. 
    \item Let $\view_v(\mim) \coloneqq \{\view'_{\mim_\secp} \dist{\mathsf{C}(v), \mim(\rho_\secp), \mathsf{R}} (1^\secp, \tagg, \tagg')\}_{\lambda \in \bbN}$.
    \item Let $\view_u^{\mathsf{No} \ \mathsf{Abort}}(\mim)$ be the distribution $\view_u(\mim)$ conditioned on there not being an abort.
    \item Let $\view_u^{\mathsf{Abort}}(\mim)$ be the distribution $\view_u(\mim)$ conditioned on there being an abort.
    \item Let $\view_v^{\mathsf{No} \ \mathsf{Abort}}(\mim)$ be the distribution $\view_v(\mim)$ conditioned on there not being an abort.
    \item Let $\view_v^{\mathsf{Abort}}(\mim)$ be the distribution $\view_v(\mim)$ conditioned on there being an abort.
\end{itemize}

The following distributions will not be used explicitly in the hybrids, but will be convenient to define for the proof.

\begin{itemize}
    \item Let $\view_{v,\bot}(\mim)$ be the distribution $\view_v(\mim)$, except that whenever an abort occurs, the distribution outputs $\bot$.
    \item Let $\view^{\mathsf{Abort}}_{v,\bot}(\mim)$ be the distribution $\view_v(\mim)$, except that if an abort \emph{does not} occur, the distribution outputs $\bot$.
    \item Let $\view_{u,\bot}(\mim)$ be the distribution $\view_u(\mim)$, except that whenever an abort occurs, the distribution outputs $\bot$.
    \item Let $\view^{\mathsf{Abort}}_{u,\bot}(\mim)$ be the distribution $\view_u(\mim)$, except that if an abort \emph{does not} occur, the distribution outputs $\bot$.
\end{itemize}
We show that $\view_u(\mim) \approx_c \view_v(\mim)$ via a sequence on hybrids. In particular, we prove: 
\begin{align*}
    \view_v(\mim) &\substack{(1) \\ \equiv \\ \ } (1-\mathsf{Pr}^\mathsf{Abort}_v(\mim))\view_v^{\mathsf{No} \ \mathsf{Abort}}(\mim) + (\mathsf{Pr}^\mathsf{Abort}_v(\mim))\view_v^{\mathsf{Abort}}(\mim)\\
    &\substack{(2) \\ \approx_s \\ \ }(1-\mathsf{Pr}^\mathsf{Abort}_u(\mim))\view_v^{\mathsf{No} \ \mathsf{Abort}}(\mim) + (\mathsf{Pr}^\mathsf{Abort}_u(\mim))\view_v^{\mathsf{Abort}}(\mim)\\
    &\substack{(3) \\ \approx_c \\ \ } (1-\mathsf{Pr}^\mathsf{Abort}_u(\mim))\view_u^{\mathsf{No} \ \mathsf{Abort}}(\mim) + (\mathsf{Pr}^\mathsf{Abort}_u(\mim))\view_v^{\mathsf{Abort}}(\mim)\\
    &\substack{(4) \\ \approx_c \\ \ } (1-\mathsf{Pr}^\mathsf{Abort}_u(\mim))\view_u^{\mathsf{No} \ \mathsf{Abort}}(\mim) + (\mathsf{Pr}^\mathsf{Abort}_u(\mim))\view_u^{\mathsf{Abort}}(\mim)\\
    &\substack{(5) \\ \equiv \\ \ }\view_u(\mim),
\end{align*}

where 
\begin{enumerate}
    \item The equalities $(1)$ and $(5)$ follow by definition.
    \item The indistinguishability $(2)$ follows as a corollary of~\cref{clm:ind-non-abort-ta}. Indeed, $\view_{u,_\bot}(\mim) \approx_c \view_{v,\bot}(\mim)$ in particular implies that the difference in the probability that the $\mim$ aborts in the real interaction versus the simulated interaction is negligible.
    \item The indistinguishability $(3)$ follows as a corollary of~\cref{clm:ind-non-abort-ta}. This can be seen by considering two cases. First, if the probability that the $\mim$ aborts in the real interaction is negligible, then $\view_v^{\mathsf{No} \ \mathsf{Abort}}(\mim) \approx_c \view_u^{\mathsf{No} \ \mathsf{Abort}}(\mim)$ directly follows from~\cref{clm:ind-non-abort-ta}, and the indistinguishability follows. Otherwise, this probability is non-negligible, meaning that $\view_v^{\mathsf{Abort}}(\mim)$ is efficiently sampleable. Thus, a reduction to~\cref{clm:ind-non-abort-ta} can sample from the distribution $\view_v^{\mathsf{Abort}}(\mim)$ whenever it receives $\bot$ from its challenger.\footnote{A more formal analysis of this can be found in~\cite[Lemma~3.2]{BS20}.}
    \item The indistinguishability $(4)$ follows as a corollary
    of~\cref{claim:ind-abort-ta} via a similar analysis as the last step.
\end{enumerate}

\begin{claim}
\label{clm:ind-non-abort-ta}
$$\view_{u,\bot}(\mim) \approx_c \view_{v,\bot}(\mim)$$
\end{claim}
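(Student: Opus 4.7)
The plan is to prove Claim~\ref{clm:ind-non-abort-ta} by a sequence of hybrids that switch the $t/2$ underlying \nmcsmall commitments on the left one at a time from committing to $u$ to committing to $v$, invoking the non-malleability of \nmcsmall (Definition~\ref{def:nmc}) with tag pair $(\tagg_i, \tagg'_\alpha)$ at each step. The crucial observation is that $\alpha$ is the single fixed index guaranteed by the tag-encoding property to satisfy $\tagg'_\alpha \neq \tagg_j$ for every $j \in [t/2]$, so the \emph{same} tag on the right can be paired with any of the $t/2$ left tags across the different hybrid transitions. Since the distribution $\view'$ exposes the value $v'_\alpha$ extracted from the $\alpha$-th parallel right session (rather than the overall tag-amplified value), this is precisely the quantity that non-malleability of \nmcsmall controls.

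In detail, set $\hyb_0 = \view_{u,\bot}(\mim)$ and $\hyb_{t/2 + 2} = \view_{v,\bot}(\mim)$. Hybrid $\hyb_1$ is the same as $\hyb_0$ except that the honest committer's zero-knowledge argument in Stage 2 of \proref{fig:tag_amplification_nmcom} is simulated using $\zk.\zkSimNoAbort$; the instance-advice distribution is prepared by sampling Stage 1 honestly with $u$ as the committed value and passing the resulting transcript together with the \mim's intermediate state. Indistinguishability $\hyb_0 \approx_c \hyb_1$ is a direct consequence of~\cref{claim:ind-non-abort}, which already matches $\bot$ on the aborting branch. For each $i \in \{1, \dots, t/2\}$, hybrid $\hyb_{1+i}$ coincides with $\hyb_i$ except that the $i$-th parallel \nmcsmall commitment in Stage 1 is now a commitment to $v$ rather than to $u$. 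Finally, $\hyb_{t/2+2}$ un-simulates the zero-knowledge argument (using $v$ as the witness), with $\hyb_{t/2+1} \approx_c \hyb_{t/2+2}$ again following from~\cref{claim:ind-non-abort}.

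The core step is $\hyb_i \approx_c \hyb_{1+i}$, which I plan to establish by a non-uniform man-in-the-middle reduction $\mim'$ to the non-malleability of \nmcsmall with left tag $\tagg_i$ and right tag $\tagg'_\alpha$. The reduction takes as non-uniform advice the \mim's state $\rho_\secp$ and the fixed values $u, v$. It plays the role of the tag-amplified committer on the left and receiver on the right: it honestly performs parallel \nmcsmall commitments on the left for all $j \neq i$ (committing to $v$ for $j < i$ and to $u$ for $j > i$), forwards the external challenger's left committer messages to the \mim on behalf of session $i$, and symmetrically forwards the \mim's right session-$\alpha$ messages to the external receiver while running the receiver role itself in the remaining $t/2 - 1$ right sessions. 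The left zero-knowledge proof is simulated via $\zk.\zkSimNoAbort$ (using the Stage 1 transcript and \mim state as instance-advice, requiring no witness), and the right zero-knowledge proof is verified honestly. If either proof fails, $\mim'$ outputs the \mim's view together with $\bot$; otherwise it outputs the view together with the value returned by the non-malleability experiment (which is exactly $v'_\alpha$). By construction the distribution output by $\mim'$ matches $\hyb_i$ when the challenger commits to $u$ and $\hyb_{1+i}$ when it commits to $v$, so any non-negligible distinguisher between the hybrids contradicts Definition~\ref{def:nmc}.

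The main obstacle I foresee is managing the combined guarantees cleanly: we need the zero-knowledge simulator to give a faithful distribution on both the non-aborting and aborting branches so that the $\bot$-padded outputs of $\view_{u,\bot}$ and $\view_{v,\bot}$ can be traced through the reduction. This is handled by observing that $\zk.\zkSimNoAbort$ is entirely straight-line and already outputs $\bot$ whenever the receiver (here played by \mim) causes an abort, so composing it with the non-malleability challenger preserves the correct marginal distribution over aborts. A second subtlety is coordinating the synchronous schedule of $t/2$ parallel \nmcsmall sessions against the single external challenge: since the reduction itself controls all left committer messages except for session $i$ and all right receiver messages except for session $\alpha$, and since sessions do not share state, this scheduling can be carried out step by step. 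Claim~\ref{claim:ind-abort-ta} will follow by an analogous argument using $\zk.\zkSimAbort$ in place of $\zk.\zkSimNoAbort$, where the output is simply the \mim's view with $\bot$ substituted for the extracted value.
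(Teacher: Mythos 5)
Your proposal is correct and takes essentially the same route as the paper: a hybrid argument that first simulates the left zero-knowledge argument via the entirely straight-line $\zk.\zkSimNoAbort$ (appealing to \cref{claim:ind-non-abort}), then switches the $t/2$ parallel $\nmcsmall$ commitments on the left from $u$ to $v$ one at a time, each transition reduced to one-to-one non-malleability of $\nmcsmall$ with left tag $\tagg_i$ and right tag $\tagg'_\alpha$, exactly as in the paper's Claim~\ref{clm:one-one}. The only cosmetic difference is that the paper folds the extracted value $v'_\alpha$ into the instance-advice of the ZK simulator (so it is carried through to the output of $\zkSimNoAbort$), whereas your write-up leaves it to the reduction/distinguisher to append $v'_\alpha$ to the produced view; since $v'_\alpha$ is never consulted by the simulator and is only surfaced in the final distribution, these are equivalent, but it is worth being explicit that $v'_\alpha$ must appear in the instance-advice (it is supplied by the non-malleability experiment, not by $\mim'$) so that the reduction to \cref{claim:ind-non-abort} matches $\view'$ faithfully.
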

\begin{proof}
We prove this via a sequence of hybrids.
We use $\hyb_k$ to denote the joint distribution of \mim's view (consisting of commitment and proof transcripts along with \mim's state) and the value that \mim commits to in the right session of Hybrid k, using tag $\tagg'_\alpha$, where $\alpha$ denotes the smallest index such that $\tagg'_\alpha \neq \tagg_i$ for every $i \in [t/2]$.\\


\noindent \textbf{$\hyb_1$}: In this hybrid, the challenger executes the simulator 
$\zk.\simulator(1^\secp, x_\secp, \zkV^*_\secp, \sigma_\secp^{(x_\secp)})$  on $\zkV^*_\secp$, which denotes a wrapper around the portion of the \mim that participates in Stage 2 of the protocol, and an instance-advice distribution $(x_\secp, \sigma_\secp^{(x_\secp)})$ defined as follows:
\begin{itemize}
    \item Set the state of $\mim_\secp$ to be $\rho_\secp$.
    \item Execute Stages $0$ and $1$ of the protocol the same way as in the experiment $\view_{u,\bot}(\mim)$, and set $x, w, \cL$ according to \proref{fig:tag_amplification_nmcom} on behalf of $\mathcal{C}$.
    \item Let $\sigma_\secp^{(x_\secp)}$ denote the joint distribution of the protocol transcript, the state of the \mim at the end of Stage $1$, and the value $v'_\alpha$ committed by the \mim with tag $\tagg'_\alpha$.
\end{itemize}
If there is an abort during sampling, or $\zk.\simulator$ causes $\zkV^*_\secp$ to abort (this includes the $\mim$ failing to provide an accepting proof), then output $\bot$.
By Claim~\ref{claim:ind-non-abort}, $$\view_{u,\bot}(\mim) \approx_c \hyb_1$$

\noindent \textbf{$\hyb_2$}: In this hybrid, the challenger behaves identically to $\hyb_1$, except when generating $(x_\secp, \sigma_\secp^{(x_\secp)})$, it replaces the commitment to $u$ with a commitment to $v$ in the first parallel repetition, with $\tagg_1$, of $\nmcsmall$ (while executing all other parallel repetitions the same way as $\hyb_1$). 
If there is an abort during sampling, or $\zk.\simulator$ causes $\zkV^*_\secp$ to abort, then output $\bot$.

We prove in Claim \ref{clm:one-one} that by one-to-one non-malleability of \nmcsmall, for every $u, v \in \{0,1\}^{p(\secp)}$,
$$\hyb_1 \approx_c \hyb_2$$

\noindent \textbf{$\hyb_i$ for $i \in [3,(t/2+1)]$}: In this hybrid, the challenger behaves identically to $\hyb_{i-1}$, except when generating $(x_\secp,\sigma_\secp^{(x_\secp)})$, it replaces the commitment to $u$ with a commitment to $v$ in the $(i - 1)^{th}$ parallel repetition, with tag $\tagg_{i-1}$, of $\nmcsmall$ (while executing all other parallel repetitions the same way as $\hyb_{i-1}$).
If there is an abort during sampling, or $\zk.\simulator$ causes $\zkV^*$ to abort, then output $\bot$.

We prove in Claim \ref{clm:one-one} that by one-to-one non-malleability of $\nmcsmall$, for every $u, v \in \{0,1\}^{p(\secp)}$ and every $i \in [3,t/2+1]$,
$$\hyb_{i-1} \approx_c \hyb_i$$

Finally, by claim \ref{claim:ind-non-abort}, we have that 
$$\hyb_{(t/2+1)} \approx_c \view_{u,\bot}(\mim)$$

Next, we state and prove Claim \ref{clm:one-one}.

\begin{claim} \label{clm:one-one}
For all $u, v \in \{0, 1\}^{p(\secp)}$ and all $i \in [2,t/2+1]$,
 \begin{align}
     \hyb_{i} \approx_c \hyb_{i-1} 
 \end{align}
\end{claim}

\begin{proof}
Suppose Claim \ref{clm:one-one} is false. Then there exists values $(u, v)$, some $i \in [2,t/2+1]$ and a polynomial $\poly(\cdot)$ such that for infinitely many $\secp \in \mathbb{N}$,
\begin{align} \label{eq:mimdistinguisher-i}
    \abs{
    \Pr[\mim(\hyb_i) = 1] -
    \Pr[\mim(\hyb_{i-1}) = 1]
    } \ge 
    \frac{1}{\poly(\secp)}
\end{align}
We will demonstrate an adversary $\mim^\beta$ that contradicts the non-malleability of \nmcsmall according to Definition \ref{def:nmc}, \ie we will show that
for infinitely many $\secp \in \bbN$,  
\begin{align}
& \Big|
\Pr[\mim^\beta \big( \view_{\mim^\beta_\secp} \dist{\mathsf{C}(v), \mim^\beta(\rho_\secp), \mathsf{R}} (1^\secp, \tagg_{i-1}, \tagg'_\alpha) \big) = 1] \nonumber \\
& - \Pr[\mim^\beta \big( \view_{\mim^\beta_\secp} \dist{\mathsf{C}(u), \mim^\beta(\rho_\secp), \mathsf{R}} (1^\secp, \tagg_{i-1}, \tagg'_\alpha) \big) = 1]
\Big|
 \ge \frac{1}{\poly(\secp)}
\end{align}
where the two distributions $\mim^\beta \big( \view_{\mim^\beta_\secp} \dist{\mathsf{C}(v), \mim^\beta(\rho_\secp), \mathsf{R}} (1^\secp, \tagg_{i-1}, \tagg'_\alpha) \big) = 1$ and \\
$\mim^\beta \big( \view_{\mim^\beta_\secp} \dist{\mathsf{C}(u), \mim^\beta(\rho_\secp), \mathsf{R}} (1^\secp, \tagg_{i-1}, \tagg'_\alpha) \big) = 1$ correspond to honest commitments to $v$ and $u$ respectively, for the small-tag commitment scheme.\\

\noindent $\mim^\beta$ is defined as follows:
\begin{enumerate}
\item Obtain input values $v,u$, and begin an interaction with a challenger for $\nmcsmall$.
\item Emulate the role of honest committer and honest receiver in an interaction with $\mim$ 
executing \piagk. In more detail, in the role of a committer in a left session, participate in a session of \piagk with $\mim$ as receiver. At the same time, play the role of the receiver in a right session with $\mim$ as committer.
Recall that \piagk contains $k$ repetitions of $\nmcsmall$ and $\zk$. 
\item In the left session, embed the challenger's messages in the $(i-1)^{th}$ instance of \nmcsmall, and forward the response of $\mim$ corresponding to the $(i-1)^{th}$ instance to the challenger. Execute remaining instances according to the strategy in $\hyb_{i-1}$.
\item In the right session, forward the message obtained from 
$\mim$ in the $\alpha^{th}$ instance of \nmcsmall to the challenger, and embed the challenger's response for that round as receiver message in the $\alpha^{th}$ instance. 
Use honest receiver strategy for all other instances of $\nmcsmall$ in the right session. 
\item Obtain value $v'_\alpha$ from the challenger of the non-malleable commitment (representing the value in the commitment sent by $\mim^\beta$ to the challenger on the right).
\item Use the transcript, the obtained value $v'_\alpha$ and the state of $\mim$ to define the instance-advice sample, and then execute $\zk.\simulator(1^\secp, x_\secp, \zkV^*_\secp, \sigma_\secp^{(x_\secp)})$.
\item If an abort occurs at any point, output $\bot$.
\end{enumerate}


We now analyze the probability that $\mim^\beta$ successfully contradicts Definition \ref{def:nmc}. To this end, we note that:
\begin{align}
    \Pr[\mim^\beta \big( \view_{\mim^\beta_\secp} \dist{\mathsf{C}(u), \mim^\beta(\rho_\secp), \mathsf{R}} (1^\secp, \tagg_{i-1}, \tagg'_\alpha) \big) = 1] = \Pr[\mim(\hyb_{i-1}) = 1] \\
    \Pr[\mim^\beta \big( \view_{\mim^\beta_\secp} \dist{\mathsf{C}(v), \mim^\beta(\rho_\secp), \mathsf{R}} (1^\secp, \tagg_{i-1}, \tagg'_\alpha) \big) = 1] = \Pr[\mim(\hyb_i) = 1]
\end{align}
Therefore, for infinitely many $\secp \in \mathbb{N}$,
\begin{align*}
\Big|
\Pr[\mim^\beta \big( \view_{\mim^\beta_\secp} \dist{\mathsf{C}(u), \mim^\beta(\rho_\secp), \mathsf{R}} (1^\secp, \tagg_{i-1}, \tagg'_\alpha) \big) = 1] \\
- \Pr[\mim^\beta \big( \view_{\mim^\beta_\secp} \dist{\mathsf{C}(v), \mim^\beta(\rho_\secp), \mathsf{R}} (1^\secp, \tagg_{i-1}, \tagg'_\alpha) \big) = 1]
\Big| = \\ 
\Big|
    \Pr[\mim(\hyb_{i-1}) = 1] -
    \Pr[\mim(\hyb_i) = 1]
\Big| \ge
    \frac{1}{\poly(\secp)}
\end{align*}
which is a contradiction, as desired.
\end{proof}

This completes the proof of~\cref{clm:ind-non-abort-ta}.
\end{proof}

\begin{claim}
\label{claim:ind-abort-ta}
$$\view_{u,\bot}^{\mathsf{Abort}}(\mim) \approx_c \view_{v,\bot}^{\mathsf{Abort}}(\mim)$$
\end{claim}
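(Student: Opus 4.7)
The plan is to mirror the hybrid structure of Claim~\ref{clm:ind-non-abort-ta}, but with the zero-knowledge simulator $\zk.\zkSimAbort$ in place of $\zk.\zkSimNoAbort$ (whose existence and properties are given by Claim~\ref{claim:ind-abort}). The key simplifying observation is that in both $\view_{u,\bot}^{\mathsf{Abort}}(\mim)$ and $\view_{v,\bot}^{\mathsf{Abort}}(\mim)$, the value committed by the \mim in the right session is, by definition of an aborting transcript, always $\bot$. Consequently, unlike the non-aborting case, we never need to track or preserve the extracted value $v'_\alpha$ across hybrids, and hence we will not need to invoke the one-to-one non-malleability of $\nmcsmall$; plain computational hiding of $\nmcsmall$ (implied by Definition~\ref{def:nmc}) will suffice.

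First, I would move from $\hyb_0 := \view_{u,\bot}^{\mathsf{Abort}}(\mim)$ to a hybrid $\hyb_1$ in which the honest prover in the Stage~2 zero-knowledge argument is replaced by $\zk.\zkSimAbort$ applied to $\zkV^*_\secp$ (the portion of $\mim$ that participates in Stage~2), with the Stage~1 transcript and $\mim$'s Stage~1 state packaged as the advice state. The indistinguishability $\hyb_0 \approx_c \hyb_1$ is exactly what Claim~\ref{claim:ind-abort} buys us: both distributions are forced to output $\bot$ whenever no abort occurs, and agree on the distribution of $\mim$'s aborting view. Crucially, after this step the honest left committer's witness $(u, \{r_{i,\mathsf{C}}\}_{i \in [t/2]})$ is no longer needed to generate the Stage~2 messages sent to $\mim$.

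Next, I would change the values committed on the left from $u$ to $v$ one parallel repetition at a time, defining hybrids $\hyb_1, \hyb_2, \dots, \hyb_{t/2+1}$ in which the $(k-1)$-th instance of $\nmcsmall$ in $\hyb_k$ commits to $v$ (all earlier instances also commit to $v$, all later instances still commit to $u$). Each step $\hyb_{k-1} \approx_c \hyb_k$ follows from a direct reduction to the computational hiding of $\nmcsmall$: the reduction simulates the entire view of $\mim$, embedding the challenge commitment in the $(k-1)$-th repetition, and runs $\zk.\zkSimAbort$ to generate Stage~2 (which requires no witness). Since the distribution outputs $\bot$ unless $\mim$ aborts, and $\mim$'s aborting view depends only on the transcript and its internal state, hiding of $\nmcsmall$ is exactly what we need. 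Finally, reverting $\zk.\zkSimAbort$ back to the honest prover using witness $(v, \{r_{i,\mathsf{C}}\}_{i \in [t/2]})$, again by Claim~\ref{claim:ind-abort}, yields $\view_{v,\bot}^{\mathsf{Abort}}(\mim)$.

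I do not anticipate a major obstacle. The one subtlety worth verifying is that in the intermediate hybrids where the left-side commitments are inconsistent (some opening to $u$, others to $v$), we are nevertheless entitled to use $\zk.\zkSimAbort$ rather than the honest prover, and that the reductions to hiding of $\nmcsmall$ remain efficient. Both are true because $\zk.\zkSimAbort$ is a straight-line, non-rewinding procedure that takes $\mim$'s Stage~1 state as advice and does not require a valid $\mathcal{L}$-witness; and because the reduction to hiding only needs to run one copy of $\mim$ forward, using $\zk.\zkSimAbort$ to finish the simulation without accessing the plaintext of the challenge commitment.
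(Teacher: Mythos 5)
Your proposal matches the paper's proof essentially step for step: first switch the Stage~2 argument to $\zk.\zkSimAbort$ via Claim~\ref{claim:ind-abort}, then change the $t/2$ parallel $\nmcsmall$ instances from $u$ to $v$ one at a time using only the computational hiding of $\nmcsmall$ (since the right committed value in an aborting execution is always $\bot$, so non-malleability is unnecessary), and finally switch the simulator back to the honest prover. The paper's Claim~\ref{claim:one-hiding} formalizes exactly the per-repetition reduction to hiding that you describe, and your observation about why $\zk.\zkSimAbort$ suffices in the inconsistent intermediate hybrids is correct and matches the paper's use of $\zk.\zkSimAbort$ as a straight-line, witness-free procedure.
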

\begin{proof}
We prove this via a sequence of hybrids. We use $\hyb_k$ to denote the joint distribution of the $\mim$'s view (consisting of commitment and proof transcripts along with the \mim's state) in Hybrid $k$.\\

\noindent $\hyb_1:$ In this hybrid, the challenger executes one iteration of the simulator $\zk.\zkSimAbort(1^\secp, x_\secp, \zkV^*_\secp, \sigma_\secp^{(x_\secp)})$ on $\zkV^*_\secp$, where $\zkV^*_\secp$ denotes the portion of the \mim that participates in Stage 2 of the protocol, and an instance-advice distribution $(x_\secp, \sigma_\secp^{(x_\secp)})$ defined as follows:
\begin{itemize}
    \item Set the state of $\mim_\secp$ to be $\rho_\secp$.
    \item Execute Stage $1$ of the protocol the same way as in the experiment $\mathsf{real}(\mim)$, and set $x, w, \cL$ according to \proref{fig:tag_amplification_nmcom} on behalf of $\mathcal{C}$.
    \item If an abort occurs, output the transcript and state of the \mim until the abort.
    \item Otherwise, let $\sigma_\secp^{(x_\secp)}$ denote the joint distribution of the protocol transcript and the state of the \mim at the end of Stage $1$.
\end{itemize}
If $\zk.\zkSimAbort(1^\secp, x_\secp, \zkV^*_\secp, \sigma_\secp^{(x_\secp)})$ outputs a non-aborting transcript and state, output $\bot$, otherwise return the output of $\zk.\zkSimAbort(1^\secp, x_\secp, \zkV^*_\secp, \sigma_\secp^{(x_\secp)})$.
By Claim \ref{claim:ind-abort},
$$\view_{u,\bot}^{\mathsf{Abort}}(\mim) \approx_c \hyb_1$$

\noindent \textbf{$\hyb_2$}: In this hybrid, the challenger behaves identically to $\hyb_1$, except when generating $(x_\secp, \sigma_\secp^{(x_\secp)})$, it replaces the commitment to $u$ with a commitment to $v$ in the first parallel repetition, with $\tagg_1$, of \nmcsmall (while executing all other parallel repetitions the same way as $\hyb_1$). 
If a non-aborting transcript is produced, then output $\bot$.

We prove in \cref{claim:one-hiding} that by hiding of \nmcsmall, for every $u, v \in \{0,1\}^{p(\secp)}$,
$$\hyb_1 \approx_c \hyb_2$$

\noindent \textbf{$\hyb_i$ for $i \in [3,(t/2+1)]$}: In this hybrid, the challenger behaves identically to $\hyb_{i-1}$, except when generating $(x_\secp,\sigma_\secp^{(x_\secp)})$, it replaces the commitment to $u$ with a commitment to $v$ in the $(i - 1)^{th}$ parallel repetition, with tag $\tagg_{i-1}$, of \nmcsmall (while executing all other parallel repetitions the same way as $\hyb_{i-1}$).
If a non-aborting transcript is produced, then output $\bot$.
We prove in \cref{claim:one-hiding} that by hiding of \nmcsmall, for every $u, v \in \{0,1\}^{p(\secp)}$ and every $i \in [3,t/2+1]$,
$$\hyb_{i-1} \approx_c \hyb_i$$

Finally, by claim \ref{claim:ind-abort}, we have that 
$$\hyb_{(t/2+1)} \approx_c \view_{v,\bot}(\mim)$$

Next, we state and prove \cref{claim:one-hiding}.

\begin{claim}
\label{claim:one-hiding}
For every $u, v \in \{0,1\}^{p(\secp)}$ and all $i \in [2,t/2+1]$,
$$\hyb_i \approx_c \hyb_{i-1}$$
\end{claim}

\begin{proof}
Suppose \cref{claim:one-hiding} is false. Then there exist $u, v \in \{0,1\}^{p(\secp)}$, some $i \in [2,t/2+1]$, a PPT distinguisher $D$ and a polynomial $\poly(\cdot)$ such that for infinitely many $\secp \in \mathbb{N}$,
\begin{align} 
    \abs{
    \Pr[D(\hyb_i) = 1] -
    \Pr[D(\hyb_{i-1}) = 1]
    } \ge 
    \frac{1}{\poly(\secp)}
\end{align}
We will demonstrate a receiver that contradicts the hiding property of $\nmcsmall$, \ie we will show that there exists $\eR^*$ such that for infinitely many $\secp \in \mathbb{N}$,
\begin{align} 
    \abs{
    \Pr[\eR^*(\nmcsmall\langle \eC(u), \eR^* \rangle) = 1] -
     \Pr[\eR^*(\nmcsmall\langle \eC(v), \eR^* \rangle) = 1]
    } \ge 
    \frac{1}{\poly(\secp)}
\end{align}
$\eR^*$ obtains input $u, v$, and begins an interaction with a challenger for the hiding of $\nmcsmall$.
It then emulates the role of honest committer and honest receiver in an interaction with \mim, executing \piagk. In the left session, it embeds the challenger's messages in the $(i-1)^{th}$ instance of $\nmcsmall$, and forwards the response of \mim corresponding to the $(i-1)^{th}$ instance to the challenger. It executes the remaining instances in the left session, and all instances of the right session according to the strategy in $\hyb_{i-1}$. Next, it uses the transcript and state of the \mim to define the instance-advice sample, and executes one iteration of $\zk.\zkSimAbort(1^\secp,x_\secp,\zkV^*_\secp, \sigma_\secp^{(x_\secp)})$.

If an abort occurs at some point, then $\eR^*$ runs $D(\tau, \mathsf{st})$ where $\tau$ and $\mathsf{st}$ denote the transcript and the state of the adversary until the point in the protocol where the abort occurs.
If no abort occurs throughout the protocol, then $\eR^*$ outputs $0$.

We now analyze the probability that $\eR^*$ successfully contradicts Definition \ref{def:nmc}. Here, we note that:
$$\Pr[\eR^*(\nmcsmall\langle \eC(u), \eR^* \rangle) = 1] = \Pr[D(\hyb_{i-1}) = 1] \text{ and, }$$
$$\Pr[\eR^*(\nmcsmall\langle \eC(v), \eR^* \rangle) = 1] = \Pr[D(\hyb_{i}) = 1]$$
Therefore, for infinitely many $\secp \in \mathbb{N}$,
\begin{align*}
& \Big|\Pr[\eR^*(\nmcsmall \langle \eC(u), \eR^* \rangle) = 1] -  \Pr[\eR^*(\nmcsmall \langle \eC(v), \eR^* \rangle) = 1] \Big| \\
& =\Big| \Pr[D(\hyb_{i-1}) = 1] - \Pr[D(\hyb_{i}) = 1] \Big| \geq \frac{1}{\poly(\secp)}
\end{align*}
which is a contradiction, as desired. This completes the proof of \cref{claim:one-hiding}.
\end{proof}

This completes the proof of~\cref{claim:ind-abort-ta}.
\end{proof}

Together, these claims complete the proof of Lemma \ref{lem:tagampsecurity}.
\section{Multi-Committer Extractable Commitments against Arbitrary Distinguishers}
\label{sec:fullecom}

Recall that in the setting of multi-committer extractable commitments~\ref{defn:qspec}, we only considered computational indistinguishability against any \emph{compliant} non-uniform polynomial-size quantum distinguisher $\D$. We will now demonstrate how to upgrade any multi-committer extractable commitment secure against any compliant non-uniform polynomial-size quantum distinguisher to one which is secure against against any \emph{arbitrary} non-uniform polynomial-size quantum distinguisher.
The resulting commitment admits an extractor that makes use of the Quantum Rewinding lemma \ref{lemma:rewinding} to successfully generate both explainable and non-explainable transcripts.

\paragraph{Construction.} Let $\PECom$ denote any multi committter extractable commitment protocol which admits an extractor $\cE$. 
Consider a modified version of the protocol $\PECom'$ which is identical to $\PECom$ except that at the very end, 
each of the committers $\{\eC_i\}_{i \in [n]}$ sends a constant-round ZK argument to the receiver attesting to the fact that the committer messages were explainable. 
If the verification check passes for all the arguments, the receiver accepts all the commitments.
Otherwise it rejects all the commitments, and the committed value is set to $\bot$. We denote this by \reject.

\paragraph{Analysis.}
Assuming $\PECom$ admits an extractor $\cE$ which satisfies the extractability property against compliant distinguishers, we will construct an extractor $\cE'$ for $\PECom'$ which satisifies the extractability property against arbitrary distinguishers. The extractor $\cE'$ consists of a randomized extractor $\cE_{\comb}$ will consist of two sub-extractors, namely $\cE_r$ and $\cE_{nr}$. The purpose of $\cE_r$ is to simulate a transcript which generates a \reject whereas the purpose of $\cE_{nr}$ is to simulate a transcript which does not generate a \reject. 

At a high level, $\cE_{\comb}$ will randomly call one of the two sub-extractors and try to produce a transcript which is indistinguishable from the real view. Looking ahead, this will result in the $\cE_{\comb}$ outputting a quantum state \output that is indistinguishable from the real verifier output conditioned on $\output \ne \fail$. Furthermore, $\output \ne \fail$ will occur with probability negligibly close to $1/2$ (due to random choice of executing either $\cE_{r}$ or $\cE_{nr}$ and the computational indistinguishability of the view generated by $\cE_{r}$ and $\cE_{nr}$). In other words, $\cE_{\comb}$ is going to succeed in extraction only with probability (negligibly close to) $1/2$. Once we have this, we can apply Watrous' quantum rewinding lemma \ref{lemma:rewinding} to amplify the success probability from $\approx 1/2$ to $\approx 1$. \\

We will now show the construction of $\cE_{nr}, \cE_{r}, \cE_{\comb}$ and finally $\cE'$. \\

$\cE_{nr}(1^\secp, 1^n, I, \eC^*_\secp, \rho)$:
\begin{enumerate}
    \item Execute the extractor $\cE(1^\secp, 1^n, I, C^*_\secp, \rho_\secp)$ on the adversary $\eC^*_\secp$ (which controls a subset $I$ of committers).
    
    \item Participate as a honest verifier in $|I|$ ZK argument sessions with $\eC^*_\secp$ where $\eC^*_\secp$ sends messages on behalf of the prover.
    
    \item Execute the verification algorithm on all $|I|$ argument transcripts. If verification check passes for all $|I|$ arguments, then accept all the commitments. Otherwise, say \reject occurs.
    
    \item If \reject occurs, discard all information saved so far and output \fail. Otherwise output $C^*_\secp$'s inner state and the extracted value.
\end{enumerate}

\vspace{1cm}

$\cE_{r}(1^\secp, 1^n, I, \eC^*_\secp, \rho)$:
\begin{enumerate}
    \item Interact with $\eC^*_\secp$ as an honest receiver of $\PECom'$.
        If the verification check fails for some ZK argument transcript, count it as a \reject.
        
    \item If \reject does not occur, discard all information saved so far and output \fail. Otherwise output $C^*_\secp$'s inner state, the transcript and $\bot$ as the extracted value.
\end{enumerate}


$\cE_{\comb}(1^\secp, 1^n, I, \eC^*_\secp, \rho)$: Sample $b \xleftarrow{\$} \{ r, nr \}$ and execute $\cE_{b}$.\\


$\cE'(1^\secp, 1^n, I, \eC^*_\secp, \rho)$: 
\begin{enumerate}
    \item Generate the circuit $\cE_{\comb, \eC^*_\secp}$ which is the circuit implementation of $\cE_{\comb}$ with hardwired input $\eC^*_\secp$, that is, the only input to $\cE_{\comb, \eC^*_\secp}$ is the quantum state $\rho$.
    
    \item Let $\R$ be the algorithm from Lemma \ref{lemma:rewinding}. The output of the extractor is $\R(\cE_{\comb, \eC^*_\secp}, \rho, \lambda)$
\end{enumerate}

The following claim is similar to the definition of multi-committer extractability stated in Section \ref{defn:qspec} but generalized to handle \emph{arbitrary} distinguishers (instead of just compliant ones).

\begin{claim}
For any \emph{arbitrary} non-uniform polynomial-size quantum distinguisher $\D' = \{\D'_\secp,\sigma_\secp\}_{\secp \in \bN}$, there exists a negligible function $\mu(\cdot)$, such that for all adversaries $\eC^* = \{\eC^*_\secp,\rho_\secp\}_{\secp \in \mathbb{N}}$ representing a subset of $n$ committers, namely, $\{\eC_i\}_{i \in I}$ for some set $I \subseteq [n]$, the following holds for all polynomial-size sequences of inputs $\{\{m_{i,\secp}\}_{i \notin I}\}_{\secp \in \mathbb{N}}$ and $\secp \in \bN$.
    \begin{align*}&\big|\Pr[\D'_\secp(\mathsf{VIEW'}^{\mathsf{msg}}_{\eC^*_\secp}(\dist{\eR,\eC^*_\secp(\rho_\secp),\{\eC_i(m_{i,\secp})\}_{i \notin I}}(1^\secp,1^n)),\sigma_\secp) = 1]\\ &- \Pr[\D'_\secp( \cE'(1^\secp,1^n,I,\eC^*_\secp,\rho_\secp),\sigma_\secp)=1]\big| \leq \mu(\secp).
    \end{align*}
\end{claim}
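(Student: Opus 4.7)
The plan is to prove the claim via a three-step argument: (1) analyze the behavior of the sub-extractors $\cE_{nr}$ and $\cE_{r}$ separately and show each produces the correct conditional distribution up to negligible error; (2) combine them via $\cE_{\comb}$ to obtain a straight-line procedure whose success probability is negligibly close to $1/2$ and whose output conditioned on success is indistinguishable from the real view; (3) apply Watrous amplification (Lemma~\ref{lemma:rewinding}) to lift the $\approx 1/2$ success probability up to overwhelming.

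The heart of the argument is step (1) for $\cE_{nr}$. Given an arbitrary non-uniform quantum distinguisher $\D' = \{\D'_\secp,\sigma_\secp\}_{\secp \in \bN}$, I will construct a \emph{compliant} distinguisher $\D$ against the underlying scheme $\PECom$ and its extractor $\cE$, defined as follows. On input a sample $(\{\tau_i\}_{i \in I}, \state, \{m_i\}_{i \in I})$ from either the real $\PECom$ experiment or $\cE$, $\D$ first internally plays the role of the honest ZK verifier in $|I|$ parallel arguments against the portion of $\state$ that corresponds to the adversary's ZK prover, producing transcripts $\{\pi_i\}_{i \in I}$ and an updated adversary state $\state'$. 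If any ZK verification fails, $\D$ outputs $0$; otherwise it invokes $\D'_\secp(\{\tau_i,\pi_i\}_{i \in I},\state',\{m_i\}_{i \in I},\sigma_\secp)$. Compliance of $\D$ follows from the quantum computational soundness of the ZK argument: whenever $\{\tau_i\}_{i \in I}$ is not explainable, the committed-to explainability statement is false for at least one $i$, so the corresponding ZK verification accepts only with negligible probability, and thus $\D$ outputs $0$ with overwhelming probability. By the compliant-distinguisher guarantee of $\cE$, the output distributions of $\D$ on the real $\PECom$ view and on $\cE$'s output differ by $\negl(\secp)$, which by unfolding the definition of $\D$ is exactly the distinguishing advantage of $\D'$ between the real $\PECom'$ view conditioned on non-reject and $\cE_{nr}$'s output conditioned on non-$\fail$, weighted by the respective probabilities.

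Step (1) for $\cE_{r}$ is immediate: $\cE_{r}$ runs exactly the honest receiver strategy of $\PECom'$, so conditioned on outputting non-$\fail$ (i.e.\ on the real ZK arguments failing), its output is \emph{identically} distributed to the real view conditioned on reject. For step (2), let $p_{\reject}(\rho)$ denote the probability that an honest $\PECom'$ execution against adversary state $\rho$ yields reject; then the probability that $\cE_{\comb}$ outputs non-$\fail$ on $\rho$ equals $\tfrac{1}{2}p_{\reject}(\rho) + \tfrac{1}{2}(1 - p_{\reject}(\rho) \pm \negl(\secp)) = \tfrac{1}{2} \pm \negl(\secp)$, using the bound from step (1) (applied to the trivial distinguisher) to control the $\cE_{nr}$ non-$\fail$ probability. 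Moreover, conditioned on $\cE_{\comb}$ not outputting $\fail$, its output mixes the two cases with the same weights as the real execution's conditioning on reject versus non-reject, so the combined output is $\negl(\secp)$-close (against $\D'$) to the real $\PECom'$ view.

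The final step is to invoke Lemma~\ref{lemma:rewinding} with $\Q \coloneqq \cE_{\comb,\eC^*_\secp}$, $q \coloneqq 1/2$, $p_0 \coloneqq 1/4$, and $\epsilon \coloneqq \negl(\secp) + 2^{-\secp \cdot 3/4}$, exactly as in the analysis of $\simulator$ in Section~\ref{subsec:ZK}. The conditions $p_0 \leq p(\psi)$, $p_0(1-p_0) \leq q(1-q)$, and $|p(\psi) - q| < \epsilon$ are all guaranteed by the $\approx 1/2$ bound from step (2), giving $\mathsf{TD}(\Q_\psi, \R(\Q,\psi,\secp)) = \negl(\secp)$; combined with the step (1) indistinguishability this yields the claim. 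The main obstacle is cleanly handling the interplay between ZK soundness and $\D$'s compliance when the adversary is allowed to maintain a quantum state across the $\cE$-extraction and ZK-argument phases; this is why the reduction has $\D$ simulate the ZK verifier internally on the state $\state$ produced by $\cE$, rather than letting any external party interact with the adversary after extraction, so that the compliance check ``$\{\tau_i\}$ explainable $\implies$ ZK accepts with non-negligible probability'' reduces directly to the quantum computational soundness of a single ZK instance.
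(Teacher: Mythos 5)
Your proposal follows essentially the same approach as the paper's Appendix B: the same two-sub-extractor decomposition into $\cE_{nr}$ and $\cE_r$, the same reduction from Claim~B.2 building a compliant distinguisher $\D$ that plays the honest ZK verifier on the challenge state and outputs $0$ on ZK rejection (with compliance from quantum computational soundness), the identical-distribution observation for $\cE_r$, and the same application of Lemma~\ref{lemma:rewinding} with $p_0 = 1/4$, $q = 1/2$, $\epsilon = \negl(\secp) + 2^{-\secp \cdot 3/4}$. The only presentational difference is that you compress the paper's five-step hybrid chain (which handles the case split on whether $\Pr^\reject$ is negligible so the conditional distributions are efficiently sampleable) into a one-line mixing argument, but the substance is the same.
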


Here $\mathsf{VIEW'}^{\mathsf{msg}}_{\eC^*_{\secp}}(\dist{\eR,\eC^*_\secp(\rho_\secp),\{\eC_i(m_i)\}_{i \notin I}}(1^\secp,1^n))$ is defined to consist of the following:

\begin{itemize}
    \item The view of $\eC^*_\secp$ on $\PECom'$ interaction with the honest receiver $\eR$ and set $\{\eC_i(m_i)\}_{i \notin I}$ of honest parties; this view includes a set of transcripts $\{\tau_i\}_{i \in I}$ and a state $\state$.
    
    \item A set of strings $\{m_i\}_{i \in I}$, where each $m_i$ is defined relative to $\tau_i$ as follows. If there exists $m'_i,r_i$ such that $\eR(1^\secp,\tau_i,m'_i,r_i) = 1$, then $m_i = m'_i$, otherwise, $m_i = \bot$.
\end{itemize}

We will prove the above claim in several steps:
\begin{enumerate}
    \item Simulating non \reject interactions using $\cE_{nr}$
    \item Simulating \reject interations using $\cE_{r}$
    \item Applying Watrous rewinding lemma on the combined extractor $\cE_{\comb}$
\end{enumerate}

\noindent First, we introduce some notation:

\begin{itemize}
    
    

    
    
    \item Let $\cE_{r, \bot}$ be the same distribution as $\cE_{r}(1^\secp,1^n,I,\eC^*_\secp,\rho_\secp)$ except that whenever a \reject does not occur, the distribution output is $\bot$
    
    \item Let $\cE_{nr, \bot}$ be the same distribution as $\cE_{nr}(1^\secp,1^n,I,\eC^*_\secp,\rho_\secp)$ except that whenever a \reject occurs, the distribution output is $\bot$
    
    \item Let $\view'_{r, \bot}$ be the same distribution as $\view'^{\mathsf{msg}}_{\eC^*_\secp}(\dist{\eR,\eC^*_\secp(\rho_\secp),\{\eC_i(m_{i,\secp})\}_{i \notin I}}(1^\secp,1^n))$ except that whenever a \reject does not occur, the distribution output is $\bot$
    
    \item Let $\view'_{nr, \bot}$ be the same distribution as $\view'^{\mathsf{msg}}_{\eC^*_\secp}(\dist{\eR,\eC^*_\secp(\rho_\secp),\{\eC_i(m_{i,\secp})\}_{i \notin I}}(1^\secp,1^n))$ except that whenever a \reject occurs, the distribution output is $\bot$
    
\end{itemize}

We use $\tau'$ to denote the input to the distinguisher $\D'$ where $\tau'$ can either be  $\cE_{\{r, nr \}, \bot}$ or $\view'_{\{r, nr \}, \bot}$.

\begin{claim} \label{claim:no-reject}
For any \emph{arbitrary} non-uniform polynomial-size quantum distinguisher $\D' = \{\D'_\secp,\sigma_\secp\}_{\secp \in \bN}$, there exists a negligible function $\mu(\cdot)$, such that for all adversaries $\eC^* = \{\eC^*_\secp,\rho_\secp\}_{\secp \in \mathbb{N}}$ representing a subset of $n$ committers, namely, $\{\eC_i\}_{i \in I}$ for some set $I \subseteq [n]$, the following holds for all polynomial-size sequences of inputs $\{\{m_{i,\secp}\}_{i \notin I}\}_{\secp \in \mathbb{N}}$ and $\secp \in \bN$.
    \begin{align*}&\big|\Pr[\D'_\secp(\view'_{nr, \bot},\sigma_\secp) = 1] - \Pr[\D'_\secp( \cE_{nr, \bot},\sigma_\secp)=1]\big| \leq \mu(\secp).
    \end{align*}
\end{claim}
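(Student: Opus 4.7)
The plan is to prove Claim~\ref{claim:no-reject} by reducing to the (compliant-distinguisher) extractability guarantee of the underlying $\PECom$ stated in Definition~\ref{defn:qspec}. Concretely, given an arbitrary non-uniform QPT distinguisher $\D' = \{\D'_\secp,\sigma_\secp\}_{\secp \in \bN}$ together with the adversarial committer $\eC^* = \{\eC^*_\secp,\rho_\secp\}_{\secp \in \bN}$ for $\PECom'$, I will construct a new distinguisher $\D = \{\D_\secp,\sigma_\secp\}_{\secp \in \bN}$ for $\PECom$. On input a sample of the form $(\{\tau_i\}_{i \in I},\state,\{m_i\}_{i \in I})$, the distinguisher $\D_\secp$ installs $\state$ as the inner state of the portion of $\eC^*_\secp$ that controls the ZK provers, plays the honest ZK verifier against each of the $|I|$ arguments, and monitors whether a $\reject$ occurs. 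If $\reject$ occurs, $\D_\secp$ outputs $0$; otherwise, it assembles the resulting view in the same format as $\view'^{\mathsf{msg}}_{\eC^*_\secp}$ (or $\cE_{nr}$'s output) and outputs $\D'_\secp$ applied to it. By construction, if $\D$'s input is drawn from the real $\PECom$ interaction, its output equals $\D'$ applied to $\view'_{nr,\bot}$, while if its input is drawn from $\cE$, its output equals $\D'$ applied to $\cE_{nr,\bot}$; hence $\D$'s advantage equals $\D'$'s advantage in the claim.

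The key step, and the main obstacle, is to verify that the $\D$ constructed above is compliant in the sense required by Definition~\ref{defn:qspec}: whenever the $\PECom$ transcripts $\{\tau_i\}_{i \in I}$ are not explainable with respect to $I$, $\D$ must output $0$ with overwhelming probability. This is where we invoke the statistical soundness of the constant-round ZK argument appended in $\PECom'$. If $\{\tau_i\}_{i \in I}$ is not explainable, then in each of the $|I|$ arguments, the statement proved by $\eC^*_\secp$ (namely, that its $\PECom$ messages were honestly generated) is false. Statistical soundness of the ZK argument then guarantees that the honest verifier played by $\D$ rejects at least one of these proofs except with negligible probability, so $\reject$ occurs and $\D$ outputs $0$, as required. (A subtle point is that this must hold even against the quantum inner state $\state$ of $\eC^*_\secp$, so I will rely on the fact that the ZK argument used in $\PECom'$ satisfies quantum statistical soundness, which is inherited from~\cref{def:wi}-style soundness of the underlying sigma-protocol/WI building blocks.)

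With compliance of $\D$ established, applying the multi-committer extractability of $\PECom$ (Definition~\ref{defn:qspec}) gives a negligible function $\mu(\cdot)$ such that
\[
\left| \Pr[\D_\secp(\mathsf{VIEW}^{\mathsf{msg}}_{\eC^*_\secp}(\cdot),\sigma_\secp) = 1] - \Pr[\D_\secp(\cE(1^\secp,1^n,I,\eC^*_\secp,\rho_\secp),\sigma_\secp) = 1] \right| \leq \mu(\secp).
\]
Translating back using the identifications
\[
\Pr[\D_\secp(\mathsf{VIEW}^{\mathsf{msg}}_{\eC^*_\secp},\sigma_\secp) = 1] = \Pr[\D'_\secp(\view'_{nr,\bot},\sigma_\secp) = 1],
\]
\[
\Pr[\D_\secp(\cE(1^\secp,1^n,I,\eC^*_\secp,\rho_\secp),\sigma_\secp) = 1] = \Pr[\D'_\secp(\cE_{nr,\bot},\sigma_\secp) = 1],
\]
yields the claim. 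The two displayed identifications are immediate from the definition of $\D$, since $\D$ acts as a perfectly faithful honest ZK verifier in both branches, and the abort rule it uses to replace non-$\reject$ (resp. $\reject$) outputs by $\bot$ matches the definitions of $\view'_{nr,\bot}$ and $\cE_{nr,\bot}$ exactly. The only nontrivial analytical content is thus the soundness-based compliance argument sketched above.
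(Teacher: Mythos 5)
Your reduction strategy and compliance argument match the paper's: both construct a compliant $\D$ from $\D'$ by wrapping with the ZK verifier, and both invoke statistical soundness of the appended arguments to establish compliance. The gap is in the two ``immediate'' identifications at the end, which are \emph{not} equalities.

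When \reject occurs, the distribution $\view'_{nr,\bot}$ outputs $\bot$, and $\D'$ is then run on $\bot$; this contributes $\Pr[\D'(\bot)=1]>0$ in general. Your $\D$, by contrast, outputs $0$ on \reject unconditionally. Writing $p_r$ (resp.\ $p_r'$) for the probability of \reject in the real (resp.\ simulated) execution, one has
\[
\Pr[\D_\secp(\mathsf{VIEW}^{\mathsf{msg}}_{\eC^*_\secp},\sigma_\secp)=1]
= \Pr[\D'_\secp(\view'_{nr,\bot},\sigma_\secp)=1] - p_r\cdot\Pr[\D'_\secp(\bot)=1],
\]
and the analogous identity with $p_r'$ for $\cE_{nr,\bot}$ versus $\D(\cE(\cdots))$. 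Taking the difference, the advantage of $\D'$ in the claim equals the advantage of $\D$ plus an extra term $(p_r - p_r')\cdot\Pr[\D'(\bot)=1]$, which your argument does not control. So the extractability bound on $\D$'s advantage alone does not yield the stated conclusion.

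The missing step is to show $|p_r - p_r'| \le \negl(\secp)$. This can be done by a second invocation of the $\PECom$ extractability guarantee with the compliant distinguisher that plays the honest ZK verifier and outputs $1$ iff all $|I|$ arguments accept: it is compliant by the same soundness argument (non-explainable transcripts are rejected with overwhelming probability, so it outputs $0$), and its advantage is exactly $|(1-p_r)-(1-p_r')|$. With this addition your argument closes. The paper instead sidesteps the issue by conditioning on \reject/$\noreject$ separately before building the reduction, but the underlying idea is the same; you would need to either adopt that conditional decomposition or add the auxiliary reject-probability bound described above.
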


\begin{proof}
Conditioned on the event that \reject happens, both distributions ($\view'_{nr, \bot}$ and  $\cE_{nr, \bot}$) output $\bot$ by definition. Therefore, in such a case, these two distributions will be prefectly indistinguishable.\\
Conditioned on the event that \reject does not happen, we can say that $\tau'$ is explainable with overwhelming probability. This holds due to the soundness of ZK. Having said that, we now prove that $\D'$ cannot distinguish between real and simulated $\tau'$.
    
    Suppose there exists an \emph{arbitrary} non-uniform polynomial-size quantum distinguisher $\D' = \{\D'_\secp,\sigma_\secp\}_{\secp \in \bN}$, a polynomial function $\poly(\cdot)$, $\eC^* = \{\eC^*_\secp,\rho_\secp\}_{\secp \in \mathbb{N}}$ representing a subset of $n$ committers, namely, $\{\eC_i\}_{i \in I}$ for some set $I \subseteq [n]$, s.t. the following holds for inifintely many polynomial-size sequence of input $\{\{m_{i,\secp}\}_{i \notin I}\}_{\secp \in \mathbb{N}}$ and $\secp \in \bN$.
    \begin{align*}&\big|\Pr[\D'_\secp(\view'_{nr, \bot},\sigma_\secp) = 1 | \neg \reject] - \Pr[\D'_\secp( \cE_{nr, \bot},\sigma_\secp)=1 | \neg \reject]\big| \geq 1/\poly(\secp).
    \end{align*}
    
    We can use $\D'$ to build a compliant distinguisher $\D$ which contradicts the multi-committer extractability of $\PECom$ as per Definition \ref{defn:qspec}. $\D$ first obtains a $\PECom$ transcript $\tau$ (and adversary's state) as a challenge. It then interacts as an honest verifier with $C^*_\secp$ in $|I|$ ZK argument sessions where $C^*_\secp$ proves that $\tau$ is explainable. $\D$ then {\em verifies the ZK argument} and outputs $0$ if the argument rejects. Otherwise, it forwards the entire transcript along with $C^*_\secp$'s internal state and the value inside commitment to $\D'$.
    If $\D'$ returns 1, $\D$ returns 1. Otherwise $\D$ returns 0.
    
    Note that since non-explainable transcripts that are not rejected occur with negligible probability (due to soundness of ZK), the probability that $\D$ outputs 1 on receiving a non-explainable transcript is negligible. Therefore, $\D$ is a compliant distinguisher. Moreover, the following holds:
    \begin{align*}
        & \Pr[\D_\secp(\mathsf{VIEW}^{\mathsf{msg}}_{\eC^*_\secp}(\dist{\eR,\eC^*_\secp(\rho_\secp),\{\eC_i(m_{i,\secp})\}_{i \notin I}}(1^\secp,1^n)),\sigma_\secp) = 1] = \Pr[\D'_\secp(\view'_{nr, \bot},\sigma_\secp) = 1 | \neg \reject] \text{ and }\\
        & \Pr[\D_\secp( \cE(1^\secp,1^n,I,\eC^*_\secp,\rho_\secp),\sigma_\secp)=1]
        =
       \Pr[\D'_\secp( \cE_{nr, \bot},\sigma_\secp)=1 | \neg \reject]
    \end{align*}
    Therefore, 
     \begin{align*}
     & \big|
     \Pr[\D_\secp(\mathsf{VIEW}^{\mathsf{msg}}_{\eC^*_\secp}(\dist{\eR,\eC^*_\secp(\rho_\secp),\{\eC_i(m_{i,\secp})\}_{i \notin I}}(1^\secp,1^n)),\sigma_\secp) = 1]\\
     & -
     \Pr[\D_\secp( \cE(1^\secp,1^n,I,\eC^*_\secp,\rho_\secp),\sigma_\secp)=1]
     \big|
     \geq 1/\poly(\secp).
    \end{align*}
    which gives a contradiction.
\end{proof}

\begin{claim} \label{claim:reject}
For any \emph{arbitrary} non-uniform polynomial-size quantum distinguisher $\D' = \{\D'_\secp,\sigma_\secp\}_{\secp \in \bN}$, there exists a negligible function $\mu(\cdot)$, such that for all adversaries $\eC^* = \{\eC^*_\secp,\rho_\secp\}_{\secp \in \mathbb{N}}$ representing a subset of $n$ committers, namely, $\{\eC_i\}_{i \in I}$ for some set $I \subseteq [n]$, the following holds for all polynomial-size sequences of inputs $\{\{m_{i,\secp}\}_{i \notin I}\}_{\secp \in \mathbb{N}}$ and $\secp \in \bN$.
    \begin{align*}&\big|\Pr[\D'_\secp(\view'_{r, \bot},\sigma_\secp) = 1] - \Pr[\D'_\secp( \cE_{r, \bot},\sigma_\secp)=1]\big| \leq \mu(\secp).
    \end{align*}
\end{claim}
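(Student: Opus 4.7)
The plan is to show that the claim actually holds statistically (in fact, perfectly), which trivially implies the computational bound for any $\D'$. The key observation is that $\cE_r$ simply runs the honest $\PECom'$ receiver against $\eC^*_\secp$; the only points where its behavior diverges from the real experiment are in how it processes its output (outputting \fail, or the state together with a hard-coded $\bot$ as the ``extracted'' value) rather than producing an accept/reject decision and a value recovered by the decommitment predicate.

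First I would verify that, on a transcript-by-transcript basis, the joint distribution over (transcript, $\eC^*_\secp$'s final state) produced by $\cE_r$ is identical to that produced by the real interaction. This is immediate because both execute exactly the same (randomized) honest receiver code against the same adversarial committer $\eC^*_\secp$. It follows that the probability of \reject is the same in the two experiments, and that the conditional distributions over (transcript, state) given \reject are identical.

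Next I would argue that, conditioned on \reject, the committed/extracted value reported by each distribution agrees. In $\cE_{r,\bot}$ the value is hardcoded to $\bot$ for every $i \in I$. In $\view'_{r,\bot}$ the value $m_i$ is defined via the existence of an opening $(m'_i, r_i)$ satisfying the $\PECom'$ decommitment predicate $\eR(1^\secp, \tau_i, m'_i, r_i) = 1$. Because the $\PECom'$ decommitment predicate verifies, among other things, that the ZK argument embedded in $\tau_i$ was accepted by the receiver in the commit phase, the event \reject forces $m_i = \bot$ for every $i \in I$. Conditioned on \reject not occurring, both distributions output $\bot$ by construction (identifying \fail with $\bot$). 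Combining the two conditionings yields $\Delta(\view'_{r,\bot}, \cE_{r,\bot}) = 0$, so no $\D'$, quantum or otherwise, can have any distinguishing advantage.

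The only thing that requires care in this argument is the structural observation that the $\PECom'$ decommitment predicate rejects whenever the commit-phase ZK argument was rejected, so that $m_i$ is indeed forced to $\bot$ on \reject; this is a design-level property of the protocol rather than a cryptographic one, and no appeal to ZK soundness or to the extractability of the underlying $\PECom$ is needed here. This is in sharp contrast to the companion Claim~\ref{claim:no-reject}, whose proof is the substantive one because $\cE_{nr}$ deviates from the honest execution by invoking the compliant-distinguisher extractor $\cE$ for $\PECom$ and must therefore reduce to $\PECom$'s extractability using the soundness of the ZK arguments to promote arbitrary distinguishers into compliant ones.
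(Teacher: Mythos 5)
Your proof is correct and follows essentially the same two-case decomposition (conditioning on \reject versus no-\reject) as the paper's proof, which is considerably terser but makes the identical argument: on no-\reject both distributions output $\bot$ by definition, and on \reject the distribution produced by $\cE_r$ coincides with that of the honest $\PECom'$ receiver by construction. Your additional observation — that the $\PECom'$ decommitment predicate forces $m_i = \bot$ for all $i \in I$ whenever \reject occurs, so the extracted-value component also agrees — is a correct and worthwhile elaboration that the paper leaves implicit.
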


\begin{proof}
Conditioned on the event that \reject does not happen, both distributions ($\view'_{r, \bot} and  \cE_{r, \bot}$) output $\bot$ by definition. Therefore, in such a case, these two distributions will be prefectly indistinguishable.\\
Conditioned on the event that \reject happens, the distribution produced by $\cE_{r}$ is identical to the distribution produced by an honest reveiver. This holds due to the construction of $\cE_{r}$
\end{proof}

To complete our proof, we will introduce some additional notation:
\begin{enumerate}
    \item Let $\Pr^\reject_{\view'}$ be the probability that a \reject happens in $\view'$
    
    \item Let $\Pr^\reject_{\cE_r}$ be the probability that a \reject happens in $\cE_r$
    
    \item Let $\Pr^\reject_{\cE_{nr}}$ be the probability that a \reject happens in $\cE_{nr}$
\end{enumerate}

Following~\cite{BS20}, we now show that $\view' = \cE_{\comb}$ via a sequence on hybrids. In particular, we show that:

\begin{align*}
   \view' &\substack{(1) \\ \equiv \\ \ } (\view' | \reject) \mathsf{Pr}^\reject_{\view'} + (\view' | \noreject) (1- \mathsf{Pr}^\reject_{\view'})\\
      &\substack{(2) \\ \approx_c \\ \ } (\view' | \reject) \mathsf{Pr}^\reject_{\cE_r} + (\view' | \noreject) (1 - \mathsf{Pr}^\reject_{\cE_{nr}})\\
      &\substack{(3) \\ \approx_c \\ \ } (\view' | \reject) \mathsf{Pr}^\reject_{\cE_r} + (\cE_{nr} | \noreject) (1 - \mathsf{Pr}^\reject_{\cE_{nr}})\\
   &\substack{(4) \\ \approx_c \\ \ } (\cE_r | \reject) \mathsf{Pr}^\reject_{\cE_r} + (\cE_{nr} | \noreject) (1 - \mathsf{Pr}^\reject_{\cE_{nr}})\\
   &\substack{(5) \\ \approx_s \\ \ }\cE_{\comb}
\end{align*}

where 

\begin{enumerate}
    \item The equality $(1)$ follows by definition.

    \item The indistinguishability $(2)$ follows as Corollary of Claim \ref{claim:reject} and Claim \ref{claim:no-reject}. Indeed, $\view'_{r, \bot} \approx_c \cE_{r, \bot}$ in particular implies that the difference in probability that \reject happens in the real interaction versus the simulated interaction is negligible, and likewise for $\view'_{nr, \bot} \approx_c \cE_{nr, \bot}$.
    
    \item The indistinguishability $(3)$ follows as a corollary of Claim \ref{claim:no-reject}. This can be seen by considering two cases. First, if the probability that \reject happens in the real interaction is negligible, then $\view' | \noreject \approx_c \cE_{nr} | \noreject$ directly follows from Claim \ref{claim:no-reject}, and the indistinguishability follows. Otherwise, this probability is non-negligible, meaning that $\view' | \reject$ is efficiently sampleable. Thus, a reduction to \ref{claim:no-reject} can sample from the distribution $\view' | \reject$ whenever it receives $\bot$ from the challenger.
    
    \item The indistinguishability $(4)$ follows as a corollary of Claim \ref{claim:reject} via a similar analysis as the last step.
    
    \item The indistinguishability $(5)$ follows from the definition of $\cE_{\comb}$
\end{enumerate}




Also, by an analysis similar to \cite{BS20} Corollary 3.1, 3.2, we can say that the sucess probability of $\cE_{\comb}$ is negligibly close to $1/2$ and therefore the success probability of $\cE_{\comb}$ is \emph{input-oblivious}.

Now we can apply the Quantum rewinding lemma \ref{lemma:rewinding} to amplify the success probability from $\approx 1/2$ to $\approx 1$ following an analysis similar to \cite{BS20}. Consider the quantum circuit $\cE_{\comb, \eC^*}$ which is the circuit implementation of $\cE_{\comb}$ with hardwired input $\eC^*_\secp$, that is, the only input to $\cE_{\comb, \eC^*_\secp}$ is the quantum state $\rho$. By denoting the success probability for input $\rho$ by $p(\rho)$ and setting $\epsilon \coloneqq \negl(\secp) + 2^{-\secp \cdot \frac{3}{4}},p_0 \coloneqq 1/4,$ and $q \coloneqq 1/2$, we can satisfy all the conditions for Quantum Rewinding Lemma \ref{lemma:rewinding}.

This implies that trace distace between $\R(\cE_{\comb, C^*}, \rho, \lambda)$ and $\cE_{\comb}$ is bounded by a negligible function. Therefore, our final extractor $\cE'(1^\secp, 1^n, I, \eC^*_\secp, \rho) = \R(\cE_{\comb, \eC^*_\secp}, \rho, \lambda)$ completes the extraction successfully with probability negligibly close to 1.

\end{document}